\def\version{1}
\newtheorem{theorem}{Theorem}
\newtheorem{lemma}[theorem]{Lemma}
\newtheorem{corollary}[theorem]{Corollary}
\newtheorem{definition}[theorem]{Definition}
\DeclareMathOperator{\sgn}{sgn}
\algnewcommand\comment[1]{\hfill$\triangleright$ #1}
\newenvironment{itemize2}
{\begin{itemize}\setlength{\itemsep}{-0.05cm}\vspace{-0.1cm}}
{\vspace{-0.05cm}\end{itemize}}
\def\f#1{\lfloor #1 \rfloor}
\def\case #1{{\bf Case}\ {\it #1:}}
\def\twofigs #1{\hbox to \textwidth{#1}}
\def\yclaim #1 {\medskip\noindent{\bf Claim #1: }}
\long\def\xclaim #1 {\medskip\noindent{\bf Claim } {\it #1}\medskip}
\def\claim #1 #2 {\medskip\noindent{\bf Claim #1.} {\em #2}\medskip}
\newcommand{\ecproof}{\hfill $\diamondsuit$\\}
\newcommand{\bproof}{\noindent{\bf Proof: }}
\newcommand{\xecproof}{\ $\diamondsuit$}
\def\remark #1{\noindent{\bf Remark:} #1\\}
\def\o{\overline} 
\def\hi{\advance \parindent by 20pt}
\def\H#1{\widehat{#1}}
\def\mathy #1{{\ifmmode {#1} \else{$ #1 $}\fi}}
\def\b #1.{\mathy {\overline {#1}}}
\def\sb #1.{\mathy {\bar {#1}}}
\def\set{}
\def\I.{\mathy {\cal I}}
\newcommand{\ai}[1]{\mbox{\it (i)}}
\newcommand{\ii}[1]{\mbox{\it (ii)}}
\newcommand{\iii}[1]{\mbox{\it (iii)}}
\newcommand{\iv}[1]{\mbox{\it (iv)}}
\newcommand{\av}[1]{\mbox{\it (v)}}
\newcommand{\vi}[1]{\mbox{\it (vi)}}
\newcommand{\vii}[1]{\mbox{\it (vii)}}
\newcommand{\viii}[1]{\mbox{\it (viii)}}
\newcommand{\ix}[1]{\mbox{\it (ix)}}
\newcommand{\x}[1]{\mbox{\it (x)}}
\newcommand{\axi}[1]{\mbox{\it (xi)}}
\newcommand{\xii}[1]{\mbox{\it (xii)}}
\newcommand{\xiii}[1]{\mbox{\it (xiii)}}
\def\T.{\mathy {\cal T}}
\def\B.{\mathy {\cal B}}
\def\C.{\mathy {\cal C}}
\def\F.{\mathy {\cal F}}
\def\S.{\mathy {\cal S}}
\def\M.{\mathy {\cal M}}
\def\W.{\mathy{\cal W}}
\def\Z.{\mathy {\cal Z}}
\def\fs #1.{\mathy {f\hskip-4pt\updownarrow\hskip-4pt {\scriptstyle{#1}}}}
\def\Fs #1.{\mathy {F\hskip-4pt\updownarrow\hskip-4pt {\scriptstyle{#1}}}}
\def\bfs #1.{\mathy {\bar f\hskip-4pt\updownarrow\hskip-4pt {\scriptstyle{#1}}}}
\def\bFs #1.{\mathy {\bar F\hskip-4pt\updownarrow\hskip-4pt {\scriptstyle{#1}}}}
\def\adj{\operatorname{adj}}
\def\today{\ifcase\month\or
January\or February\or March\or April\or May\or June\or
July\or August\or September\or October\or November\or December\fi
\ \number\day, \number\year}
\def\date#1.#2.{\ifcase#1\or
January\or February\or March\or April\or May\or June\or
July\or August\or September\or October\or November\or December\fi
\ #2, \number\year}
\def\ydate#1.#2.#3.{\ifcase#1\or
January\or February\or March\or April\or May\or June\or
July\or August\or September\or October\or November\or December\fi
\ #2, 199#3}
\def\nydate#1.#2.{\ifcase#1\or
January\or February\or March\or April\or May\or June\or
July\or August\or September\or October\or November\or December\fi
\ #2}
\def\doublespace{\multiply\baselineskip by3\divide\baselineskip by2%
                 \def\doublespace{}}
\def\bigdoublespace{\multiply\baselineskip by2%
                 \def\bigdoublespace{}}
\def\imp{\ifmmode {\ \Longrightarrow \ }\else{$\ \Longrightarrow \ $}\fi}
\def\rimp{\ifmmode {\ \Longleftarrow \ }\else{$\ \Longleftarrow \ $}\fi}
\def\ximp{\ifmmode {\Longrightarrow\ }\else{$\Longrightarrow\ $}\fi}
\def\xrimp{\ifmmode {\Longleftarrow\ }\else{$\Longleftarrow\ $}\fi}
\def\iff{\ifmmode {\ \Longleftrightarrow \ }\else{$\ \Longleftrightarrow \ $}\fi}
\def\xiff{\ifmmode {\Longleftrightarrow\ }\else{$\Longleftrightarrow\ $}\fi}
\def\endskip{\medskip}
\def\rqed{\hfill\hbox to 24 pt{\vrule width4pt depth-1pt
height7pt\hfil}\bigskip}
\def\rqedn{\hfill\hbox to 24 pt{\vrule width4pt depth-1pt height7pt\hfil}}
\def\log{\ifmmode \,{ \rm log}\,\else{\it log }\fi}
\def\con {\subseteq}
\def\pcon{\subset}
\def\firstnumstp#1 {\bigskip \noindent{\it Step} #1.\newquad}
\def\numstp#1 {\endskip\noindent{\it Step} #1.\newquad}
\def\newquad{\hskip1ex}
\def\stp#1.{\endskip
\noindent{\it #1 Step.}\newquad}
\def\firststp#1.{\bigskip
\penalty-1000
\noindent{\it #1 Step.}\newquad}
\def\cas#1 {\smallskip\noindent{\bf Case} #1.\ } 
\def\nsec#1{\penalty-2000%
\noindent{\bf #1\hfill\break}
\hbox to \parindent{\hfill}\ignorespaces}
\long\def\res #1. #2{\bigskip
\penalty-1000
\noindent {\bf #1.}\newquad%
#2 \bigskip}
\long\def\nres #1. #2{\bigskip
\noindent {\bf #1.}\newquad%
#2}
\def\pf{\noindent {\bf Proof.}\newquad}
\def\cont{\ifmmode\star\else$\star$\fi}
\def\+{\tabalign} 
\def\i{($i$) } \def\xi{($i$)}
\def\ii{($ii$) } \def\xii{($ii$)}
\def\iii{($iii$) } \def\xiii{($iii$)}
\def\iv{($iv$) }
\def\hi{\hskip20pt\i} \def\hii{\hskip20pt\ii}
\def\({{\rm(}} \def\){{\rm)}}
\def\c#1{\lceil {#1} \rceil}
\def\f#1{\lfloor {#1} \rfloor}
\def\x{\iffalse}
\def\b{\bigskip}
\def\set #1#2{\{ #1:#2 \}}
\def\hi{\advance\parindent by 20pt}
\def\o{\overline} 
\def\hi{\advance \parindent by 20pt}
\def\H#1{\widehat{#1}}
\def\mathy #1{{\ifmmode {#1} \else{$ #1 $}\fi}}
\def\b #1.{\mathy {\overline {#1}}}
\def\sb #1.{\mathy {\bar {#1}}}
\def\I.{\mathy {\cal I}}
\def\B.{\mathy {\cal B}}
\def\C.{\mathy {\cal C}}
\def\F.{\mathy {\cal F}}
\def\M.{\mathy {\cal M}}
\def\N.{\mathy {\cal N}}
\def\P.{\mathy {\cal P}}
\def\R.{\mathy {\cal R}}
\def\S.{\mathy {\cal S}}
\def\T.{\mathy {\cal T}}
\def\V.{\mathy {\cal V}}
\def\W.{\mathy{\cal W}}
\def\Z.{\mathy {\cal Z}}
\def\fs #1.{\mathy {f\hskip-4pt\updownarrow\hskip-4pt {\scriptstyle{#1}}}}
\def\Fs #1.{\mathy {F\hskip-4pt\updownarrow\hskip-4pt {\scriptstyle{#1}}}}
\def\bfs #1.{\mathy {\bar f\hskip-4pt\updownarrow\hskip-4pt {\scriptstyle{#1}}}}
\def\bFs #1.{\mathy {\bar F\hskip-4pt\updownarrow\hskip-4pt {\scriptstyle{#1}}}}
\begin{document}

\title{Algebraic Algorithms for {$b$}-Matching,
Shortest
Undirected Paths,
and {$f$}-Factors\thanks{Research was supported by  the ERC StG project
PAAl no. 259515.}}
\author{%
 Harold N.~Gabow
 \thanks{Department of Computer Science, University of Colorado at Boulder,
 Boulder, Colorado 80309-0430, USA. e-mail: {\tt hal@cs.colorado.edu}
 }
\and
Piotr Sankowski
\thanks{Institute of Informatics, University of Warsaw, Banacha 2,
02-097, Warsaw, Poland, and Department of Computer and
System Science, Sapienza University of Rome. email: {\tt sank@mimuw.edu.pl}
 }
}

\maketitle
\begin{abstract}
Let $G=(V,E)$ be a graph with $f:V\to \mathbb Z_+$ a function assigning
degree bounds to vertices. We present the first efficient algebraic
algorithm
to find an $f$-factor. The time is
$O(f(V)^{\omega})$.
More generally for graphs with integral edge weights of maximum absolute
value $W$ we find
a maximum weight $f$-factor in
time $\tilde{O}(Wf(V)^{\omega})$.
(The algorithms are randomized,
correct
with high probability and Las Vegas;
the  time bound is worst-case.)
We also present three specializations of
these algorithms:
For
maximum weight perfect $f$-matching the algorithm is considerably simpler
(and almost identical to its special case of ordinary weighted matching).
For the single-source shortest-path problem in undirected graphs with conservative edge weights,
we present a generalization of the shortest-path tree, and we compute it in
$\tilde{O}(Wn^{\omega})$ time. For bipartite graphs, we
improve the known complexity bounds
for vertex capacitated max-flow and min-cost max-flow on
a subclass of graphs.
\end{abstract}
\section{Introduction}

$b$-matching and $f$-factors are basic combinatorial notions that generalize non-bipartite matching,
min-cost network flow, and others.
This paper presents the first efficient algebraic algorithms for both weighted and unweighted $b$-matchings
and $f$-factors. Our algorithms for this broad class of problems are the most efficient algorithms known for
a subclass of instances (graphs of high density, low degree constraints
and low edge weights). We also discuss single-source all-sinks shortest paths in conservative undirected graphs.
(There is no known reduction to directed graphs.) We prove the existence of a simple  shortest-path
"tree" for this setting. We also  give efficient algorithms
-- combinatoric for sparse graphs and algebraic for dense -- to construct it.

We must first   define  $b$-matching and $f$-factors. The literature is
inconsistent but in essence we follow the classification of Schrijver \cite{Schrijver}.
For an undirected multigraph $G=(V,E)$
with a function
$f:V\to \mathbb Z_+$, an
{\em $f$-factor} is a subset of edges
wherein each vertex $v\in V$ has degree exactly $f(v)$.
For an undirected graph
$G=(V,E)$
with a function
$b:V\to \mathbb Z_+$,
a (perfect) {\em $b$-matching}
is a function $x:E \to \mathbb{Z}_+$ such that each $v\in V$ has $\sum_{w: vw\in E} x(vw) = b(v)$.
The fact that $b$-matchings have an unlimited number of copies of each edge
makes them decidedly simpler. For instance
$b$-matchings have essentially the same blossom structure
(and linear programming dual variables)
as
ordinary matching \cite[Ch.31]{Schrijver}. Similarly our algorithm
for weighted $b$-matching is almost identical to its specialization to ordinary matching ($b\equiv 1$).
In contrast the blossoms and dual variables for weighted $f$-factors are more involved \cite[Ch.32]{Schrijver}
 and our algorithm
is more intricate.
Thus our terminology reflects the difference in complexity of the two notions.%
\footnote{Another version of $b$-matching considers $b(v)$ as an upper bound on the desired degree of $v$.
This easily reduces to weighted perfect $b$-matching by taking 2 copies of $G$ joined by
zero-weight edges.
On the other hand a {\em capacitated $b$-matching}
is defined by giving an upper bound $u(e)$ to each value $x(e)$.
The simplicity of
the uncapacitated case is lost, and we are back to $f$-factors.}

The paper begins with unweighted $f$-factors, i.e., we wish to find an $f$-factor
or show none exists. Let $\phi=f(V)$ (or $b(V)$).
We extend the Tutte matrix from matching to $f$-factors,
i.e., we present a $\phi \times \phi $ matrix that is
symbolically nonsingular iff the graph has an $f$-factor.
Such a matrix can be derived by applying the Tutte matrix to an enlarged version of the given graph,
or by specializing Lov\'asz's matrix
for matroid parity \cite{l}. But neither approach is compact enough to achieve our time bounds.%
\footnote{The Tutte matrix becomes too large, $m\times m$. Lov\'asz's matrix
is $\phi \times \phi$ but can involve
integers that are too large,  size $n^n$ or more. Our matrix only involves integers $\pm 1$.}
Then we reuse the elimination framework for maximum cardinality matching, due to
Mucha and Sankowski~\cite{ms04} and Harvey~\cite{Harvey06}. This allows us to find an $f$-factor in $O(\phi^{\omega})$ randomized time.%
\footnote{$O(n^{\omega})$ is the time needed for a straight-line program to multiply two $n\times n$ matrices; the best-known bound on $\omega$ is $<2.3727$~\cite{Williams}.} For dense graphs
and small degree-constraints this improves the best-known time bound of $O(\sqrt{\phi} m)$ \cite{hal83},
although the latter is deterministic.

\begin{center}
    {
    \noindent
      \begin{tabular}{|p{5cm}|p{4.0cm}|} \hline
        Time & Author \\ \hline
        $O(n^2 B)$  & Pulleyblank (1973) \cite{Pulleyblank}\\ \hline
        $O(n^2 m \log B)$ & Marsh (1979) \cite{Marsh}\\ \hline
        $O(m^2 \log n \log B)$ & Gabow (1983) \cite{hal83}\\ \hline
        $O(n^2 m + n \log B(m+n\log n))$ & Anstee (1987) \cite{anstee}\\ \hline
        $O(n^2 \log n(m+n\log n))$ & Anstee (1987) \cite{anstee}\\ \hline
        $\tilde{O}(W\phi^{\omega})$ & this paper\\ \hline
      \end{tabular}

      \vspace{0.2cm}
      {\bf Table 1:} Time bounds for maximum $b$-matching. $B$ denotes $\max_v b(v)$.
    }
\end{center}

\begin{center}
    {
    \noindent
      \begin{tabular}{|p{3.1cm}|p{4.8cm}|} \hline
        Complexity & Author \\ \hline
        $O(\phi n^3)$ & Urquhart (1965) \cite{urquhart} \\ \hline
$O(\phi(m+n \log n))$ &
Gabow (1983$+$1990) \cite{hal83, gabow-90} \\ \hline
        $\tilde{O}(W\phi^{\omega})$ & this paper\\ \hline
      \end{tabular}

      \vspace{0.2cm}
      {{\bf Table 2:} Time bounds for maximum weight $f$-factors on simple graphs.}
}
\end{center}

We turn to the more difficult weighted version of the problem. Here every edge has a numeric
weight; for complexity results we assume  weights are  integers of magnitude
$\le W$.
We seek a {\em maximum
$f$-factor}, i.e., an $f$-factor with the greatest possible total weight.
Efficient algebraic algorithms
have been given for maximum matching ($f\equiv 1$) in time $\tilde{O}(W n^\omega)$,
first for bipartite graphs \cite{Sankowski09} and recently for general graphs \cite{CGSa}.\footnote{The $\tilde{O}$ notation ignores factors of
$\log (n\phi W)$.}

The usual approach to generalized matching problems is by problem reduction.
For instance
in \cite{Schrijver}, Ch.31
proves the properties of the $b$-matching linear program and polytope
by reducing to ordinary matching via vertex splitting;
then Ch.32 reduces $f$-factors (capacitated $b$-matching) to $b$-matching.
Efficient algorithms also use vertex splitting \cite{hal83}
or
reduction
to  the bipartite case (plus by further processing) \cite{anstee}.
But reductions may obscure some structure.
To avoid this our algorithms use a direct approach, and we get the following rewards.
For $b$-matching, as mentioned, the similarity of blossoms to
ordinary matching blossoms leads to an algorithm that is no more involved than
ordinary matching.
For undirected shortest paths we get a simple definition of a generalized shortest-path tree.
(Again such a definition may have been overlooked due to reliance on reductions, see below.)
For $f$-factors we get a detailed understanding of the more complicated versions of
the structures that first emerge in $b$-matchings
(2-edge connected components giving the cyclic part of blossoms -- see Section \ref{VBSec})
and in shortest paths (bridges giving the incident edges of blossoms --
these correspond to
the (ungeneralized) shortest-path tree -- see Section \ref{IBSec}).

All three of our non-bipartite algorithms are implementations of
the "shrinking procedure" given in \cite{G} (a variant is the basis of the weighted matching
algorithm of \cite{CGSa}).
This procedure gives a direct way to find the optimum blossoms for a weighted $f$-factor --
simply put,
each blossom is (a subgraph of) a maximum weight "$2f$-unifactor" (a type of $2f$-factor) in
the graph with (the cyclic part of) all heavier blossoms contracted
(see \cite{G} or Section \ref{BackgroundSec}). Note that the classic weighted matching algorithm of Edmonds
\cite{e} finds the optimum blossoms, but only after  forming and discarding
various other blossoms. So this approach does not provide a direct definition of the optimum blossoms.

The first step of our algorithms use our generalized Tutte matrix to find
the optimum dual variables of the vertices.  Then we execute  the shrinking procedure to
get the blossoms, their duals, and a "weighted blossom tree" that gives the structure of the optimum $f$-factor.
(This step is combinatoric. It is based on the detailed structure of
$2f$-unifactors that we derive.)
The last step finds the desired $f$-factor  using
a top-down traversal of
the weighted blossom tree: At each node
we find an $f$-factor of a corresponding graph, using our algorithm for unweighted $f$-factors.
In summary our algorithms (like \cite{CGSa} for ordinary matching) can be viewed as a (combinatoric) reduction of the weighted
$f$-factor problem into two subproblems: finding the optimum dual variables of the vertices,
and finding an unweighted $f$-factor.

To facilitate understanding of the general $f$-factor algorithm we begin by presenting its specialization to
two subcases. First $b$-matching.
The blossoms, and hence the $2b$-unifactors, differ little from ordinary graph
matching. As a result our development for weighted $b$-matching  is essentially  identical
the special case of ordinary matching, in terms of both the underlying combinatorics and the algorithmic details.
When specialized to ordinary matching our algorithm provides a simple alternative to \cite{CGSa}. In fact an advantage
is that our algorithm is Las Vegas -- the dual variables allow us to check if the $b$-matching is truly optimum.
(Our approach to weighted matching/$b$-matching differs from \cite{CGSa} -- at the highest level,
we work with critical graphs while \cite{CGSa} works with perfect graphs.)

Next we discuss shortest paths in undirected graphs with a conservative weight function --
negative edges are allowed but not negative cycles. The obvious reduction to a directed graph
(replace undirected edge $uv$ by directed edges $uv,vu$) introduces negative cycles, and it is
unclear how to handle this problem by the usual shortest-path techniques.

We consider the
single-source all-sinks version of the problem.
Again, this problem is often solved by reduction, first to the single-source single-sink
version and then to perfect matching, using either T-joins
\cite[pp.485--486]{Schrijver}
or vertex-splitting
\cite[p.487]{Schrijver}.
A path can be viewed as a type of 2-factor.
(For instance an $ab$-path is an $f$-factor
if we enlarge $G$ with a loop at every vertex $v\in V$
and set $f(v)=2$ for $v\in V-\{a,b\}$, $f(a)=f(b)=1$.)
This enables us to
solve the all-sinks version directly. Examining the
blossom structure
enables us to
define a generalized shortest-path tree that, like the standard shortest-path tree for directed graphs,  specifies a shortest path to every vertex from a chosen source.
It is a combination of the standard shortest-path tree and
the blossom tree. We give a complete derivation of the existence of this shortest-path structure, as well
as an algebraic algorithm to construct it in time $\tilde{O}(Wn^\omega)$. We also construct the structure
with combinatoric algorithms, in time
$O(n(m+n\log n))$ or $O(\sqrt{n \alpha(m,n)\log n}\ m \log (nW))$. These bounds are all
within logarithmic factors of the best-known bounds for constructing the directed shortest-path tree ~\cite[Ch.~8]{Schrijver}, \cite{YusterZ05,Sankowski05}.

Although the
shortest-path problem is classic, our definition of this structure appears to be new.
Most notably, Seb\"o has characterized the structure of single-source shortest paths in undirected graphs,
first for
graphs with $\pm 1$ edge weights \cite{Sebo} and then extending  to general weights by reduction \cite{sebo2}.  Equation (4.2) of \cite{Sebo} (for $\pm 1$-weights, plus its version achieved by reduction
 for arbitrary weights)
 characterizes the shortest paths from a fixed
source
in terms of  how they enter and leave "level sets"
determined by the distance function. \cite{Sebo} also shows that the distances from the source can be computed using $O(n)$ perfect matching computations. Our structure differs from
 \cite{Sebo,sebo2}: it does not give a necessary and sufficient condition to be a shortest path, but it
gives an exact specification of a specific set
of shortest paths that are simply related to one another (as in the standard shortest-path tree). Note that one
can give an alternative proof of the existence of our structure  by starting from the results of \cite{Sebo,sebo2}.

The general algorithm for maximum $f$-factors is the most difficult part of the paper.
It involves a detailed study of the properties of blossoms. A simple example of how
these blossoms differ from ordinary matching is that the hallmark of Edmonds' blossom
algorithm -- "blossoms shrink" -- is not quite true. In other words for ordinary matching
a blossom can be contracted and it becomes just an ordinary vertex. For $f$-factors we can contract
the "cycle" part of the blossom, but its incident edges remain in the graph and must be treated differently from
ordinary edges (see Section \ref{IBSec}).
Our discussion of shortest paths introduces this difficulty in
the simplest case -- here a blossom has exactly 1 incident edge (as opposed to an arbitrary number).
Even ignoring this issue, another difficulty is that there are three types of edges that behave differently
(see \cite{G}, or Lemmas \ref{UnifactorLemma} and \ref{NUnifactorCor}) and the type of an edge is
unknown to the algorithm! Again the three types are seen to arise naturally in shortest paths.
Our contribution is to develop the combinatoric properties of these edges and blossoms
so the shrinking procedure can be executed efficiently, given only the information provided by the
Tutte matrix in our algebraic algorithm.

While non-bipartite graphs present the greatest technical challenge,
we also achieve some best-known time bounds for
two bipartite problems, maximum network flow
and min-cost network flow.
Bipartite $f$-factors generalize network flow: max-flow
(min-cost max-flow)
is a special case of
unweighted (weighted) bipartite $f$-factors, respectively~e.g.~\cite{gabow-tarjan-89}.
The question of
an efficient algebraic max-flow algorithm has confronted the community for
some time. The only advance is the algorithm of Cheung et.~al.~\cite{cheung}, which checks
whether $d$ units of flow can be sent across a unit-capacity network in $O(d^{\omega-1}m)$ time. We
consider
 networks with integral vertex and edge
capacities bounded by $D$.
We find a max-flow in time
$\tilde{O}((Dn)^{\omega})$ time
and a
min-cost max-flow
in $\tilde{O}(W(Dn)^{\omega})$ time.
The latter algorithm handles convex edge cost functions
(with integral break-points)
in the same
time bound.
The max-flow problem has a rich history (see e.g.~\cite[Chs.~10, 12]{Schrijver})
and our time bounds are the best-known for
dense graphs with moderately high vertex capacities.
Specifically, previous algorithms for vertex-capacitated max-flow in dense networks (i.e., $m=\Theta(n^2)$)
use $O(n^3/\log n)$ time~\cite{Cheriyan90} or $O(n^{8/3}\log D)$ time~\cite{Goldberg98}.
Previous algorithms for dense graph
min-cost max-flow use $O(n^3 \log D )$ time~\cite{edmonds-karp-72} or $O(n^3\log n)$ time~\cite{Orlin88}.
Previous algorithms for
minimum convex-cost max-flow
use $O(D n^3\log D)$ time
(by simple reduction to min-cost max-flow)
or in $O(n^3\log D\log (nW))$ time~\cite{gabow-tarjan-89}.

In summary the novel aspects of this paper are:

$\bullet$
new time bounds for the fundamental problems of $b$-matching, undirected single-source shortest paths, and $f$-factors;

$\bullet$
extension of the Tutte matrix for matching to $f$-factors;

$\bullet$
definition of the shortest-path structure for undirected graphs, plus algebraic and combinatorial
algorithms to construct it;

$\bullet$
an algebraic algorithm for $b$-matching that is no more involved than ordinary matching;

$\bullet$
an algebraic algorithm for $f$-factors based on new combinatorial properties of blossoms,
which are qualitatively different from matching blossoms;

$\bullet$
new time bounds for vertex-capacitated max-flow, min-cost max-flow and minimum convex-cost max-flow on dense graphs.

\begin{table}[t]
{\def\x{{\sf X}}
\def\a{{\sf A}}
\def\c{{\sf C}}
\begin{center}
    {
    \noindent
      \begin{tabular}{|l*{6}{|c}|}
\multicolumn{7}{r}{\hbox to 1in{\bf Section}\hfill}\\
\cline{2-7}
  \multicolumn{1}{c|}{\quad}            &5&8&9&10&12&13\\ \hline
\multicolumn{1}{|l|}{\bf Shortest paths}&\multicolumn{6}{l|}{\quad}\\ \hline
\quad find distances& & & &\c && \x  \\ \hline
\quad find forest& &\c & & & &  \\ \hline
\quad extract sp-tree& & &\c & & & \\ \hline
\multicolumn{1}{|l|}{\bf \boldmath $b$-matching}&\multicolumn{6}{l|}{\quad}\\ \hline
\quad find duals& & & & & &\x  \\ \hline
\quad find forest \& extract&\c & & & & &  \\ \hline
\quad unweighted $b$-matching & & & & & \x&  \\ \hline
\multicolumn{1}{|l|}{\bf \boldmath $f$-factors}&\multicolumn{6}{l|}{\quad}\\ \hline
\quad find duals& & & & & &\x  \\ \hline
\quad find forest \& extract& &\c & & & &  \\ \hline
\quad unweighted $f$-factor& & & & &\x &  \\ \hline
\end{tabular}

      \vspace{0.2cm}
      {{\bf Table 3:}
Sections for each step of the algorithm.
X is an algebraic algorithm, C is combinatoric.}

 {\hskip -1.77in Shortest-path distances can be found algebraically  or
combinatorially.}

}
\end{center}
}
\end{table}

\paragraph*{Organization of the paper}
The next two sections define our terminology and review the algebraic tools that we use.

Section~\ref{section:bipartite} gives an overview of the entire paper, by discussing
the special case of bipartite $f$-factors.
(The reader should bear in mind that the non-bipartite case
is the highlight of this paper. It is technically much more demanding.)
In detail, Section~\ref{section:bipartite}  starts with an $O(\phi^{\omega})$ time algorithm
for unweighted bipartite $f$-factors. This requires our generalized Tutte matrix
(Section~\ref{section:biparite:formulations};
Section~\ref{sec:determinant}
extends this to non-bipartite graphs)
plus the
algorithmic details (Section~\ref{section:biparite:algorithms};
Section~\ref{sec:algorithms} generalizes these details to non-bipartite graphs).
Section~\ref{section:weighted-bipartite}
gives the algorithm for weighted bipartite $f$-factors.
Flows are discussed in Section~\ref{FlowSec}.

As mentioned above the algorithms for general graphs
have three steps:

\bigskip

\i find the weights of factors of perturbed graphs;

\ii use the weights to construct a blossom forest;

\iii use an unweighted algorithm to extract an  optimal solution from the blossom forest.

\bigskip

\noindent
These steps are implemented in different sections, depending on the problem of interest,
as indicated in Table 3. For instance the complete $b$-matching algorithm starts with
the algorithm of Section~\ref{sec:weights} to get the dual variables; then it constructs
a weighted blossom forest and traverses the forest to extract the optimal solution, using
algorithms in
Section~\ref{GeneralbMatchingSec}; the traversal uses the algorithm of Section~\ref{sec:algorithms}
to find various unweighted $b$-matchings.

The most involved part of the paper is the construction of blossom forest for $f$-factors in Section~\ref{fFactorSec}.
This section is preceded by two sections that introduce our combinatoric ideas in simpler settings:
Section \ref{GeneralbMatchingSec} gives the combinatoric portion of our weighted $b$-matching algorithm. Section \ref{SPSec} proves the existence of our shortest-path structure.
Section~\ref{BackgroundSec} reviews fundamental background material on $f$-factors
(illustrating it by  shortest-paths).
After the combinatoric algorithm for weighted $f$-factors in Section~\ref{fFactorSec},
Section \ref{SPAlgSec} gives the   algorithm to construct the shortest-path structure,
and Section \ref{SPcombinatoric} gives the combinatorial  algorithms for shortest-path weights.
Sections~\ref{sec:determinant}--\ref{sec:algorithms}
show how to find unweighted $f$-factors in general graphs.
Section~\ref{sec:weights} shows how to compute the weights of perturbations of $f$-factors.
Finally Section \ref{ConclusionSec} concludes by posing several new open problems.

Given the length of this submission, we remark that various portions
can be read independently.
Section~\ref{section:bipartite} gives the whole development in the bipartite case.
Section~\ref{GeneralbMatchingSec} gives the  combinatoric part of the
$b$-matching algorithm starting from first principles.
Section~\ref{SPSec} derives the shortest-path structure from first principles, entirely in the language
of shortest paths rather than matching. The related material in Section~\ref{SPAlgSec} is itself
independent, given the definition of the shortest-path structure (Section~\ref{SPDefnSec}).
Alternatively Sections~\ref{GeneralbMatchingSec}--\ref{SPSec} can be skipped to go directly
to the combinatoric part of the $f$-factor algorithm (Sections~\ref{BackgroundSec}--\ref{fFactorSec}).

\section{Problem definitions}
\label{sec:definitions}

The symmetric difference  of sets is denoted by $\oplus$, i.e.,
$A\oplus B = (A-B)\cup (B-A)$.
We use a common convention to sum function values:
If $f$ is a  real-valued function on elements and $S$ is a set
of such elements, $f(S)$ denotes $\sum \set {f(v)} {v\in S}$.
Similarly if $z$ is a function on sets of elements then
$z\set {S} {S\in \S.}$ denotes $\sum \set {z(S)} {S\in \S.}$.

Let $G=(V,E)$ be an undirected graph, with vertex set $V = \{1, \ldots, n\}$.
We sometimes write $V(G)$ or $E(G)$ to denote vertices or edges of graph $G$. 
A {\em walk} is a sequence $A=v_0,e_1,v_1,\ldots,e_k,v_k$
for
vertices $v_i$ and edges $e_i=v_{i-1}v_i$.
The notation {\em $v_0v_k$-walk} provides the two endpoints.
The {\em length} of $A$ is $k$, and
the parity of $k$ makes
$A$ {\em even} or {\em odd}.
A {\em trail} is an edge-simple walk.
A {\em circuit} is a trail that
starts and ends at the same vertex.
The vertex-simple analogs are {\em path} and {\em cycle}.

In an undirected multigraph  $G=(V,E)$ each edge $e\in E$ has a positive
multiplicity $\mu(e)$. Each copy of a fixed $e\in E$ is a distinct edge, e.g., a trail
may up to $\mu(e)$ distinct copies of $e$.
For a multiset $S$, $2S$ denotes $S$ with every multiplicity doubled.
Similarly $2G$ denotes the multigraph $(V,2E)$.
If every multiplicity of $S$ is even then $S/2$ denotes $S$ with every
multiplicity halved.

For a set of vertices $S\subseteq V$ and a subgraph $H$ of $G$,
$\delta(S,H)$ ($\gamma(S,H)$) denotes the set of edges with exactly
one (respectively two) endpoints in $S$ (loops are in $\gamma$ but not
$\delta$). $d(v,H)$ denotes the degree of vertex $v$ in $H$.  When
referring to the given graph $G$ we often omit the last argument and write, e.g.,
$\delta(S)$.

 An edge weight function $w$ assigns a numeric weight to each edge. For complexity bounds
we assume the range of $w$ is $[-W..W]$, i.e., the set of integers of magnitude $\le W$.
 The {\it weight of  edge set} $F\subseteq E$ is $w(F) = \sum_{e\in F}w(e)$.
$w$ is {\em conservative} if there are no negative weight cycles. 
 For multigraphs we sometimes write
$w(e,k)$ to denote
the weight of the $k$th copy of edge $e$.


 Let $G=(V,E)$
 be a multigraph.
 For a function $f:V\to \mathbb Z_+$, an {\em $f$-factor} is a subset of edges $F\subseteq E$
such that $d(v,F)=f(v)$ for every $v\in V$.
Let $G=(V,E)$
be a graph,
where $E$ may contain loops $vv$ but no parallel edges.
For a function $b:V\to \mathbb Z^+$, a (perfect) {\em $b$-matching}
is a function $x:E \to \mathbb{Z}_+$ such that $\sum_{w: vw\in E} x(vw) = b(v)$.
A {\em maximum
$f$-factor} is an $f$-factor $F$ with maximum weight $w(F)$. Similarly
a {\em maximum
$b$-matching} is a perfect $b$-matching of maximum weight.



To simplify the time bounds we assume  matrix multiplication time is $\Omega(n^2\log n)$.
This allows us to include terms like $O(m)$ and $O(m\log n)$ within our overall bound $O(\phi^\omega)$.
Observe that $m=O(n^2)=O(\phi^2)$ if there are no
parallel edges, or if all copies of an edge have the same weight.
In the most general case -- arbitrary parallel edges --
we can assume $m=O(n\phi)=O(\phi^2)$ after linear-time preprocessing.
In proof, for each edge $uv$ the preprocessing discards all but
the $f(u)$ largest copies. This leaves $\le \sum \set {d(u)f(u)} {u\in V}
\le \sum \set {nf(u)} {u\in V}
\le n\phi$
edges in $G$.

\section{Algebraic preliminaries}
One of the fundamental ideas of our algorithms is to encode graph problems in matrices, in such
a way that determinant of a matrix is (symbolically) non-zero if and only if
the problem has a solution.  The Schwartz-Zippel Lemma~\cite{z79,s80}
provides us with an efficient non-zero test for such symbolic determinants. For our purposes
the following simplified version of it suffices.

\begin{corollary}[Schwartz-Zippel]
  \label{corollary-zippel-schwartz}
For any prime $p$,  if a (non-zero)
  multivariate polynomial of degree $d$ over $\mathcal{Z}_p$ is evaluated at a random point,
  the probability of false zero is
$\le d/p$.
\end{corollary}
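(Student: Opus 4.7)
The plan is to prove this by induction on $n$, the number of variables in the polynomial. Write $P(x_1,\ldots,x_n)$ for the given multivariate polynomial of total degree $d$ over $\mathbb{Z}_p$, and assume $P \not\equiv 0$. The goal is to bound the probability that $P(r_1,\ldots,r_n)=0$ when $(r_1,\ldots,r_n)$ is chosen uniformly at random from $\mathbb{Z}_p^n$.

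For the base case $n=1$, I would invoke the standard fact that a nonzero univariate polynomial of degree $d$ over a field has at most $d$ roots. Hence the probability that a uniformly random element of $\mathbb{Z}_p$ is a root is at most $d/p$, as desired.

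For the inductive step, I would group the polynomial by powers of $x_1$, writing
\[
P(x_1,\ldots,x_n) \;=\; \sum_{i=0}^{k} x_1^{\,i}\, Q_i(x_2,\ldots,x_n),
\]
where $k$ is the largest exponent of $x_1$ occurring in $P$, so that $Q_k \not\equiv 0$ and $\deg Q_k \le d-k$. Then I would split the bad event into two cases, handled by union bound. First, the inductive hypothesis applied to $Q_k$ (which has $n-1$ variables and degree $\le d-k$) gives $\Pr[Q_k(r_2,\ldots,r_n)=0] \le (d-k)/p$. Second, conditioned on $Q_k(r_2,\ldots,r_n)\ne 0$, the substituted polynomial $P(x_1,r_2,\ldots,r_n)$ is a nonzero univariate polynomial in $x_1$ of degree exactly $k$, so by the base case the probability that $r_1$ is one of its roots is at most $k/p$. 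Adding these bounds yields $\Pr[P=0] \le (d-k)/p + k/p = d/p$.

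The proof is standard and there is no real obstacle; the only delicate point is making sure the case split is handled cleanly (the inductive bound applies only because $Q_k$ is genuinely nonzero as a formal polynomial, and the base-case univariate bound applies to the substituted polynomial only when $Q_k(r_2,\ldots,r_n)$ is nonzero, giving it degree exactly $k$). Because the statement asked for is a simplified corollary of the full Schwartz--Zippel lemma, I would not need to track finer information such as individual-degree bounds.
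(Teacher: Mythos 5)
Your proof is correct and is the standard induction-on-the-number-of-variables argument for Schwartz--Zippel. Note, however, that the paper does not actually prove this statement: it is labeled a corollary and simply cited to Zippel and Schwartz, so there is no proof in the paper to compare against. Your argument is exactly the canonical one (peel off the top power of $x_1$, bound $\Pr[Q_k=0]$ by induction and the conditional root count of the resulting univariate polynomial by its degree $k$, then union-bound), and the case split is handled cleanly, including the degenerate case $k=0$ where the conditional contribution is zero.
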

In order to use this lemma, we will choose primes $p$ of size $\Theta(n^c)$ for some constant $c$.
We note that in a RAM machine with word size $\Theta(\log n)$, arithmetic modulo $p$ can be realized in constant time.


However, finding the right encoding is just the first and the easiest step, whereas the more complicated
part is to extract the actual solution from this encoding. In order to
do it we use the following algebraic tools. The first one allows us to update the inverse of the matrix 
after we have changed the matrix itself.
\begin{lemma}\hspace{-0.1cm}\textsc{(Sherman-Morrison-Woodbury Formula).}
Let $A$ be $n\times n$ non-singular matrix, and $U,V$ be $n \times k$ matrices, then
\begin{itemize2}
\item $A+UV^T$ is non-singular if and only if the $k\times k$ matrix $I_k+V^TA^{-1}U$ is non-singular and
$\det(A+UV^T) = \det(A)\det(I_k+V^TA^{-1}U)$,
\item if $A+UV^T$ is non-singular then
$
(A+UV^T)^{-1} = A^{-1} - A^{-1}U (I_k+V^TA^{-1}U)^{-1}V^TA^{-1}.
$
\end{itemize2}
\end{lemma}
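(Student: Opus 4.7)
The plan is to derive both parts simultaneously by introducing the $(n+k)\times(n+k)$ block matrix
$$M = \begin{pmatrix} A & -U \\ V^T & I_k \end{pmatrix}$$
and computing its determinant in two different ways via block Gaussian elimination. Since $A$ is non-singular, I can eliminate the lower-left block by left multiplication with a block-lower-triangular matrix, obtaining
$$M = \begin{pmatrix} I_n & 0 \\ V^T A^{-1} & I_k \end{pmatrix} \begin{pmatrix} A & -U \\ 0 & I_k + V^T A^{-1} U \end{pmatrix},$$
whose determinant is $\det(A)\det(I_k + V^T A^{-1} U)$. Alternatively, eliminating the upper-right block with a block-upper-triangular left multiplier yields
$$M = \begin{pmatrix} I_n & -U \\ 0 & I_k \end{pmatrix} \begin{pmatrix} A + UV^T & 0 \\ V^T & I_k \end{pmatrix},$$
whose determinant is $\det(A+UV^T)$. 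Equating the two computations gives the determinant identity, and because $\det(A)\neq 0$, this immediately yields the equivalence: $A+UV^T$ is non-singular iff $I_k + V^T A^{-1} U$ is.

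For the inverse formula, I would verify the claimed identity directly. Let $B = A^{-1} - A^{-1}U(I_k + V^TA^{-1}U)^{-1}V^TA^{-1}$ and set $S = I_k + V^T A^{-1} U$ for brevity. Expanding,
$$(A + UV^T)B = I_n - U S^{-1} V^T A^{-1} + UV^T A^{-1} - UV^T A^{-1} U S^{-1} V^T A^{-1}.$$
Collecting the terms containing $S^{-1}$ gives $-U(I_k + V^T A^{-1} U) S^{-1} V^T A^{-1} = -U S S^{-1} V^T A^{-1} = -UV^T A^{-1}$, which cancels the $+UV^T A^{-1}$ term and leaves $I_n$. Thus $B$ is a right inverse of $A+UV^T$, hence (for square matrices) its inverse.

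The main obstacle, such as it is, is purely notational bookkeeping: one must be comfortable toggling between the two block LU factorizations of $M$ for the determinant calculation, and in the inverse verification one must spot the factor $I_k + V^T A^{-1} U$ hidden in the sum of the last two terms so that it cancels $S^{-1}$. There is no deeper conceptual difficulty; once the augmented block matrix $M$ is written down the proof is entirely mechanical.
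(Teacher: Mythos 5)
Your proof is correct. Note that the paper states this lemma as a known classical fact and does not prove it, so there is no in-paper argument to compare against; your derivation is the standard one (the two block-LU factorizations of the bordered matrix $\left(\begin{smallmatrix} A & -U \\ V^T & I_k\end{smallmatrix}\right)$ for the determinant identity, followed by direct verification of the inverse). Both computations check out — the only implicit step worth making explicit is that the first part guarantees $S = I_k + V^TA^{-1}U$ is invertible whenever $A+UV^T$ is, so that $B$ is well defined before you verify it is a right (hence two-sided) inverse.
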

When $k=1$ the matrices $U$ and $V$ become length $n$ vectors. Such an update is called {\em rank-one
update}. In this special case the above lemma is called Sherman-Morrison formula.
Observe that for $k=1$ we can compute $(A+UV^T)^{-1}$ from $A^{-1}$  in $O(n^2)$ arithmetic operations.


In our algorithms we use the above formula but restricted to submatrices. Let $R$ ($C$) denote set of
rows (respectively columns) of matrix $A$. We denote the submatrix of $A$ restricted to rows in $R$ and
columns in $C$ by $A[R,C]$. Harvey~\cite[Corollary 2.1]{Harvey06} has observed that
$A^{-1}[S,S]$ can be computed in $O(|S|^{\omega})$ time after updates have been made to submatrix $A[S,S]$.


The final tool is rather recent result in symbolic computation by
Storjohann~\cite{storjohann03}. He has shown how to
compute a determinant of a polynomial matrix, as well as, how to
solve a rational system for polynomial matrix.

\begin{theorem}[Storjohann '03]
  \label{theorem-storjohann}
Let $K$ be an arbitrary field,
  $A \in K[y]^{n \times n}$ a polynomial matrix of degree $W$,
  and $b \in K[y]^{n \times 1}$ a polynomial vector of the same degree. Then
  \begin{itemize2}
  \item rational system solution $A^{-1}b$ (Algorithm~5 \cite{storjohann03}),
  \item determinant $\det(A)$ (Algorithm~12 \cite{storjohann03}),
  \end{itemize2}
  can be computed in $\tilde{O}(Wn^{\omega})$ operations in $K$, w.h.p.
\end{theorem}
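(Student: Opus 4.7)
The plan is to work $y$-adically: compute $A^{-1}$ as a formal power series in $y$ to sufficient precision, then recover the polynomial or rational data we need by exploiting degree bounds and rational reconstruction. First, after a shift $y\mapsto y+c$ for a generic $c\in K$ (or by assumption on $A$), we may assume $A(0)$ is nonsingular, so $A^{-1}\in K[[y]]^{n\times n}$ exists. Starting from $B_0=A(0)^{-1}$, apply Newton--Hensel doubling $B_{i+1}=B_i(2I-AB_i)$, which satisfies $AB_i\equiv I\pmod{y^{2^i}}$; after $O(\log(nW))$ doublings one has $A^{-1}$ modulo $y^{2nW+1}$. By Cramer's rule every entry of $A^{-1}b$ is a rational function with numerator and denominator of degree $\le nW$, so this precision suffices to reconstruct everything.

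The obstacle is cost. Naive Newton-Hensel manipulates, at step $i$, polynomial matrices of degree $2^i$, costing $\tilde O(2^i n^\omega)$ field operations and summing to $\tilde O(nW\, n^\omega)$, which is a factor $n$ too slow. The central device I would import from Storjohann is \emph{high-order lifting}: instead of maintaining the entire truncation $A^{-1}\bmod y^{2^i}$, maintain a small collection of high-order ``slices'' $[y^a\mathinner{.\,.} y^b]A^{-1}$ of bounded width together with residual matrices $R_i$ satisfying $AR_i\equiv y^{k_i}\cdot(\text{low-degree matrix})$. A careful recursion on these slices does one doubling at cost $\tilde O(n^\omega)$ \emph{independent of the current precision}, because the polynomial entries never exceed degree $O(W)$. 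Summed over $O(\log(nW))$ doublings the lifting work is $\tilde O(n^\omega\log(nW))$, plus $\tilde O(Wn^\omega)$ to assemble the $O(nW)$ output coefficients.

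For the rational system $A^{-1}b$, having computed $A^{-1}b\bmod y^{2nW+1}$, apply entrywise Pad\'e approximation (extended Euclidean algorithm on polynomials of degree $O(nW)$) to recover each numerator and the common denominator; each reconstruction costs $\tilde O(nW)$, giving $\tilde O(n^2 W)=\tilde O(Wn^\omega)$ overall. For $\det(A)$ one can use the power-series identity
\[
\det(A)=\det(A(0))\cdot\exp\bigl(\operatorname{tr}\log(A(0)^{-1}A)\bigr),
\]
evaluated to precision $y^{nW+1}$ with the same high-order-lifting machinery, and read off the polynomial $\det(A)$ of degree $\le nW$; alternatively, $\det(A)$ is the common denominator emerging from Pad\'e reconstruction on $A^{-1}e_j$ for a random standard basis vector $e_j$, which by Schwartz--Zippel is w.h.p.\ the true denominator.

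The hard part is the high-order lifting analysis: proving that one never has to materialize the full $O(nW)$-degree expansion of $A^{-1}$, but can extract precisely the slices needed to drive the Newton iteration forward. This invariant maintenance, together with the bookkeeping to combine slices into the final product $A^{-1}b$ and into the logarithm/exponential for $\det(A)$, is the quantitatively delicate core of Storjohann's argument.
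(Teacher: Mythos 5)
The paper does not prove this theorem; it is imported verbatim as a black-box citation to Storjohann's 2003 paper (Algorithms~5 and~12 therein), so there is no ``paper's own proof'' to compare against. Your sketch correctly identifies high-order lifting as the central device that makes Newton--Hensel doubling run in time independent of the accumulated precision, and the Pad\'e/rational-reconstruction step for $A^{-1}b$ is accurate in spirit. However, your account of the determinant is not what Storjohann does and is incomplete as stated: the $\det A = \det A(0)\cdot\exp(\operatorname{tr}\log(A(0)^{-1}A))$ identity is not used in Algorithm~12, and it requires characteristic $0$ (or characteristic $>nW$) to even make sense, which contradicts the ``arbitrary field $K$'' hypothesis. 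Your fallback---reading the denominator of $A^{-1}e_j$ for random $e_j$---also does not yield $\det(A)$; with high probability it yields the \emph{largest invariant factor} $s_n(A)$ of the Smith form, whereas $\det(A)$ is (up to a unit) the product $s_1\cdots s_n$ of all invariant factors. Storjohann's actual determinant algorithm recursively peels off $s_n$, compresses $A$ into a smaller system encoding the remaining invariant factors, and recurses; this layered unravelling is the part that is genuinely hard to get to run in $\tilde O(Wn^\omega)$ and your sketch omits it. So your plan would correctly reproduce the linear-system half of the theorem but would not, as written, prove the determinant half.
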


\section{Outline of the paper, and the bipartite case}
\label{section:bipartite}
This section has two purposes. First it presents our algorithms for
bipartite graphs, a simplification of the general approach.
(But even in the bipartite case our techniques were not previously known.)
Second, it gives a guide to the entire paper: Each time we introduce a
construction we discuss how it can be extended to the general case.
The section ends by giving our algorithms for vertex-capacitated flow.

\subsection{Determinant formulations}
\label{section:biparite:formulations}
Consider a simple bipartite graph $G$, with both vertex sets $V_0, V_1$ numbered from 1 to $n$.
Let $\phi=\sum_i f(i)/2$. Define a $\phi \times \phi$ matrix $B(G)$,
the {\em symbolic adjacency matrix of $G,f$}, as follows. 
A
vertex $i\in V_0$ is associated with $f(i)$ rows, which are indexed by a pair $i, r$, for
$0\le r <f(i)$. Similarly $j\in V_1$ is associated with $f(j)$ columns indexed by
$j,c$, for $0 \le c <f(j)$. $B(G)$ uses indeterminates
$x^{ij}_r$ and $y^{ij}_c$ and is defined by
\begin{equation}
\label{SimpleBipartiteMatrixEqn}
B(G)_{i,r,j,c} =
\begin{cases}
x^{ij}_{r} y^{ij}_{c}& ij\in E,\cr
0&\text{otherwise.}
\end{cases}
\end{equation}
Observe that each edge in the graph is represented by a rank-one submatrix given by the product
of two vectors $x^{ij} (y^{ij})^T$. Before we prove the main theorems we make the following 
observation that will show each edge can be used only once.

\begin{lemma}
\label{lemma-submatrix}
Let $A$ be a symbolic $n \times n$ matrix, let $R$ be the set of $m$ rows, and $C$ be
the set of $m$ columns of $A$. If $A[R,C]$ has rank bounded by $r$ then each term in the
expansion of $\det(A)$ contains at most $r$ elements from $A[R,C]$.
\end{lemma}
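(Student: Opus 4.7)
The plan is to decompose the permutation expansion of $\det(A)$ according to how many entries of $A[R,C]$ each term uses, and to show via a generalized Laplace expansion that groups using more than $r$ such entries vanish identically, thanks to the rank bound on $A[R,C]$.

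Concretely, for each $\sigma\in S_n$ set $k(\sigma) = |\{i\in R : \sigma(i)\in C\}|$ and partition
$$\det(A) = \sum_{k=0}^{m} S_k, \qquad S_k := \sum_{\sigma:\, k(\sigma)=k} \sgn(\sigma)\prod_i A_{i,\sigma(i)}.$$
Every term contributing to $S_k$ uses exactly $k$ entries from $A[R,C]$, so it suffices to show $S_k \equiv 0$ whenever $k > r$. To do so, group the permutations in $S_k$ by the pair $(R',C')$ with $R' = \sigma^{-1}(C)\cap R$ and $C' = \sigma(R')$; the restriction $\sigma|_{R'}$ then ranges over bijections $R'\to C'$, while $\sigma|_{[n]\setminus R'}$ ranges over bijections $[n]\setminus R' \to [n]\setminus C'$ subject to the additional constraint $\sigma(R\setminus R')\subseteq\bar C$. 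Recognizing each factor as a determinant (with the constraint encoded by zeroing the entries in $(R\setminus R')\times(C\setminus C')$) yields
$$S_k = \sum_{\substack{R'\subseteq R,\ C'\subseteq C \\ |R'|=|C'|=k}} \epsilon(R',C')\, \det(A[R',C'])\, \det(A^{(R',C')}),$$
where $\epsilon(R',C')$ is the usual Laplace sign and $A^{(R',C')}$ denotes the $(n{-}k)\times(n{-}k)$ submatrix indexed by $([n]\setminus R')\times([n]\setminus C')$ with the entries at $(R\setminus R')\times(C\setminus C')$ set to $0$.

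Since $A[R,C]$ has rank at most $r$, every $k\times k$ minor with $k > r$ vanishes as a polynomial, so $\det(A[R',C']) = 0$ for all $R',C'$ appearing above, giving $S_k \equiv 0$. For the surviving pieces $k\le r$, the matrix $A^{(R',C')}$ contains no entries from $R\times C$ at all (those positions are either outside its index range, because rows in $R'$ and columns in $C'$ are removed, or explicitly zeroed), so in each term of $\det(A[R',C'])\det(A^{(R',C')})$ the only factors from $A[R,C]$ are the $k \le r$ factors from $\det(A[R',C'])$. This establishes the bound. The main obstacle I anticipate is the sign bookkeeping in the Laplace step together with verifying that zeroing out $(R\setminus R')\times(C\setminus C')$ precisely encodes the constraint $\sigma(R\setminus R')\subseteq\bar C$; this is standard but pedantic. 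A slicker alternative uses row-multilinearity: factor $A[R,C] = UV^T$ over the fraction field with $U,V$ of width $r$, pad to $n\times r$ matrices $\tilde U,\tilde V$ by zero rows outside $R$ and $C$ respectively, and write $A = (A - \tilde U\tilde V^T) + \tilde U\tilde V^T$; splitting $\det(A) = \sum_{T\subseteq[n]}\det(M_T)$ by choosing the source of each row, any $|T|>r$ forces rank-dependence among the $T$-rows and hence $\det(M_T) = 0$, while every surviving term has exactly $|T|\le r$ entries from $A[R,C]$.
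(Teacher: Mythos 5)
Your main argument is correct and rests on the same two ingredients as the paper's proof: a generalized Laplace expansion of $\det(A)$ together with the fact that every $k\times k$ minor of $A[R,C]$ with $k>r$ vanishes identically. The paper organizes the expansion more economically, expanding along all $m$ rows of $R$ at once, $\det(A)=\sum_{M}\pm\det(A[R,M])\det(A[V_0-R,V_1-M])$ over $m$-element column sets $M$; then $\det(A[R,M])=0$ whenever $|M\cap C|>r$, and the complementary factor contains no rows of $R$ and hence no entries of $A[R,C]$ at all. This sidesteps the double sum over pairs $(R',C')$ and, more importantly, the constrained complementary determinant $A^{(R',C')}$ with its zeroed block, which is the delicate part of your bookkeeping (it does go through: zeroing $(R\setminus R')\times(C\setminus C')$ exactly kills the permutations violating $\sigma(R\setminus R')\subseteq\bar C$, and $\sgn(\sigma)$ factors as $\epsilon(R',C')\sgn(\alpha)\sgn(\beta)$ uniformly, but you would need to write that out). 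Your alternative via the rank factorization $A[R,C]=UV^T$ and row-multilinearity is a genuinely different and arguably cleaner route: it needs no sign analysis at all, only that any $|T|>r$ rows of the rank-$\le r$ matrix $\tilde U\tilde V^T$ are linearly dependent, and that each permutation term of $\det(A)$ is recovered from exactly one $\det(M_T)$, namely the one with $T=\{i:(i,\sigma(i))\in R\times C\}$. Either version establishes the lemma.
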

\begin{proof}
Using Laplace expansion we expand $\det(A)$ into $m \times m$ minors that contain all $m$ rows of $R$, i.e.,
 \[
\det(A)= \sum_{M\subseteq V_1,|M|=m} \sgn(M) \det(A[R,M])\det(A[V_0-R,V_1-M]),
 \]
where $\sgn(M) = \prod_{c\in M}(-1)^c$. Consider now each element of the above sum separately.
If $M$ contains $>r$ columns of $C$ then $\det(A[R,M])=0$, so the elements contributing to $\det(A)$ have $\le$ columns of $C$.
Moreover, $A[V_0-R,V_1-M]$ has no rows of $A[R,C]$, so $\det(A[R,M])\det(A[V_0-R,V_1-M])$ has $\le r$ entries from $A[R,C]$.
\end{proof}

The determinant of $B(G)$ is the sum of many different terms each containing exactly $\phi$ occurrences of variable
pairs $x^{ij}_{r}y^{ij}_{c}$. Each pair $x^{ij}_{r}y^{ij}_{c}$ corresponds to an edge $ij \in E$.
For a term $\sigma$ let $F_{\sigma}$ denote the multiset of edges that correspond to the variable pairs in $\sigma$. 
Define $\F.$ to be the function that maps each term $\sigma$ to $F_{\sigma}$.

\begin{theorem}
\label{BipSimpleDetThm}
Let $G$ be a simple bipartite graph. The function $\F.$ from terms in $\det(B(G))$ is
a surjection onto the $f$-factors of $G$. Consequently, $G$ has an $f$-factor if and only if $\det(B(G))\ne  0$.%
\footnote{The second part of the theorem suffices for undirected $f$-factors. But we will need the stronger claim of surjectivity for the weighted case.}
\end{theorem}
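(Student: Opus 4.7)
The plan is to work term-by-term in the permutation expansion of $\det(B(G))$. Each term $\sigma$ is indexed by a bijection $\pi$ between the $\phi$ row-indices $(i,r)$ and the $\phi$ column-indices $(j,c)$, and is nonzero precisely when each entry $B(G)_{(i,r),\pi(i,r)} = x^{ij}_r y^{ij}_c$ sits on an edge $ij\in E$. The multiset $F_\sigma$ of those edges then has $d(i,F_\sigma)=f(i)$ for each $i\in V_0$ (one incident edge per row of $i$) and $d(j,F_\sigma)=f(j)$ for each $j\in V_1$ (symmetrically), so half the theorem -- the degree condition -- comes for free.

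The central difficulty is to show $F_\sigma$ has no repeated edges. For a fixed edge $ij$, the rows $R_i=\{(i,r):0\le r<f(i)\}$ and columns $C_j=\{(j,c):0\le c<f(j)\}$ make $B(G)[R_i,C_j]=x^{ij}(y^{ij})^T$, a rank-one outer product. Lemma \ref{lemma-submatrix} then says every monomial surviving in $\det(B(G))$ uses at most one entry from this submatrix, so edge $ij$ appears at most once in the corresponding $F_\sigma$. Combined with the degree count above, this upgrades $F_\sigma$ from a multiset to an actual $f$-factor.

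For the surjectivity direction, given any $f$-factor $F$ I would pick at each $i\in V_0$ an arbitrary bijection between the $f(i)$ rows of $i$ and the $f(i)$ $F$-edges at $i$, and similarly at each $j\in V_1$ between the columns of $j$ and the $F$-edges at $j$. These local bijections glue into a global $\pi$ whose every sampled entry is nonzero and whose $F_\pi=F$, so $\mathcal{F}$ is onto. To deduce the consequence $\det(B(G))\ne 0$ iff $G$ has an $f$-factor, observe that once $F_\pi$ is simple the sole occurrences of $x^{ij}_*$ and $y^{ij}_*$ in the monomial of $\pi$ record the row index $r$ and column index $c$ that $\pi$ assigned to edge $ij$, so the monomial recovers $\pi$ uniquely. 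Consequently the $\prod_{i\in V_0} f(i)!\prod_{j\in V_1} f(j)!$ permutations realizing a fixed $F$ contribute pairwise distinct monomials with $\pm 1$ coefficients and cannot cancel.

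The main obstacle I anticipate is invoking Lemma \ref{lemma-submatrix} at the right level of granularity: its conclusion controls monomials of $\det(B(G))$ rather than individual permutation summands, so one must carefully distinguish "a single permutation uses an edge twice" (which need not vanish on its own) from "a surviving monomial uses an edge twice" (which is what the lemma actually forbids, via pairwise cancellation within Laplace blocks). Once this point is handled, the theorem reduces to routine bookkeeping with row/column incidences at each vertex.
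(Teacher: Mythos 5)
Your proof is correct and follows essentially the same route as the paper: surjectivity via arbitrary local orderings of the $F$-edges at each vertex (with non-cancellation because the indeterminates $x^{ij}_r, y^{ij}_c$ determine the permutation uniquely), and simplicity of $F_\sigma$ via Lemma \ref{lemma-submatrix} applied to the rank-one blocks $x^{ij}(y^{ij})^T$. Your explicit care in distinguishing surviving monomials of $\det(B(G))$ from individual permutation summands is a point the paper's own proof passes over silently, but it is the same argument.
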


\begin{proof}
First we show that the image of $\F.$ contains all $f$-factors of $G$.
Suppose $F$ is an $f$-factor in $G$. Order the edges of $F$
that are incident to each vertex arbitrarily.
If $ij \in F$ is the $r+1$st edge at $i$ and the $c+1$st edge at $j$
then it corresponds to entry $B(G)_{i,r,j,c}$. Thus $F$
corresponds to a nonzero term $\sigma$ in the expansion of  $\det(B(G))$.
Observe that entries $B(G)_{i,r,j,c}$ define $\sigma$ in
a unique way, so no other term has exactly the same
indeterminates. Of course there can be many terms representing $F$.

Can 
$\det(B(G))$ contain
terms that do not correspond to $f$-factors? We show the answer is no.
Suppose 
$\det(B(G))\ne 0$ and take any term $\sigma$ in the expansion of $\det(B(G))$.
$\sigma$ corresponds to an $f$-factor unless
more than one entry corresponds to the same edge
of $G$. This is impossible because edges are represented by rank-one
submatrices and Lemma~\ref{lemma-submatrix} shows elements of such submatrices
appear at most once.
\end{proof}

Now let $G$ be a bipartite multigraph. Let $\mu(e)$ denote the
multiplicity of any edge $e\in E$. 
Redefine the corresponding entry in $B(G)$ by
\begin{equation}
\label{MultiBipartiteMatrixEqn}
B(G)_{i,r,j,c} = \sum _{k=1}^{\mu(ij)} x^{ij,k}_{r} y^{ij,k}_{c}.
\end{equation}
In other words, now  edge $ij$ of multiplicity $\mu(ij)$ is represented by a submatrix
of rank $\mu(ij)$. Hence Lemma~\ref{lemma-submatrix} shows edge $ij$ can appear in a term of $\det(B(G))$
at most $\mu(ij)$ times. This leads to the following generalization of Theorem~\ref{BipSimpleDetThm}.

\begin{corollary}
\label{BipartiteMultiDetCor}
Let $G$ be a bipartite multigraph. The function $\F.$ from terms in $\det(B(G))$ is
a surjection onto the $f$-factors in $G$. Consequently, $G$ has an $f$-factor iff $\det(B(G))\ne  0$.
\end{corollary}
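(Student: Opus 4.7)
The plan is to mirror the proof of Theorem~\ref{BipSimpleDetThm}, with the single modification that each entry $B(G)_{i,r,j,c}$ is now a sum of $\mu(ij)$ monomials rather than one, so that we must additionally track the choice of edge-copy index $k$. The crux of the argument is to apply Lemma~\ref{lemma-submatrix} to the ``edge block'' submatrix in the multigraph setting.

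First I would compute the rank of the block. Fix an edge $ij\in E$ and consider the $f(i)\times f(j)$ submatrix of $B(G)$ consisting of the rows indexed $(i,\cdot)$ and the columns indexed $(j,\cdot)$. By \eqref{MultiBipartiteMatrixEqn} this submatrix equals $\sum_{k=1}^{\mu(ij)} x^{ij,k}(y^{ij,k})^{T}$, a sum of $\mu(ij)$ rank-one outer products, and therefore has rank at most $\mu(ij)$.

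Next I would show that every nonzero term $\sigma$ in the Leibniz expansion of $\det(B(G))$ yields a valid $f$-factor $F_\sigma$. Such a $\sigma$ corresponds to a permutation $\pi$ of the $\phi$ row/column indices together with, for each index pair $((i,r),(j,c))$ with $\pi(i,r)=(j,c)$, a choice of one summand $x^{ij,k}_r y^{ij,k}_c$ from the entry $B(G)_{i,r,j,c}$. Each vertex row $(i,r)$ and column $(j,c)$ is used exactly once, which forces $d(v,F_\sigma)=f(v)$ for every $v\in V$. The only remaining requirement on $F_\sigma$ is that edge $ij$ appears at most $\mu(ij)$ times; but this is exactly the content of Lemma~\ref{lemma-submatrix} applied to the rank-$\mu(ij)$ submatrix isolated above.

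For surjectivity, I would start from an arbitrary $f$-factor $F$ of the multigraph $G$ and let $m_{ij}\le\mu(ij)$ be the multiplicity with which $F$ uses edge $ij$. I then order the $f(v)$ edge-incidences of $F$ at each vertex $v$ arbitrarily, which assigns to the $(r{+}1)$st incidence at $i$ a row label $(i,r)$ and to the $(c{+}1)$st incidence at $j$ a column label $(j,c)$. For each copy of $ij$ used by $F$, with some copy index $k\in\{1,\dots,\mu(ij)\}$, the corresponding entry $B(G)_{i,r,j,c}$ contributes the specific summand $x^{ij,k}_r y^{ij,k}_c$ to a term $\sigma$ with $\F.(\sigma)=F$. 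Finally, to derive the ``consequently'' clause, I would observe that no cancellation can kill this term: the indeterminate $x^{ij,k}_r$ appears only in row $(i,r)$ and, within that row, only inside the $k$th summand of the single entry of column type $j$ that it touches; similarly for $y^{ij,k}_c$. Thus the monomial of $\sigma$ uniquely determines both the permutation $\pi$ and the summand selections, so $\sigma$ survives in $\det(B(G))$.

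There is no serious obstacle — the rank bound from Lemma~\ref{lemma-submatrix} does all the real work, and the rest is bookkeeping. The only minor point to watch is that the incidence orderings at $i$ and at $j$ are chosen independently and yet must be compatible; this is automatic because a single copy of $ij$ contributes one incidence at each endpoint, so the choice of $(r,c)$ is simply what the two independent orderings happen to assign.
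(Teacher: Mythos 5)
Your proposal is correct and follows the paper's own route: the paper likewise observes that the edge block $\sum_{k=1}^{\mu(ij)}x^{ij,k}(y^{ij,k})^T$ has rank $\mu(ij)$, invokes Lemma~\ref{lemma-submatrix} to cap the multiplicity of $ij$ in any term, and otherwise reuses the argument of Theorem~\ref{BipSimpleDetThm} for surjectivity and non-cancellation. (One cosmetic imprecision: $x^{ij,k}_r$ actually appears in all $f(j)$ entries of row $(i,r)$ in the $j$-block, but your conclusion stands since the matching superscript $k$ on the paired $y^{ij,k}_c$ still pins down the column, so the monomial determines the term uniquely.)
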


%
%
%


The final point of this section
concerns the complexity of using the $B(G)$ matrix.
As in most algebraic algorithms we evaluate $B(G)$ 
using a random value for each indeterminate to get a matrix $B$. 
If $G$ is a simple graph
this is easily done in time $O(\phi^2)$. But the situation is less clear for multigraphs. The most extreme case is exemplified by $b$-matching. 
Consider
an arbitrary edge $ij$  and let $B[I,J]$ denote the $f(i)\times f(j)$ submatrix of $B$ that represents it.
$ij$  has multiplicity $\mu(ij)\le \min\{f(i),f(j)\}$.
When $f(i)$ and $f(j)$ are $\Theta(\phi)$ the time to compute
the expression of \eqref{MultiBipartiteMatrixEqn} is $\Theta(\phi)$.
So computing the $\Theta(\phi^2)$ entries 
of $B[I,J]$ 
uses time $\Theta(\phi^3)$. 
But this can be avoided and we can construct $B$ in time
$O(\phi^\omega)$, as follows. 

As observed above $B[I,J]$ 
is the product of an $f(i) \times \mu(ij)$ matrix
$X$ of indeterminates $x^{ij,k}_{r}$ and an 
$\mu(ij)\times f(j)$ 
matrix $Y$
of indeterminates $y^{ij,k}_{c}$. Wlog assume $f(i)\le f(j)$. We have the
trivial bound $\mu(ij)\le f(i)$. 
Break up the product $XY$ into products of $\mu(ij)\times \mu(ij)$ matrices,
i.e., break $X$ into $f(i)/\mu(ij)$ matrices of size $\mu(ij)\times \mu(ij)$,
and similarly $Y$, to get $XY$ as a sum of $\frac{f(i)}{\mu(ij)}
\frac{f(j)}{\mu(ij)}$ products of $\mu(ij)\times \mu(ij)$ matrices. Using fast matix multiplication on these products, the total time to compute $B$ is bounded by a constant times
 \[\sum_{f(i)\le f(j)} f(i)f(j)\mu(ij)^{\omega-2}   \le
\sum_{f(i)\le f(j)} f(i)^{\omega-1} f(j)\le
\sum_i f(i)^{\omega-1}(\sum_{f(i)\le f(j)} f(j))
\le \sum_i \phi^{\omega-2}f(i)\phi 
\le \phi^{\omega}.\]  

\paragraph{Generalizations}
The generalization of these ideas to non-bipartite graphs is given in Section~\ref{sec:determinant}.
We combine the above idea, submatrices with bounded rank,  with the idea of
Tutte to construct a skew-symmetric matrix. Additionally we need to take care of self-loops in multigraphs.
In the Tutte matrix self-loops do not need appear since they
cannot be used in a $1$-factor.

\subsection{Finding $f$-factors}
\label{section:biparite:algorithms}

This section gives our algorithm to find an $f$-factor of
a  bipartite multigraph. We follow the
development from~\cite{ms04}:
We start with  an $O(\phi^3)$-time algorithm. Then we show
it can be implemented in $O(\phi^{\omega})$ time using the 
Gaussian elimination algorithm of Bunch and Hopcroft~\cite{hb}.

An {\em allowed edge} is an edge belonging to some $f$-factor.
For perfect matchings
the notion of allowed edge is easily expressible using the
inverse of $B(G)$:
$ij$ is allowed if and only if $B(G)^{-1}_{i,j}$ is non-zero~\cite{rv89}.
We will prove a similar
statement for bipartite $f$-factors.
(But such a statement fails for
non-bipartite graphs -- see Appendix~\ref{appendix-allowed-edges}.)
For a given $f$ define $f_{i,j}$ to be
\[
f_{i,j}(v) = \begin{cases}
f(v)-1 & \textrm{if } v = i \textrm{ or } v=j,\\
f(v) & \textrm{otherwise.}
\end{cases}
\]
\begin{lemma}
\label{BipartiteAllowedLemma}
Let $G$ be a bipartite multigraph having an $f$-factor. Edge $ij\in E$ is
allowed if and only if $G$ has an $f_{i,j}$-factor. 
\end{lemma}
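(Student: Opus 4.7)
The forward direction is immediate: if $F$ is an $f$-factor containing edge $ij$, remove one copy of $ij$ from $F$; vertices $i$ and $j$ each lose one unit of degree and every other vertex keeps its degree, so the result is an $f_{i,j}$-factor of $G$.

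For the converse, let $F$ be an $f$-factor of $G$ and $F'$ an $f_{i,j}$-factor; I will produce an $f$-factor containing $ij$. First I dispose of two easy cases. If $F(ij)\ge 1$ then $F$ itself already contains $ij$. Otherwise $F(ij)=0$; if moreover $F'(ij)<\mu(ij)$, then adjoining a single copy of $ij$ to $F'$ raises the degrees at $i$ and $j$ from $f(i)-1,f(j)-1$ to $f(i),f(j)$ without exceeding any multiplicity, yielding an $f$-factor containing $ij$.

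Only the case $F(ij)=0$, $F'(ij)=\mu(ij)$ is delicate, and my plan is to exploit bipartiteness by viewing $D(e):=F(e)-F'(e)$ as a signed integer flow. Since $\sum_{e\ni v} D(e) = d(v,F)-d(v,F')$ equals $1$ at $v\in\{i,j\}$ and $0$ elsewhere, orienting every edge from $V_0$ to $V_1$ makes $D$ an integer flow of value $1$ from $i$ to $j$. Splitting it into $D^{+}=\max(D,0)$ and $D^{-}=\max(-D,0)$ and building a directed multigraph $\vec{G}$ with $D^{+}(e)$ forward and $D^{-}(e)$ backward copies of every edge $e$, standard integer flow decomposition yields a vertex-simple directed $i$-to-$j$ path $P$ in $\vec{G}$ together with a collection of directed cycles.

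The key point will be that $P$ avoids edge $ij$ altogether. In the hard case $D^{+}(ij)=\max(0-\mu(ij),0)=0$, so $\vec{G}$ has no forward copy of $ij$; this rules out $P$ being the single-edge path $i\to j$, and any longer vertex-simple path would have to visit $i$ or $j$ internally to traverse an $ij$ copy, contradicting simplicity. I then augment $F'$ along $P$: add one copy of each forward edge of $P$ and remove one copy of each backward edge. A routine check (using $F'(e)+1\le F(e)\le\mu(e)$ on forward edges and $F'(e)-1\ge F(e)\ge 0$ on backward edges) shows that the resulting multiset $F''$ respects multiplicities, and flow conservation at internal vertices of $P$ combined with the unit surpluses at $i$ and $j$ gives $d(v,F'')=f(v)$ for every $v$; hence $F''$ is an $f$-factor, and since $P$ does not touch $ij$ we have $F''(ij)=F'(ij)=\mu(ij)\ge 1$, so $F''$ contains $ij$. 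The only real obstacle is this last case, resolved by the combination $D^{+}(ij)=0$ plus vertex-simplicity of $P$ in the bipartite orientation.
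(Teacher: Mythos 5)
Your proof is correct. The paper's own argument for the converse is the classical exchange argument: assuming the given $f$-factor $F$ misses $ij$, it picks an $f_{i,j}$-factor $F'$ maximizing $|F'\cap F|$, notes that $F\oplus F'$ contains an alternating $ij$-trail beginning and ending with $F$-edges, uses the extremal choice (plus the fact that cycles in a bipartite graph are even, hence alternating, hence swappable) to force that trail to be a vertex-simple path $P$, concludes $ij\notin P$, and outputs $(F\oplus P)+ij$. You reach the same destination by a different decomposition: you view $D=F-F'$ as an integer flow of value $1$ under the $V_0\to V_1$ orientation, extract a vertex-simple directed $i$-to-$j$ path from the flow decomposition while simply discarding the cycles, and augment $F'$ along it. Bipartiteness enters your argument through the consistent orientation (which makes the path alternate between $F$-surplus and $F'$-surplus edges) rather than through cycle parity, and the cycle-discarding of flow decomposition replaces the paper's extremal argument. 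A further genuine difference is that you modify $F'$ rather than $F$, and you isolate the only nontrivial situation explicitly — $F'$ saturating all $\mu(ij)$ copies of $ij$ — where $D^{+}(ij)=0$ together with vertex-simplicity keeps the augmenting path off $ij$; your treatment of multiplicities is more explicit than the paper's (which treats each parallel copy as a distinct edge and glosses over this). One practical remark: the paper later reuses its proof verbatim for the corollary about finding an $f$-factor of $G-P$ containing $ij$ for a prescribed star $P=\{ip_1,\dots,ip_r\}$; your flow-based augmentation of $F'$ (which may use edges of $P$) would need a small adaptation, e.g.\ working in $G-P$ throughout, before it could be substituted there.
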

\begin{proof}
The ``only if'' direction is clear:
If $F$ is an $f$-factor containing
$ij$, then $F-ij$ is an $f_{i,j}$-factor.

Conversely, suppose $F$ does not contain
the chosen edge $ij$. Take an $f_{i,j}$-factor $F'$ that maximizes $|F'\cap F|$.
$F'\oplus F$ contains an
alternating $ij$-trail $P$
that starts and ends with edges of $F$.
In fact $P$ is a path.
(Any cycle $C$ in $P$
has even length and so is alternating.
This makes $F'\oplus C$ an $f_{ij}$-factor containing
more edges of $F$ than $F'$, impossible.)

$ij$ is not the first edge of $P$ ($ij\notin F$).
So $ij\notin P$, since $P$ is vertex simple.
Thus $(F\oplus P) +ij$ is an $f$-factor containing $ij$.
\end{proof}

\begin{lemma}
\label{lemma:f_i_j-factor}
Let $G$ be a bipartite multigraph having an $f$-factor, let $B(G)$ be its symbolic adjacency matrix, and let $i\in V_0$ and $j \in V_1$.
Then $(B(G)^{-1})_{j,0,i,0}\neq 0$ if and only if $G$ has $f_{i,j}$-factor.
\end{lemma}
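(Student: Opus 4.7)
The plan is to reduce the inverse entry to a single cofactor and then recognize that cofactor as a (relabeled) symbolic adjacency matrix for $f_{i,j}$.

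First, I would invoke the standard cofactor expression for the inverse:
\[
(B(G)^{-1})_{j,0,i,0} \;=\; \frac{(-1)^{\pi(i,j)}\,\det M}{\det B(G)},
\]
where $M$ is the submatrix of $B(G)$ obtained by deleting the row indexed by $(i,0)$ and the column indexed by $(j,0)$, and $\pi(i,j)$ is the appropriate sign. Since $G$ has an $f$-factor by hypothesis, Corollary~\ref{BipartiteMultiDetCor} guarantees $\det B(G) \neq 0$ (as a polynomial in the indeterminates). Hence $(B(G)^{-1})_{j,0,i,0}$ is nonzero exactly when $\det M$ is nonzero.

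Next I would identify $M$ with the symbolic adjacency matrix of $G$ relative to $f_{i,j}$. By definition $f_{i,j}$ subtracts one from both $f(i)$ and $f(j)$. The row set of $M$ consists of all row indices $(i',r)$ of $B(G)$ except $(i,0)$; by relabeling $(i,r) \mapsto (i,r-1)$ for $r\ge 1$ we obtain exactly the row indexing scheme used to build $B_{f_{i,j}}(G)$, and the analogous relabeling of columns handles $(j,0)$. Under this relabeling the entries of $M$ are of precisely the form \eqref{MultiBipartiteMatrixEqn} (with a harmless renaming of the indeterminates $x^{i'j',k}_{r}$ and $y^{i'j',k}_{c}$), so $M$ equals $B_{f_{i,j}}(G)$ as a symbolic matrix up to renaming of variables. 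Renaming does not affect whether the determinant is the zero polynomial, so $\det M \neq 0$ iff $\det B_{f_{i,j}}(G) \neq 0$.

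Finally, applying Corollary~\ref{BipartiteMultiDetCor} to the graph $G$ with degree constraints $f_{i,j}$ gives $\det B_{f_{i,j}}(G) \neq 0$ iff $G$ has an $f_{i,j}$-factor, yielding the lemma. The only real delicacy is the bookkeeping in step two: one must check that after deleting row $(i,0)$ and column $(j,0)$, the remaining rows and columns belonging to $i$ and $j$ still pair with every other vertex in the same rank-one/rank-$\mu(ij)$ pattern prescribed by \eqref{MultiBipartiteMatrixEqn}. This is immediate because deletion affects only the $i,0$-row and $j,0$-column blocks and leaves every other block untouched; no other vertex's degree information is disturbed.
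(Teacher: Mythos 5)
Your proof takes exactly the same route as the paper's: express $(B(G)^{-1})_{j,0,i,0}$ as a signed cofactor over $\det B(G)$, observe the denominator is nonzero by Corollary~\ref{BipartiteMultiDetCor}, and identify the minor obtained by deleting row $(i,0)$ and column $(j,0)$ with the symbolic adjacency matrix for $f_{i,j}$. The only difference is that you spell out the index-relabeling / variable-renaming step more explicitly than the paper does, which is a harmless (and even slightly more careful) elaboration of the same argument.
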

\begin{proof}
Observe that
\[
(B(G)^{-1})_{j,0,i,0} = \frac{(-1)^{n(j,0)+n(i,0)}\det(B(G)^{i,0,j,0})}{\det(B(G))},
\]
where $B(G)^{i,0,j,0}$ is the matrix $B(G)$ with $i,0$'th row and $j,0$'th column removed, and $n(i,k)$ is the
actual index of the row or column given by the pair $i,k$. We have $\det(B(G))\neq 0$ since $G$ has an $f$-factor. Hence $(B(G)^{-1})_{j,0,i,0}\neq 0$
if and only if $\det(B(G)^{i,0,j,0}) \neq 0$. Furthermore $B(G)^{i,0,j,0}$ is the symbolic adjacency matrix
obtained from $G$ for $f_{i,j}$-factors.
\end{proof}
Observe that by the symmetry of the matrix $B(G)$, when $(B(G)^{-1})_{j,0,i,0}\neq 0$ then as
well $(B(G)^{-1})_{j,\kappa,i,\iota}\neq 0$ for all $0\le \iota <f(i)$ and $0\le \kappa <f(j)$.
Combining the above two lemmas with this observation we obtain the following.

\begin{corollary}
\label{corollary:allowed-edge}
Let $G$ be a bipartite multigraph having an $f$-factor, and let $B(G)$ be its symbolic adjacency matrix. The edge $ij\in E$ is
allowed if and only if $(B(G)^{-1})_{j,\kappa,i,\iota}\neq 0$ for any $0\le \iota <f(i)$ and $0\le \kappa <f(j)$.
\end{corollary}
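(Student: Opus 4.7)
The plan is to combine Lemma~\ref{BipartiteAllowedLemma} with a very mild strengthening of Lemma~\ref{lemma:f_i_j-factor}: we only need to promote the specific entry $(j,0,i,0)$ to an arbitrary entry $(j,\kappa,i,\iota)$. By Lemma~\ref{BipartiteAllowedLemma}, edge $ij$ is allowed iff $G$ has an $f_{i,j}$-factor, so it suffices to show that, for every $\iota\in[0,f(i))$ and $\kappa\in[0,f(j))$,
\[
(B(G)^{-1})_{j,\kappa,i,\iota}\neq 0 \iff G \text{ has an } f_{i,j}\text{-factor}.
\]
In particular all these entries are simultaneously zero or non-zero, so either reading of "for any" in the statement is justified.

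To prove the displayed equivalence I would invoke Cramer's rule exactly as in the proof of Lemma~\ref{lemma:f_i_j-factor}:
\[
(B(G)^{-1})_{j,\kappa,i,\iota} \;=\; \pm\,\frac{\det\!\bigl(B(G)^{i,\iota,j,\kappa}\bigr)}{\det(B(G))},
\]
where $B(G)^{i,\iota,j,\kappa}$ denotes $B(G)$ with row $(i,\iota)$ and column $(j,\kappa)$ removed. The denominator is symbolically non-zero by Corollary~\ref{BipartiteMultiDetCor} and our hypothesis that $G$ has an $f$-factor. The numerator is, up to harmless renaming of indeterminates, itself the symbolic adjacency matrix of $G$ under the degree function $f_{i,j}$, so Corollary~\ref{BipartiteMultiDetCor} applied a second time identifies its non-vanishing with the existence of an $f_{i,j}$-factor.

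The only real obstacle is bookkeeping: verifying that deleting the pair $(i,\iota),(j,\kappa)$ (rather than $(i,0),(j,0)$) still yields a symbolic adjacency matrix for $f_{i,j}$-factors. This is immediate from the construction~\eqref{MultiBipartiteMatrixEqn}, since the $f(i)$ rows assigned to vertex $i$ carry disjoint families of indeterminates $\{x^{ij,k}_r\}$ that are interchangeable under any permutation of the index $r\in[0,f(i))$, and similarly for the $y$-variables on columns of $j$; one re-indexes the surviving rows of $i$ as $0,1,\dots,f(i)-2$ (and likewise for $j$) and the matrix that results is indistinguishable from the symbolic adjacency matrix of $(G,f_{i,j})$. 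Chaining the three equivalences (Lemma~\ref{BipartiteAllowedLemma}, Corollary~\ref{BipartiteMultiDetCor} applied to $f_{i,j}$, and Cramer's rule) then delivers the corollary.
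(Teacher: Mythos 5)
Your proposal is correct and takes essentially the same route as the paper: both reduce to Lemma~\ref{BipartiteAllowedLemma} and to the observation that deleting row $(i,\iota)$ and column $(j,\kappa)$ from $B(G)$ produces (up to a relabeling of the interchangeable indeterminate indices) the symbolic adjacency matrix of $(G,f_{i,j})$. The paper simply packages this as Lemma~\ref{lemma:f_i_j-factor} for $\iota=\kappa=0$ followed by a one-line symmetry remark, whereas you re-run the Cramer's-rule computation for general $(\iota,\kappa)$; the substance is identical.
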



Being equipped with a tool for finding allowed edges we can now use the Gaussian elimination framework from~\cite{ms04}.
The following observation is useful.
\begin{lemma}[\cite{ms04}]
\label{lemma-elimination}
Let $A$ be a non-singular $\phi \times \phi$ matrix and let $1\le i,j \le n$ be such that $(A^{-1})_{j,i} \neq 0$.
Let $A'$ be the matrix obtained from $A^{-1}$ by eliminating row $j$ and column $i$ using Gaussian elimination. Then
$A'= (A^{i,j})^{-1}$ (i.e., $A'$ is the Schur complement of $(A^{-1})^{j,i}$).
\end{lemma}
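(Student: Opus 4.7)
The plan is to prove this via the standard Schur complement identity together with the block matrix inversion formula. The result is a classical consequence of these two observations, and the proof reduces to bookkeeping once the right block structure is set up.

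First, I would reduce to the case where the pivot entry $(A^{-1})_{j,i}$ sits in the bottom-right corner. Concretely, let $P$ be the cyclic permutation moving index $i$ to the last position (and shifting $i+1,\dotsc,\phi$ one position up), and let $Q$ be the analogous permutation for index $j$. Then $PAQ$ has row $i$ of $A$ in the last row and column $j$ in the last column, so its top-left $(\phi{-}1)\times(\phi{-}1)$ block is exactly $A^{i,j}$. Moreover, $(PAQ)^{-1}=Q^{-1}A^{-1}P^{-1}=Q^TA^{-1}P^T$, which places entry $(A^{-1})_{j,i}$ in the bottom-right corner of the permuted inverse.

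Next I would carry out the Gaussian elimination explicitly in block form. Writing
\[
Q^T A^{-1} P^T = \begin{pmatrix} N & v \\ u^T & w \end{pmatrix}, \qquad PAQ = \begin{pmatrix} M & y \\ z^T & s \end{pmatrix},
\]
with $w=(A^{-1})_{j,i}\ne 0$ and $M=A^{i,j}$, pivoting on $w$ clears column $\phi$ above the pivot by the row operation replacing the top rows with $(N \;\; v)-vw^{-1}(u^T\;\; w)=(N-vw^{-1}u^T\;\; 0)$, and similarly clears row $\phi$ to the left. Deleting row $\phi$ and column $\phi$ then leaves exactly the Schur complement $N-vw^{-1}u^T$. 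Unpermuting row and column indices shows this is precisely the matrix $A'$ obtained from $A^{-1}$ by eliminating row $j$ and column $i$.

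Finally I would invoke the block inverse formula: since $w$ is an invertible bottom-right block of $Q^TA^{-1}P^T$, the top-left block of its inverse $PAQ$ equals $(N-vw^{-1}u^T)^{-1}$. But that top-left block is $M=A^{i,j}$. Hence $A^{i,j}=(N-vw^{-1}u^T)^{-1}=(A')^{-1}$, i.e.\ $A'=(A^{i,j})^{-1}$, as claimed. The main (and essentially only) subtlety is keeping the permutations and the transpose of $P,Q$ straight when passing between $A$ and $A^{-1}$; the algebraic content is just the Schur complement / block-inverse identity.
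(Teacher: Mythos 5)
The paper states this lemma as a citation to \cite{ms04} and supplies no proof of its own, so there is no in-paper argument to compare against. Your proof is correct and is the standard one: conjugate by permutation matrices so that the pivot $(A^{-1})_{j,i}$ sits in the corner, observe that Gaussian elimination followed by deletion of the pivot row and column produces exactly the Schur complement $N-vw^{-1}u^T$, and then read off from the block-inversion formula that the top-left block of $\bigl(Q^TA^{-1}P^T\bigr)^{-1}=PAQ$, namely $A^{i,j}$, equals $\bigl(N-vw^{-1}u^T\bigr)^{-1}$. The permutation bookkeeping is handled correctly (cyclic shifts preserve the relative order of the surviving indices, so the unpermuted Schur complement is literally the matrix $A'$ produced by eliminating row $j$ and column $i$ of $A^{-1}$), and the only point you leave implicit --- that the Schur complement, and hence $A^{i,j}$, is invertible --- follows immediately from the invertibility of the full matrix together with $w\neq 0$, which is exactly the hypothesis of the block-inverse formula you invoke.
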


The above lemma can be used to obtain the following algorithm that finds an $f$-factor.

\begin{algorithm}[H]
\caption{An $O(\phi^3)$ time algorithm to find an $f$-factor in a bipartite multigraph $G$.}
\begin{algorithmic}[1]
\State{Let $B(G)$ be the $\phi \times \phi$ matrix representing $G,f$}
\State{Replace the variables in $B(G)$ by random values from $\mathcal{Z}_p$ for prime $p=\Theta(\phi^2)$ to obtain $B$}
\State{If $B$ is singular return "no $f$-factor"}
\State{(with probability $\ge 1-\frac{1}{\phi}$ matrix $B$ is non-singular when $B(G)$ is non-singular)}\comment{by
Cor.~\ref{corollary-zippel-schwartz}}
\State{$F:=\emptyset$}
\State{Compute $N:=B^{-1}$}
\State{(each column $i,\iota$ of $B{^-1}$ has an allowed edge, since
$B B^{-1}=I$ gives $j,\kappa$ with
$B_{i,\iota,j,\kappa}B^{-1}_{j,\kappa,i,\iota}\neq 0$)} 
\For{$i=[1..n]$}
\For{$\iota=[0..f(i)-1]$}
\State{Find $j,\kappa$ such that $ij \in E - F$  and $N_{j,\kappa,i,\iota}\neq 0$}\comment{by Corollary~\ref{corollary:allowed-edge} edge $ij$ is allowed}
\State{Eliminate the $j,\kappa$'th row and the $i,\iota$'th column from $N$}\comment{using Gaussian elimination}
\State{(Lemma~\ref{lemma-elimination} shows $N=(B^{i,\iota,j,\kappa})^{-1}$, i.e., $N$ encodes $f_{i,j}$-factors, but see below)}
\label{algorithm:f-factor-bipartite-simple:eliminate}
\State{Set $F:=F+ ij$}
\EndFor
\EndFor
\State{Return $F$}
\end{algorithmic}
\label{algorithm:f-factor-bipartite-multi}
\end{algorithm}

The comment of line 12 is adequate for $\iota=0$.
However $\iota>0$ requires an additional observation.
To see this first recall the logic of each iteration: Let $f'$ be the residual degree requirement function,
i.e.,
the current $F$, enlarged with an $f'$-factor
of the current graph, gives  an $f$-factor of $G$.
In line 10, the $f'$-factor $F'$ that contains $ij$
is a subgraph of the graph corresponding to (the current) $N$ and its corresponding matrix $B$. 
Now suppose the iteration for $\iota=0$ adds edge $ip$ to $F$.
When the row and column for $ip$ are deleted from $B$,
the remaining rows for vertex $i$ still contain entries corresponding to edge $ip$
(recall the definition of $B(G)$). So when the iteration for $\iota=1$ chooses its edge $ij$,
the corresponding $f'$-factor $F'$ may contain edge $ip$.  But $F$ cannot be enlarged
with $F'$, since that introduces two copies of $ip$.
The same restriction applies to iterations for $\iota>1$, but now it concerns all previously chosen
edges $ip$.

Actually there is no problem because we can guarantee an $f'$-factor avoiding all the previous
$ip$'s exists. The guarantee is given by the following corollary to Lemma \ref{BipartiteAllowedLemma}.
(Note when  $r=0$ the corollary is a special case of the lemma. Also, the converse of the corollary
holds trivially.)

\begin{corollary}
Consider a set of edges $P=\{i p_1, \,\ldots, i p_r\}$,   
$r\ge 0$. Suppose $G-P$ has an $f$-factor. If $G$ has an $f_{ij}$-factor for some edge $ij
\notin P$ then $G-P$ has an $f$-factor containing $ij$.
\end{corollary}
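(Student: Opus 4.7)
The plan is to adapt the proof of Lemma~\ref{BipartiteAllowedLemma}, with the one extra observation that a vertex-simple path leaving $i$ uses only a single edge at $i$. Let $F$ be the $f$-factor of $G-P$ guaranteed by hypothesis; if $ij \in F$ we are done, so assume $ij \notin F$. Let $F'$ be an $f_{ij}$-factor of $G$, chosen to maximize $|F' \cap F|$; such an $F'$ exists by the hypothesis that $G$ has an $f_{ij}$-factor.

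A routine parity and degree count shows that $F$ and $F'$ agree in degree at every vertex other than $i,j$, while $F$ exceeds $F'$ in degree by exactly one at each of $i$ and $j$. Hence $F \oplus F'$ decomposes (via the standard Eulerian pairing at each vertex) into closed alternating trails together with a single open alternating $ij$-trail $Q$ that starts and ends with edges of $F$. Exactly as in Lemma~\ref{BipartiteAllowedLemma}, any cycle traced by $Q$ would be even and alternating, and toggling $F'$ along it would yield an $f_{ij}$-factor with strictly larger intersection with $F$, contradicting maximality. Therefore $Q$ is vertex-simple, i.e., a genuine $ij$-path.

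The new ingredient is now immediate: a vertex-simple $ij$-path touches $i$ exactly once, at its starting endpoint, so $Q$ contains a unique edge at $i$ --- its first edge --- which lies in $F \subseteq G-P$ and hence is not in $P$. Since every edge of $P$ is incident to $i$, it follows that $Q$ is disjoint from $P$. Moreover $ij \notin Q$: the first edge of $Q$ belongs to $F$ but $ij \notin F$, and no later edge of $Q$ is incident to $i$. A standard degree check then shows $(F \oplus Q) + ij$ is an $f$-factor, and all of its edges lie in $G - P$ (they come from $F \subseteq G-P$, or from $Q \setminus F \subseteq F'$ but away from $i$ and thus outside $P$, or they equal $ij \notin P$). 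This delivers the required $f$-factor of $G - P$ that contains $ij$.

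The only point that requires care is that everything takes place in a multigraph, so $F, F'$ are multisets and $F \oplus F'$ must be read multiplicatively; I expect no real obstacle here, since the Eulerian pairing and the augmenting-cycle toggling argument go through verbatim once parallel copies are treated as distinct edges (as in Section~\ref{sec:definitions}).
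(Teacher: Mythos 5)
Your proof is correct and follows essentially the same route as the paper, which simply reruns the proof of Lemma~\ref{BipartiteAllowedLemma} starting from an $f$-factor $F$ of $G-P$; your added observation that the vertex-simple alternating path meets $i$ only in its first edge (which lies in $F\subseteq G-P$), so that no edge of $P$ can enter $(F\oplus Q)+ij$, is exactly the detail the paper leaves implicit.
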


\begin{proof}
The proof of Lemma \ref{BipartiteAllowedLemma} applies, assuming we start by taking $F$
to be the assumed $f$-factor.
\end{proof}

Finally observe that Algorithm 1 is implementing Gaussian elimination on $B^{-1}$, the only difference being that pivot elements are chosen to
correspond to edges of the graph. If there exists an $f$-factor, there is an allowed edge incident to each vertex.
Hence, even with this additional requirement Gaussian elimination is able to find a non-zero element in each row of $B^{-1}$.

Bunch and Hopcroft~\cite{hb} show how to speed up the running time of Gaussian elimination
from $O(\phi^3)$ to $O(\phi^{\omega})$, by using lazy updates to the matrix. Let us divide the columns of
the matrix into two almost equal parts. Let $L$ denote the first $\lceil \phi/2 \rceil$
columns that are to be eliminated, whereas let $R$ denote the remaining $\lfloor \phi/2 \rfloor$ columns.
Bunch and Hopcroft observed that columns in $R$ are not used until we eliminate all columns from $L$.\footnote{In their paper
the elimination proceeds row by row, whereas it is nowadays more usual to present Gaussian elimination on columns.}
Hence all  updates to columns in $R$ resulting from elimination of columns in $L$ can be done once
using fast matrix multiplication in $O(|R|^{\omega})$ time. By applying this scheme
recursively one obtains an $O(\phi^{\omega})$ time algorithm.

\paragraph{Generalizations}
Section~\ref{sec:algorithms} gives $O(\phi^{\omega})$ time algorithms for finding $f$-factors in non-bipartite multigraphs. There are
several things that need to be done differently. As discussed above the criteria for finding allowed edges -- Corollary~\ref{corollary:allowed-edge} -- does not work any more.
We need to work with the weaker notion of {\em removable edges}, i.e.,
the edges that can be removed from the graph so that it still contains an $f$-factor. This forces us to use a different approach,
based on Harvey~\cite{Harvey06}, which works with removable edges. This poses a new challenge, to handle multiple copies
of edges in multi-graphs, as removing such edges one by one could require $O(\phi^3)$ time. To overcome this problem we use binary search with the
Sherman-Morrison-Woodbury formula to remove multiple copies of an edge in one shot.

\subsection{Weighted \boldmath {$f$}-factors}
\label{section:weighted-bipartite}
In this section we discuss how to find a maximum $f$-factor in a weighted bipartite graph.
For the sake of simplicity we assume in this section that the weight function is non-negative ,i.e.,
$w:E\to [0..W]$. If this is not the case we can redefine $w(ij):=w(ij)+W$, what changes
the weight of each $f$-factor by exactly $W f(V)/2 = W\phi$.
Let us start by recalling the dual problem for maximum $f$-factors. In this
problem each vertex $v$ is assigned a real-valued weight $y(v)$. We say that the dual $y$
{\em dominates} the edge $uv\in E$ when $y(u)+y(v) \ge w(e)$, or it {\em underrates}
the edge $uv\in E$ when $y(u)+y(v) \le w(uv)$.  The objective that we need to minimize in
dual problem is
\[
y(V,E) = \sum_{v\in V}f(v)y(v) + \sum_{uv\in E\textrm{ is underrated}} w(uv)-y(u)-y(v).
\]
The dual $y$ minimizes $y(V,E)$, when there exists an $f$-factor $F$ such that $F$
contains only underrated edges, whereas its
complement contains only dominated edges. Observe that when we are given
the minimum dual $y$, then the above $f$-factor $F$ is a maximum weight $f$-factor.
On the other hand, in order to construct such maximum $f$-factor we need to take into it every strictly underrated edge
and arbitrary {\em tight} edges, i.e., edges $uv\in E$ for which $y(u)+y(v) = w(uw)$. Hence,
we can observe the following.
\begin{lemma}
\label{lemma-finding-weighted}
Given an optimal dual function $y$, a maximum $f$-factor of a bipartite multigraph can be constructed in $O(\phi^{\omega})$ time.
\end{lemma}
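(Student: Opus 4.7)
The plan is to reduce the problem to an unweighted $f$-factor computation on a subgraph, and then invoke the algorithm of Section~\ref{section:biparite:algorithms}. First I would classify each copy of each edge $uv$ (with weight $w(uv,k)$) into one of three groups using $y$: the \emph{strictly underrated} copies with $y(u)+y(v) < w(uv,k)$, the \emph{tight} copies with $y(u)+y(v) = w(uv,k)$, and the \emph{dominated} copies with $y(u)+y(v) > w(uv,k)$. Complementary slackness (exactly the condition stated just before the lemma) says that any maximum $f$-factor must contain every strictly underrated copy, must avoid every dominated copy, and is free to pick any subset of tight copies that fills out the degree requirement.

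Next I would set $F_0$ to be the multiset of all strictly underrated copies, define the residual degree requirement $f'(v) = f(v) - d(v, F_0)$ for every $v$, and let $G'$ be the subgraph of $G$ consisting of the tight copies. Since $y$ is an optimal dual, some maximum $f$-factor $F^{*}$ exists; it contains $F_0$ and lies in $F_0 \cup E(G')$, so $F^{*} \setminus F_0$ is an $f'$-factor of $G'$. In particular $f' \ge 0$ and $G'$ has an $f'$-factor, and moreover $f'(V) \le f(V) = 2\phi$, so the relevant size parameter of the reduced instance is at most $\phi$.

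I would then run the unweighted bipartite $f$-factor algorithm (Algorithm~\ref{algorithm:f-factor-bipartite-multi}, with the Bunch-Hopcroft recursive speedup) on $(G',f')$ to obtain an $f'$-factor $F'$ in $O(\phi^{\omega})$ time, and return $F := F_0 \cup F'$. By construction $F$ is an $f$-factor of $G$; since it contains every strictly underrated copy, avoids every dominated copy, and consists only of tight and strictly underrated copies, it satisfies the complementary slackness conditions with $y$ and is therefore a maximum $f$-factor.

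No step appears to present a real obstacle. Forming $F_0$, $f'$, and $G'$ is linear in the input. The bottleneck is the one call to the unweighted algorithm, which runs in $O(\phi^{\omega})$ time. The only mildly subtle point is the one already noted, namely that the optimality of $y$ is what guarantees $f' \ge 0$ and the existence of an $f'$-factor in $G'$, so that the unweighted subroutine succeeds; this is immediate from complementary slackness.
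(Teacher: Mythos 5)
Your proposal is correct and follows essentially the same route as the paper's own proof: take all strictly underrated edge copies as forced, reduce the degree requirements accordingly, and find an $f'$-factor among the tight copies with the unweighted algorithm. The only (minor) difference is that you are more careful than the paper's terse wording — the paper's proof says ``set of underrated edges,'' which by its own definition would include tight edges, but you correctly take only the \emph{strictly} underrated copies as $F_0$, which is clearly what is intended given the sentence preceding the lemma. You also spell out explicitly why complementary slackness guarantees $f'\ge 0$ and the existence of an $f'$-factor among tight edges, which the paper leaves implicit.
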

\begin{proof}
Let $U$ be equal to the set of underrated edges with respect to $y$. Set $f'(v) = f(v)-d(v,U)$.
Using Algorithm~\ref{algorithm:f-factor-bipartite-multi} find an $f'$-factor $T$ over the
set of tight edges with respect to $y$. The maximum $f$-factor is equal to  multiset sum of $U$ and $T$, i.e., to $U\uplus T$.
\end{proof}

This lemma shows that given an algorithm for finding unweighted $f$-factors all we need to know
is an optimal dual. Such an optimal dual can be obtained from the combinatorial interpretation
as given in~\cite{G}. Let us define $G^+$ to be $G$ with additional vertex $s\in V_1$ and
new 0 weight edges $su$, for all $u\in V_0$. In $G^+$ we set $f(s)=1$. Let $f_v$ be the
degree constraint function defined to be identical to $f$ except for $f_v(v)=f(v)+(-1)^i$, where
$v\in V_i$. Let $F_v$ be a maximum $f_v$-factor in $G^+$. To show that $F_v$ always
exists take $F$ to be any $f$-factor in $G$. Now, when $v\in V_0$ then $F+sv$ is an $f_v$-factor,
whereas when $v\in V_1-s$ then for any $uv\in F$, $F-uv+su$ is an $f_v$-factor

\begin{theorem}[\cite{G}]
\label{theorem-combinat-bip}
For a bipartite multigraph with an $f$-factor, optimal duals are given by
$y(v)=(-1)^iw(F_v)$ for $v\in V_i$.
\end{theorem}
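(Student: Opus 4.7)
The plan is to verify that $y$ is an optimal dual for the bipartite maximum $f$-factor LP in $G$, via LP sensitivity in the auxiliary graph $G^+$. First I would set up the computation in $G^+$: extend $f$ to $\tilde f$ on $V^+ = V \cup \{s\}$ by $\tilde f(s) = 0$, so that $M^+(\tilde f) = M(G)$ (writing $M^+$ for the maximum factor weight in $G^+$, and $M(G)$ for that in $G$). Pick an optimal dual $\tilde y^*$ for $G^+$'s $\tilde f$-LP, and normalize via shift invariance (shift $V_0$ by $+c$ and $V_1 \cup \{s\}$ by $-c$, which preserves the dual objective since $\tilde f(V_0) = \tilde f(V_1 \cup \{s\})$) so that $\tilde y^*(s) = 0$.

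Next, I would apply LP sensitivity: the value $M^+(b)$ is concave in the degree vector $b$ (being a min over feasible duals of a linear function of $b$), so for any optimal dual $\tilde y^*$ at $\tilde f$ we have $M^+(\tilde f + \Delta) \le M^+(\tilde f) + \tilde y^* \cdot \Delta$ for any feasible perturbation direction $\Delta$. Instantiating with $\Delta = e_u + e_s$ gives $w(F_u) \le M(G) + \tilde y^*(u)$, and with $\Delta = -e_v + e_s$ gives $w(F_v) \le M(G) - \tilde y^*(v)$. The main technical step, and the main obstacle, is to establish these inequalities as equalities simultaneously for a single $\tilde y^*$, yielding $\tilde y^*(u) = w(F_u) - M(G)$ for all $u \in V_0$ and $\tilde y^*(v) = M(G) - w(F_v)$ for all $v \in V_1$. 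I would invoke the total unimodularity of the bipartite incidence matrix: the primal LP is integer, the dual polytope has integer vertices, and the combinatorial certificates $F_u, F_v$ single out a consistent such vertex. A cleaner combinatorial alternative would construct an $f$-factor of $G$ satisfying complementary slackness with $y$ directly (containing every strictly underrated edge at full multiplicity and omitting every strictly dominated edge), but that construction also requires nontrivial reasoning about the structures of $F_u$ and $F_v$.

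Finally, the restriction $\tilde y^*|_V$ is itself an optimal dual for $G$'s LP. To see this, decompose
\[
D^+(\tilde y^*) \;=\; D_G(\tilde y^*|_V) \;+\; \tilde f(s)\,\tilde y^*(s) \;+\; \sum_{u \in V_0} \max\!\bigl(0,\, -\tilde y^*(s) - \tilde y^*(u)\bigr);
\]
the middle term vanishes because $\tilde f(s) = 0$, and the sum vanishes because $\tilde y^*(u) = w(F_u) - M(G) \ge 0$, using the elementary bound $w(F_u) \ge M(G)$ (any maximum $f$-factor $F^*$ of $G$ joined with the zero-weight edge $su$ is an $f_u$-factor of $G^+$). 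Hence $D_G(\tilde y^*|_V) = D^+(\tilde y^*) = M(G)$. The theorem's $y$ is the shift of $\tilde y^*|_V$ by $+M(G)$ on $V_0$ and $-M(G)$ on $V_1$; since $f(V_0) = f(V_1)$, this shift preserves $D_G$, giving $D_G(y) = M(G)$ and proving $y$ is an optimal dual.
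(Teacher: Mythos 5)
The paper itself contains no proof of Theorem~\ref{theorem-combinat-bip}; it is imported verbatim from \cite{G}, so there is nothing internal to compare your argument against. Judged on its own terms, your framework is sound up to a point: the reduction to $G^+$ with $\tilde f(s)=0$, the shift-invariance normalization $\tilde y^*(s)=0$ (valid because $f(V_0)=f(V_1)$), the concavity of $b\mapsto M^+(b)$ and the resulting supergradient inequalities $\tilde y^*(u)\ge w(F_u)-M(G)$ and $\tilde y^*(v)\le M(G)-w(F_v)$, and the closing bookkeeping (restriction to $V$, vanishing of the $su$-penalties via $\tilde y^*(u)\ge w(F_u)-M(G)\ge 0$, and the final shift) are all correct.

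The gap is exactly where you flag it, and the tools you invoke do not close it. Total unimodularity gives integrality of the dual vertices; it says nothing about tightness of the supergradient inequalities. Two separate things are missing. First, tightness in a single direction: $M^+(\tilde f+\Delta)=M^+(\tilde f)+\min\{\tilde y\cdot\Delta\}$ over optimal duals holds only if $M^+$ is linear on the whole segment $[\tilde f,\tilde f+\Delta]$; concavity gives only the inequality, and a piecewise-linear concave function with integral breakpoint data can still break strictly inside a unit segment, so this needs a problem-specific exchange or augmenting-path argument, not a generic LP fact. Second, simultaneity: even granting tightness direction-by-direction, you need one dual vertex that is simultaneously minimal in every $V_0$-coordinate and maximal in every $V_1$-coordinate; this is a lattice property of the optimal face that you assert (``single out a consistent such vertex'') but do not establish. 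The substance of the theorem lives precisely in these two steps. The combinatorial route you mention in passing is the one that actually works and is short: for an edge $uv$ with $u\in V_0$, $v\in V_1$, if $uv\notin F_v$ then $F_v+uv$ is an $f_u$-factor, giving $w(F_u)\ge w(F_v)+w(uv)$, i.e.\ $y(u)+y(v)\ge w(uv)$ (domination); if $uv\in F_u$ then $F_u-uv$ is an $f_v$-factor, giving $y(u)+y(v)\le w(uv)$ (underrating); one must then handle the cases where these hypotheses fail and verify $D_G(y)=M(G)$ by an alternating-trail argument. As written, your proposal is an honest outline whose central lemma remains open.
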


Hence, in order to construct optimal dual we need to know weights $w(F_v)$, for all $v \in V_0\cup V_1$.
At first sight it might seem that we did not gain anything, as instead of finding one $F$ factor
now we need to find all factors $F_v$. However, we do not need to find these factors. We only
need to know their weights, which is much easier. And the following lemma shows that we just need
to know $w(F_v)$ for one side of the bipartite graph.

\begin{lemma}
\label{lemma-secondside-bip}
For a bipartite multigraph with an $f$-factor, let $y(v)$ be an optimal dual for each $v\in V_1$.
An optimal dual $y(u)$ for $u \in V_0$ is equal to the largest value $y_u$ that makes
at least $f(u)$ edges incident to $u$ underrated, i.e.,
$|\{uv^{\in E} : y_u\le w(uv)-y(v)\}|\ge f(u)$.
\end{lemma}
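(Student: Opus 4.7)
The plan is to exploit the separability of the dual objective $y(V,E)$ and reduce the lemma to a one-variable convex optimization per vertex $u\in V_0$. By strong LP duality for the bipartite $f$-factor polytope, the dual $y$ that achieves the optimum of $y(V,E)$ exists, and in particular there is such an optimum whose restriction to $V_1$ coincides with the given function. Because $G$ is bipartite, every edge $uv$ involves exactly one endpoint in $V_0$ and one in $V_1$; hence once the values $y(v)$ for $v\in V_1$ are fixed, the objective $y(V,E)$ splits as a constant (depending only on those values) plus an independent sum over $u\in V_0$ of the per-vertex functions
\[
g_u(y_u)\;:=\;f(u)\,y_u\;+\;\sum_{uv\in E}\max\bigl(0,\ w(uv)-y(v)-y_u\bigr).
\]
So minimizing $y(V,E)$ reduces to minimizing each $g_u$ independently.

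Next I would analyze $g_u$. It is convex and piecewise linear in $y_u$, with breakpoints at the values $c_{uv}:=w(uv)-y(v)$ for $uv\in E$. A direct computation yields the right derivative
\[
g_u'^{+}(y_u)\;=\;f(u)-\bigl|\{uv\in E:\ c_{uv}>y_u\}\bigr|,
\]
and the left derivative obtained by replacing ``$>$'' with ``$\ge$''. Since $g_u$ is convex, a point $y_u$ minimizes $g_u$ iff $g_u'^{-}(y_u)\le 0\le g_u'^{+}(y_u)$, i.e.\ iff
\[
\bigl|\{uv\in E:\ c_{uv}\ge y_u\}\bigr|\ge f(u)\quad\text{and}\quad \bigl|\{uv\in E:\ c_{uv}>y_u\}\bigr|\le f(u).
\]
The first inequality is exactly the condition that at least $f(u)$ edges incident to $u$ are underrated.

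Now I would pin down the lemma's particular choice. Sort the values $c_{uv}$ in non-increasing order $c_1\ge c_2\ge\cdots\ge c_{d(u)}$ and let $y_u^{*}:=c_{f(u)}$, which is well-defined because $G$ has an $f$-factor and so $d(u)\ge f(u)$. At $y_u^{*}$ the set $\{uv:c_{uv}\ge y_u^{*}\}$ has size at least $f(u)$, while the set $\{uv:c_{uv}>y_u^{*}\}$ has size at most $f(u)-1$; both optimality conditions hold, so $y_u^{*}$ minimizes $g_u$. Moreover for any $y_u>y_u^{*}$ the count $|\{uv:c_{uv}\ge y_u\}|$ drops to at most $f(u)-1$, so $y_u^{*}$ is indeed the largest value at which the underrating condition holds. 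Combining the independent minimizers over all $u\in V_0$ with the given $y$ on $V_1$ therefore produces a global minimizer of $y(V,E)$, establishing the lemma.

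The main obstacle I anticipate is purely bookkeeping: carefully handling ties among the $c_{uv}$ so that the distinction between the left- and right-derivative conditions (strict vs.\ non-strict inequality) correctly selects the ``largest'' $y_u$ rather than the smallest or an interior point of the flat minimum of $g_u$. Once this tie-breaking is made explicit, the rest is immediate from convexity and separability.
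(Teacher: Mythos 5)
Your proof is correct, but it takes a different route from the paper's. The paper argues in three sentences via complementary slackness: any optimal dual admits a maximum $f$-factor whose $f(u)$ edges at $u$ are all underrated and whose $d(u)-f(u)$ non-factor edges at $u$ are all dominated, which sandwiches $y(u)$ between the $(f(u)+1)$-st and $f(u)$-th largest values of $c_{uv}=w(uv)-y(v)$ and identifies the lemma's choice as the top of that interval. You instead work directly with the objective $y(V,E)$: using bipartiteness to separate it into independent one-variable convex piecewise-linear functions $g_u$, computing the one-sided derivatives, and reading off the largest minimizer of each $g_u$. Your version is longer but more self-contained --- it does not invoke the (unproved in the paper) complementary-slackness characterization of optimal duals, only the definition of optimality as minimizing $y(V,E)$ plus the hypothesis that the given $y|_{V_1}$ extends to an optimum --- and as a bonus it exhibits the whole flat interval $[c_{f(u)+1},\,c_{f(u)}]$ of optimal choices for $y(u)$, of which the lemma's value is the right endpoint. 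The paper's version is shorter and makes transparent why the extracted dual is compatible with the $f$-factor used later in Lemma~\ref{lemma-finding-weighted}. Your tie-breaking worry is already resolved by the strict/non-strict distinction in your derivative conditions, so no further work is needed there.
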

\begin{proof}
Observe that there are at least $f(u)$ underrated edges incident to $u$ with respect to optimal dual $y$, as each
maximum $f$-factor needs to contain only underrated edges. On the other hand, the complement 
 contains at least $d(u)-f(u)$
dominated edges. This fixes the largest possible value for $y(u)$ 
as the value given in the lemma.
\end{proof}

Now consider a simple bipartite graph $G$ and similarly to~\eqref{SimpleBipartiteMatrixEqn}
define $B(G)$ as
\begin{equation*}
\label{SimpleBipartiteMatrixEqnW}
B(G)_{i,r,j,c} =
\begin{cases}
z^{w(ij)} x^{ij}_{r} y^{ij}_{c}& ij\in E,\cr
0&\text{otherwise,}
\end{cases}
\end{equation*}
where $z$ is a new indeterminate.
Theorem~\ref{BipSimpleDetThm} shows that there is a mapping $\F.$ from
terms of $\det(B(G))$ onto $f$-factors in $G$. Consider a term $\sigma$ in $\det(B(G))$. Observe
that its degree in $z$ is equal to the weight of $\F.(\sigma)$ because the powers
of $z$ get added in the multiplication. For a polynomial $p$, denote the
degree of $p$ in $z$ by $\deg_z(p)$. We obtain the following
observation.
\begin{corollary}
\label{corollary-BipSimpleWeights}
For a simple bipartite graph $G$, $\deg_z(\det(B(G)))$ equals the weight of a maximum $f$-factor in $G$.
\end{corollary}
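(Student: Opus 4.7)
The plan is to piggyback directly on Theorem~\ref{BipSimpleDetThm} and reduce everything to tracking the power of $z$ contributed by each term of the determinant. First I would invoke the surjection $\F.$ from terms of $\det(B(G))$ onto the set of $f$-factors of $G$. Because the $(i,r,j,c)$-entry of $B(G)$ now carries a factor $z^{w(ij)}$ and every term in the Leibniz expansion picks up exactly one entry per row, the monomial in $z$ contributed by a term $\sigma$ is $\prod_{ij \in F_\sigma} z^{w(ij)} = z^{w(F_\sigma)}$. Thus the degree in $z$ of any single term equals the weight of its associated $f$-factor, and since $\F.$ is surjective there is some term whose $z$-degree is exactly $\max_F w(F)$.

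The substantive issue — and the step I expect to be the only real obstacle — is ruling out cancellation among the several terms that attain this maximum $z$-degree. Here I would recycle the uniqueness observation made in the proof of Theorem~\ref{BipSimpleDetThm}: the indeterminates $x^{ij}_r$ and $y^{ij}_c$ appearing in a term $\sigma$ record, for each edge $ij \in F_\sigma$, the exact row-index $r$ at $i$ and column-index $c$ at $j$ that were used. Consequently the $(x,y)$-monomial of $\sigma$ determines $\sigma$ uniquely, so distinct terms in the expansion contribute pairwise distinct monomials in the combined variables $(x,y,z)$. In particular no two terms can cancel, and so the coefficient of $z^{\max_F w(F)}$ in $\det(B(G))$ is a nonzero polynomial in the $x,y$ indeterminates.

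Putting the two observations together, every term has $z$-degree at most $\max_F w(F)$, and there is at least one uncancelled term of $z$-degree exactly $\max_F w(F)$, so $\deg_z(\det(B(G))) = \max_F w(F)$. (If $G$ admits no $f$-factor then $\det(B(G)) = 0$ by Theorem~\ref{BipSimpleDetThm} and the statement is vacuous.) The argument is short precisely because the earlier theorem already did the hard combinatorial bookkeeping; all that remains is to notice that the new $z$ indeterminate faithfully records the weight.
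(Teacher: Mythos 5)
Your proof is correct and follows essentially the same route as the paper: the paper's own justification is the remark preceding the corollary, namely that the surjection $\F.$ of Theorem~\ref{BipSimpleDetThm} sends each term to an $f$-factor whose weight is the term's $z$-degree. Your explicit treatment of non-cancellation (distinct terms carry distinct $(x,y)$-monomials, an observation already made in the proof of Theorem~\ref{BipSimpleDetThm}) simply spells out what the paper leaves implicit.
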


To compute $f_v$-factors in $G^+$ we use the following auxiliary graph. Let $G_*$ be $G^+$ with an additional vertex $t\in V_0$ that is joined to every vertex of $V_1-s$ by an edge of weight zero. 
Set $f(t)=1$. 
\begin{lemma}
\label{lemma:LowerPerturbationBipartite}
Every $v\in V_1$ satisfies $\deg_z(\adj(B(G_*))_{v,0,t,0}) = w(F_v).$
\end{lemma}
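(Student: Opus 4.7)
The plan is to identify $\adj(B(G_*))_{v,0,t,0}$, up to sign, with the determinant of an appropriate minor of $B(G_*)$, then recognize that minor as the symbolic adjacency matrix encoding $f_v$-factors of $G^+$, and finally invoke (the multigraph analog of) Corollary~\ref{corollary-BipSimpleWeights} to read off the $z$-degree as $w(F_v)$.

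First I would unwind the cofactor formula: $\adj(B(G_*))_{v,0,t,0} = \pm \det(M)$, where $M = B(G_*)^{t,0,v,0}$ is obtained from $B(G_*)$ by deleting the row indexed $(t,0)$ and the column indexed $(v,0)$. The sign does not affect $\deg_z$, so it is enough to show $\deg_z(\det(M)) = w(F_v)$. Next I would argue that $M$ is precisely the symbolic adjacency matrix of $G^+$ under the requirement function $f_v$. Since $f(t)=1$, the row $(t,0)$ is the \emph{only} row belonging to $t$; deleting it forces every nonvanishing term of $\det(M)$ to use no edge incident to $t$. As $t$ is adjacent only to $V_1 - s$, the surviving subgraph is exactly $G^+$. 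On the column side, deleting column $(v,0)$ reduces $v$'s requirement from $f(v)$ to $f(v)-1 = f_v(v)$, while all other requirements ($f(u)$ for $u \in V_0$, $f(s)=1$, and $f(j)$ for $j \in V_1-s-\{v\}$) are untouched. This coincides with $f_v$ on $G^+$, and the row/column counts match since $f(V_0) = f(V_1)$ in the original bipartite problem.

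Having identified $M$ as the weighted symbolic adjacency matrix for $f_v$-factors in $G^+$, I would apply Corollary~\ref{corollary-BipSimpleWeights} (extended to multigraphs via Corollary~\ref{BipartiteMultiDetCor}) to obtain $\deg_z(\det(M)) = w(F_v)$. The existence of an $f_v$-factor --- explicitly noted in the paragraph preceding the lemma --- ensures $\det(M) \ne 0$, so the degree is well defined. Combining with the cofactor identification yields $\deg_z(\adj(B(G_*))_{v,0,t,0}) = w(F_v)$.

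The main obstacle, and the place where I expect to have to be most careful, is confirming that no cancellation among maximum-weight terms can decrease the degree in $z$. This is where the bipartite structure pays off: each permutation contributing to $\det(M)$ produces a distinct monomial in the $x^{ij,k}_r$ and $y^{ij,k}_c$ indeterminates (the subscripts $r$ and $c$ are pinned by the row and column positions), so terms of any fixed $z$-degree cannot collapse. The bookkeeping around the minor --- which vertices keep their original requirements and which are modified --- must be written out precisely, but once this is done the reduction to Corollary~\ref{corollary-BipSimpleWeights} is immediate.
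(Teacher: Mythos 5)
Your proof is correct and follows essentially the same route as the paper: both identify $\adj(B(G_*))_{v,0,t,0}$ up to sign with the determinant of the minor $B(G_*)^{t,0,v,0}$, recognize its nonvanishing terms as exactly the $f_v$-factors of $G^+$ (the row deletion eliminating $t$, the column deletion lowering $v$'s requirement by one), and read off the $z$-degree via Corollary~\ref{corollary-BipSimpleWeights}, using the existence of $F_v$ and the no-cancellation property from Theorem~\ref{BipSimpleDetThm}. The only cosmetic difference is that you say the minor \emph{is} the symbolic adjacency matrix of $G^+$ for $f_v$ (the framing the paper itself uses in Lemma~\ref{lemma:f_i_j-factor}), whereas the paper's proof of this lemma phrases the same observation as ``terms corresponding to $f$-factors of $G_*$ that use edge $tv$, with that edge removed.''
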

\begin{proof}
Observe that
\[
\adj(B(G_*))_{v,0,t,0} = (-1)^{n(t,0)+n(v,0)} \det(B(G_*)^{t,0,v,0}),
\]
where $B(G_*)^{t,0,v,0}$ is the matrix $B(G_*)$ with row $t,0$ and column $v,0$ removed and $n(i,r)$ gives
the order of rows and columns indexed by pairs $i,r$. By Theorem~\ref{BipSimpleDetThm} we know that $\det(B(G_*))$ consists of terms corresponding to $f$-factors in $G_*$. Hence the above equality
shows  terms of $\adj(B(G_*))_{v,0,t,0}$ correspond to $f$-factors that use edge $tv$, but with this edge removed. These are exactly the $f_v$-factors in $G^+$, because forcing the $f$-factor to use edge $tv$  effectively
decreases $f(v)$ by $1$. As we observed in Corollary~\ref{corollary-BipSimpleWeights} the degree of $z$ equals the total weight of corresponding $f$-factor. The lemma follows.
\end{proof}

Recall that the adjoint of a nonsingular matrix $A$ is $\det(A) A^{-1}$.
The lemma shows we are interested in column $t,0$ of the adjoint.
So let
$e_{t,0}$
be a unit vector,
with 1 in row $t,0$ and   zeroes elsewhere. 
Then the desired weights are found in the vector
$\adj(B(G_*))  e_{t,0}=\det(B(G_*))         B(G_*)   ^{-1} e_{t,0}$.
This leads to the following algorithm to find optimal duals for weighted bipartite $f$-factors.

\begin{algorithm}[H]
\caption{An $\tilde{O}(W \phi^{\omega})$ time algorithm to find optimal duals $y$ in a bipartite graph $G,f$.}
\begin{algorithmic}[1]
\State{Let $B(G_*)$ be $\phi \times \phi$ matrix representing $G_*$}
\State{Replace $x$ and $y$ variables in $B(G_*)$ by random values from $\mathcal{Z}_p$ for prime $p =\Theta(\phi^3)$ to obtain $B$}
\State{Compute vector $a := \adj(B) e_{t,0} = \det(B) B^{-1} e_{t,0}$.} \comment{requires $\tilde{O}(W\phi^{\omega})$ time using Theorem~\ref{theorem-storjohann}}
\For{$v \in V_1$}
\State{($\deg_z(a_v) = \deg_z(\adj(B(G_*))_{v,0,t,0})$ with probability $\ge 1-\frac{1}{\phi^2}$)}\comment{by Cor.~\ref{corollary-zippel-schwartz}}
\State{Set $w(F_v) := \deg_z(a_v)$}\comment{equality holds by Lemma~\ref{lemma:LowerPerturbationBipartite}}
\State{Set $y(v) := w(F_v)$} \comment{$y(v)$ is optimal by Theorem~\ref{theorem-combinat-bip}}
\EndFor\comment{by union bound all $y(v)$ are correct with probability $\ge 1-\frac{1}{\phi}$}
\For{$u \in V_0$}
\State{Set $y(u) := \max \{y_u: |\{uv^{\in E} : y_u\le w(uv)-y(v)\}|\ge f(u)\}$}
\State \comment{$y(u)$ is optimal by Lemma~\ref{lemma-secondside-bip}}
\EndFor
\end{algorithmic}
\label{algorithm:duals-simple-bip}
\end{algorithm}

Combining the above algorithm with Lemma~\ref{lemma-finding-weighted} we obtain an $\tilde{O}(Wn^{\omega})$ time algorithm for 
maximum $f$-factors in weighted bipartite graphs. It can be observed that this development works for bipartite multi-graphs
as well, when one changes the definition of $B(G)$ to
\begin{equation*}
\label{MultiBipartiteMatrixEqnW}
B(G)_{i,r,j,c} = \sum _{k=1}^{\mu(ij)} z^{w(ij,k)} x^{ij,k}_{r} y^{ij,k}_{c}.
\end{equation*}
As in Section \ref{section:bipartite}, some problem instances 
require a more careful construction of $B$.
But the algorithm of Section \ref{section:bipartite} is easily extended. 
First note that
$b$-matching problems can be handled using the algorithm
of Section \ref{section:bipartite} unchanged, since every
copy of a fixed edge $ij$ has the same weight $w(ij,k)=w(ij)$. 
The most general case ($f$-factor problems with high multiplicities
and parallel edges of different weights) 
is easily handled as follows.
Decompose $G$ into multigraphs $G_w$
containing the edges of weight $w$, for $ w=0,\ldots, W$.
So $B(G)=\sum_{w=0}^W z^w B(G_w)$, and analogously, 
$B=\sum_{w=0}^W z^w B_w$ (where all $B_w$'s use the same random values for
the $x$ and $y$ variables). 
Compute each $B_w$ using the algorithm of Section \ref{section:bipartite}
and combine. 
Each $B_w$ is found in time $O(\phi^\omega)$ so the total time is $O(W\phi^\omega)$.

\paragraph{Generalizations}
The relation between the primal and dual problems in the bipartite case is considerably simpler than  the general case. The latter
not only contains dual variables for vertices, but also for subsets of vertices. Such subsets with non-zero 
dual value 
are called blossoms. We can prove these blossoms are nested and so form a blossom tree. Moreover, for each blossom we need
to find a set of spanned edges and incident edges
that are all underrated. This cannot be done in such a simple way as  Lemma~\ref{lemma-secondside-bip}.
But again knowing the weights of maximum $f_v$-factors turns out to be  enough. The procedure that deduces all this information is highly nontrivial and is described in Section~\ref{fFactorSec}. This section is preceded with two special, simpler cases. First  Section~\ref{GeneralbMatchingSec} considers maximum $b$-matchings, where there are no underrated incident edges. Second Section~\ref{SPSec} considers shortest paths in undirected graphs with negative weights, where each blossom has exactly one underrated incident edge. Finally Section~\ref{sec:weights} generalizes Algorithm~\ref{algorithm:duals-simple-bip} to compute
$w(F_v)$ for non-bipartite graphs. Alternatively for the special case of shortest paths
$w(F_v)$ can be computed combinatorially as 
in Section~\ref{SPcombinatoric}.


\subsection{Min-cost max-flow}
\label{FlowSec}
This section presents network flow algorithms. We discuss maximum flows and minimum cost maximum flows,
both in vertex-capacitated networks. 

We are given a directed network $N=(V,E)$, with source $s$ and sink
$t$, $s,t\in V$. For convenience let $V^-$ denote the set of nonterminals, $V-\{s,t\}$.
The 
edges and nonterminal
vertices 
have integral capacities given by 
$c:V^-\cup E\to [1..D]$. Let $g:V\times V\to \mathbb Z$ be a 
flow function. 
Besides the standard {\it edge capacity} and {\it flow conservation} constraints we have {\it vertex capacity}
constraints, i.e., for each vertex $v\ne s,t$ we require
\[
\sum_{u\in V} g(u,v) \le c(v).
\]

We begin by constructing a bipartite graph $G_N$ whose
maximum $f$-factor 
has weight equal to 
the value of a maximum flow in $N$. 
Wlog assume  that no edge enters $s$ or leaves $t$. The construction
proceeds as follows:

$\bullet$
 for each $v\in V^-$ place vertices $v_{in}, \, v_{out}$ in $G_N$;

$\bullet$ also place vertices $s_{out}, \, t_{in}$ in $G_N$;

$\bullet$
for each $v\in V^-$ add $c(v)$ copies of edge $v_{in}v_{out}$ to $G_N$;

$\bullet$
for each $(u,v)\in E$ add  $c(u,v)$ copies of edge $u_{out}v_{in}$ to $G_N$;

$\bullet$
for $v\in V^-$, set $f(v_{in})=f(v_{out})=c(v)$;

$\bullet$
set $f(s_{out}) = f(t_{in}) =c(V^-)$;

$\bullet$
set $w(e)=1$ for each edge $e$ leaving $s_{out}$ and $w(e)=0$ for every other edge of $G_N$;

$\bullet$ add $c(V^-)$ copies of edge $s_{out} t_{in}$ to $G_N$, all of weight 0.

\noindent
Note that in addition to the weight 0 edges $s_{out} t_{in}$,
$G_N$ may contain edges $s_{out} t_{in}$ of weight 1 corresponding to
an edge $st\in E$.

\begin{corollary}[\cite{gabow-tarjan-89}]
\label{corollary-max-flow}
Let $N$ be the flow network.
The weight of the maximum $f$-factor in $G_N$ is equal to the maximum flow value in $N$.
\end{corollary}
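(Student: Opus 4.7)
The plan is to exhibit a value/weight-preserving bijection between feasible $s$-to-$t$ flows in $N$ and $f$-factors of $G_N$. Once such a correspondence is in place, the corollary is immediate: the maximum flow value equals the maximum weight of an $f$-factor, and moreover an $f$-factor always exists (the zero flow gives one, so the problem is never infeasible).

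First I would define the forward map. Given an integral feasible flow $g$ of value $F$, construct a multiset of edges in $G_N$ as follows. For every edge $(u,v)\in E$, include exactly $g(u,v)$ of the $c(u,v)$ parallel copies of $u_{out}v_{in}$. For each nonterminal $v$, include $c(v)-\sum_u g(u,v)$ copies of the ``internal'' edge $v_{in}v_{out}$; by flow conservation this number equals $c(v)-\sum_w g(v,w)$, so the same count works from the $v_{out}$ side. Finally, include $c(V^-)-F$ copies of the weight-$0$ edge $s_{out}t_{in}$. The degree checks are then routine: each $v_{in}$ and $v_{out}$ has degree exactly $c(v)=f(v)$, while $s_{out}$ and $t_{in}$ each have degree $F+(c(V^-)-F)=c(V^-)$, as required. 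The vertex capacity $\sum_u g(u,v)\le c(v)$ ensures the slack $c(v)-\sum_u g(u,v)$ is non-negative, and the edge capacity $g(u,v)\le c(u,v)$ ensures we do not overshoot the available parallel copies.

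Next I would define the inverse map. Given an $f$-factor $M$ of $G_N$, set $g(u,v)$ equal to the number of copies of $u_{out}v_{in}$ used in $M$, for each $(u,v)\in E$. Edge capacities follow from the availability of at most $c(u,v)$ copies. The vertex capacity at $v\in V^-$ follows since $\sum_u g(u,v)$ is bounded by the degree $c(v)$ of $v_{in}$ in $M$. For flow conservation at $v$, the number of $v_{in}v_{out}$ copies used equals $c(v)-\sum_u g(u,v)$ computed at $v_{in}$, and also $c(v)-\sum_w g(v,w)$ computed at $v_{out}$; equating gives $\sum_u g(u,v)=\sum_w g(v,w)$. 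The two maps are clearly inverse to one another.

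Finally, I would check that weight matches flow value. Only edges leaving $s_{out}$ have weight $1$, and the weight-$0$ parallel copies of $s_{out}t_{in}$ added in the last bullet are explicitly excluded from the weight-$1$ class. Hence the weight of $M$ equals the number of weight-$1$ edges of $M$ leaving $s_{out}$, which is exactly $\sum_{(s,w)\in E} g(s,w)$, the value of the associated flow. The main bookkeeping obstacle, and the only place that requires care, is disentangling the two kinds of $s_{out}t_{in}$ edges: those of weight $1$ coming from a genuine network edge $(s,t)\in E$, and the $c(V^-)$ auxiliary weight-$0$ copies used purely to absorb the degree slack at $s_{out}$ and $t_{in}$. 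Once these are kept separate, the equation weight $=$ flow value holds by inspection, and the maxima agree.
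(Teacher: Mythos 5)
Your proposal is correct and follows essentially the same route as the paper: the paper's proof is a sketch of exactly this value-preserving correspondence between integral feasible flows and $f$-factors of $G_N$ (checking conservation, capacities, and that only the edges leaving $s_{out}$ carry weight), and you have simply written out the two directions of the bijection and the degree bookkeeping explicitly. No further comparison is needed.
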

\begin{proof}
\begin{figure}
\centering
\includegraphics[width=0.7\textwidth]{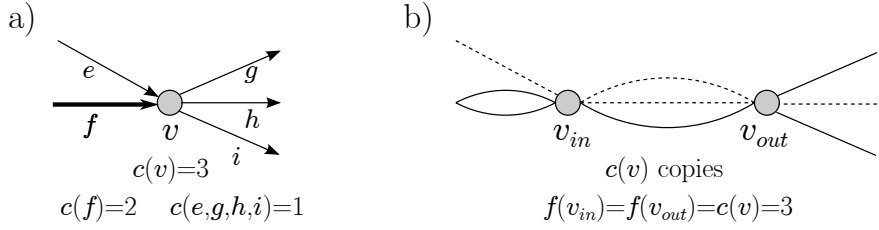}
\caption{A vertex $v$ of capacity $c(v)=3$ in $N$ is represented in $G_N$ by two vertices $v_{in}$ and $v_{out}$
connected by $3$ edges. The $f$-factor in (b) is marked with solid edges. Observe that the $f$-factor
must choose the same number of edges going in and out of $v$.}
\label{figure:maxflow}
\end{figure}
The main idea of the reduction is shown in Figure~\ref{figure:maxflow}. Observe that $f$-factors
in $G_N$ correspond to integral flows in $N$ that fulfill the flow conservation constraints. Moreover
the edge capacities are not exceeded since an edge cannot be used by an $f$-factor more times
than its capacity. Similarly a vertex cannot be used more times than its capacity. The only edges
with non-zero weights are edges incident to $s_{out}$, so the maximum $f$-factor maximizes the
amount of flow leaving $s$. Finally all this flow must wind up at $t$, since 
any $v\ne s,t$ has an equal number of $f$-edges incident to $v_{in}$ and $v_{out}$.
\end{proof}

Observe that  $G_N$ has $f(V)\le 4c(V^-)$. So the algorithm of Section~\ref{section:weighted-bipartite}
uses  $\tilde{O}(c(V^-)^{\omega})= \tilde{O}((Dn)^{\omega})$ time to find a maximum flow in a network.

Now assume that an edge $(u,v)$ of $N$ has a cost $a_{u,v} \in [-W..W]$, i.e., the cost of
sending $t$ units of 
flow on edge $(u,v)$ is linear and equals  $t a_{u,v}$. First find the maximum flow
value $f_{max}$ in $N$. Then modify the construction of $G_N$ to $G_{N,a}$ in the following way:

$\bullet$
add vertices $s_{in}$ and $t_{out}$;

$\bullet$
add $c(V^-)$ copies of edges $s_{in}s_{out}$ and $t_{in}t_{out}$;

$\bullet$
set $f(s_{in})= f(t_{out})=c(V^-)$ and
$f(s_{out})=f(t_{in})=c(V^-)+f_{max}$;

$\bullet$
for each copy of the edge $u_{out}v_{in}$ set $w(u_{out}v_{in})=-a_{u,w}$.

In Corollary~\ref{corollary-max-flow} we observed that $f$-factors in $G_N$ correspond to feasible flows in $N$. 
The new vertices  $s_{in}$ and $t_{out}$ allow 
an $f$-factor to model flow along cycles containing $s$ and $t$.
The new values of $f$ at the terminals make the corresponding flow from $s$ to $t$ have value $f_{max}$.
Thus $f$-factors in $G_{N,a}$ correspond
to maximum flows in $N$, and since costs of edges are negated between both networks, we have
the following.

\begin{corollary}[\cite{gabow-tarjan-89}]
\label{corollary-min-cost-max-flow}
Let $N$ be a flow network with linear edge costs.
A maximum $f$-factor in $G_{N,a}$ 
has
 weight 
equal to the minimum cost of a maximum flow in $N$.
\end{corollary}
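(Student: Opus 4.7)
The plan is to extend the correspondence established in Corollary~\ref{corollary-max-flow} (between $f$-factors in $G_N$ and feasible flows in $N$) to a correspondence between $f$-factors in $G_{N,a}$ and \emph{maximum} flows in $N$, and then verify that under the cost-negation $w(u_{out}v_{in})=-a_{u,v}$ the $f$-factor weight equals minus the flow cost, so that maximizing the former minimizes the latter.

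First I would set up the correspondence at the internal vertices. For every $v\in V^-$ the construction is unchanged from $G_N$: the $c(v)$ copies of $v_{in}v_{out}$ together with the constraint $f(v_{in})=f(v_{out})=c(v)$ force any $f$-factor to use the same number of $u_{out}v_{in}$ edges as $v_{out}w_{in}$ edges, giving flow conservation at $v$, while edge capacities in $N$ are enforced by the multiplicity of each copy of $u_{out}v_{in}$ in $G_{N,a}$. So as in Corollary~\ref{corollary-max-flow} the restriction of any $f$-factor of $G_{N,a}$ to the edges $u_{out}v_{in}$ defines an integral feasible flow in $N$; call its value $\phi$.

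Next I would analyze the role of the new vertices $s_{in}$ and $t_{out}$ and the new degree constraints. At $s_{out}$ the factor must pick up $f(s_{out})=c(V^-)+f_{max}$ edges. These split between copies of $s_{in}s_{out}$ (at most $c(V^-)$) and the edges leaving $s$ in $N$ (whose total multiplicity is $\phi$). Similarly at $t_{in}$. The degree constraints $f(s_{in})=f(t_{out})=c(V^-)$ and $f(s_{out})=f(t_{in})=c(V^-)+f_{max}$ together with a simple counting argument (the number of $s_{in}s_{out}$ edges used is $c(V^-)$ if we allow the $f$-factor to slack the excess through these dummy loops, and similarly for $t_{in}t_{out}$) force $\phi=f_{max}$. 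Conversely, given any maximum flow in $N$, pick the corresponding $u_{out}v_{in}$ edges and then use the remaining $c(V^-)$ copies of $s_{in}s_{out}$ and $t_{in}t_{out}$ to satisfy the new degree requirements at $s_{in},\,s_{out},\,t_{in},\,t_{out}$; this produces a valid $f$-factor. Hence the correspondence restricts to a bijection between $f$-factors of $G_{N,a}$ and integral maximum flows in $N$.

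Finally I would compare weights. All edges other than the $u_{out}v_{in}$ edges have weight zero by construction; each copy of $u_{out}v_{in}$ has weight $-a_{u,v}$. Therefore the weight of an $f$-factor equals $-\sum_{(u,v)\in E} a_{u,v}\,g(u,v)$, the negative of the cost of the corresponding max flow. Maximizing the $f$-factor weight is thus the same as minimizing the cost of a maximum flow, proving the corollary. The main subtlety in the argument is the bookkeeping at the new vertices in step two: one must verify that the specific choice $f(s_{out})=f(t_{in})=c(V^-)+f_{max}$ together with $f(s_{in})=f(t_{out})=c(V^-)$ exactly pins down the flow value at $f_{max}$, rather than merely upper- or lower-bounding it.
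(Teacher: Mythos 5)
Your proposal follows the same high-level route as the paper's (one-paragraph) justification: use the Corollary~\ref{corollary-max-flow} correspondence between $f$-factors and feasible flows, argue that the modified degree bounds at the terminals pin the flow value to $f_{max}$, and then observe that negating edge costs turns maximizing factor weight into minimizing flow cost. You also correctly identify the delicate point as the degree bookkeeping at the terminal vertices.

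However, your step-two counting argument does not actually close that delicate point as written. You account for the degree $f(s_{out})=c(V^-)+f_{max}$ as split between copies of $s_{in}s_{out}$ and edges $s_{out}v_{in}$ corresponding to arcs $(s,v)\in E$ of $N$, but $G_{N,a}$ is obtained by modifying $G_N$, and $G_N$ also carries $c(V^-)$ weight-zero slack copies of $s_{out}t_{in}$. If those slack copies are still present, an $f$-factor can cover some of $s_{out}$'s and $t_{in}$'s residual $f_{max}$ demand with $p$ slack copies, giving a feasible flow of value $f_{max}-p$ rather than exactly $f_{max}$; when arc costs are positive, the maximum-weight factor would then prefer $p=f_{max}$ (flow value $0$, weight $0$), which is not a maximum flow. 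So the degree constraints alone do not ``pin down'' $\phi=f_{max}$. The intended reading is that the weight-zero $s_{out}t_{in}$ slack copies are dropped (or replaced by the $s_{in}s_{out}$ and $t_{in}t_{out}$ edges) when passing from $G_N$ to $G_{N,a}$; once that is made explicit, the rest of your argument—forcing all $c(V^-)$ copies of $s_{in}s_{out}$ and $t_{in}t_{out}$, hence forcing exactly $f_{max}$ genuine outflow/inflow edges, and the cost-negation—goes through cleanly.
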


Now the algorithm of Section~\ref{section:weighted-bipartite}
gives an $\tilde{O}(Wc(V^-)^{\omega})= \tilde{O}(W(Dn)^{\omega})$ time algorithm to find a min-cost max-flow.

Let us extend this reduction to convex cost functions. Assume 
the cost of sending $t$ units of flow on an edge $(u,v)$ is given by a convex function $a_{u,v}(t)$ such that marginal costs
satisfy $m_{u,v}(t) = a_{u,v}(t)-a_{u,v}(t-1)\in [-W..W]$. As usual
for this scenario~\cite{Ahuja93} we assume  $a_{u,v}$ is linear between successive
integers. This ensures that there exists an integral optimal solution.
We encode such cost functions in the graph by assigning different costs
to each copy of an edge, i.e., the $k$th copy of edge $u_{out}v_{in}$ has cost
$w(u_{out}v_{in},k)=-m_{u,v}(k)$ for $1\le k \le c(u,v)$. 

\begin{lemma}
\label{corollary-min-cost-max-flow-convex}
Let $N$ be a flow network with convex edge costs.
A maximum $f$-factor in $G_{N,a}$ 
has
 weight 
equal to the minimum cost of a maximum flow in $N$.
\end{lemma}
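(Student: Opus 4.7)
The plan is to extend Corollary~\ref{corollary-min-cost-max-flow} to the convex setting by exploiting the fact that the multiple copies of each multi-edge in $G_{N,a}$ encode a piecewise-linear convex cost function as a sum of per-copy contributions. The construction of $G_{N,a}$ (auxiliary vertices $s_{in},t_{out}$; the $c(V^-)$ copies of $s_{in}s_{out}$ and $t_{in}t_{out}$; the inflated terminal $f$-values) is identical to the linear case, so the bijection between $f$-factors of $G_{N,a}$ and maximum flows in $N$ has already been established in Corollary~\ref{corollary-min-cost-max-flow}.

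First I would fix a maximum flow $g$ in $N$ and set $t_{uv}:=g(u,v)$ for each $(u,v)\in E$. Any $f$-factor realizing $g$ contains exactly $t_{uv}$ of the $c(u,v)$ parallel copies of $u_{out}v_{in}$, but is free to select any sub-multiset of them. Hence, among $f$-factors realizing $g$, the one of maximum weight greedily picks the $t_{uv}$ copies of greatest $w$-value. Convexity makes this greedy step evaluate the cost function correctly: since $m_{u,v}(k)=a_{u,v}(k)-a_{u,v}(k-1)$ is non-decreasing in $k$, the weights $w(u_{out}v_{in},k)=-m_{u,v}(k)$ are non-increasing in $k$, so the top-$t_{uv}$ selection is the initial segment of copies $1,2,\ldots,t_{uv}$, and telescoping gives
\[
\sum_{k=1}^{t_{uv}} w(u_{out}v_{in},k) \;=\; -\bigl(a_{u,v}(t_{uv})-a_{u,v}(0)\bigr),
\]
which is exactly the quantity that would appear in the linear case of Corollary~\ref{corollary-min-cost-max-flow} had edge $(u,v)$ been assigned the single linear cost $a_{u,v}(t_{uv})$.

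Summing this contribution over $(u,v)\in E$ (terminal edges contribute zero) and then maximizing over $g$, the maximum $f$-factor weight in $G_{N,a}$ equals the same cost-expression evaluated at a minimum-cost maximum flow, under the sign convention inherited from Corollary~\ref{corollary-min-cost-max-flow}; thus the stated equality follows from that corollary applied copy-by-copy. The main subtlety is verifying the greedy step: one must check that the multigraph $f$-factor genuinely permits an arbitrary sub-multiset of copies (immediate from the multigraph definition, since only the total number of copies is constrained by the degrees at $u_{out}$ and $v_{in}$, not which specific copies are used), and that convexity is essential -- without it an optimal flow might prefer a later, cheaper marginal while skipping an earlier, more expensive one, a pattern the forced initial-segment selection cannot reproduce, so the reduction would fail to evaluate $a_{u,v}$ correctly.
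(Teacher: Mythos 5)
Your argument is essentially identical to the paper's proof: both observe that the maximum-weight $f$-factor selects the highest-weight copies of each multi-edge, that convexity of $a_{u,v}$ makes $m_{u,v}(k)$ non-decreasing (hence the copy-weights $-m_{u,v}(k)$ non-increasing), so the greedy choice is the initial segment $1,\ldots,t_{uv}$, whose weights telescope to $-a_{u,v}(t_{uv})$, after which the result follows from Corollary~\ref{corollary-min-cost-max-flow}. Your write-up just makes explicit a couple of steps the paper leaves implicit (fixing a flow $g$, then optimizing over $g$; retaining the $a_{u,v}(0)$ term), but the approach and key observations are the same.
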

\begin{proof}
The maximum $f$-factor in $G_{N,a}$, when using $t$ copies of edges between $u_{out}$ and $v_{in}$, will
use the most expensive copies. The convexity of $a_{u,v}$ implies $m_{u,v}(t)$ is a non-decreasing
function. Hence we can assume that edges with costs $-m_{u,v}(1),\ldots,-m_{u,v}(t)$ are used.
These costs sum to $-a_{u,v}(t)$.
\end{proof}

Clearly this extension does not change the  running time of our algorithm. So
we find a min-cost max-flow for convex edge costs in time $\tilde{O}(W(Dn)^{\omega})$.

It is also easy to incorporate lower bounds on flow into the reduction.
Suppose that in addition to the upper bound function $c$ we have a lower bound function
$\ell:V^-\cup E\to [0..D]$; $\ell$ restricts the flow on edges and flow through 
nonterminal vertices in the obvious way.
To model $\ell$,
change the multplicity of $v_{in} v_{out}$ to $c(v)-\ell(v)$ and
the multiplicity of $u_{out} v_{in}$ to $c(u,v)-\ell(u,v)$;
in addition for $v\in V^-$, $f(v_{in})$ becomes $c(v)-\sum_u \ell(u,v)$ and
$f(v_{out})$ becomes $c(v)-\sum_u \ell(v,u)$.
It is easy to see that the correspondence between 
$f$-factors and feasible flows is maintained, as is the correspondence between weights and costs.
(In particular starting from an $f$-factor, adding 
$\ell(u,v)$ copies of each edge $u_{out} v_{in}$ gives a subgraph that obviously
corresponds to a flow satisfying all lower bounds on edges. It is easy to check this flow
also satisfies all lower bounds on vertices.)

The construction can also be generalized to bidirected flows. In directed
graphs an undirected edge can have two orientations, in-out and out-in, whereas a bidirected graph
allows four possible orientations in-in, out-in, in-out and out-out. (We also allow loops,
especially of type in-in and out-out.) A bidirected network $G$
gives a non-bipartite graph $G_N$. However as we show in the remainder of this
paper, the $f$-factor problem for non-bipartite graphs can be solved in the same time bounds
as the bipartite case. 

\section{Weighted {\boldmath $b$}-matching}
\label{GeneralbMatchingSec}

This section
gives our algorithm to find a maximum $b$-matching.
$b:V\to \mathbb Z_+$ can be an arbitrary function. But we remark that
our algorithm is of interest
even for the case $b\equiv 1$ (ordinary matching): It achieves the same time bound
as \cite{CGSa},
and is arguably simpler in both derivation and algorithmic details.%
\footnote{The reader may enjoy working through the derivation for the matching case $b\equiv1$.
There will be no need to work with a multigraph. This endeavor will
 show how the two problems differ, especially in some details introduced by multigraphs.}

We start with two remarks that modify the definition of the problem. First it is most often convenient
to use the language of multigraphs. In this view we think of $G$ as a multigraph with
an unlimited number of copies of each edge.
Specifically each $e\in E$ has $1+\max_v b(v)$
copies, each with the same weight $w(e)$.
(Also recall from Section \ref{sec:definitions} that $G$ may have loops.)
A $b$-matching is a subgraph of this multigraph.

Second,
 our algorithm actually works on critical graphs, defined as follows.
Given a function $b:V\to \mathbb Z^+$,
for each  $v\in V$ define $b_v:V\to \mathbb Z_+$
by decreasing $b(v)$ by 1 (keep all other values  unchanged).
A graph is {\it $b$-critical} if it has a perfect  $b_v$-matching  for every
$v\in V$. Given a $b$-critical graph, our algorithm produces a
"blossom tree" from which, for any $v\in V$, a maximum $b_v$-matching can be easily extracted.

Such an algorithm
can find a maximum $b$-matching as follows.
Suppose we seek a maximum $b$-matching on $G$.
Assume every $b(v)$ is positive (discard 0-valued vertices).
Form $G'$ by adding
a vertex $s$ with $b(s)=1$, plus
edges $sv$, $v\in V(G)$ and loop $vv$, all of weight 0.
Any $v\in V+s$ has a $b_v$-matching.
(For $v=s$ take $F$,
an arbitrary $b$-matching on $G$.
For $v\ne s$,
for any edge $vw\in F$ take
$F-vw+ws$. This is a $b_v$-matching even if $vw$ is a loop.)
So $G'$ is $b$-critical and
a maximum $b_s$-matching on $G'$ is the desired $b$-matching.


\subsection{The heaviest blossom}
This section defines the $\zeta$-value of an edge and shows
how it reveals the heaviest blossom (the set $B$ defined after Lemma
\ref{b2.2}).
Consider a weighted $b$-critical graph $G$.
Recall that $G$ may contain self-loops $vv$.
All of what follows is valid even when there are such loops.
Assume that each multiset of $\le \sum_v b(v)$ edges has a distinct weight,
if we do not distinguish between parallel copies of the same edge.
We can enforce this assumption by
taking a very small $\epsilon>0$ and
adding $\epsilon^i$ to the weight of every copy of the $i$th edge of $E$.
(Section \ref{EffBMAlgSec}
returns to the original unperturbed weights.)

Any vertex $v$ has a maximum $b_v$-matching $F_v$.
$F_v$ is unique up to parallel copies of the same edge, by the perturbed weight function.
Wlog
assume further that any two matchings $F_u$ and $F_v$ have as many common edges as possible (i.e., for any $xy\in E$
they use as many of the same copies of $xy$ as possible).

We start with a well-known principle.
Take two vertices $u,v$.
Call a trail
{\em alternating} (for $u$ and $v$) if as we traverse it
the  edges alternate between $F_u-F_v$ and $F_v-F_u$.

\begin{lemma}
\label{DSumLemma}
For any two vertices $u,v$, $F_u\oplus F_v$ is
an alternating $uv$-trail that starts with an edge of
$F_v-F_u$ and ends with an edge of $F_u-F_v$.
\end{lemma}

\begin{proof}
First we show that  $F_u\oplus F_v$  contains an alternating trail as described in the lemma.
(Then we show that trail constitutes all of $F_u\oplus F_v$.)
Let $T$ be
a maximal length alternating trail that starts at $u$ with an edge of
$F_v-F_u$. Such an edge exists since $u$ has greater degree in $F_v$ than $F_u$.
Let $T$ end at vertex $x$.

If $x\ne u,v$ then, since $d(x,F_u)=d(x,F_v)$,
we can extend $T$ with an unused alternating edge.
(This is true even if $T$ has a previous occurrence of $x$.)
If $x=u$ then
we can extend $T$ with an unused alternating edge -- the argument is the same using the facts that
$d(u,F_v)=d(u,F_u)+1$ and the first edge of $T$ is in $F_v-F_u$.
Suppose $x=v$. If the last edge of $T$ is in $F_v-F_u$ we can extend $T$ with an unused edge of
$F_u-F_v$  since $d(v,F_u)=d(v,F_v)+1$.
The remaining possibility is that $T$ ends at $v$ with an edge of $F_u-F_v$, thus giving the desired trail.

Now we show there are no other edges.
$F_v$ is the disjoint union of its edges in $T$ and a multiset of edges $R$. ($R$ may contain copies
of edges in $T$.) We claim
any vertex $x$ satisfies
 \begin{eqnarray}
\label{REqna}
d(x,R)&=&d(x,F_v)-d(x,T\cap F_v)\\
&=&b(x)- \c{d(x,T)/2}.\label{REqnb}
\end{eqnarray}
Observe that this claim completes the proof of the lemma: Define the function $b'$ by setting
$b'(x)$ to the quantity of \eqref{REqnb}.
Thus $R$ is a $b'$-matching, in fact a
maximum $b'$-matching.
By symmetry
$F_u$ is the disjoint union its edges in  $T$ and a maximum
$b'$-matching. The perturbed weight function implies
that $b'$-matching is also $R$.
Thus $F_u$ and $F_v$ agree outside of $T$, so $F_u\oplus F_v=T$.

To prove the claim, \eqref{REqna} holds by definition.
For \eqref{REqnb} first observe that
$x\ne v$ implies
\[d(x,F_v)=b(x) \mbox{ and  } d(x,T\cap F_v)=\c{d(x,T)/2}.\]
In proof the first relation follows from definition of $F_v$.
For $x\ne u,v$  the second relation
holds because the edges through $x$
alternate.
For $x=u$ the second relation holds because
the first edge of $T$ is in $F_v$, and
all other pairs of edges through $u$ alternate.
Substituting the two relations into \eqref{REqna} gives \eqref{REqnb}.

If $x=v$ then
$d(v,F_v)=b(v)-1$ and  $d(v,T\cap F_v)=(d(v,T)-1)/2$. The latter holds because
the last edge of $T$ is not in $F_v$. Furthermore all other
pairs through $v$ alternate. Again substituting these two relations into \eqref{REqna} gives \eqref{REqnb}.
\end{proof}

In an arbitrary multigraph let $C$ be an odd circuit containing a
vertex $u$. Choose a traversal of $C$ that starts at $u$.
Define $C_u$ to consist of alternate edges in
this traversal,
omitting the first edge at $u$ as well as the
last.
When $C$ contains $> 2$ edges at $u$, $C_u$ will not be unique.
But suppose the traversal starts with edge $uv$, and we define $C_v$ by the same traversal
only starting at $v$. Then
 $C_u$, $C_v$ and $uv$ form a partition of $C$ (since $C_u$ starts by containing the edges
of $C-C_v$, and this pattern continues until the traversal reaches $v$).
Also given any choice of a $C_x$,
define $C^x$ to consist of alternate edges of $C$,
beginning and ending with the edge incident to $x$.  Clearly $C_x$ and
$C^x$ form a partition of the edges of $C$.

To define the central concept,
for any edge $uv$ let
\[\zeta(uv) = w(F_u)+ w( F_v)+ w(uv).\]
We shall see that $\zeta$ gives the values of the optimum blossom duals as well as the structure of the optimum blossoms.

\begin{lemma}
\label{b2.1}
Any edge $e$ of a  $b$-critical graph 
belongs to an odd circuit  of edges with
$\zeta$-value  $\ge \zeta(e)$.
\end{lemma}

\begin{proof}
Let $e=uv$. Assume $u\ne v$ else the lemma is trivial.
Furthermore assume $e\notin F_u\cup F_v$.
This is the crucial assumption! It is justified since there are $b(v)+1$ copies of $e$; furthermore
proving the lemma for this copy of $uv$ proves it for every copy.

Let $T$ be the trail of Lemma \ref{DSumLemma}.
It clearly has even length. Extend it by
adding a copy of edge
$e$.
($T$ may already contain a different copy of $e$, one in $F_u\oplus F_v$.)
We get an odd circuit $C$.
Traverse $C$ by starting with the first edge of $T$ and ending with the edge $uv\in C-T$. We get
$C$ partitioned into $C_u$, $C_v$ and $uv$.
Furthermore
$C_u=C\cap F_u$, $C_v=C\cap F_v$, and a set of edges $R$ satisfies
$R=F_u-C=F_v-C$. Thus
\[\zeta(uv)=w(F_u) + w(F_v)+w(uv)=w(C)+2w(R).
\]

Now take any edge $rs\in C$. Traverse $C$ by starting with edge $rs$.
For $t \in \{r,s\} $ let $H_{t}$
be the multiset $R\cup C_t$.
It is easy to see $H_t$ is a $b_t$-matching
by comparing it with $F_v=R\cup C_v$.
Then
\[\zeta(rs) \ge
w(H_{r}) + w( H_{s}) +w(rs) = w(C)+2w(R).\]
The two displayed equations show
$C$ is the desired circuit. (In this argument some vertices $t$ may have $C_t$ multiply defined. That's OK.)
\end{proof}

Let $\zeta^*$ be the maximum value of $\zeta$.
Let $E^*$ be the set of edges of $\zeta$-value $\zeta^*$.
We shall see that $E^*$ is essentially the heaviest blossom
and $\zeta^*$ its dual value.

\begin{lemma}
\label{b2.2}
Any edge $e=vw\in E^*$
belongs to an odd circuit $C\con E^*$.
Furthermore \hbox{\rm (i)} $F_v-C=F_w-C$ and \hbox{\rm (ii)}
$F_v\cap (\gamma(v)\cap \delta(v))\con E^*$.
\end{lemma}

\begin{proof}
Lemma \ref{b2.1} shows the odd circuit $C$ exists.
To prove (i) recall from the proof that
\[w(C)+2w(R)=\zeta^*\]
and  $F_v= R\cup C_v$,
$F_w= R\cup C_w$.
The last two equations imply (i).

For (ii)
take any edge $uv\in F_v$. Keep $C$ and $R$ as already defined for $vw$.
We can assume $uv\in R$, since otherwise $uv\in C_v\con C\con E^*$.
Let $H_u=R -uv+C^v$. It is easy to see $H_u$ is a $b_u$-matching
by comparing it with $F_v=R\cup C_v$ (note $u$ may or may not belong to $C$).
Then
\[\zeta(uv) \ge
w(H_{u}) + w( F_{v}) +w(uv) = w(C)+2w(R) =\zeta^*.\]
This implies equality holds
and proves (ii).
 \end{proof}

Call a connected component $B$ of $E^*$ {\em nontrivial} if it
spans at least one $\zeta^*$-edge.%
\footnote{For ordinary matching $B$ is a cycle that comprises all of $E^*$.}
($B$ may consist of a single vertex $v$ with one or more loops $vv$.)
The next lemma shows that $B$ behaves like a blossom, i.e.,
it can be shrunk to a single vertex. We begin the proof with two observations.
First, any two vertices  $v,w\in B$ have
\begin{equation}
\label{BEqn}
F_v-E^*(B)=F_w-E^*(B).
\end{equation}
In proof,
since $B$ is connected
we need only show (\ref{BEqn}) when $vw\in E^*(B)$.
That case follows by applying
Lemma \ref{b2.2}(i) with $C\con E^*(B)$.

Next observe for any $v\in B$,
\begin{equation}
\label{BFE*Eqn}
F_v\cap (\gamma(B)\cup \delta(B)) \con E^*(B).
\end{equation}
This follows since an edge $xy$ in the left set
but not in the right, with $x\in B$,
belongs to $F_v-E^*(B)=F_x-E^*(B)$
by
(\ref{BEqn}).
But Lemma \ref{b2.2}(ii) shows
$xy \in F_x\cap\delta(x) \con E^*$. This implies
$y\in B$ and
$xy\in E^*(B)$, contradiction.

\begin{lemma}
\label{bMatchingRespectLemma}
 For any vertex $v$,
$|F_v\cap \delta(B)|$ equals 0 for $v\in B$ and 1 for $v\notin B$.
\end{lemma}

\begin{proof}
For $v\in B$ this follows from (\ref{BFE*Eqn}).
So suppose
$v \notin B$. Choose any vertex $t \in B$.
Lemma \ref{DSumLemma} shows
$F_v \oplus  F_t$ is
an alternating $vt$-trail.
Let $T$ be the subtrail from $v$ to the first vertex of $B$, say  vertex
 $x$, and let $e$ be the last edge of $T$.
$e\in F_v$ ($e\notin F_t$
by the lemma for $t$, i.e.,  $F_t\cap \delta(B)=\emptyset$).
Applying (\ref{BEqn})
to $t$ and $x$ shows
$F_v \oplus F_{ x}$ contains $T$.
So Lemma \ref{DSumLemma} applied to $v$ and $x$  shows
$F_v \oplus F_{ x}=T$. (More precisely, the first part of the proof of Lemma 1
could have chosen the current $T$ as its $vx$ trail. The second part of the proof shows
$F_v \oplus F_{ x}$ has
no
other edges.)
Thus $e$ is the unique edge of $F_v$ incident to $B$.
\end{proof}

\subsection{Iterating the construction}
The last lemma  generalizes \cite[Lemma 3.2]{G} for ordinary matching.
The rest of the derivation closely parallels ordinary matching, as described in
\cite{G} and \cite{CGSa}. In a nutshell,
the optimum blossoms and their duals are found
by running the shrinking procedure of \cite{G}.
The nesting of the contracted blossoms gives a "blossom tree" \B..
Any desired maximum $b_v$-matching ($v\in V(G)$) can be constructed by a top-down traversal
of \B. that finds a perfect $b'$-matching (for appropriate $b'$) at each node of \B..
This last step is the biggest difference from ordinary matching: In ordinary matching
the edges found at each node of \B. are simply alternate edges of a cycle.
For completeness we present all of these remaining details, of course modified
for $b$-matching.

Consider a nontrivial connected component $B$ of $E^*$, as above.
Contract $B$ to a vertex \b B., with \b G. the resulting multigraph.
Assume  a contraction operation
can create parallel edges but not loops.
(Thus the graph changes even when $B$ is a single vertex $v$ with $E^*$ consisting of loops $vv$.)
Extend the degree-constaint function $b$ to \b G. by setting $b(\b B.)=1$.
Lemma
\ref{bMatchingRespectLemma}
shows \b G. is $b$-critical. From now on $V$ designates the vertex set of the original given graph.

Next
we  define a weight function on \b G..
For convenience we
designate edges of \b G. by their corresponding edge in $G$
(i.e., an edge of \b G. is written as $uv$ where $u,v\in V$ and possibly
one of them belongs to $B$).
For $v \in B$ let
\begin{equation*}
B_v=F_v \cap \gamma(B).
\end{equation*}
For edge $uv$ in \b G. define a weight $\b w.(uv)$ by
\[
\begin{array}{lcll}
{\b w.}(uv) &= &w(uv)&u,v \notin B,\\
&= &w(uv) + w(B_v)  &v \in B.
\end{array}
\]

This definition preserves the
 structure of $G$ in the following sense.
For any vertex $v$ let \b v. be its image in \b G..
Let $F_{\b v.} $ denote the (unique) maximum  $b_{\b v.}$-matching
in \b G..
For a fixed vertex $v$, let vertex $x\in B$ be  $v$ if $v\in B$, else
the end of an edge of $F_v$.
$x$ is uniquely defined by Lemma \ref{bMatchingRespectLemma}.

\begin{lemma}
\label{wBarLemma}
 \i $F_v = F_{\b v.} \cup B_x$.

\ii Any edge of \b G. has  the same
 $\zeta$-value
in $G$ and \b G..
\end{lemma}

\begin{proof}
\i
Let $H_{\b v.}$ be the image of $F_v$ in \b G..
Lemma \ref{bMatchingRespectLemma} shows
$H_{\b v.}$  is a  $b_{\b v.}$-matching.
 Furthermore $F_v = H_{\b v.} \cup B_x$ (for $v\notin B$ this follows from
the optimality of $F_v$). So we must show
$H_{\b v.}= F_{\b v.}$. Clearly it suffices to show
\begin{equation}
\label{wBarEqn}
 \b w.( F_{\b v.})\le  \b w.(H_{\b v.}).
\end{equation}

Suppose $v\in B$.  $F_{\b v.}\cup B_v$ is a $b_v$-matching. Since it
weighs no more than $F_v$, $ w(F_{\b v.})\le w(H_{\b v.})$.  This is
equivalent to (\ref{wBarEqn}) since neither set contains an edge
incident to $B$.

Suppose $v\notin B$.  The definition of \b w. shows $w(F_v)=\b
w. (H_{\b v.})$, as well as $\b w. (F_{\b v.})\le w(F_v)$ (by the
optimality of $F_v$).  These relations combine to give
(\ref{wBarEqn}).

\ii
Part  \i shows $w(F_v)$
is $\b w.(F_{\b v.})$ for $v\notin B$
and $\b w.(F_{\b v.}) -w(B_v)$ for $v\in B$.
The two cases of
\ii (depending on whether or not the edge is in $\delta(B)$) follow.
\end{proof}

The following
{\em shrinking procedure} \cite{G}
iterates the construction of \b G..
It also constructs
a tree  that represents the nesting of the contracted blossoms:

\bigskip

{\narrower

{\parindent=0pt

Start by creating a one-node tree for each vertex of $V$.
Then repeat the following step
until the graph consists of one vertex:

\bigskip

Let $B$ be a nontrivial connected component of $E^*$.
Form a tree whose root represents $B$; the subtrees of the root
are the trees that represent the vertices of $B$ in the current graph.
Then change the current graph by contracting $B$ and defining $b$ and $\o w$
as described above for \b G..

}}

\bigskip

The construction can be iterated since as noted each \b G.  is $b$-critical.
Lemma \ref{wBarLemma}\ii shows
$\zeta^*$ never increases from one iteration to the next.

Call the final tree the {\em blossom tree} \B. of $G$, and each nonleaf
a {\em blossom} of $\B.$. 
A child $C$ of $B$ is either a {\em blossom-child} or a {\em vertex-child}
(i.e., a leaf of \B.).
Note that a vertex of $V$ can occur in \B. as a singleton blossom as well as
a leaf.
For any node $B$ of \B. let $V(B)$ denote the set of leaf descendants of $B$.

The next goal is to describe how the edges of any $F_v$, $v\in V$, are distributed
in \B.. We begin with several definitions.
Consider the iteration in the  shrinking procedure that ends by contracting $B$.
Let $G_B$ be the graph at the start of the iteration.
Thus each child $C$ of $B$ corresponds to a vertex of $B$ in the graph $G_B$.
Let $\zeta(B)$ be the value of $\zeta^*$ in this iteration.
Let $E^*(B)$ be the corresponding set of edges  (i.e.,
the edges of $\zeta$-value $\zeta(B)$ that join two vertices of $B$ in $G_B$).

We now generalize Lemma \ref{wBarLemma}\i to any graph $G_B$.
Consider
any vertex $v\in V(B)$. As before let
$\b v.$ be the image of $v$ in $G_B$ (so \b v. is
a child of $B$).
$F_{\b v.}$ denotes the unique maximum $b_{\b v.}$-matching of $G_B$.
For each child $C$ of $B$
let $x\in V(C)$ be the vertex that is either $v$ (if $C={\b v.}$) or
is the end of an edge of $F_{\b v.}$.
$x$ is uniquely defined since a blossom $C$ has $b(C)=1$, and
there is no choice for $x$ if $C$ is a vertex of $V$.
We claim

\begin{equation}
\label{FvDefnByBEqn}
F_v\cap \gamma(V(B))=
F_{\b v.}  \cap \gamma(V(B))
\;\cup\; \bigcup_{C\text{ a child of }B} F_x \cap \gamma(V(C)).
\end{equation}

The import of \eqref{FvDefnByBEqn} is that it
defines the entire set $F_v$. Specifically
let \V. be the root node of \B.. Clearly  any $v\in V$ has
$F_v=F_v\cap \gamma(V(\V.))$. So
the entire set $F_v$ is defined by
applying \eqref{FvDefnByBEqn}
to \V. and then
 recursively to the children of \V..

Also note that in  \eqref{FvDefnByBEqn} if $C$ is a vertex-child of $B$ then the expression $F_x\cap \gamma(V(C))$ is empty, by convention.

To prove \eqref{FvDefnByBEqn} first observe
$F_v \cap E(G_B) =F_{\b v.}$. This
follows by repeated applications of
Lemma \ref{wBarLemma}\xi, which shows the edges of $F_v$ in \b G.
form $F_{\b v.}$.

The observation justifies the first term
$F_{\b v.}  \cap \gamma(V(B))  $
in  \eqref{FvDefnByBEqn}.
To complete the proof note that the remaining edges
of $F_v$ are contained in the various blossom-children $C$ of $B$.
We show these edges are given by the terms
$F_x\cap \gamma(V(C))$ in  \eqref{FvDefnByBEqn}.
If $C= \b v.$ this is obvious since $x=v$.
If $C\ne \b v.$ then
$F_{\b v.}\cap \delta(C)$ is a unique edge incident to $x$.
The optimality of $F_v$ implies it agrees with $F_x$ in $\gamma(V(C))$.
\eqref{FvDefnByBEqn} follows.

\eqref{BFE*Eqn} shows that in the graph $G_B$,
$F_{\b v.}\cap \gamma(B)\con E^*(B)$.
Applying this recursively with \eqref{FvDefnByBEqn} shows
$F_v$ consists entirely of $\zeta^*$-edges, more precisely,
in any blossom $B$, the edges of $F_v$ that join children of $B$
(or that are loops incident to a vertex that is a blossom-child of $B$) are in $\zeta(B)$.

\subsection{The efficient ${\boldmath b}$-matching algorithm}
\label{EffBMAlgSec}
An efficient  algorithm
cannot work with the perturbed weight function.
So suppose we execute the  shrinking procedure, starting with the original unperturbed
weight function $w$.
Each iteration will find a connected component $B$ that was found in the construction of \B..
More precisely an iteration that contracts blossom $C$
of \B. will be skipped iff
the parent of $C$ in \B., say $B$,
has $\f {\zeta(C)}=\f {\zeta(B)}$.
So the shrinking procedure will construct a tree
\W. that is a contraction of \B. (more precisely
an edge of \B. from parent $B$ to child $C$ is contracted iff it
satisfies the above relation $\f {\zeta(C)}=\f {\zeta(B)}$).

For any blossom $B$ of \W., define the graph $G^*_B$ as follows.
Consider the iteration of the shrinking procedure that contracts $B$.
The vertices of $G^*_B$ are
the vertices of the current graph that are
contained in $B$. (These vertices are the children of $B$ in \W..)
The edges of $G^*_B$ are the edges $E^*(B)$ in the shrinking procedure.

The following observation is key to our algorithm.
It enables us to find the
edges of any desired maximum $b_v$-matching that occur in each graph
$G^*_B$. Take any vertex $v\in V(B)$. Let \b v. be
the image of $v$ in $G^*_B$.
The edge set $F_v\cap E^*(B)$ is a $b_{\b v.}$-matching of $G^*_B$.
In proof first recall that $F_v$ consists entirely of $\zeta^*$-edges.
Then
apply
\eqref{FvDefnByBEqn} to $B$ and to each descendant of $B$ (in \B.)
that gets contracted into $B$ when we form \W. from \B..

We find a maximum $b_v$-matching using
 a recursive procedure
$b\_match$
that finds the desired edges in each
graph $G^*_B$.
 For a blossom $B$ of \W.
and a vertex $v\in V(B)$,
$b\_match(B,v)$
 finds a
 $b_v$-matching of $\zeta^*$-edges
in the subgraph of $G$ induced by vertices $V(B)$.
It works as follows.

Let \b v. be the child of $B$ that contains $v$.
First find a $b_{\b v.}$-matching of $G^*_B$, say $H_{\b v.}$. ($H_{\b v.}$ exists
by the above key observation.) Add $H_{\b v.}$ to the desired set.
Then  complete the desired set using recursive calls
on the children of $B$. Specifically
for each blossom-child $C$ of $B$ (in \W.),
execute $b\_match(C,x)$
where  $x\in V(C)$ is $v$ (if $C={\b v.}$) or
the end of an edge of $H_{\b v.}$.

As before let  \V. be the root of \W..
Let $H_v$ be the
$b_v$-matching of
the given graph $G$
that is found by $b\_match(\V.,v)$.
In general $H_v$ is not $F_v$.
For instance in a  recursive call $b\_match(C,x)$,
$x$ may differ from the vertex $x$ given by
\eqref{FvDefnByBEqn}. But
because $H_v$ consists of $\zeta^*$-edges,
we can prove $H_v$ is optimum using
duality, as follows.

\def\hW.{\mathy{{\cal W}^-}}

As before
for any blossom $B$ of \W. let $\zeta(B)$
be the value $\zeta^*$ when $B$ is created.
(Now $\zeta(B)$ is an integer.)
For any  blossom $B\ne \V.$  let $p(B)$ be its parent in \W..
Let $\hW.$ be the set of all blossoms of \W..
Define functions $y:V \to \mathbb {Z}$,
$z:\hW. \to \mathbb {Z}$
by
\[
\begin{array}{lcll}
y(v)&=&-w(F_ v)\hspace{58pt}v \in V,\\
z(B) &= &
\left\{
\begin{array}{ll}
\zeta(\V.) &B=\V.,\\
\zeta(B)-\zeta(p(B))&B \in \hW.-\V..
\end{array}
\right.
\end{array}
\]
Recall our convention for summing functions (Section \ref{sec:definitions}),
e.g., an edge $e=uv$ has $y(e)=y(u)+y(v)$.
Any blossom $B$ has $\zeta(B) =  z\set{A} {V(B)\con V(A)}$.
The definition of $\zeta$ shows any edge $e$ of $G$ has
$w(e)=y(e) + \zeta(e)$.
So $e\in E^*(B)$ implies
\begin{equation}
\label{eWeightEqn}
w(e) = y(e) + \zeta(B) =
y(e) + z\set {A} {e\con V(A)}.
\end{equation}

Let $F$ be either $F_v$ or $H_v$. Both sets $F$ weigh
\begin{eqnarray*}
\label{fFactorDualEqn}
\sum _{e\in F} w(e)
&=& \sum _{e\in F} y(e) + z\set {A} {e\con V(A)}\\
&=&  (b_v y)(V) +\sum_{e\in F} \set{z(A)} {e \con V(A)} \\
& = & (b_v y)(V) +
\sum_{A\in \hW.}   \frac { b(V(A)) -1 } {2}  z(A).
\end{eqnarray*}
For the last line recall that for both sets $F$,
$F\cap \gamma(V(A))$ is a
$b_x$-matching for some $x\in V(A)$,
so
$|F\cap \gamma(V(A))|=\frac {b(V(A))-1}{2}$.
We conclude
$w(H_v)=w(F_v)$, i.e.,
$b\_match$  constructs
 a maximum $b_v$-factor.

To show the algorithm is Las Vegas we only need to observe that
the functions $y,z$
fulfill all the requirements to be
optimum linear programming duals.
This follows from two more properties.
First, the function $z$ is nonnegative except perhaps on $\V.$.
This follows since $\zeta^*$ never increases.
Second any edge $e$ of $G$ satisfies
$w(e) \le y(e) + z\set {A} {e\con V(A)}$.
To see this first note that $\zeta(e) \le  \zeta(B)$ for
 $B$ be the first blossom created with $e\con V(B)$.
Use this relation to compute $w(e)$ as in \eqref{eWeightEqn}.

These two properties make $y,z$ optimum
linear programming duals. So if the properties are satisfied,
and the algorithm finds a $b_v$-matching composed of edges that satisfy \eqref{eWeightEqn},
that matching has maximum weight. (This is easily verified without appealing
to linear programming,
by computing the weight
of an arbitrary $b_v$-matching $F$ similar to the computation of $w(F)$ above.)

To summarize
the algorithm of this section works as follows.
Consider a $b$-critical graph $G$,
 with edge weights $w(e)$.
Assume we are given the weight of every maximum $b_v$-matching, i.e.,
every  value $w(F_v)$, $v\in V$.
Start by
executing the shrinking procedure to  construct the tree \W..
For any vertex $v\in V$,
to find a maximum $b_v$-matching call  $b\_match(\V.,v)$.

The total running time for this procedure is  $O(\phi^\omega)$.
To prove this
we will show
the shrinking procedure
uses $O(m \log n)$ time
and $b\_match$ uses
$O(\phi^\omega)$ time, thus giving the desired bound.
First note
that \W. has $\le 3n$ nodes
($n$ leaves, $n$ blossoms that are singletons, and $n$
larger blossoms).

The shrinking procedure starts by
using the given values $w(F_v)$ to compute the $\zeta$-value of each edge.
Then it sorts the edges on decreasing $\zeta$-value, in time $O(m\log n)$.

We use a set-merging algorithm \cite{CLRS}
to keep track of the contracted vertices.
That is, for any $v\in V$, $find(v)$ gives the contracted vertex
currently containing $v$.
The operation $union(A,B,C)$ merges two contracted
vertices $A$ and $B$ into a new vertex  $C$.

The iteration of the shrinking procedure
that creates $B$
starts by constructing an adjacency structure for the $\zeta^*$-edges
of graph $G_B$.
(An edge $uv$ ($u,v\in V$) joins the vertices
given by
{\tt find}$(u)$, {\tt find}$(v)$.)
Then it finds the nontrivial connected components.
Each edge of $G$ is in at most one graph $G_B$. So the total time
is $O(m)$ plus the time for $n$ {\tt unions} and $O(m)$ {\tt finds}
\cite{CLRS}.

 $b\_match$
runs in
the time to find $b_{\b v.}$-matchings on all the  graphs $G^*_B$.
These graphs contain a total of
$\le 3n$ vertices,
 $\le m$ edges,
and degree constraints $b$ totalling $\le 3\phi$
(the vertices of
$V$ contribute exactly $\phi$
and each
 blossom contributes 1, giving
 $\le 2n+\phi\le 3\phi$). So using an algorithm that finds a $b$-factor in time $O(\phi^\omega)$
gives total running time $O(\phi^\omega)$.
Since the algorithm is randomized there is a
minor point of controlling the error probability when the graphs
get small, this is easily handled.


\section{Shortest paths}
\label{SPSec}
\def\gm.{\mathy{\o G\,^-}}
\def\gp.{\mathy{\o G\,^+}}
\def\wm.{\mathy{\o w\,^-}}
\def\wp.{\mathy{\o w\,^+}}

This section discusses the single-source shortest-path problem on
conservative undirected graphs.  It defines a ``shortest-path tree''
structure for conservative graphs and proves this structure always
exists.  The structure is a special case of the one for general
$f$-factors. It is used as an example in Sec.\ref{BackgroundSec}.

\subsection{The shortest-path structure}
\label{SPDefnSec}
 Let $(G,t,w)$ denote a connected undirected graph with
distinguished vertex $t$ and conservative
edge-weight function $w:E\to \mathbb {R}$.
We wish  to find a shortest path from each
vertex to the fixed sink vertex $t$.

Bellman's inequalities needn't hold
and a shortest-path tree needn't exist
(e.g., the subgraph on $\{a.b.c\}$
in Fig.\ref{SPStrucFig}).  Nonetheless
the distance variables of Bellman's inequalities are optimum for
a related set of inequalities.

We begin by defining the analog of the shortest-path tree.  When there
are no negative edges this analog is a variant of the standard
shortest-path tree (node \V. below).  Figs.\ref{SPStrucFig}--\ref{SPStrucbFig}
illustrate the definition. In Fig.\ref{SPStrucFig}
the arrow from each vertex $v$
gives the first edge in $v$'s shortest path. This edge is $e(v)$
in the definition below. More generally this edge is  $e(N)$ for any
node $N$ that it leaves, e.g.,
$ce= e(c)=e(\{a,b,c\})$.

\begin{figure}[t]
\begin{center}
\epsfig{figure=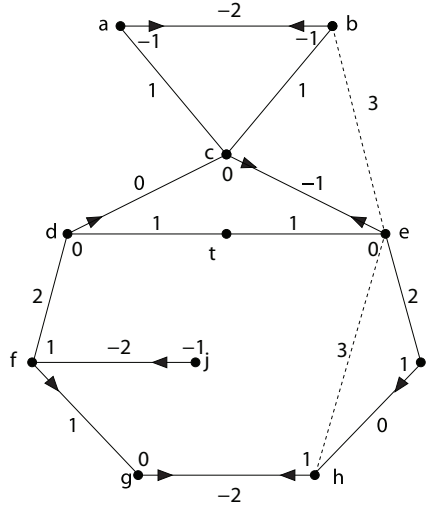, width=2.25in}
\end{center}
\caption{Conservative undirected graph. Vertex labels are
 shortest-path distances; arrows
show the first edge of shortest paths. Dashed edges are not in any shortest path.}
\label{SPStrucFig}
\end{figure}

\begin{figure}[t]
\begin{center}
\epsfig{figure=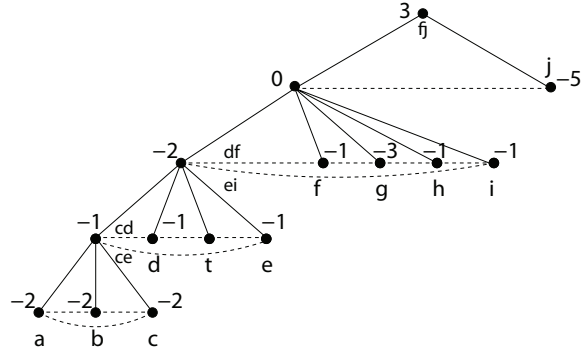, width=3.0in}
\end{center}
\caption{Shortest-path structure.
Node labels are $z$ values. $E(N)$ edges are the dashed edges
joining the children of $N$.}
\label{SPStrucbFig}
\end{figure}

\begin{definition}
\label{SPTreeDfn}
A {\em generalized shortest-path tree (gsp-tree)} \T.  is a tree
whose leaves correspond to the vertices of $G$. For each node
$N$ of \T., $V(N)$ denotes the set of leaf descendants of $N$,
$V(N)\con V(G)$.
Let \V. be the root of \T..
For each node $N$,
$V(N)$ contains a
{\em sink} vertex denoted  $t(N)$; for
$N\ne \V.$, $t(N)$ is the end of an edge
$e(N)\in
\delta(V(N))$.
For $N=\V.$ the sink is $t$, and we take
$e(\V.)=\emptyset$; for  $N\ne \V.$, $t(N)$ and $e(N)$ are
determined by the parent of $N$ as described below.

Consider
an interior node $N$ of \T., with children $N_i$, $i=1,\ldots,k $, $k\ge 2$.
$V(N_1)$ contains $t(N)$ and
$t(N_1)=t(N)$, $e(N_1)=e(N)$.
$N$ has an associated set of edges
 $E(N)$ with
 $\set {e(N_i)} {1<i\le k} \con E(N)\con \gamma(N)$.
Let $\o N_i$ denote the contraction of
$V(N_i)$ in $G$.

\case {$N\ne \V.$}
$E(N)$ forms
a (spanning) cycle on the vertices $\o N_i,\ i=1,\ldots, k$.

\case {$N= \V.$}
Either \i $E(N)$ gives a cycle
exactly as  in the previous case, or \ii
$E(N)$ is a spanning tree
on the nodes $\o N_i$, rooted at $N_1$,
with each $e(N_i)$  the edge from $N_i$ to its parent.
\end{definition}

Note that
 $\set {e(N_i)} {1<i\le k} =E(N)$
in case \ii above. In contrast
when $E(N)$ is a cycle, fewer than
half its edges may belong to
 $\set {e(N_i)} {1<i\le k}$. See edge $df$ in Fig.\ref{SPStrucFig}.

For any vertex $v$,
a top-down traversal of
\T. gives a naturally defined $vt$-path $p(v)$
that starts with $e(v)$,
that we now describe.
As an example in Fig.\ref{SPStrucFig} $p(j)=j,f,g,h,i,e,c,d,t$;
in Fig.\ref{SPStrucbFig}
this path is composed of pieces in the
subgraphs of 3 nodes, $j,f$; $f,g,h,i,e$;
and $e,c,d,t$.
\ifcase\version 
Details given in the full version.
\else

For any interior node $N$  let $p(v,N)= p(v)\cap
\gamma(V(N))$.
So $p(v)=p(v,\V.)$.
We will specify $p(v)$ by describing
the edge sets
$p(v,N)$.  We  leave it to the reader to add the simple
details that specify the order of these edges in the desired
$vt$-path.

Consider any interior node $N$ and a vertex $v\in V(N)$.
The ends of
$p(v,N)$ are $v$ and $t(N)$. (This is clear for
$N=\V.$, and we shall see it holds for the other nodes by induction.)
Let $v$ belong to $V(N_i)$ for the child $N_i$ of $N$ (possibly
$v=N_i$).
$E(N)$ contains
a unique
$\o N_i \o N_1$-path  $P$ that begins with the edge
$e(N_i)$.
(For $i=1$, $P$ has no edges.)
$p(v,N)$ has the form
\begin{equation}
\label{PvFormEqn}
p(v,N)= E(P) \cup \bigcup \set{p(x_j,N_j)} {\o N_j \in V(P)}.
\end{equation}
Implicit in \eqref{PvFormEqn} is that
$p(v,N_j)$ has
the form $p(x_j,N_j)$, i.e., it has
$t(N_j)$ as one of its ends.
To prove this consider three cases:
If $N_j$ is the last node of $P$ then $p(v,N_j)$ ends
at the end of $p(v,N)$, which is $t(N)=t(N_j)$.
If $N_j$ is the first node of $P$ then
we have chosen the first edge of $P$ as $e(N_j)$,
and it has $t(N_j)$ as an end.
  If $N_j$ is
neither first nor last in $P$
then $P$ contains two edges incident to $\o N_j$,
one of which is $e(N_j)$. For what follows let $f_j$ be the
edge not equal to $e(N_j)$ that is incident to  $\o N_j$. Observe that
 $f_j$ exists
unless $N_j$ is first in $P$.

It remains to specify vertex $x_j$.
This vertex is
  $v$ if $N_j$ is first in $P$, else
it is the vertex  $f_j \cap V(N_j)$.

This completes the definition of $p(v,N)$. Note that
when $N=\V.$ and $E(N)$ is a tree,
$e(N_j)$ is the edge that leaves $N_j$ in (the directed version of) a path $p(v)$;
in all other cases (i.e., $E(N)$ is a cycle)
this needn't hold, e.g., $e(h)$ doesn't leave $h$
in $p(j)$.
\fi 

Next we
specify the numeric values that will prove the $p(v)$'s are shortest paths.
For any $v\in V$ let $P_v$ denote a shortest $vt$-path.
Our approach is based on the subgraphs $P_x\cup P_y\cup xy$.
\ifcase\version 
For motivation, the full version shows
that for nonnegative weights, we can
rewrite
Bellman's inequalities as $2d(v)\le d(u)+d(v)+w(uv)$
to prove optimality of paths composed of edges
that have equality.
The generalization to conservative weights
will use a laminar family instead of the lower bounds $2d(x)$.
\or
For motivation we first discuss our proof
 as it specializes to a nonnegative
 weight function.
Recall Bellman's inequality for an edge
$yx$, $d(x)\le d(y)+w(xy)$;
rewrite it, with the above subgraph in mind,
 as $2d(x)\le d(x)+d(y)+w(xy)$.
Consider an arbitrary vertex $v$.
Assume 
equality holds in Bellman's inequality for each
edge of $yx$ in $P_v$, $x\ne t$.
We show this implies $w(P_v)\le w(P)$ for any $vt$-path $P$.
Let
 $P=(v=x_0,x_1,\ldots, x_\ell=t)$,
and add the inequalities $d(x_i)+d(x_{i+1})+w(x_ix_{i+1})\ge
2d(x_{i})$, to get
\[
d(v)+ \sum \set{2d(x)}{ x\in P-v,t} +d(t) +w(P)
\ge  \sum \set{2d(x)}{ x\in P-t}.
\]
Equality holds if $P=P_v$.
Add in the identities $2d(x)=2d(x)$ for each $x\notin P$ to get
\[
d(v)+ \sum \set{2d(x)}{ x\in V-v,t} +d(t) +w(P)
\ge  \sum \set{2d(x)}{ x\in V-t}.
\]
Since equality holds for $P=P_v$ we get
$w(P)\ge w(P_v)$ as desired.
The generalization to conservative weights
will use a laminar family instead of the lower bounds $2d(x)$.
\fi 

A {\em gsp-structure} consists of a gsp-tree
plus two functions
$d$, $z$.
Each vertex $v$ has a value $d(v)$, its distance to $t$.
Each node $N$ of \T. has a value $z(N)$ that is nonpositive
with the exception of $z(\V.)$ which has arbitrary
sign.
Enlarge $E(G)$ to the set $ E_\ell(G)$ by adding a loop $xx$ at every vertex
except $t$,
with $w(xx)=0$.  Also
for any such $x$ define $E(x)$ to be $\{xx\}$
(although $x$ is a node of  \T., its set
$E(x)$ has not been previously defined); set $E(t)=\emptyset$.
Say that a node $N$ of \T. {\em covers}
any edge with both ends in $N$ (including a loop $xx$ at $x\in N-t$)
as well as the edge $e(N)$ (if it exists).
Every edge $xy\in E_\ell(G)$ satisfies
\begin{equation}
\label{DomEqn}
d(x)+d(y)+w(xy)\ge \sum \set {z(N)} {N \text{ covers }xy},
\end{equation}
with equality holding for every edge of
$\bigcup \set{E(N)\cup e(N)}{N \text{ a node of \T.}}$.

We show
this structure guarantees that each path $p(v)$
is a shortest $vt$-path,
by an argument similar to the nonnegative case
above:
\def\cov{\text{cov}}
Take any $vt$-path $P^-$. Enlarge it to a spanning subgraph
$P$ by adding the loops $xx$, $x\notin V(P^-)$.
Similarly define $P_v$ to be a spanning subgraph formed
by adding loops to $p(v)$.
Let $\cov(P,N)$ be the number of edges of $P$ covered by $N$.
Adding the inequalities \eqref{DomEqn} for each edge of $P$
gives
\begin{equation}
\label{CoverEqn}
d(v)+ \sum \set{2d(x)}{ x\in V-v,t} +d(t) +w(P)
\ge
\sum \set{\cov(P,N)z(N)}{N \text{ a node of \T.}}
\end{equation}
with equality holding for $P_v$.

\xclaim {$P_v$ achieves the maximum value of $\cov(P,N)$,
for every node $N$. Furthermore
every $P$ has the same value of $\cov(P,\V.)$.}

\ifcase\version 
The easy proof is omitted. The claim implies
the right-hand side of \eqref{CoverEqn} achieves its minimum when $P=P_v$.
This implies $w(P)\ge w(P_v)$ as desired.
\or
\noindent
The claim implies
the right-hand side of \eqref{CoverEqn} achieves its minimum when $P=P_v$
(recall $z(N)\le 0$ for every $N\ne \V.$).
This implies $w(P)\ge w(P_v)$ as desired.

\bigskip
\noindent
{\bf Proof of Claim:}
Every $P$ and $N$ satisfy  $\cov(P,N)\le
|N-t|$. In proof, each vertex of $N-t$ is either
on a loop of $P$ or on a maximal subpath of $P$ from a vertex of $P$
to an edge of $\delta(N)$ or to $t$.
$\cov(P,N)$ equals $|N-t|$ decreased by the number
of the subpaths that leave $N$ on an edge $\ne e(N)$.
This also shows $\cov(P,N)$ equals $|N-t|$ for $P=P_v$
and for $N=\V.$, as desired
\ecproof
\fi 

\subsection{Basic facts}

This section derives the basic structure of
shortest-paths to $t$ for conservative graphs.
The key concept,
the ``planted-cycle'', is essentially a special case of
the key concept for $f$-factors, the $2f$-unifactors.
Sec.\ref{BackgroundSec} discusses how
planted-cycles
relate to
general $f$-factor blossoms
(see especially Lemma \ref{sp2.2}).

For convenience perturb the edge weights by adding $\epsilon^i$ to the
$i$th edge. Here $\epsilon>0$ is chosen small enough so that
no two subsets of $ E_\ell$ have the same weight.
Also assume the  edges are ordered arbitrarily except that
the loops $xx$  are the last $n$ edges.
Let $P_v$
denote the unique shortest $vt$-path.
Let $F_v$ be its enlargement to a spanning subgraph,
i.e.,
$P_v$
plus the loop $xx$ for every vertex $x\notin P_v$.
(In the language of $f$-factors, $F_v$ is the unique minimum $f_v$-factor,
see Sec.\ref{BackgroundSec}.)

We start with a simple situation where
conservative weights do not differ from general nonnegative
weights.

\begin{lemma}
\label{SimpleLemma}
$uv\in P_u - P_v$ implies $ P_u=u,v,P_v$.
\end{lemma}

\ifcase\version
Proof is given in the full version.
This gives a weak analog of the shortest-path tree, proof again is given in the full version:
\begin{corollary}
\label{ForestCor}
$\set {uv} {uv\in P_u \oplus P_v}$ is a forest.
\end{corollary}
\or
\begin{proof}
Let $P_u=u,v,Q_v$, where $Q_v$ is a $vt$-path, and assume $Q_v\ne P_v$.
This assumption implies $w(Q_v)>w(P_v)$, so the $ut$-trail $T=u,v,P_v$
has $w(T)<w(P_u)$, whence $T$
is
not
simple. The assumption $uv\notin P_v$ implies $T=C,R_u$ for
a cycle $C$ through $u$ and a $ut$-path $R_u$. 
$w(C)>0$ implies
$w(R_u)<w(T)<w(P_u)$, contradiction.
\end{proof}

This gives a weak analog of the shortest path tree:

\begin{corollary}
\label{ForestCor}
$\set {uv} {uv\in P_u \oplus P_v}$ is a forest.
\end{corollary}

\begin{proof}
For contradiction
let $x^0,x^1,\ldots, x^r$ be a cycle of these edges.
Wlog assume  $x^0x^1 \in P_{x^1} - P_{x^0}$. Lemma
\ref{SimpleLemma}
shows
$P_{x^1} =x^1,x^0,P_{x^0}$.
This implies $x^1x^2$ is not in $P_{x^1}$ so it is in $P_{x^2}-P_{x^1}$.
Thus $P_{x^2}=x^2,x^1,x^0,P_{x^0}$.
Continuing this way gives
$P_{x^r} = x^r,x^{r-1},\ldots, x^0,P_{x^0}$.
$P_{x^r}$ a path implies
$x^r\ne x^0$, contradiction.
\end{proof}
\fi

We will consider subgraphs of the form $ P_u\cup P_v\cup uv$, viewed
as a multigraph contained in $2G$.
Define a {\em p-cycle} (``planted-cycle'') to be
the union of a cycle $C$ and
2 copies of a path $P$ from a vertex $c\in C$ to $t$,
with $V(P)\cap V(C)=\{c\}$. (Possibly $c=t$.)

For motivation first suppose all weights are nonnegative.
It is easy to see (e.g., using the shortest-path tree) that

(a)
$uv\notin P_u\cup P_v$ implies
$ P_u\cup P_v \cup uv$ is a p-cycle.

(b) $uv \in P_u\cap P_v$ never occurs
(assuming our perturbation of $w$).

\noindent

In
the remaining case
$uv\in P_u - P_v$, the lemma shows $ P_u\cup P_v \cup uv$
consists of 2 copies of $P_u$. This can be viewed as a p-cycle  if
we add
a loop $uu$ to the graph.
This motivates an approach similar to the algorithm
of \cite{G} for matching: Enlarge the graph by adding zero-weight loops
$xx$, $x\in V$.
Define the quantity $\zeta(uv) = w(P_u)+ w( P_v)+ w(uv)$.
Repeatedly shrink a cycle of edges with minimum
$\zeta$ value (updating the weights of incident edges
in a natural way). The cycles that get contracted are parts of the desired
paths $P_u$. Once all these pieces are found we can assemble them
into the complete paths.
(The first edge in $P_x$ is revealed in the step that shrinks the loop $xx$.)

A more precise statement of this
``shrinking procedure'' is given in Sec.\ref{BackgroundSec} (or
see \cite{G}); Fig.\ref{SPshrinkFig} in Sec.\ref{BackgroundSec}
will show its execution
on Fig.\ref{SPStrucFig}. Once stated,
it is a simple
exercise
to check that the shrinking procedure
 gets the desired shortest path structure when weights are nonnegative.
In fact the shrinking procedure becomes a variant of Dijkstra's algorithm.

We wish to extend this to conservative weight functions.
Observe that in Fig.~\ref{SPStrucFig}
$P_a\cup P_b\cup ab$ is not a  p-cycle --
it contains an extra copy of $ab$.
We can remedy this by
deleting the extra copy. This suggests the following
definition:%
\footnote{These expressions  are central for general
$f$-factors -- see $\zeta_{uv}$ and $\zeta^{uv}$ defined at
the start of Sec.\ref{VBSec}.}
\begin{equation}
\label{ZetaEqnSP}
\zeta(uv)=
\begin{cases}
w(P_u)+ w( P_v)+ w(uv)&uv \notin P_u \cap P_v\\
w(P_u)+ w( P_v)- w(uv)&uv \in P_u\cap P_v.
\end{cases}
\end{equation}
This definition has a similar failure:
In Fig.~\ref{SPStrucFig} $P_g\cup P_h\cup gh$
contains extra copies of $gh$, $cd$ and $ce$.
But we will see this failure is irrelevant to the shrinking procedure,
and once again it constructs the desired shortest path structure.
The reason is that $\zeta(gh)=3$ is not the smallest $\zeta$ value,
and Fig.\ref{SPshrinkFig} will show it gets "preempted" by  $\zeta(ce)=1<3$.
(Note also that property (a) fails in Fig.~\ref{SPStrucFig}:
$P_d\cup P_f\cup df$ contains extra copies of $cd$ and $ce$.
Again Fig.\ref{SPshrinkFig} will show this is irrelevant.)

For the rest of this section assume
\[|\delta(t)|=1.\]
(This  can always be achieved by adding a dummy edge incident to $t$.
The shortest-path structure
for the given graph
can easily be derived from the structure for the enlarged graph.)
The assumption implies any p-cycle  has $c\ne t$.

Let $cc'$ be the multiplicity 2 edge
incident to $C$.
We use the notation $C,c,c'$ throughout the discussion.
When convenient we do not distinguish between the p-cycle, or $C$, or the
pairs $C,c$ or $C,cc'$.
Note that each vertex $x\in C-c$ has exactly
2 $xt$-paths
in the p-cycle, depending on which direction we traverse $C$.

The following lemma gives properties of p-cycles that
are shared with   general blossoms,
as well as properties that are specific to
shortest paths (see Sec.\ref{BackgroundSec}).

Let $E^*$ be the set of edges in
$\set {uv} {uv \notin P_u\oplus P_v}$
whose $\zeta$-value is smallest, and let $\zeta^*$ be this smallest
$\zeta$-value.

\begin{lemma}
\label{sp2.2}
\i
There is a p-cycle whose cycle $C$ consists of $E^*$ and possibly other
edges of $\zeta$-value $\le \zeta^*$.

\ii For every $x\in C$, $P_x$ is one of the $xt$-paths in the p-cycle.
In particular $P_c$ starts with edge $cc'$.

\iii Every $x\in C$ has  $2w(P_x)\le \zeta^*$.
\end{lemma}

\ifcase\version
Proof is in the full version.
\or
\begin{proof}
First note the lemma is trivial if $E^*$ consists of a loop, so suppose not.
The bulk of the argument consists of several claims that
lay the foundation for \xi--\xiii.

Take any edge $uv\in E^*$.
Consider the subgraph $S$
of $2G$
formed by modifying
$P_u \cup P_v$ to contain
exactly one copy of $uv$. 
Note that
\begin{equation}
\label{WeightSEqn}
w(S)=\zeta(uv)=\zeta^*.
\end{equation}
We will show $S$ is the desired p-cycle.

\yclaim 1 {\it $E(S)$ can be partitioned into

(a) a circuit $C=(P_u\oplus P_v)+uv$;

(b) a multiplicity 2 path $P$ from $t$ to some $c\in C$ ($V(P)\cap V(C)=\{c\}$);

(c) zero or more multiplicity 2 paths
joining two distinct vertices of $C-c$.

}

\bproof
We start by analyzing the multiplicity 1 edges of $S$. We work in the
enlarged graph with edges $E_\ell(G)$.
Let $F_u\con E_\ell(G)$ be $P_u$ enlarged
to an $f_u$-factor by adding loops; similarly for $F_v$.
The subgraph $F_u\oplus F_v$ contains a $uv$-trail $T$
that starts with an edge of $F_v-F_u$ incident to $u$,
ends with an edge of $F_u-F_v$ incident to $v$,
and has edges
alternating between $F_v-F_u$ and $F_u-F_v$.
(Any of these edges including the first and last
may be loops.)
$F_u\oplus T$
is an $f_v$-factor, so
\[
 w(F_u)-w(F_u\cap T)+w(F_v\cap T)\le w(F_v).
\]
Similarly
\[w(F_v)-w(F_v\cap T)+w(F_u\cap T)\le w(F_u).
\]
Adding these inequalities
gives
$w(F_u)+w(F_v) \le w(F_v)+w(F_u)$.
 Thus
all inequalities hold with equality.
The perturbed weight function implies
$F_u -F_u\cap T +F_v\cap T= F_v$.
In particular $F_u-T=F_v-T$.
So
the nonloop edges in $T+ uv$ are precisely the  multiplicity 1 edges in $S$.
Also we get (a) of the Claim.

If $T$ contains a loop $xx$, $x\ne u,v$, then the 2 edges
that immediately precede and follow $xx$
in $T$
are both in $P_u-P_v$ or both in $P_v-P_u$.
This implies  $P_u\oplus P_v$ is a trail
consisting of 1 or more subpaths, each of  1 or more edges,
that alternate between
$P_u-P_v$ and $P_v-P_u$.

$S$ contains a multiplicity 2 path from $t$ to
some $c\in T$ (with no other vertex of $T$). We get (b) of the Claim.

Let $X$ be the set of all vertices of $T-c$
that are on an edge of $P_u\cap P_v-uv$.
If $x\ne u,v$ is the end of an edge $P_u\cap P_v-uv$
then $x$ is
on another edge of $P_u$ and another edge of $P_v$.
It is easy to see this implies that
each vertex of $X$ is joined to another vertex of $X$
by a multiplicity 2 path of $S$.
These are the paths of (c) of the Claim.

Finally note that (a)--(c) account for all edges
of $S$:
An edge of $P_u\oplus P_v$ is in $C$.
An edge of $P_u\cap P_v$ is in a path or cycle of such edges.
A path either has  both ends in $C$ (making it type (c)) or
one end in $C$ (making the other end $t$, so the path is type (b)).
A cycle of $P_u\cap P_v$ edges
cannot exist
since $P_u$ is acyclic.
\ecproof

The above $X$ is a set, not a multiset, i.e., a vertex $x\in X$
is the end of only 1 type (c) path. This follows from $d(x,S)=4$,
which also holds if $x=c$. It is worthwhile to describe the case
$x=u$:
If $uv\in P_u\cap P_v$ then
$d(u,S)=2$, so $u\notin X$.
Suppose $uv\notin P_u\cup P_v$ and $u\in X$.
The first edge of $P_u$, say $f$, must belong to $P_v$; let $g$ be the other
edge of $P_v\cap \delta(u)$. Then $S\cap \delta(u)$
consists of 2 copies of $f$ plus the edges $g,uv$.

\yclaim 2 {\it The edges of type (a) and (c) can be partitioned into
a collection of
 cycles.}

\bproof
Any circuit is the edge-disjoint union of cycles.
So we can assume there are type (c) edges, i.e.,
the above set $X$ is nonempty.

$|X|$ is even, so the vertices of $X$ divide the edges of
$C$ into an even number of segments $C_i$.
Partition the edges of $C$  into 2 sets $\C._1, \C._2$,
each consisting of alternate segments  $C_i$.
Partition the edges of the type (c) paths into 2 sets $\P._1,\P._2$,
each consisting of 1 copy of each type (c) path $P_j$.
Now the edges of type (a) and (c) are partitioned into
the two sets
nonempty  $\P._s\cup \C._s$, $s=1,2$.

{\def\Q.{\P.\C.}
Let $\Q.$ be one of the subgraphs
$\P._s\cup \C._s$.
Observe that each vertex $x$ has $d(x,\Q.)$ even:
If $x$ is interior to a $P_j$ then $d(x,\Q.)=2$.
If $x$  is the end of a $P_j$ then it
is the end of a corresponding $C_i$,
so again $d(x,\Q.)=2$.
(Recall that $X$ is not a multiset!)
Any other $x$ has 1 or 2 occurrences in \Q., both  interior to  $C_i$'s.
(2 occurrences may correspond to $x$ occurring twice in some $C_i$,
 or once in two different $C_i$'s.)
Thus $d(x,\Q.)\in \{2,4\}$.

We conclude that each connected component of
$\Q.$ is a circuit (the construction ensures
the circuit is edge-simple, since
any edge of $G$ occurs at most once in
$\Q.$). So \Q. is a union of cycles.
}
\ecproof

\yclaim 3 {\it $C$ is a cycle and there are no type (c) edges.}

\bproof
Suppose the partition of Claim 2 consists of a single cycle.  Then
there are no type (c) edges (since each of the above sets $\P._s\cup
\C._s$ is nonempty).  So Claim 3 holds.

Now for the purpose of contradiction assume the partition of Claim 2
contains at least 2 cycles.  Each cycle has nonnegative weight.  In
fact the perturbation implies the weight is positive.

One of these cycles, call it $B$, contains vertex $c$.  Form a
subgraph $S'$ by using the cycle $B$ and the type (b) path of $S$ from
$t$ to $c$.  Our assumption implies $w(S')<w(S)$.

Let $B$ contain an edge $xy \notin P_x\oplus P_y$ (Corollary
\ref{ForestCor}).  Let $B_x$ ($B'_x$) be the $xt$-trail contained in
$S'$ that avoids (contains) $xy$, respectively.  Define $B_y$ and
$B'_y$ similarly.  Then
\begin{equation}
\label{BxEqn}
w(B_x) +w(B_y)+w(xy)=
w(B'_x) +w(B'_y)-w(xy)=
w(S')<w(S)=\zeta^*.
\end{equation}
(The last equation is \eqref{WeightSEqn}.)  The definition of
$\zeta(xy)$ shows it is at most either the first expression of
\eqref{BxEqn} (if $xy\notin P_x\cup P_y$) or the second expression (if
$xy\in P_x\cap P_y$).  Thus $\zeta(xy)<\zeta^*$, contradiction.
\ecproof

Now we prove \xi--\xiii.
Take any edge $xy \in C$.
The relation \eqref{BxEqn} becomes
$w(B_x) +w(B_y)+w(xy)=
w(B'_x) +w(B'_y)-w(xy)=
w(S)=\zeta^*$.
So $\zeta(xy)\le \zeta^*$.
If $xy\notin P_x\oplus P_y$ we get  $\zeta(xy)= \zeta^*$,
and $P_x,P_y$ is either $B_x,B_y$ or $B'_x,B'_y$, i.e.,
\ii holds for $x$.

To prove  \i it remains only to show $C$ contains every edge $u'v'\in E^*$.
Analogous to \eqref{WeightSEqn}, the subgraph $S'$
formed from $u'v'$ the same way $S$ is formed from $uv$ has
weight $w(S')=\zeta(u'v')=\zeta^*$. The perturbed edge weight function
implies $S=S'$. Thus exactly 1 copy of $u'v'$ belongs to $S$, i.e.,
$u'v'\in C$.

To prove the first assertion of \ii
we need only treat the case
$xy\in P_x\oplus P_y$.
(Note the second assertion of \xii is a simple special case of the first.)
These edges are a proper subset of $C$
 (Corollary \ref{ForestCor}). Let $Q$ be a maximal length
path of such edges that does not contain $c$ internally.
At least one end of $Q$, say $r$, is on an edge  $rs \in C -(P_r\oplus P_s)$
(the other end may be $c$).
If $rs\notin P_r\cup P_s$ then $P_r$ is the $rt$-path
that avoids $rs$
in $C$.  Lemma \ref
{SimpleLemma} shows
any $x\in Q$ has $P_x$ a subpath of $P_r$. Thus \ii holds for $x$.
Similarly if
$rs\in P_r\cap P_s$ then $P_r$ is the $rt$-path
that contains $rs$
 in $C$, and  Lemma \ref
{SimpleLemma} shows
any $x\in Q$ has $P_x$ the $xt$-subpath of $C$ containing $rs$.
Again \ii holds for $x$.


To prove \iii first assume $x\ne c$.
Thus
$\zeta^*=w(S)=w(B_x)+w(B'_x)>
2w(P_x)$.
For $x=c$ the argument is similar:
Since $G$ is conservative,
$\zeta^*=w(S)= 2w(P_c)+w(C)>2w(P_c)$.
\end{proof}

\fi 

We want the above cycle $C$ to be a cycle node in the gsp-tree.  Lemma
\ref{sp2.2}\ii shows $C$ has the required properties for vertices
$x\in C$. Now we show $C$ has the required properties for $x\notin C$.

A vertex $x$ {\em respects} a p-cycle if either $x\in C$ and $P_x$ is
an $xt$-path in the p-cycle, or $x\notin C$ and $P_x$ either contains
no vertex of $C$ or it contains exactly the same edges of $\gamma(C)+
cc'$ as some $P_y$, $y\in C$.

Let $C,cc'$ be the p-cycle of Lemma \ref{sp2.2}.

\begin{lemma}
\label{spRespectLemma}
Every vertex $x$ respects $C$.
\end{lemma}

\remark{We allow $C$ to be a loop $cc\in E_\ell(G)$.
In this case the lemma states that
a shortest path $P_x$ that contains $c$ actually contains $cc'$.}

\begin{proof}
Lemma \ref{sp2.2}\ii shows
we can assume $x\notin C$.
The argument begins similar to Claim 1 of Lemma \ref{sp2.2}.
We work in the
graph with edges $E_\ell(G)$. For any vertex $u$
let $F_u\con E_\ell(G)$ be $P_u$ enlarged
to an $f_u$-factor.
The subgraph $F_x\oplus F_c$ contains an $xc$-trail
that starts with an edge of $\delta(x)\cap F_c-F_x$,
ends with an edge of $\delta(c)\cap F_x-F_c$, and
has edges
alternating between $F_c-F_x$ and $F_x-F_c$.
Let $e$ be the first edge of the trail
that belongs to $\delta(C)$.
 Let $T$ be
the subtrail  that starts at $x$ and ends with edge $e$.

Consider two cases:

\case {$e\ne cc'$} Clearly $e\in F_x-F_c$.
Let $e$ be incident to  vertex $a\in C$.
$T$ is a subgraph of $F_x\oplus F_a$ (Lemma \ref{sp2.2}\xii).
Now follow the argument of Lemma \ref{sp2.2} Claim 1:
$F_x\oplus T$ is an $f_a$-factor
and
$F_a\oplus T$ is an $f_x$-factor,
so we get
$F_x -F_x\cap T +F_a\cap T= F_a$ and
$F_x-T=F_a-T$.
The latter implies $F_x$ and $F_a$ contain the same edges of
$\gamma(C)+ cc'$, i.e., $x$ respects $C$.

\case {$e= cc'$} Clearly
$cc'\in F_c-F_x$. Define a degree-constraint function $f^c$ by
\[f^c(x)=\begin{cases}
1&x=t\\
3&x=c\\
2&x\ne t,c.
\end{cases}
\]
An $f^c$-factor (in $E_\ell(G)$)
consists of
a $ct$-path plus a cycle through $c$, plus loops
at the remaining vertices.
Since $G$ is conservative and $w$ has been perturbed, $cc$ is the smallest cycle through $c$.
So $F^c$,
the minimum-weight $f^c$-factor,
consists of $P_c$ plus a loop at every vertex except $t$
(in particular $cc\in F^c$).
Thus $F^c=F_c + cc$.

Now the argument follows the previous case:
$T$ is a subgraph of $F_x\oplus F^c$.
$F_x\oplus T$ is an $f^c$-factor,
$F^c\oplus T$ is an $f_x$-factor,
so
$F_x -F_x\cap T +F_c\cap T= F^c$ and
$F_x-T=F^c-T$.
We have already noted $cc'\notin F_x$ and the last equation shows
$F_x$ contains the same edges of
$\gamma(C)$ as $F^c$, i.e., every loop $aa, a\in C$.
In other words $P_x$ does not contain a vertex of $C$,
so $x$ respects $C$.
\end{proof}

\subsection{Construction of the shortest-path structure}

The last two lemmas show how to construct the first node
of the gsp-tree. We
construct the remaining nodes by iterating the procedure.
This sections first shows how to construct the gsp-tree; then it
completes the gsp-structure by constructing $z$.

For the gsp-tree we first state the algorithm
and then prove its correctness.
{\em Shrinking a p-cycle}  means
contracting its cycle $C$; $C$ is the {\em shrunken cycle}.
Let $\o G$ be a graph formed by starting with $G$ and repeatedly
shrinking a p-cycle. (So the collection of
shrunken cycles forms a laminar family.)
We treat $\o G$ as a multigraph that contains parallel edges
but not loops. It is convenient to refer to vertices and
edges of $\o G$ by indicating the corresponding objects in $G$.
So let
$\o E(G)$ denote the set of edges of $G$ that correspond to (nonloop)
edges in $\o G$.
Thus writing
$xy\in \o E(G)$
implies
$x,y\in V(G)$. We do not distinguish between
$xy$ and its image in $\o G$.
Similarly,
writing $C,cc'$
for a shrunken p-cycle implies $cc'\in \o E(G)$,
and a $vt$-path in $\o G$ has $v\in V(G)$ and its first edge incident to $v$.
An overline denotes quantities in $\o G$, e.g., $\o w$, $\o \zeta$.

The following algorithm  constructs the
gsp-tree \T.. We assume the shortest paths $P_x$ are known.
(The function $z$ is constructed below.)

\bigskip

Initialize
 $\o G$ (the current graph) to the graph
$(V,E_\ell(G))$, and
\T. to contain each vertex of $G$ as a singleton subtree.
Then repeat the following step
until 
$\o G$ is acyclic:

\bigskip

{\narrower

{\parindent=0pt

Let $C,cc'$ be the p-cycle of weight $\zeta^*(\o G)$ given by Lemma
\ref{sp2.2}.  Shrink $C$ in $\o G$. 
Set $e(C)=cc'$. Unless $C$ is a loop,
 create a
node in \T. whose children correspond to the vertices of $C$.

}}

\bigskip

\noindent
When the loop halts create a root
node of \T. whose children correspond to the vertices of
the final graph $\o G$.%
\footnote{In contrast with the general definition,
our assumption that $t$ is on a unique edge
ensures the root of \T. is always a tree node.}

\bigskip

To complete the description of this algorithm
we must specify the weight function $\o w$ for $\o G$.
Let $\C.$ be the collection of maximal shrunken cycles that
formed $\o G$.
An edge $cc'\in \o E(G)$ may be associated
with two cycles of \C., one at each end.
For a p-cycle $C,cc'$ in \C., as in
Definition \ref{SPTreeDfn}
$V(C)$ denotes the set of vertices of $G$ that belong to
$C$ or a contracted vertex of $C$ .
$V(\C.)$ is the union of all the $V(C)$ sets.
For $C,c\in \C.$ and $x\in V(C)$,  $C_x\con E(G)$ denotes
the minimum-weight $xc$-path contained in $\gamma(C,G)$.

Let $G$ denote the given graph, and
let
\[\omega=2|w|(E).\]
The weight  of an edge $e=xy\in \o E(G)$ in $\o G$ is defined to be
$\o w(e)=w(e)+\Delta(e,x)+\Delta(e,y)$, where
\[\Delta(e,x)=\begin{cases}
0&x \notin V(\C.)\\
-2\omega&\text{\C. contains p-cycle  }C,e \text{ with } x\in V(C)\\
2\omega+w(C_x)&\text{\C. contains p-cycle  }C,cc' \text{ with }x\in V(C), cc'\ne e.
\end{cases}
\]

We will use a variant
of the ``respects'' relation.
Let $\o S$ be a cycle or path in $\o G$. Let
$C,cc'$ be a p-cycle of \C., and  $\o C$ the contracted vertex
for $C$ in $\o G$.
$\o S$ {\em respects} $\o C$  if
$\o S\cap \delta(\o C)$ is either empty
or consists of $cc'$ plus $\le 1$ other edge.
This definition corresponds to the previous definition if we view $\o C$
as a p-cycle whose cycle is  a loop $\o C\,\o C$.
More importantly we shall use this fact:
If a path $P$ respects $C,cc'$ and $\o G$ is derived from $G$
by contracting $C$ to a vertex $\o C$, then the image of
$P$ in $\o G$ respects $\o C$.
Also,
for any $C,cc'$ in $\o{\C.}$,
 if the image of a  shortest path $P_u$
respects $\o C$ and contains $\o C$ internally,
then $P_u$ traverses $C$ along the path $C_x$ that joins
the 2 edges of $P_x\cap \delta(C)$.
This follows from the optimality of $P_u$.

$\o S$ {\em respects} $\o {\C.}$  if it respects each $\o C\in \C.$.
When $\o S$ respects $\o {\C.}$,
the {\em preimage} of $\o S$ is the subgraph
$S$ of $G$ that completes $\o S$ with minimum weight, i.e.,
$S$
consists of the edges of $\o S$
plus,
for each p-cycle $\o C$  on 2 edges of $\o S$,
the minimum-weight path $C_x$ in $C$ that joins
the 2 edges.
This preimage
is unique.
(Any $C,cc'$ of \C. that is on 2 edges of $\o S$ is on
$cc'$ and another edge that determines the vertex $x$
in the definition of the preimage. Note
that
if $\o S$ is a $vt$-path that respects $\o {\C.}$ and $v\in V(\C.)$
then $v=c$ for some p-cycle $C,cc'$ of \C.; so in this case
the preimage of $\o S$
does not contain any edges
of $C$.)

Lemma \ref{spContractLemma} below shows
the following properties always hold.

\bigskip

{\parindent=0pt

P1:  $\o G$ is conservative.


P2: For any vertex $x\notin V(\C.)$, $P_x$ is the
preimage of $P_{\o x}$.  For any p-cycle $C,cc'\in \C.$,
$P_c$ is the
preimage of $P_{\o C}$, both of which start with edge $cc'$.
For any other $x\in V(C)$, the image of $P_x$ in $\o G$ is $P_{\o C}$.
}

\begin{lemma}
Assuming P1--P2 always hold, \T. is a valid
gsp-tree for $G$.
\end{lemma}

\begin{proof}
First observe that every vertex $x\in V(G)$ gets
assigned a  value $e(x)$.
This holds as long as some  iteration chooses $xx$ as the minimum weight
p-cycle. So for the purpose of
contradiction, suppose
an iteration contracts a p-cycle $C$
where  $x\in C$ but
the loop $xx$ has not been contracted.
Since $xx\notin P_x\oplus P_x=\emptyset$,
$xx$ is considered
in the definition of $E^*$. So we get the desired contradiction
by proving
\begin{equation}
\label{xxEqn}
\zeta(xx)=    w(xx)+2w(P_x)<\zeta^*.
\end{equation}
Lemma \ref{sp2.2}\iii shows
$2w(P_x)\le \zeta^*$ (in any iteration).
We can assume the perturbation of $w$
gives every subgraph of $2G$  a distinct weight.
Thus $2w(P_x)<\zeta^*$.
Furthermore we can assume the perturbation enforces a lexical ordering
of the edges. Since
the loop $xx$
is ordered after any edge of $G$, it cannot reverse
this inequality, i.e., \eqref{xxEqn} holds.

The algorithm sets $e(C)$ correctly for
each node $C$ of \T., by definition.
Recall
the path $p(v,N)$ from \eqref{PvFormEqn}.
It is easy to see the lemma amounts to proving that
for
any interior node $C$ of \T. and any vertex $x\in V(C)$, the edges of
$E(P_x)\cap \gamma(C)$ correspond to $p(x,C)$ as defined in \eqref{PvFormEqn}.
Let p-cycle $B,bb'$ be the child of $C$ with $x\in B$.
($B$  is a loop when the child of $C$ is a leaf of \T..)
We consider two similar cases for node $C$.

{\def\gm.{$\o G$}
Suppose  $C,cc'$ is a cycle node of \T..
Let \gm.  be the graph immediately before $C$ is contracted.
P2 in \gm. shows
$P_b$ is the preimage of $P_{\o B}$.
It also shows
$P_{\o B}$ starts with $bb'$.
Lemma \ref{sp2.2}\ii shows $P_{\o B} $ is one of the $\o
B\, t$-paths in the p-cycle.
We have already observed that $C$ and its children have
the correct $e$-values. Thus
\T. specifies  the desired path $p(b,C)$ as defined in  \eqref{PvFormEqn}.
This extends to any $x\in V(B)$ by the last assertion of P2.
}


The remaining case is for the root node \V. of \T., with \V.
 a tree node. In the final acyclic graph,
P2 shows $P_b$ is the preimage of $P_{\o B}$.
Thus \T. specifies the desired path $p(b,\V.)$.
As before
this extends to any vertex $x\in V(B)$.
\xecproof
\end{proof}

We complete the gsp-structure by specifying $z$.  For each node $N$ of
\T. let $\zeta_N$ be the weight $\zeta^*(\o G)$ of its corresponding
p-cycle.  For a leaf $x$ the corresponding p-cycle is the loop $xx$;
for the root \V. of \T., which is a tree node, the corresponding
p-cycle is the last p-cycle to get shrunk. Let $p$ be the parent
function in \T..  For each node $N$ define

\[
z(N)=
\begin{cases}
\zeta_N&N=\V.\\
\zeta_N-\zeta_{p(N)}&N\ne \V.\\
\end{cases}
\]

Lemma \ref{spContractLemma} shows this additional property always holds:

\bigskip

{\parindent=0pt

P3:  For any $uv\in \o E(G)$,
\[
\o\zeta(uv)=\begin{cases}
\zeta(uv)&uv \notin P_u\oplus P_v\\
\zeta(uv)&uv \in P_u-P_v, u\notin V(\C.)\\
\zeta(uv)-4\omega&uv \in P_u-P_v, u\in V(\C.).
\end{cases}
\]
}

\noindent
The notation in P3 is unambiguous since
$uv\in P_u$ iff $uv\in P_{\o u}$ by P2.

\begin{lemma}
Assuming P1--P3 always hold, \T. with the above function  $z$ is a
gsp-structure for $G$.
\end{lemma}

\begin{proof}
 The definition of $z$ clearly implies
that for any node $N$ of \T.,
the $z$-values of all the ancestors of $N$ (including $N$)
sum to $\zeta_N$. We claim $\zeta^*(\o G)$ increases every iteration.
The claim implies that
for every $N\ne \V.$,
$z(N)=\zeta_N-\zeta_{p(N)}\le 0$, i.e., $z(N)$ is nonpositive as desired.
To prove the claim note that
for a fixed edge $uv$,
$uv$ belongs to $P_u\oplus P_v$
in one iteration iff it does in the next iteration,
as long as it is not contracted (by P2).
Hence
the only change in $E^*$ from one iteration to the next is that
contracted edges leave $E^*$. P3 shows the edges in $E^*$ retain their original
$\zeta$-values. Thus $\zeta^*(\o G)$ never decreases.

To complete the proof we must show
\eqref{DomEqn} for every $uv\in E_\ell(G)$, with equality for
$e(N)$ edges and other edges of subgraphs $E(N)$.
Consider two cases.

\case {$uv \notin P_u\cap P_v$}
By definition
\[d(u)+d(v)+w(uv)=\zeta(uv).\]
Let $N$ be the deepest node of \T. that covers $uv$.
Observe that
\[
\zeta(uv) \ge \zeta_N,
\]
with equality holding if $uv\in E(N)\cup e(N)$.
In proof,
if $uv\notin P_u\cup P_v$ this follows from
$uv\in E^*$.
If  $uv\in P_u- P_v$,
then Lemma \ref{SimpleLemma} shows $uv=e(u)$.
So the loop $uu$ covers $uv$, $N=\{u\}$,
and equality holds.
Since the sum of the right-hand side of \eqref{DomEqn} equals $\zeta_N$,
combining the two displayed relations gives the desired conclusion
for  \eqref{DomEqn}.

\case {$uv \in P_u\cap P_v$}
By definition
\[d(u)+d(v)-w(uv)=\zeta(uv).\]
$uv$ is covered by the nodes of \T. that are ancestors
of $u$ or $v$. The ancestors of $u$ have $z$-values summing
to $\zeta(uu)=2d(u)$ and similarly for $v$.
 Let node $A$ be the least common ancestor
of $u$ and $v$ in \T..
Since $uv$ is in the cycle of node $A$, $\zeta_A =\zeta(uv)= d(u)+d(v)-w(uv)$
(by definition of $\zeta$).
The sum of the right-hand side of \eqref{DomEqn} equals
\[\zeta_u +
\zeta_v -\zeta_A = 2d(u)+2d(v)-(d(u)+d(v)-w(uv))=
d(u)+d(v)+w(uv)\]
as desired.
\end{proof}

The development is completed by establishing P1--P3:

\begin{lemma}
\label{spContractLemma}
P1--P3 hold in every iteration.
\end{lemma}

\begin{proof}
Consider two edges
$e_i=x_iy_i \in \o E(G)$, $i=1,2$,
where both $x_i$ have the same image in $\o G$.
If $x_i\notin V(\C.)$ then
\begin{equation}
\label{DelVEqn}
\Delta(e_1,x_1)+\Delta(e_2,x_2)=0.
\end{equation}
Suppose the $x_i$ belong to the p-cycle $C$ of \C..
If
$e_1,e_2$ respects $\o C$ then wlog the p-cycle corresponds to $C,e_2$, and
\begin{equation}
\label{DelCResEqn}
\Delta(e_1,x_1)+\Delta(e_2,x_2)=w(C_{x_1}).
\end{equation}
In the remaining case,
i.e., $e_1,e_2$ does not respect $\o C$,
\begin{equation}
\label{DelCNoResEqn}
\Delta(e_1,x_1)+\Delta(e_2,x_2)\ge 2(\omega+|w|(E-\gamma(C))).
\end{equation}

\yclaim 1
{\em Let $\o S$ be a cycle of $\o G$.  If $\o S$ respects $\o {\C.}$ then
$\o w(\o S)=w(S)$ for $S$ the preimage of $\o S$.  If $\o S$ does not respect
$\o {\C.}$  then $\o w(\o S)\ge 2\omega$.}

\bproof
Let $\C.\R.$ ($\C.\N.$) contain the p-cycles of \C.
that are respected (not respected) by
$\o S$, respectively.
Let $CX=\bigcup\set{C_x}{x = x_1 \text{ in } \eqref{DelCResEqn} \text{ for }
C \in \C.\R.}$.
Then
\eqref{DelVEqn}--\eqref{DelCNoResEqn}
imply
\begin{equation}
\label{CWeightEqn}
\o w(\o S) \ge w(\set{e}{e\in \o S \cup CX} ) +
\sum \set{2(\omega +|w|(E-\gamma(C)))} {C\in \C.\N.}.
\end{equation}

\noindent
Here we use the fact that the $2\omega$ terms cancel
on the 2 edges incident to a cycle of \C.\R. (this property actually depends on
our assumption that $t$ has a unique incident edge).

When $\o S$ respects $\o{\C.}$, i.e., $\C.\N.=\emptyset$,
\eqref{CWeightEqn} holds with equality, and we get the claim.
Suppose $\o S$ does not respect $\o {\C.}$. Then choosing $B$ as any
p-cycle of $\C.\N.$, \eqref{CWeightEqn} implies
$\o w(\o S) \ge w(\set{e}{e\in \o S \cup CX} ) +
2(\omega +|w|(E-\gamma(B))) \ge 2\omega$,
giving the claim.
\ecproof

Claim 1 implies  property P1.

\yclaim 2 {\em Let $\o S$ be a $vt$-path in $\o G$. If $\o S$ does not
  respect $\o {\C.}$ then $\o w(\o S)\ge \omega$.  If $\o S$ respects
  $\o {\C.}$ let $S$ be the preimage of $\o S$.  If $v\notin V(\C.)$
  then $\o w(\o S)=w(S)$.  If $v\in V(\C.)$ then $\o w(\o
  S)=w(S)-2\omega$.}

\bproof
$\o S$ satisfies a version of (\ref{CWeightEqn}) that accounts for the
term $\Delta(e_1,v)$ for edge $e_1=\delta(v,\o S)$.  We examine
several cases.

Suppose $\o S$ respects $\o{\C.}$.  If $v\notin V(\C.)$ then
(\ref{CWeightEqn}) holds with equality, giving the claim.  If $v\in
V(\C.)$
then \C. contains a p-cycle $C,e_1$ with $v\in V(C)$.
$\o w(\o S)$ contains an extra term
$\Delta(e_1,v)=-2\omega$, again giving the claim.

Suppose $\o S$  does not respect $\o{\C.}$.
If $v\notin V(\C.)$ then
 (\ref{CWeightEqn}) holds unmodified. As in Claim 1,
$\o w(\o S)\ge 2\omega$, giving the current
 claim.
So suppose
$v\in V(B)$ for $B\in \C.$. If $e_1$ does not respect $\o B$
$\o w(\o S)$ contains an extra term
$\Delta(e_1,v)=2\omega +w(B_v)\ge \omega +|w|(E-\gamma(B_v))$,
giving the claim. In the remaining case
$e_1$ respects $\o B$
and
$\o S$ does not respect some
$\o A\ne \o B$. The right-hand side of  (\ref{CWeightEqn}) contains the extra term
$\Delta(e_1,v)=-2\omega$, and the term for $A$ is at least
$3\omega + 2|w|(E-\gamma(A))$. These two contributions
sum to $\ge \omega+|w|(E-\gamma(A))$. Thus the right-hand side
of (\ref{CWeightEqn}) is $\ge \omega$, as desired.
\ecproof

\yclaim 3 {P2 holds every iteration.}

We argue by induction on the number of iterations.
Consider an iteration for the p-cycle $C,cc'$.
Let \gm.,\wm. (\gp.,\wp.) be the graph and weight function
 immediately before (after) $C$ is contracted, respectively.
For greater precision, if $H$ is
\gm. or \gp. and $z$ is a vertex of $H$,
let $P(z,H)$ denote the shortest $zt$-path in $H$.
Take any $x\in V(G)$. Let $P= P(\o x, \gp.)$.
Note that $P_x$ continues to denote the shortest $xt$-path in $G$, and
for P2 we want to establish the relationship between $P$ and $P_x$.
\C. denotes the family of contracted vertices in \gp., i.e., it includes $C$.

First assume either $x\notin V(\C.)$ or $x=b$ for some p-cycle $B,bb'$ of
$\C.$ ($x=c$ is a possibility).
Let $\delta$ be 0 ($2\omega$)
if $x\notin V(\C.)$ ($x=b$) respectively.
Let $Q=P(\o x, \gm.)$.
Lemma \ref{spRespectLemma} shows $Q$ respects $C$.
Thus  $Q$ has an image $Q^+$ in \gp.
that respects $\o C$.
The inductive assumption of P2 in \gm. shows
$P_x$ is the preimage of $Q$. (For $x=c$,
$\o x$ in \gm. may be a vertex or a contracted cycle, and we use
the appropriate assertion of P2.)
Thus
the optimality of $P_x$ implies
it is the preimage of $Q^+$. Thus Claim 2 shows
\[\o w^-(Q)=\o w^+(Q^+)=w(P_x)-\delta.\]

Since $\omega> w(P_x)$, Claim 2 shows $P$ respects $\o {\C.}$ and
\[\wp.(P)=w(P^-)-\delta\] for  $P^-$ the preimage of $P$.
Since $P^-$ is an $xt$-path,
\[w(P^-)\ge w(P_x).\]
Combining the inequalities gives $\wp.(Q^+)\le \wp.(P)$. Thus
$Q^+=P$. So as asserted by the first part of P2, $P_x$
is the preimage of $P_{\o x}$.

Also, as in the second assertion of P2, every
p-cycle $B,bb'$ of $\C.$
has
$P_b$ and $P({\o B},\gp.)= P({\o b},\gp.)$
both
 starting with edge $bb'$.
This follows since
$P({\o b},\gm.)$ starts with $bb'$ (by definition if $b$
is not in a contracted vertex of \gm., else by  P2 in \gm.)
and $P_b$ is the preimage of
both $P({\o b},\gm.)$ and $P({\o b},\gp.)$.


Finally  the last assertion of P2 follows since
any $y\in V(B)$ has $\o y=\o b$.
\ecproof

\yclaim 4 {P3 holds in every iteration.}

\bproof
By definition \[\o\zeta(uv)=
\o w(P_{\o u})+ \o w( P_{\o v})\oplus \o w(uv)=
(\o w(P_{\o u})\oplus\Delta(uv,u))+
(\o w( P_{\o v})
\oplus\Delta(uv,v))
\oplus w(uv)\]
for some $\oplus\in \{+,-\}$.
Since $uv$ is an edge of $\o G$, P2 shows
$uv\in P_u$ iff $uv\in P_{\o u}$. Hence the version of this formula
for $\zeta(uv)$ uses the same sign $\oplus$ as $\o \zeta(uv)$.
We evaluate  $\o\zeta(uv)$ using the formula, with
Claim 2 
providing $\o w(P_{\o u})$
 and the definition of $\Delta(uv,u)$
giving its value, as follows.

Suppose $u\notin V(\C.)$.  $P_u$ is the preimage of $P_{\o u}$ by
P2. 
 So
\[\o w(P_{\o u})\oplus\Delta(uv,u)=w(P_u)\]
since $\o w(P_{\o u})=w(P_u)$, $\Delta(uv,u)=0$.

Suppose $u\in V(\C.)$, say $u\in V(C)$ for the p-cycle $C,cc'$.
$P_c$ is the preimage of $P_{\o u}$ by
P2. 
Claim 2 shows $\o w(P_{\o u})=w(P_c)-2\omega$.
If $uv \notin P_u$ then
\[\o w(P_{\o u})+ \Delta(uv,u)=w(P_u)\]
since $\Delta(uv,u)=2\omega+w(C_u)$, and $P_u=C_u\cup P_c$ by
P2 and the optimality of $P_u$.
If $uv \in P_u$ then
\begin{eqnarray*}
\o w(P_{\o u})- \Delta(uv,u)&=&w(P_u),\\
\o w(P_{\o u})+ \Delta(uv,u)&=&w(P_u)-4\omega
\end{eqnarray*}
since P2 implies $u=c$ and $\Delta(uv,u)=-2\omega$.


The alternatives of P3 all follow, by
combining the equations with $-$ signs for $uv\in P_u\cap P_v$
and $+$ signs in all other cases.
\xecproof
\end{proof}



\section{Background on \boldmath {$f$}-factors}
\label{BackgroundSec}
\def\AfTheorem{\cite[Theorem 4.17]{G}}
\def\NewWeights{\cite[Fig.13]{G}}
\def\AWeightLemma{\cite[Lemma 4.13]{G}}
\def\ARespectLemma{\cite[Lemma 4.2]{G}}
\def\ARespectCor{\cite[Corollary 4.11]{G}}
\def\BTreeDfn{\cite[Definition 4.14]{G}}
\def\ConstructionSec{\cite[Section 4.4]{G}}
\def\AEndLemma{\cite[Lemma 4.12]{G}}

The bulk of this section reviews
the approach of \cite{G}  based on critical graphs.
The review ends by illustrating the shrinking procedure.
Then we show that procedure efficiently
constructs a  generalized shortest-path structure.

To define critical graphs,
for each vertex $v\in V$ define $f_v$,
the {\em lower perturbation} of $f$ at $v$,
by decreasing $f(v)$ by 1.
Similarly define $f^v$,
the {\em upper perturbation} of $f$ at $v$,
 by increasing $f(v)$ by 1.
Every $f_v$, $f^v$, $v\in V$
is a {\it perturbation} of $f$.
\fs v. stands for a fixed  perturbation
that is either
$f_v$ or $f^v$.
A graph is {\it $f$-critical} if it has an $f'$-factor for every
perturbation $f'$ of $f$.%
\footnote{The usual notion of criticality for
matching
only assumes existence of the factors for lower perturbations.
It is easy to see this implies existence for all the upper
perturbations. This holds for
$b$-matching too, but not
for general $f$-factors \cite{G}.}

It is easy to see that
a maximum $f$-factor of $G$ can be found by working on
the critical graph $G^+$ formed by adding a vertex $s$ with
edges $sv$, $v\in V(G)$ and loop $vv$, all of weight 0,
and $f(s)=1$: A maximum $f_s$-factor is the desired $f$-factor.

Recall the linear programming formulation for maximum weight
$f$-factors, especially the dual problem (e.g.~\cite[Ch.~32]{Schrijver}).
We summarize  the slight modification
used  in \cite{G} for critical graphs.
Simply put we want the dual variables to be optimum for every perturbation
of $f$.

Dual variables for a graph with integral weights are functions
$y:V\to \mathbb {Z}$ and $z:2^V\times 2^E\to \mathbb {Z}$.
Pairs $B=(V(B),I(B))$ with nonero $z$-value
are called (weighted) ``blossoms''.
$z(V)$ is a shorthand for $z((V,\emptyset))$.
The optimum dual function
$y$ is defined by
$y(v)=-w(F_v)$ \AfTheorem, as expected from Sec.\ref{SPSec}.


The $z$ function has support given by a forest that we now describe.
It generalizes the shortest-path structure
of Section \ref{SPSec}. The usual version corresponds exactly
to that structure; it also corresponds to
the blossom tree for matching \cite{G}.
Our algorithms  use a weighted version of this structure -- it is the
shortest-path structure/blossom tree with every cycle node/blossom
of 0 $z$ value contracted into its parent. This
{\em weighted blossom forest} is defined as follows
(we  give a minor modification of \BTreeDfn,
using \cite[Lemma \AEndLemma]{G}).
Let \M. be the set of all maximal blossoms of $G$.
$\M.=\{V\}$  for matching and $b$-matching but not generally
(for shortest paths \M. is the set of maximal cycle nodes).

\bigskip

\i Each $B\in \M.$ is the root of a
{\em weighted blossom tree} \W.. Each interior node of \W. is a weighted
blossom
and each leaf is a vertex of $G$.
The
children of any node $B$ are the maximal weighted
blossoms properly contained in $B$
plus all vertices of $G$ contained in $B$ but no smaller weighted blossom.
$V(B)$ is the set of all leaf descendants of $B$, i.e., the vertices in $B$.
Each vertex of $G$ belongs to exactly one blossom of \M..

\ii The support of $z$ is $\set {B,\ V} {B \text{ a blossom of
a \W.-tree}}$.
$z(B)>0$  for blossoms $B$ of \W.,
while $z(V)$ may have arbitrary sign.

\iii Each blossom $B$  has a set of edges $I(B)\con \delta(V(B))$.

\iv The blossoms of \M.
are nodes of
a tree \T..
 Every edge $AB$ of \T. belongs
to $I(A)\oplus I(B)$.

\bigskip


For shortest paths,
$I(B)$ consists of the edge $e(N)$.
As in shortest paths and matching, a blossom $B$ {\em covers} any edge of
$\gamma(V(B))\cup I(B)$.
An edge $uv$ of $G$ has a  value
$$\H{yz}(e) = y(e)  + z\set {B,\ V}
{B \text{ a blossom of \W. that covers } e}. $$
(Here we use the above convention; recall $e=\{u,v\}$.)
$e$ is {\em underrated} if
\[w(e)\ge \H{yz}(e);\]
$e$ is {\em strictly underrated} if the inequality is strict and
{\em tight} if equality holds.
An $\fs v.$-factor
$F$ {\em respects} a blossom
$B$
 iff
\begin{equation}
\label{RespectEqn}
F \cap  \delta(V(B))=
\left\{
\begin{array}{ll}
I(B)&v\in V(B)\\
I(B)\oplus e&v\notin V(B), \text{ $e$ is some edge in $\delta(V(B))$}
\end{array}
\right.
\end{equation}

\noindent
(For shortest paths, the first alternative says
a  shortest $vt$-path leaves  $B$ on $e(B)$ if $v\in V(B)$;
otherwise the  second line  says it contains
$e(B)$ and one other edge ($t\notin B$) or one edge $\ne
e(B)$ ($t\in B$)).
An $\fs v.$-factor has maximum weight if it contains
every strictly underrated edge, its other edges are tight, and it
respects every blossom with positive $z$-value.
(The optimum $f$-factors we use satisfy this criterion.)

As in matching,  blossoms
are built up from odd cycles, defined as follows \cite[Definition 4.3]{G}:
An {\em elementary blossom} $B$ is a 4-tuple $(VB, C(B), CH(B), I(B))$, where
$VB\con V$,
$C(B)$ is an odd circuit on $VB$,
$CH(B)\con \gamma(VB)-E(C(B))$,
$I(B)\con \delta(VB)$, and
every $v\in VB$ has
$f(v)= d(v, C(B))/2 +d(v,CH(B) \cup I(B))$.
 $C(B), CH(B)$, and $I(B)$ the {\em circuit, chords}, and
{\em incident edges}
of $B$, respectively.
We sometimes use
``blossom'' or ``elementary blossom'' to reference
the blossom's circuit or its odd pair.
Blossoms for shortest pairs  have $|I(B)|\le 1$
(as mentioned)
and $CH(B)=\emptyset$ (this is the import of
Lemma \ref{sp2.2}, especially part \ii and Claim 2 respectively).

The blossoms for optimum duals are found
by repeatedly finding the next blossom and shrinking it.
Fig.\ref{SPshrinkFig}
illustrates the
shrinking of the 3 cycle nodes of Fig.\ref{SPStrucbFig}, e.g.,
$ \{a,b,c\}$ gets shrunk in  Fig.\ref{SPshrinkFig}(a).
In general the  vertices of the blossom
are contracted and a loop is added to the contracted vertex.
Edges incident to the contracted vertex get their weights adjusted
to account for contracted edges.
(For shortest paths these edges are the paths $p(v,N)$.
In   Fig.\ref{SPshrinkFig}(a)
$w(be)$ decreases by 1 to account for path $b,a,c$.)%
\footnote{The weight adjustment of \NewWeights\ also involves a large quantity
$J$. We omit it since $J$ is needed only for the proof of \cite{G}, not for
explaining the result.}

\begin{figure}[t]
\begin{center}
\epsfig{figure=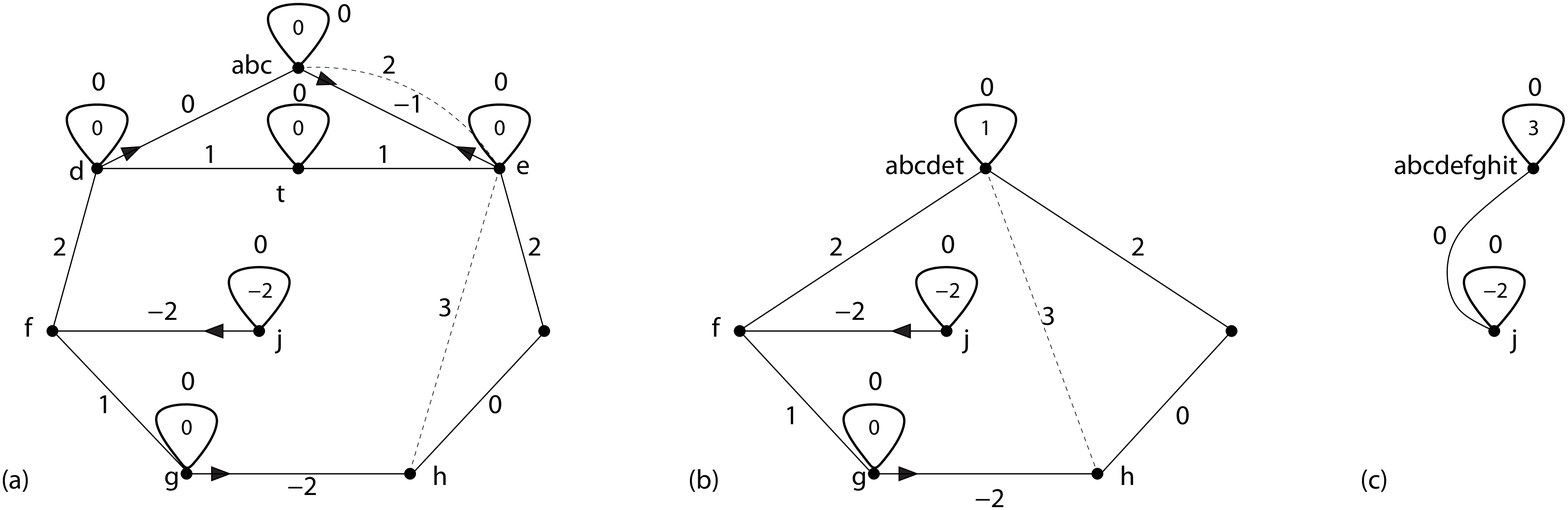, width=\textwidth}
\end{center}
\caption{Contracted graphs for nonloop unifactors.}
\label{SPshrinkFig}
\end{figure}

For simplicity perturb the edge weights slightly so that no two sets of edges
have the same weight.
That is, number the edges from 1 to $m$ and increase the weight of the
$i$th edge by $\epsilon^i$ for some $\epsilon\ge 0$.
For small enough $\epsilon>0$,
no two sets of edges have the same weight.
Thus
any such
perturbation has a unique maximum factor which is also maximum for the
original weights.
Let $F_v$ and $F^v$ denote the maximum weight {$f_v$} and {$f^v$} -factors
respectively.
(Eventually (in \eqref{DualDefnEqn})
we set $\epsilon$ to 0
and define the dual function $z$ using the original weights.)
%
Recall (Sec.~\ref{sec:definitions}) the  multiset notation
$2S$, $S/2$, $2G$.
Assume that even in $2G$, no two sets of edges have the same weight.

We choose the
blossom to shrink next using
a subgraph that generalizes the $uv$-subgraphs of Section \ref{SPSec}:
Any subgraph of $2G$ consists of edges
of $G$ at multiplicity 0,1 or 2.
A {\em {$2f$}-unifactor} is a $2f$-factor of $2G$
whose multiplicity 1 edges form
an odd circuit. (When  possible we
abbreviate ``$2f$-unifactor'' to ``unifactor''.)
The elementary blossom for a unifactor  $U$ with odd circuit $C$ is the
 4-tuple $B=(V(C), C, CH(B), I(B))$
where
$CH(B)=(U\cap \gamma(C)-E(C))/2$ and
$I(B)=(U\cap \delta(C))/2$.
For shortest paths note how this defines $I(B)$ to be $e(N)$
(the type (b) path to $b$ ends in multiplicity 2 edge $e(N)$).

The {\em shrinking procedure} of \cite{G} constructs the
blossoms
as follows:

\bigskip

{\narrower

{\parindent =0pt

Let the next
blossom $B$
be the elementary blossom of the maximum weight proper  $2f$-unifactor.%
\footnote{``Proper'' means (a) the unifactor's circuit
is not a loop created previously when a blossom was contracted;
(b) the unifactor respects each previous blossom. Here ``respects''
is the generalization of \eqref{RespectEqn} to unifactors, e.g., a shortest path
that enters and leaves a blossom must contain $e(N)$. \cite{G}
enforces (b) using
the previously mentioned quantity $J$ that we omit.}
Shrink $B$. Repeat this step until no unifactor exists.

}}

\bigskip

Let us describe how this procedure constructs the blossoms and tree
of Fig.\ref{SPStrucbFig}. We use
Fig.\ref{SPshrinkFig}.
The shrinking procedure  repeatedly finds the minimum weight proper unifactor.
Let $\zeta^*$ denote its weight.
$\zeta^*$ never decreases from step to step, so we describe
the blossoms found at the various values of $\zeta^*$.
Recall that we treat the shortest-path problem by adding a 0 weight
loop at every vertex $\ne t$.

\bigskip

{\narrower

{\parindent =0pt

$\zeta^*=-2$: The loops at $a, b$ and $j$ are blossoms.
(For instance the unifactor for $j$ corresponds to
the shortest $jt$-path -- it consists of loop
$jj$ (multiplicity 1) and
the $jt$-path of weight $-1$ plus
a loop at every vertex not on the path (all multiplicity 2).
The weight of the unifactor, $-2$, is drawn inside the loop at $j$
in Fig.\ref{SPshrinkFig}(a). Fig.\ref{SPshrinkFig} does this for all blossoms.)

$\zeta^*=0$: The loops at $d,e,g,t$ are blossoms, as is
 cycle $a,b,c$.  Fig.\ref{SPshrinkFig}(a) shows
the graph after all these blossoms have been shrunk.

$\zeta^*=1$: Contracted vertex $\{a,b,c\}$ plus
vertices $d,e,t$ form a blossom. Fig.\ref{SPshrinkFig}(b)
shows the graph after it has been shrunk.

$\zeta^*=2$: The loops at $f, h$, and $i$ are blossoms.

$\zeta^*=3$: All vertices but $j$ form a blossom.
 Fig.\ref{SPshrinkFig}(c) shows the graph after it is shrunk.
The new weight on $jt$ reflects the contracted edge $fd$.

}}

\bigskip

Note that when a loop $xx$ becomes a blossom, the unifactor's
weight is $2d(x)$. This corresponds to the bound $2d(x)$ in Sec.\ref{SPSec}
mentione for the Bellman inequality argument. Also, the description
for loops in $\zeta^*=-2$  might seem to imply we need to know
the shortest paths to execute this procedure -- but see the
 implementation in Sec.\ref{fFactorSec}!

The optimum dual function $z$ is defined as follows.
For any blossom $B$ let
$U(B)$ be its corresponding unifactor.
If $B$ is not a maximal blossom  let
$p(B)$ denote the blossom $B$ gets contracted into.
Let $w$ be the original (unperturbed) weight function.

\begin{equation}
\label{DualDefnEqn}
\begin{array}{lcll}
z(B) &= &
\left\{
\begin{array}{ll}
w(U(B)) &\text{$B$ a maximal blossom},\\
w(U(B))-w(U(p(B)))&B \mbox{ nonmaximal.}
\end{array}
\right.
\end{array}
\end{equation}

\noindent
For instance in Fig.\ref{SPStrucbFig}
blossom $aa$ is created at $\zeta^*=-2$,
 $\{a,b,c\}$
is created at $\zeta^*=0$, and $z(\{a,b,c\})=-2 - 0=-2$.

The
following characterization of
maximum weight unifactors is central to the analysis:

\begin
{lemma}[{\cite[Lemma 4.5 and Cor.~4.6]{G}}]
\label{UnifactorLemma}
For any vertex $v$, the maximum weight $2f$-unifactor containing $v$
in its circuit is $F_v + F^v$.
For any edge $uv$ consider 3 cases:

$uv \notin F_u\cup F_v$:  The maximum weight $2f$-unifactor containing $uv$
in its circuit is $F_u + F_v+uv$.

$uv \in F_u\cap F_v$:  The maximum weight $2f$-unifactor containing
$uv$ in its circuit is $F^u + F^v-uv$.

 $uv \in F_v -F_u$:
$F_u=F^v-uv$.
Furthermore
the maximum weight $2f$-unifactor containing $v$
in its circuit
is $F_u + F_v+uv$, and this unifactor
contains $uv$ as an edge incident to its circuit.
\end{lemma}

For shortest paths \i -- \ii
characterize the maximum $uv$-subgraph.
\iii
 corresponds
to Lemma \ref{SimpleLemma}.

We close this review
by reiterating some notation from \cite{G} that we use in the next two sections:

$w$: the given, unperturbed,  weight function.

\B.:
the forest whose nodes are the elementary blossoms  found by the
shrinking procedure, plus the vertices of $G$.
 The children of a blossom $B$ are the vertices that
get contracted to form $B$.

$\bar z(B)$:
the sum of the dual values $z(A)$ for every ancestor $A$ of $B$ in \B..
If $B$ does not appear in
the weighted blossom forest \W. (defined above) then $\bar z(B)=
\bar z(A)$  for the blossom  $A$ of \W. that absorbs
$B$ by contraction.

$B_v$, for any vertex $v\in V(G)$:
the smallest weighted blossom  containing $v$.



\section{Weighted \boldmath {$f$}-factors}
\label{fFactorSec}

This section presents the reduction of maximum weight
$f$-factors to unweighted $f$-factors.
As usual we use the shrinking procedure to
find the weighted blossoms $B$ in order
of decreasing value $\bar z(B)$.  A blossom is a pair
of sets $(V(B), I(B))$. We find these pairs in two steps
using a graph $G(\zeta)$: Having found the blossoms of $\bar z$-value
$>\zeta$, we construct $G(\zeta)$.  Its 2-edge-connected components
constitute the $V$-sets for all weighted blossoms of $\bar
z$-value $\zeta$.  The edges of $I$-sets that are still unknown are found
amongst the bridges of $G(\zeta)$ or in a related computation.

In our references to \cite{G} we are careful to
recall that
\cite{G} modifies the given weight function $w$ in two ways: First,
the given edge weights are perturbed so every maximum
factor $\Fs v.$ and every maximum
$2f$-unifactor is unique.
Second, in \cite{G} each time a blossom is contracted the weights
of its incident edges are modified (by adding large quantities, including
a value called $W$ greater than the sum of all previous edge weights).
The reduction of this section
has no access to these conceptual modifications, and it
must works entirely with the given
edge weight function $w$.

 Our overall strategy is similar to the $b$-matching algorithm of
 Section \ref{GeneralbMatchingSec}: We
 assemble the
  desired maximum weight $\fs v.$-factor from its
 subgraphs that lie in the various nodes of the weighted blossom tree.
 The details are similar to Section \ref{GeneralbMatchingSec}, and
 it is implicit in \cite{G} that these details work. But for completeness
 the next lemma proves the necessary properties.

 Any blossom $B$ in a \W.-tree has an associated graph $\o G(B)$.
 Its vertices are
 the children of $B$, with every blossom child
 contracted.
 Its edges are the underrated edges that join any two of its vertices
 (this includes
 underrated loops).
 Informally the lemma states that the edges of any
 maximum $\fs v.$-perturbation
 in $\o G(B)$
 satisfy all relevant constraints.

 \begin{lemma}
 \label{GbarLemma}
 For
  any $v\in V(G)$,
 any perturbation $\fs v.$,
 and any $B$ in \W.,
 let $\o B \con \delta(B)$ be a set of edges that respects $B$.
 If $\o B$ has the form $I\oplus e$ assume $e$ is tight.
 Then $\o G(B)\cup \o B$ has a subgraph
 $F$ containing $\o B$
 wherein

 (a) every child of $B$ that is a vertex $x\in V(G)$
 has $\fs v.(x)=d(x,F)$;

 (b) $F$ respects every child of $B$ that is a
 contracted blossom.
 \end{lemma}

 \begin{proof}
 The two possible forms for $\o B$ are $I(B)$ and $I(B)\oplus e$.
 We first show that $I(B)$ is unique and the lemma holds when
 $\o B=I(B)$.

 For any $v\in V(B)$,
 any maximum
 $\fs. v$-factor $F$ respects $B$, i.e., it
  has
 $F\cap \delta(B)=I(B)$.
 (This follows from the optimality of the duals.)
 So $I(B)$ is unique.
 Also this factor $F$ proves the lemma
 when $\o B=I(B)$, i.e., $F$ satisfies (a) and (b).

 For $v\notin V(B)$ any set respecting $B$ has the form
 $I(B)\oplus e$, $e\in \delta(B)$. By assumption $e$ is tight.
 Let
 $e=xy$ with  $y\in V(B)$.
 If $e\in I(B)$ then $F=F^y -xy$ is a maximum $f_x$-perturbation,
 with $F\cap \delta(B)=I-e$.
 (This follows from the optimality criterion of the duals.)
 Similarly if $e\in \delta(B)-I(B)$ then $F=F_y + xy$ is a maximum
 $f^x$-perturbation with  $F\cap \delta(B)=I+e$.
 As before, the maximum factor $F$ proves the lemma.
 \end{proof}

 \subsection{Finding the {\boldmath {$V(B)$}}-sets}
 \label{VBSec}
 The reduction is given
 the quantities, $w(F_v), w(F^v)$, $v\in V(G)$.
 So it  can use these quantities:

 \bigskip

 For each vertex $v\in V$, $\zeta_{v} = w(F_v)+w(F^v)$.

 For each edge $uv\in E$, $\zeta_{uv} = w(F_u)+w(F_v)+w(uv)$ and
 $\zeta^{uv} = w(F^u)+w(F^v)-w(uv)$.

 \bigskip

 For a vertex $v$ let $UNI(v)$ be the maximum $2f$-unifactor containing
 $v$ in its circuit.
 Lemma \ref{UnifactorLemma} shows
 $\zeta_v=w(UNI(v))$.  Recalling the definition of dual variables
 (\ref{DualDefnEqn}) we get
 \[\zeta_v= \bar z (B_v).\]

 For any edge $uv$ of $G$ let $UNI(uv)$ be the maximum $2f$-unifactor
 containing $uv$ in its circuit.  We classify $uv$ as type 0, 1 or 2
 depending on the number of sets $F_u, F_v$ that contain
 $uv$. Specifically $uv\in E$ is

 \bigskip

 {\em type 0} if $uv \notin F_u\cup F_v$;

 {\em type 2} if $uv \in F_u\cap F_v$;

 {\em type 1} if $uv \in F_u \oplus F_v$.  Additionally a type 1 edge
 $uv$ is {\em type $1_\zeta$} if $\zeta_u, \zeta_v >\zeta$.

 \bigskip

 \noindent
 Of course the type of an edge is  unknown to the
 reduction!
 Lemma \ref{UnifactorLemma} extends to give the following
 combinatoric
 interpretations of $\zeta_{uv}$ and $\zeta^{uv}$.
 In contrast to $\zeta_v$ these interpretations are also
 unknown to the reduction.

 \begin{lemma}
 \label{NUnifactorCor}
 Consider any edge $uv$ of $G$.

 \i If $uv$ is type 0 then $UNI(uv)=F_u + F_v+uv$.  Thus $w(UNI(uv))
 =\zeta_{uv}$.  Furthermore $\zeta_{uv}\le \zeta_u, \zeta_v \le
 \zeta^{uv}$.

 \ii If $uv $ is type 2 then $UNI(uv) =F^u + F^v-uv$.  Thus
 $w(UNI(uv))= \zeta^{uv}$.  Furthermore $\zeta^{uv}\le \zeta_u,
 \zeta_v\le \zeta_{uv}$.

 \iii Suppose $uv$ is type 1 with  $uv \in F_v -F_u$.
 Then $UNI(v)=F_u + F_v+uv$ and $uv$ is incident to the circuit of
 $UNI(v)$. So $w(UNI(v))=\zeta_v =\zeta_{uv}$.  Furthermore
 $\zeta^{uv}=\zeta_u$.
 \end{lemma}

 \begin{proof}
 Lemma \ref{UnifactorLemma} gives the characterization of the various
 unifactors. The relations between the various $\zeta$ quantities all
 follow easily from this simple identity: Any edge $uv\in E$ satisfies
 \begin{equation}
 \label{ZetaEqn}
 \zeta_{uv} +\zeta^{uv} =\zeta_{u} +\zeta_{v}.
 \end{equation}

 \noindent
 For instance to prove part \xi, the relation $w(U_{uv})\le w(U_u), w(U_v)$
 translates to $\zeta_{uv} \le \zeta_u, \zeta_v$. Now (\ref{ZetaEqn}) implies
  $\zeta_u, \zeta_v\le \zeta^{uv}$.
 \end{proof}

 We define the graph $G(\zeta)$, for any real value $\zeta$: Its
 vertices are the vertices of $G$ with all blossoms of $\bar z$-value
 $>\zeta$ contracted. Its edge set is
 \[E(G(\zeta))=\set{uv} {uv \in E(G), \ \min \{\zeta_{uv},\ \zeta^{uv}\}
 \ge \zeta}.\]
   The reduction can
 construct $G(\zeta)$, since previous iterations  have
 identified the blossoms of $\bar z$-value $>\zeta$.

 Recall a blossom $B$ of \B. is constructed as an elementary blossom
 in a contracted graph $\o G$. As such it has a circuit $C(B)$.
 $C(B)$ consists of edges that are images of edges of $G$, as well
 as blossom loops (resulting from contractions). In the
 lemma statement below, $E(C(B))\cap E(G)$ denotes
 the edges of $G$ whose images belong to $C(B)$.


 \begin{lemma}
 \label{BlossomLemma}
 For any blossom $B$ of \B., $E(C(B)) \cap E(G) \con E(G(\bar z (B)))$.
 \end{lemma}

 \begin{proof}
 Consider any edge $uv \in C(B)$.
 The desired relation $uv \in   E(G(\bar z (B)))$ amounts to
 the inequality
 $\min \{\zeta_{uv},\ \zeta^{uv}\} \ge \bar z (B)$. (Note that
 $uv$ cannot be a loop in $G(\bar z (B))$ since $uv$ joins
 distinct blossoms of $C(B)$.)
 We will prove the  desired inequality using the fact
 that
  every edge $uv \in C(B)$ is tight, proved in
 \AfTheorem.%
 \footnote{In matching and $b$-matching tightness is forced by the fact that
 every edge of a blossom circuit is in some maximum perturbation.
 The analogous statement fails for  $f$-factors --
 an ''exceptional'' circuit edge  may
 belong to
 no
 maximum perturbation at all or to every maximum perturbation \cite{G}.}
 Consider three cases:

 \case{$uv$ is type 0}
 Tightness means $y(u)+y(v)+\bar z (B) = w(uv)$.
 Equivalently $\bar z (B) = w(F_u)+w(F_v) +w(uv)$. Thus
 $\bar z (B) = \zeta_{uv}$ and Lemma
 \ref{NUnifactorCor}\i shows
 $uv \in G(\bar z(B))$.

 \case{$uv$ is type 2} Tightness means $y(u)+y(v)+\bar z (B_u)+
 \bar z (B_v)-\bar z(B)
  = w(uv)$.
 As noted above
 \[w(F_v)+w(F^v)= \zeta_v=\bar z(B_v)\]
  and
 similarly for $u$. Sustituting this relation gives
  $w(F^u)+w(F^v) -w(uv)= \bar z(B)$.
 Thus
 $\bar z (B) = \zeta^{uv}$ and Lemma
 \ref{NUnifactorCor}\ii shows
 $uv \in G(\bar z(B))$.

 \case{$uv$ is type 1}
 Wlog $uv\in F_v-F_u$. Tightness means
 $y(u)+y(v)+\bar z (B_v) = w(uv)$.
 Equivalently
 \[
 \bar z (B_v) = w(F_u)+w(F_v)+w(uv).\]
 With
 $\bar z (B_v) \ge \bar z (B)$ (since $B$ is an ancestor of $B_v$) this gives
 $\zeta_{uv}\ge \bar z (B)$.

 The (last) displayed equation is equivalent to
 $w(F_v)+w(F^v)=  w(F_u)+w(F_v)+w(uv)$. Rearranging gives
 $w(F^u)+ w(F^v)-w(uv)=  w(F^u)+w(F_u)=\bar z(B_u) \ge \bar z (B)$,
 i.e., $\zeta^{uv}\ge \bar z (B)$.
 \end{proof}

 In $G(\zeta)$ every type 0 edge is in a blossom circuit (Lemma
 \ref{NUnifactorCor}\xi, which refers to the given graph, and
 \AWeightLemma,
 which shows the relation of Lemma
 \ref{NUnifactorCor}\i is preserved as blossoms are contracted).  The
 same holds for every type 2 edge.  (However note that an arbitrary
 type 0 or 2 edge needn't belong to a blossom circuit -- it may not
 appear in any $G(\zeta)$ graph because of blossom contractions.)

 In contrast a type 1 edge of $G(\zeta)$ may or may not be in a blossom
 circuit.  The $1_\zeta$ edges obey the following
 generalization of Corollary \ref{ForestCor} for
  shortest paths.

 \begin{lemma}
 \label{AcyclicLemma}
 In any graph $G(\zeta)$ the $1_\zeta$ edges are acyclic.
 \end{lemma}

 \begin{proof}
 We start with  a relation between the set $I(A)$ of a blossom $A\in \B.$
 and the same set when $A$ is contracted, i.e., set $I(a)$
 for  blossom vertex
 $a$.
 In $G(\zeta)$ suppose $a$ is a blossom vertex and vertex $b\ne a$.
  Recall that
 $F_a\cap \delta(a)=I(a)$ and $F_b$ respects blossom $a$ (\ARespectCor,
 which says that as expected, every
 maximum perturbation  \Fs v. respects every maximum proper
 unifactor's blossom).
  Thus
 \begin{equation}
 \label{IaEqn}
 |(F_a \cap\delta(a)) \oplus (F_b\cap \delta(a))|=
 |I(a) \oplus (F_b\cap \delta(a))|=
 1.
 \end{equation}

 We use (\ref{IaEqn}) to prove the following:

 \xclaim {In $G(\zeta)$ consider a blossom vertex $c$ and distinct edges $e,f\in
   \delta(c)$.  Suppose vertices $b,c,c'$ have $e\in F_b\oplus F_c$ and
   $f\in F_c\oplus F_{c'}$.  Then $f\in F_b\oplus F_{c'}$.  }

 \bproof
 We can assume $b\ne c$ since otherwise the lemma is tautologous.
 Since $F_b$ respects blossom $c$ and $e,f \in \delta(c)$,
 $e\in F_b\oplus F_c$ implies $f\notin  F_b\oplus F_c$
 (by (\ref{IaEqn})).
 Combining with  the hypothesis $f\in F_c\oplus F_{c'}$
 gives $f\in   ( F_b\oplus F_c) \oplus (F_c\oplus F_{c'})=
  F_b\oplus F_{c'}$.
 \ecproof

 Now consider a cycle of $1_\zeta$ edges in $G(\zeta)$, say
 $a,b,\ldots, y, z$ with $z=a$.  Type 1 means $ab\in F_a\oplus F_b$.
 This immediately shows the cycle has $\ge 2$ edges.
 We will show edge $yz=ya$ also belongs to $F_a\oplus F_b$.
 $ya$ may be parallel to $ab$ (a length 2 cycle) but
 $ya$ is not the same edge as $ab$ (i.e., we do not have a length
 1 cycle, since a loop $aa$ is not type 1).
 So we get 2 distinct edges in  $\delta(a)\cap (F_a\oplus F_b)$.
 This contradicts (\ref{IaEqn}).

 It is convenient to also denote the cycle as $a, c^0=b, c^1,\ldots,
 c^{r-1}=y, c^r=z$.  Type 1 means $c^{i-1}c^{i}\in F_{c^{i-1}}\oplus
 F_{c^{i}}$.  Inductively assume $c^{i-1}c^i\in F_b\oplus F_{c^{i}}$.
 (This holds for $i=1$.)  Since $c^{i}c^{i+1}\in F_{c^{i}}\oplus
 F_{c^{i+1}}$ the claim (with $c=c^i$, $c'=c^{i+1}$) shows $c^ic^{i+1} \in F_b\oplus F_{c^{i+1}}$.
 Thus induction shows $c^{r-1}c^{r} \in F_b\oplus F_{c^{r}}$, i.e.,
 $yz\in F_b\oplus F_z$, as desired.
 \end{proof}

 The reduction processes the graphs $G(\zeta)$ for
 $\zeta$ taking on the distinct values in
  \[
 \Omega=\set{\,\min \{\zeta_{uv},\,\zeta^{uv}\}\,} {uv \text { an  edge of $G$}}\]
 in descending order.
 For any $\zeta\in \Omega$ let
 $\zeta^-$
 be any value strictly between $\zeta$ and
 the next largest  value in $\Omega$.
 Observe that $G(\zeta^-)$ is the graph $G(\zeta)$ with every blossom
 of $\bar z$-value $\ge \zeta$ contracted.

 \begin{corollary}
 \label{AcyclicCor}
 For any $\zeta\in \Omega$, $G(\zeta^-)$ is a forest.
 \end{corollary}

 \begin{proof}
 In $G(\zeta)$ every type 0 or 2 edge is in a blossom
 circuit (as indicated after Lemma \ref{BlossomLemma}).
 So
 $G(\zeta^-)$, which has all these blossoms contracted, has
 only type 1 edges $uv$.
 Furthermore since $\min \{\zeta_{uv},\ \zeta^{uv}\} \ge \zeta$,  Lemma
 \ref{NUnifactorCor}\iii shows $\zeta_u,\zeta_v\ge \zeta$.  In other
 words $uv$ is type $1_{\zeta^-}$.
 Thus
 Lemma
 \ref{AcyclicLemma} shows $G(\zeta^-)$ is
 acyclic.
 \end{proof}

 The next lemma shows how the reduction finds the vertex sets of the
  blossoms of \W. that have $\bar z$-value $\zeta$.

 \begin{lemma}
 As vertex sets, the 2-edge-connected components of $G(\zeta)$
 are precisely the weighted blossoms of $\bar z$-value $\zeta$.
 \end{lemma}

 \begin{proof}
 We first show that each blossom of \B. is 2-edge-connected. More precisely
 let $B$ be a node of \B. with $\bar z(B)=\zeta$.

 \xclaim {In $G(\zeta)$ the subgraph of edges $\bigcup \set {C(A)}
   {\text{node $A$ of \B.  descends from $B$ and } \bar z (A) = \zeta}$
   is 2-edge-connected.}

 \bproof
 The argument is by induction on the number of descendants $A$.  Recall
 \B.
 is
  constructed by repeatedly finding the next elementary blossom
 $B$ and contracting it.

 When $B$ is found, each vertex of its circuit is either \i an original
 vertex of $G$, or \ii a contracted blossom of $\bar z$-value $>\zeta$,
 or \iii a contracted blossom of $\bar z$-value $\zeta$. (Recall the
 definition of $z$, (\ref{DualDefnEqn}). The
 type \iii blossoms are blossoms with $z$-value 0.)  Vertices of
 type \i or \ii are vertices of $G(\zeta)$.  Vertices of type \iii have
 2-edge-connected subgraphs in $G(\zeta)$ by induction.  Each original
 edge of $C(B)$ is contained in $G(\zeta)$ (Lemma
 \ref{BlossomLemma}). Since $C(B)$ is a circuit when $B$ is formed, it
 completes a 2-edge-connected subgraph of $G(\zeta)$.  This completes the
 induction.
 \ecproof

 Now starting with $G(\zeta)$, contract each of the above
 2-edge-connected subgraphs that corresponds to a maximal blossom of $\bar
 z$-value $\zeta$. We get the  acyclic graph   $G(\zeta^-)$ (Corollary
 \ref{AcyclicCor}).
 So the contracted subgraphs are precisely the
 2-edge-connected components of $G(\zeta)$.
 \end{proof}

 In summary we find all the $V(B)$-sets as follows.
 \bigskip

 {\narrower

 {\parindent=0pt

 Compute all values $\zeta_{uv},\zeta^{uv}$, $uv$ an edge of $G$.
 Then repeat the following step for $\zeta$ taking on the distinct
 values in $\Omega$
 in decreasing order:

 {\parindent=30pt

 \narrower

 {\parindent=0pt

 Construct $G(\zeta)$, contracting all $V(B)$-sets
 of blossoms of $\bar z$-value $>\zeta$.
 Find the  2-edge-connected components of $G(\zeta)$
 and output them as the $V(B)$-sets of blossoms of $\bar z$-value $\zeta$.

 }}

 In addition, output the graph $G(\zeta^-)$ for the final value of $\zeta$.

 }}

 \bigskip

 It is easy to modify the output to get most of the
 weighted blossom forest
 (\xi--\iv of Sec.\ref{BackgroundSec}) for $z$:

 \i The
 weighted blossom trees \W. are constructed from the
 containment relation for the 2-edge-connected components.

 \ii $z(V)$ is
 the final value of $\zeta$. Consider
 a weighted blossom $B\ne V$. Let it be formed in the graph $G(\zeta)$.
 Then
 $z(B)= \zeta -\zeta'$,
 where $G(\zeta')$
 is the graph in which the parent of $B$ (in \W.) is formed,
 or if $B\in \M.-V$,
 $\zeta'=z(V)$.
 Clearly $z(B)>0$.

 \iii The  $I(B)$-sets are computed in the next section.

 \iv
 The tree \T. is the final graph $G(\zeta^-)$.
 In proof $G(\zeta^-)$ is a forest (Corollary \ref{AcyclicCor}).
 Its vertices are
 the contractions of the blossoms of \M.,
 since every vertex
 in a critical graph is in a blossom
 (Lemma \ref{UnifactorLemma}).
 $G(\zeta^-)$ is a tree since
 a critical graph is connected.
 Finally
 every edge $AB$ of $G(\zeta^-)$
  belongs
 to $I(A)\oplus I(B)$
 since it is type 1 (and
 any  blossom $C$ has $I(C)= F_C \cap\delta(C)$).

 \bigskip

 We conclude the section by  estimating the time for this procedure.
 The
 values in $\Omega$ are sorted into decreasing order in time $O(m\log n)$.
 We use a set-merging algorithm
 to keep track of two partitions of $V(G)$, namely the connected components of
 the $G(\zeta)$ graphs, and the 2-edge connected components.
 The total time for set merging is  $O(m+n^2)$.

 The total time for the rest of the procedure is $O(m+n^2)$.
 In proof, there are $\le m$ iterations ($|\Omega|\le m$).
 An iteration that does not change either partition (because its new edges
 are contracted) uses $O(1)$ time for each new edge.
 There are $\le 2n$ other iterations (since each of these iterations decreases
 the number of connected components or 2-edge connected components).
 Each such iteration
  uses linear time, i.e., $O(1)$ time per vertex or edge
 of $G(\zeta)$.
  We complete
 the proof by showing that $<n$ edges belong
 to $>1$ graph $G(\zeta)$.

 A type 0 or 2 edge is in one graph $G(\zeta)$.  A type 1 edge first
 appears in $G(\zeta)$ for $\zeta=\min\{\zeta_u, \zeta_v\}$.  Either it
 gets contracted in this iteration (and so is in just one $G(\zeta)$
 graph) or it is a $1_\zeta$ edge in any future $G(\zeta)$ graph that
 contains it. Any $G(\zeta)$ has $<n$ $1_\zeta$ edges (Lemma
 \ref{AcyclicLemma}).

 \subsection{Finding the {\boldmath {$I(B)$}}-sets}
\label{IBSec}

 A {\em $\zeta$-blossom}
 is a blossom of \W.
 with $\bar z$-value $\zeta$.
 The iteration of the reduction for $\zeta$
 finds $I(B)$ for each $\zeta$-blossom
 $B$.
 For any  $y\in V(B)$ define
 the set
 \[I(y)= I(B)\cap \delta(y).\]
 So any blossom $B$ has $I(B)=\bigcup \set {I(y)} {y\in V(B)}$
 (this is part of the definition of blossoms
 \BTreeDfn; it also  easily follows  from
 \iv of
 the definition of the weighted blossom tree,
 since every maximum \Fs v. respects every blossom).
 The iteration for $\zeta$  computes the sets $I(y)$
 for all vertices $y$ with
 $\zeta_y =\zeta$.
 Clearly we can combine these sets with sets $I(B')$
 known from previous iterations (for values $\zeta'>\zeta$)
 to find the $I$-set of each $\zeta$-blossom.

 We compute the $I(y)$-sets in two steps.
 For $y$ and $B$ as above (i.e.,
 $B$ a $\zeta$-blossom, $y\in V(B)$,
 $\zeta_y =\zeta$) define the set
 \[I_0(y)=
 \set {xy} {xy\in \delta(V(B)),\
 \zeta_y>\zeta^{xy} \text{ or }
 \zeta_y=\zeta_{xy}\ne \zeta_x}.\]

 \begin{lemma}
 \label{IySubsetLemma}
 $I_0 (y)\con I(y)$.
 \end{lemma}

 \begin{proof}
 Consider an edge $xy\in I_o(y)$.
 Since $xy\in\delta(V(B))$ we have $xy \in I(y)$ iff
 $xy\in F_y$ ($F_y$ respects $B$).  The latter certainly
 holds if $xy$ is type 2.  So it suffices to show $xy$ is not type 0,
 and $xy\in F_y$ if $xy$ is type 1.

 Suppose $xy$ is  type 0.  Lemma \ref{NUnifactorCor}\i shows
 $\zeta_y\le \zeta^{xy}$, i.e., the first alternative in the set-former
 for
 $I_0(y)$ does not hold.  If the second alternative holds we have
 $\zeta=\zeta_y=\zeta_{xy}$, so  Lemma \ref{NUnifactorCor}\i  implies $x\in V(B)$,
 contradicting $xy\in\delta(V(B))$.

 Suppose $xy$ is type 1.
 The first  alternative in the set-former
 for $I_0(y)$ implies
 $\zeta_y\ne \zeta^{xy}$ and the
 second alternative has  $\zeta_x\ne \zeta_{xy}$.
 Both relations imply $\zeta_y= \zeta_{xy}$ and
 $xy\in F_y$ (Lemma \ref{NUnifactorCor}\xiii).
 \end{proof}

 To find the remaining $I$-edges for $G(\zeta)$,
 define
 the set
 \[
  IE=\set{uv} {uv \text{ joins distinct $\zeta$-blossoms, }
 \zeta=\zeta_u=\zeta_v=\zeta_{uv}}.
 \]

 \begin{lemma}
 \label{IELemma}
 Any vertex $y$ with $\zeta_y =\zeta$ has
 $I(y)-I_0(y)\con IE.$
 \end{lemma}

 \begin{proof}
 Let $B$ be the $\zeta$-blossom containing $y$.
 Suppose $xy\in I(y)$. Thus $xy\in F_y$, making
 $xy$  type 1 or 2. If type 2, Lemma \ref{NUnifactorCor}\ii
 shows  $\zeta^{xy}\le \zeta_y$. Furthermore with $\zeta_y=\zeta$
 and $x\notin V(B)$
 it shows the inequality
 is strict. Thus $xy\in I_0(y)$.
 If type 1,  Lemma \ref{NUnifactorCor}\iii shows $\zeta_y=\zeta_{xy}$.
 If $\zeta_x \ne \zeta_y$
 then $xy\in I_0(y)$.
 If  $\zeta_x =\zeta_y$
 then $xy\in IE$, since $xy\in \delta(V(B))$ shows $x$ and $y$
 are in different
 $\zeta$-blossoms.
 \end{proof}

  $IE$ is a subgraph of $G(\zeta^-)$ ($uv\in IE$
 has $\zeta^{uv}= \zeta_{uv}=\zeta$ by (\ref{ZetaEqn})).
 Now observe that
 every nonisolated vertex of $G(\zeta^-)$
 is a contracted blossom.
 (In proof,
 any edge $xy$ of $G(\zeta^-)$ belongs to $G(\zeta)$,
 so $\min\{\zeta_{xy},\zeta^{xy}\}\ge \zeta$. Furthermore
 $xy$ is type 1, so Lemma \ref{NUnifactorCor}\iii shows
 $\min\{\zeta_x,\zeta_y\}=\min\{\zeta_{xy},\zeta^{xy}\}$.
 Thus
 $\min\{\zeta_x,\zeta_y\}\ge \zeta$.)
 We conclude that  every edge
 $xy$ of $G(\zeta^-)$ has $xy\in I(x)\oplus I(y)$ (since every edge of
 $G(\zeta^-)$ is
 type 1).

 So to complete the computation of the $I(B)$-sets we need only
 decide which alternative ($xy\in I(x)$ or $xy\in I(y)$) holds
 for each edge $xy$ of $G(\zeta^-)$. (Note that
 an edge $xy$ of $G(\zeta^-)$ needn't be
 in $IE$ -- the blossoms containing $x$ and $y$ may not be $\zeta$-blossoms.
 But this causes no harm.) We accomplish this classification using the
 acyclicity of  $G(\zeta^-)$, as follows.

 Let $T$ be a nontrivial tree of  $G(\zeta^-)$.
 Let $B$ be a leaf of $T$,
 incident to edge $xy$ of $T$ with $y\in B$.
 Since $F_y$ respects $B$, $f_y(B)+|I(B)|$ is even (recall the definition
 of respect). Thus

 \noindent
 \hbox to \textwidth{$(*)$\hfill $xy\in I(B)$ iff $f_y(B) + |I(B)-xy|$
   is odd.\hfill}

 \noindent
 All edges of
  $I(B)-xy$ are known. This follows from Lemma \ref{IELemma}
 if $\zeta=\zeta_y$ (recall $I(B)-IE$ is known).
 If  $\zeta<\zeta_y$ it holds since all of $I(B)$ is actually known.
 So $(*)$ can be used to add $xy$ to exactly one of the
 sets $I(x)$ or $I(y)$.
 Thus the following algorithm correctly classifies each edge
 of $G(\zeta^-)$.

 \bigskip

 {\narrower

 {\parindent=0pt

 In the (current) forest $G(\zeta^-)$, let $B$ be a leaf of a nontrivial
 tree $T$.  Halt if no such $T$ exists. Let $xy$,
 $y\in B$, be the unique edge of $T$ incident to $B$.  Assign $xy$ to
 exactly one of $I(x)$ or $I(y)$, using $(*)$. Then delete $B$ from $T$
 and repeat.

 }}

 \bigskip

 The total time to compute $I(B)$-sets is $O(m+n^2)$.
 Specifically, a set $I_0(y)$ is computed in the iteration
 where $\zeta=\zeta_y$,
  using $O(1)$ time on each edge incident to $y$.
 The algorithm for processing $IE$ is executed in each iteration
 where some $y$ has $\zeta_y=\zeta$. Each such execution
 uses $O(n)$ time, giving time $O(n^2)$ in  total.

 \subsection{Finding a maximum factor}
 \label{TopDownAlgSec}

 Assume  we are given the dual functions
 $y$, and $z$ in the form of  its weighted blossom forest.
 We show how to find a
 maximum lower or upper perturbation $\Fs v.$
 for any given $v\in V$,
 in total time $O(\phi^\omega)$, with high probability.
 Let $F$ be the desired maximum perturbation ($F=F_v$ or $F=F^v$).

 The procedure is in three steps.
 It halts with the set $F_0$ equal to $F$.
 Initially $F_0$ is empty and we add edges to $F_0$ as they become known.

 \subsubsection*{Edges of \T.}

 The first step determines $F\cap \T.$
 and adds  these edges to $F_0$.
 Consider any edge $e$ of $\T.$. Let $X$ and $X'$ be
 the sets of the partition of $V(G)$
 induced by the
 connected components of $\T.-e$.
 Since $F$ is an $\fs v.$-factor,
 $f_v(X)$ counts every edge of $F\cap \gamma(X)$ twice. So

 \noindent
 \hbox to \textwidth{\hfill $e\in F$ iff $f_v(X)$
   is odd.\hfill}

 Clearly we can implement this test to find all edges of
  $F\cap \T.$  in time $O(n^2)$.

 \subsubsection*{Strictly underrated edges}

 The second step calculates $\H{yz}(e)$ for each edge $e\in
 E(G)$. If $e$ is strictly underrated it is added
 to $F_0$.

 To do the calculation efficiently
 assume each blossom is labelled with its $\bar z$-value.
 Let  $e=uv$.
 Let
 $B$ be the nearest common ancestor of $u$ and $v$ in the blossom
 tree containing $u$ and $v$.
 Then
 \[\H{yz}(uv)= y(u) + y(v)+
 \begin{cases}
 \bar z(B)&uv \notin I(B_u)\cup I(B_v)\\
 \bar z(B_u)&uv \in I(B_u)- I(B_v)\\
 \bar z(B_u) +\bar z(B_v)- \bar z(B)&uv \in I(B_u)\cap I(B_v).
 \end{cases}
 \]

 This step uses total time $O(m)$.
 (Nearest common ancestors are found in $O(1)$ time.)

 It is convenient to ignore these strictly underrated edges in the rest
 of the discussion. So assume
 the degree-constraint
 function $\fs v.$ has been decreased to account for
 the strictly underrated edges, i.e., from now on
 $\fs v.$ denotes the residual degree constraint.

 \subsubsection*{Edges of blossoms}

 We turn to the third step of the procedure.
 As in shortest paths and $b$-matching, we process the weighted blossoms  in a top-down
 fashion. Consider a weighted blossom $B$. Assume the set
 $F\cap \delta(B)$ is known. The first step ensures this
 for a root $B\in \M.$.
 We will find the edges of $F$ that belong to $\o G (B)$
 and  add them  to $F_0$.
 Note that these edges
 complete the sets $F\cap \delta(A)$, $A$ a child of $B$.
 So
 if $A$ is a blossom we can process it  the same way. Thus
  we can process every weighted blossom  this way.

 Form a graph $H$ as
 $\o G (B)$ with
 the strictly underrated edges deleted.
 Define $e\in E(G)\cup \{\emptyset\}$ by the relation
 \[F\cap \delta(B)=I(B)\oplus e.\]
 If $v\notin B$ this defines $e$ as an edge, and
 if $v\in B$ it defines $e$ as $\emptyset$ (recall $F$ respects $B$).

 Note that if $e$ is an edge it is tight.
 In proof,
 take any $y\in V(B)$. Then
 $(F\cap \delta(B)) \oplus (F_y\cap \delta(B))= (I(B)\oplus e)) \oplus I(B)=e$.
 So $e\in F\oplus F_y$ implies $e$ is underrated but not
 strictly underrated, whence $e$ is tight.

 The desired subgraph of $H$ is specified in Lemma \ref{GbarLemma}
 and we find it as follows.
 Let $f'$ be the degree constraint function for $H$.
 Let $x$ be a vertex of $H$.
 If $x$ is a vertex of $G$ then it has degree constraint
 \[
 f'(x)=\fs v.(x)-|\delta(x,F\cap \delta(B))|.
 \]

 If $x$ is a contracted blossom $A$ then
 the edges of $F$ that are incident to $x$ are governed
 by
 the fact that $F$ respects $A$. Specifically
 $F\cap \delta(A)$
 is $I(A)$ if $v\in A$, and $I(A)\oplus g$ for some edge
 $g\in \delta(A)$ if $v\notin A$.

 \ConstructionSec\
  models all the above constraints
 on the desired subgraph  so that it corresponds to a maximum
 weight
 $f$-factor on $H$. It does this by redefining
 the edge weights (in fact it uses
 weights that are much larger than the given ones -- see
 \NewWeights).
 This is inappropriate for the current context, since we wish to
 find the desired subgraph using
 a routine for  unweighted $f$-factors.

 We model the constraint for $A$ using the blossom substitute of
 Fig.\ref{NSubstituteFig}.
 As indicated, the new vertices $a,c$ and each $a_k$ all have degree
 constraint 1, and
 \[
 f'(b)=\begin{cases}
 0& v \in A \text{ or $e$ an edge in }\delta(A)\\
 1& \text{otherwise.}
 \end{cases}
 \]
 The following claim shows this substitute faithfully models  the constraints
 on $A$.
 Let $S$ be a set of edges
 in the substitute.
 Let
 the images of these edges in $H$ be the set $SB$
 (e.g., $i_ka_k$ in $S$ corresponds to $i_k A$ in $SB$).
 Let
 \[S \o B =F\cap \delta(B) \cap \delta(A)=(I(B)\oplus e) \cap \delta(A).\]

 \xclaim{$S$ satisfies the degree constraints of
 the blossom substitute for $A$ iff
 $SB\cup S\o B $ respects $A$.}

 \bproof
 Consider the two possible values of  $f'(b)$.

 \case{$f'(b)=0$} $S$ satisfies the degree constraints iff it
 consists of edge $ac$ and every edge $a_k i_k$  (but no edge $bj_k$) i.e.,
 $SB=I(A)\cap \gamma(B)$. This is equivalent to
 \[SB \cup S \o B=(I(A)\oplus e)\cap \delta(A).\]
 If $v\in A$ then
 $e=\emptyset$, and
  the displayed equation becomes
  $SB\cup S\o B=I(A)$ which is equivalent to
  $SB\cup S\o B$ respecting $A$.
 Similarly if  $v\notin A$ and $e\in \delta(A)$
  the displayed equation becomes
 $SB\cup S \o B=I(A) \oplus e$  which is  equivalent to
 $SB\cup S \o B$ respecting $A$.

 \case{$f'(b)=1$, $v\notin A$, and $e\notin \delta(A)$}
 The assumption
 $e\notin \delta(A)$ implies
 $S \o B = I(A)\cap \delta(B)$.

 The degree constraint
 $f'(c)=1$ means
 $S$ contains either
 edge $ac$ or $bc$.  The former makes $S$ contain
 each edge $a_ki_k$ and one edge $b j_h$, so
 $SB\cup S\o B=I(A)+ Aj_h$.
 The latter makes $S$ contain
 one edge $a a_h$ and the edges $a_ki_k, k\ne h$,
 so $SB\cup S\o B=I(A)-Ai_h$.
 The assumption  $v\notin A$ in this case shows that
 the two possibilities combined are equivalent to $SB\cup S\o B$
  respecting $A$.
 \ecproof


 In summary the algorithm for the third step forms the above variant of $H$
 (note the tight edges have been identified in the second step).
 It finds an $f'$-factor on $H$
  and adds these edges to $F_0$.

 Lemma \ref{GbarLemma} shows that the
 desired factor on $H$ exists. (Note the hypothesis is satisfied, i.e., $e$ is
 tight.) Now
 it is easy to see that the totality of edges added to $F_0$
 (by all three steps) achieves the optimality condition
 given in Section \ref{BackgroundSec}
 for
 the desired maximum weight $\fs v.$-factor.

 {\def\hw{3in}
 \begin{figure}
 \twofigs
 {\hfill
 \subfigure[Contracted blossom $A$ in $\o G(B)$. $I$ and $\bar I$
 denote $I(A)\cap \gamma(B)$ and $\delta(A,H)-I(A)$ respectively.]
 {\epsfig{figure=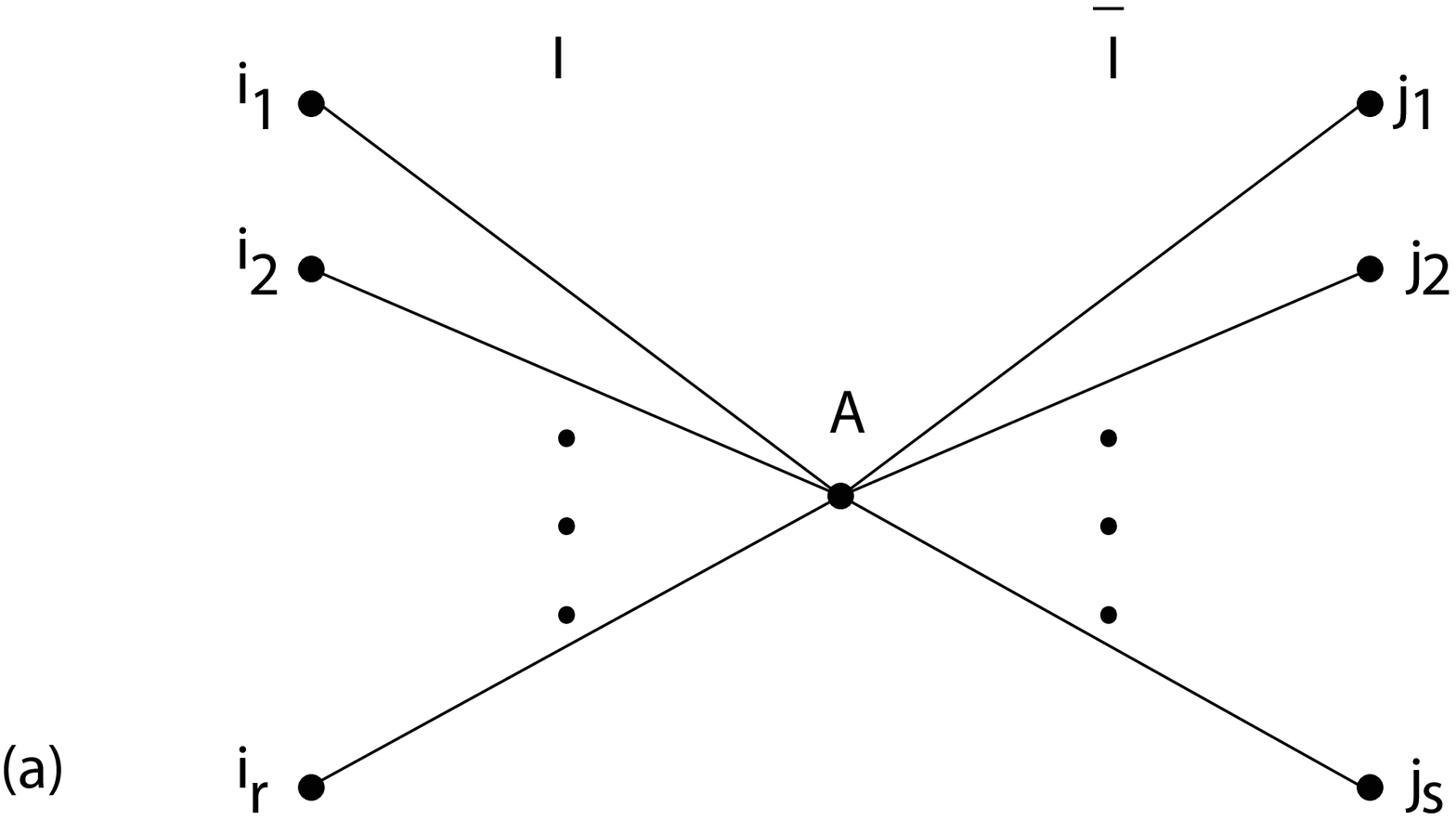, width=2.5in}}\hfill\hfill
 \subfigure[Substitute for $A$: vertices $a,c$, and each $a_k$
 all have $f'$-value 1; $b$ has $f'(b)\in \{0,1\}$.]
 {\epsfig{figure=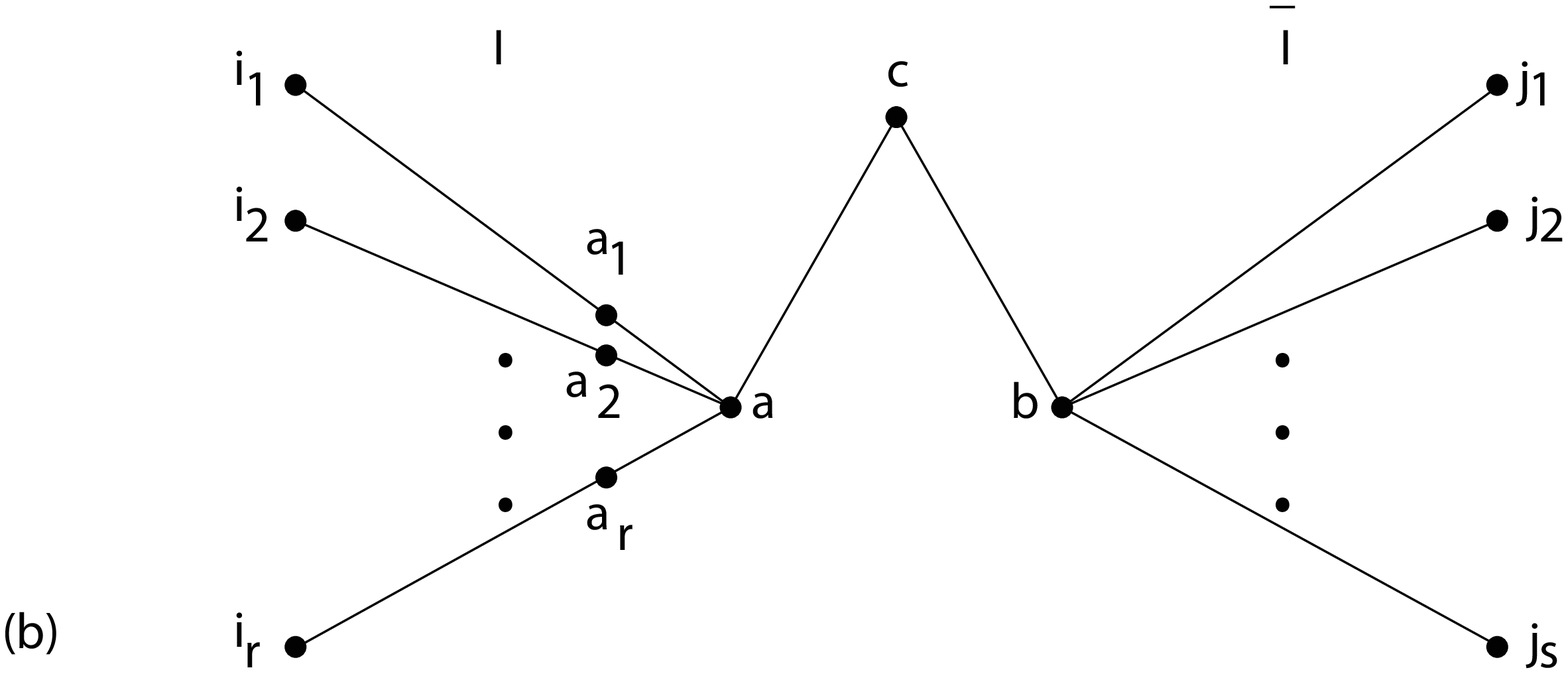, width=3.5in}}
 \hfill}
 \caption{Blossom substitute.}
 \label{NSubstituteFig}
 \end{figure}
 }

 The time for the third step is $O(\phi^\omega)$.
 To prove this it suffices to show that the total number of vertices
 in all
 $H$ graphs is $O(\phi)$.
 There are $O(n)$ weighted blossoms, and hence $O(n)=O(\phi)$ vertices
 of type $a,b$ or $c$ in blossom substitutes.
 An edge $e$ in an $I(B)$-set occurs in only one $H$ graph --
 the graph corresponding to  $\o G (p(B))$, for $B$
 the maximal set with $e\in I(B)$
 and $p(B)$ the parent of $B$ in \W..
 $e$ introduces $1$ extra vertex $a_k$ in $H$.
 A vertex $v$ is on $\le f_v(v)$ edges of sets $I(B)$
 (since these edges are in $F_v$)
 so the total number of edges $e\in
 \bigcup \set{I(B)} {B\in \W.}$ is $\le \phi$.

\section{Shortest-path tree algorithms}
\label{SPAlgSec}

\begin{figure}[t]
\begin{center}
\epsfig{figure=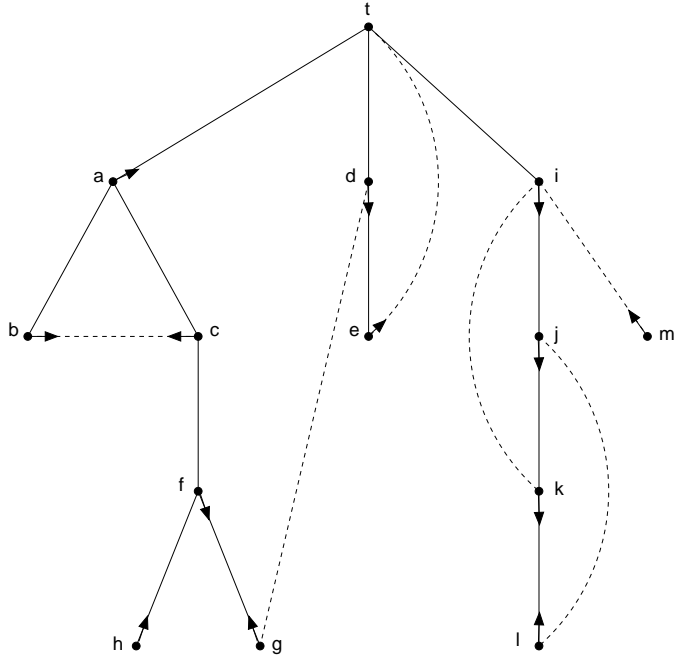, width=3.5in}
\end{center}
\caption{Illustration for permissible paths:
Arrows indicate $e(v)$ edges. Solid
edges belong to a search tree.
The 3 leftmost dashed edges cause contractions;
the 3 rightmost cause no action.}
\label{SAStrucFig}
\end{figure}

\begin{figure}[t]
\begin{center}
\epsfig{figure=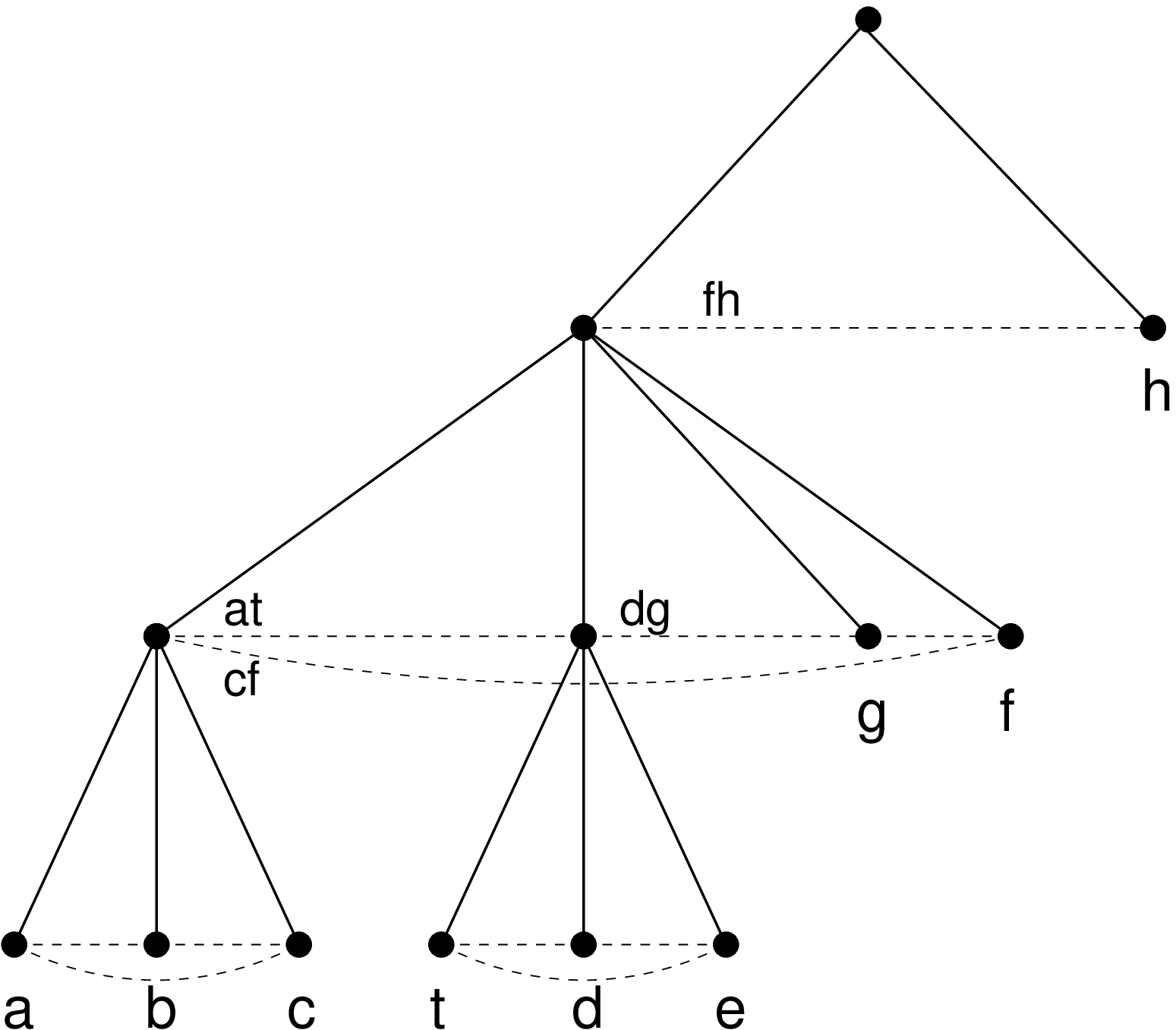, width=3.0in}
\end{center}
\caption{Gsp-tree for vertices $a,\ldots, h$ and $t$ of Fig.\ref{SAStrucFig}.
$e(N)$ edges are the dashed edges
joining the children of $N$.}
\label{SAStrucbFig}
\end{figure}

We construct the gsp-tree
from the weighted blossom forest \W.
found in Section \ref{fFactorSec}. 
This section shows
how to find a
gsp-tree for the graph $\o G(B)$, $B$ a weighted blossom.
It is a simple matter to join these gsp-trees together to get
the entire gsp-tree for the given graph. (Note the
root of the gsp-tree for any blossom $B$ is a cycle node, not a tree node.)
Let us restate the properties of $\o G(B)$.
Every edge is tight.
There is a sink vertex $t$. Some vertices $v$
have a distinguished edge $e(v)\in \delta(v)$, known to be
on the shortest $vt$-path. (These vertices are children
of $B$ that are contracted blossoms, possibly singleton blossoms.) The remaining vertices of $\o G(B)$
have no such edge
(these vertices are children of $B$ that are vertices in the given graph).
We wish to find a gsp-tree for this graph $\o G(B)$.


We solve a slightly more general problem.
Consider a graph with a sink vertex $t$ and
with every vertex $v$ having a value
$e_0(v)\in \delta(v)\cup \{\emptyset\}$,
$e_0(t)=\emptyset$.
A $vv'$-path $P$ is {\em permissible} if
every vertex $x\in P-v'$ has $e_0(x)\in E(P)\cup \{\emptyset\}$.
In particular $P$  starts with $e_0(v)$ if it is nonnull.
Observe that any $vt$-path specified by a gsp-tree
is permissible for $e$. 
We will present an algorithm that, for
a  given $G,t,e_0$, finds a gsp-tree
\T. whose function $e$ agrees with $e_0$ on vertices
where $e_0 \neq  \emptyset$.
In other words we will find a gsp-tree that specifies
a
permissible $vt$-path
for every vertex $v$ of $G$.
For convenience we assume the desired gsp-tree exists, as it does in our application.

To illustrate the discussion Fig.\ref{SAStrucFig}
shows a given graph (all $e_0$ values are edges). 
Vertices $i,j,k,m$ have no permissible path to $t$.
Fig.\ref{SAStrucbFig} gives a gsp-tree for the subgraph of
Fig.\ref{SAStrucFig} induced by $t$ and $a,\ldots, h$.

We begin with some definitions and facts
that lay the foundations of the algorithm. The discussion refers
to both the given function $e_0$ and the final function $e$. In situations
where
either function can be used we try to use the more informative $e_0$.
Also note that the notion of permissibility is slightly different for
the two functions. We will specify the function $e_0, e$
when it may not be clear.

A cycle $C$ is {\em permissible} (for $e$) if for some vertex $b\in C$,
every vertex $x\in C-b$ has $e(x)\in E(C)$; furthermore,
$b=t$ if $t\in C$ and
$b\ne t$ implies $e(b)\in \delta(C)$.
Observe that
for a permissible cycle $C$,
any $x\in C-b$ has a permissible path to $b$
using edges of $C$.

We will find the desired gsp-tree  \T. by growing a ``search tree'',
repeatedly contracting a permissible cycle $C$, and making
$C$ a cycle node of \T..
If $\o C$ denotes the contracted vertex,
the contracted graph has  $e(\o C)=
e(b)$. Also if $b=t$ then
the contracted
graph has $t= \o C$.

As an example in Fig.\ref{SAStrucFig} we will contract $a,b,c$;
this gives the leftmost cycle node
of Fig.\ref{SAStrucbFig}; the contracted vertex gets $e$-value $at$.

Let $\o G$ be a graph formed from $G$ by zero or more
such contractions.
A {\em search tree}\footnote{The reader will recognize
the resemblance to search trees used in cardinality matching
algorithms.}
$T$ is a tree in $\o G$ rooted at $t$,
with each of its nodes  $v$ having one of two types,
``u'' for ``up'' or ``d'' for ``down'', depending on where
we find the edge $e(v)$. In precise terms
let $f$ be the edge from the parent of $v$ to $v$.
If $f=e(v)$, or if $v=t$,  then
$v$ is a {\em u-vertex}.
 In the remaining case (i.e.,
$v\ne t$ and $f\ne e(v)$)
$v$ is  a {\em d-vertex}. We require that
a d-vertex
$v$ has at most one child, namely $c$ where $e(v)=vc$.

As an example in Fig.\ref{SAStrucFig}
a search tree might
initially contain u-vertices $t,a$ and d-vertices $b,c$.
Contracting $a,b,c$
makes
the contracted vertex a u-vertex. Edge $cf$ can be added to the search
tree after this contraction, but not before.
Adding $cf$ makes $f$ a d-vertex.

A search tree has the following property:



\begin{lemma}
\label{PCycleLemma}
Consider a nontree edge $x_1x_2$ where for $i=1,2$,
$x_i$
is a u-vertex or $x_i$ is  a d-vertex having $e(x_i)=x_1x_2$.
Let $b$ be the nearest common ancestor of $x_1,x_2$ and let
$C$ be the fundamental cycle of $x_1x_2$. Then $C,b$ is a permissible
cycle.
\end{lemma}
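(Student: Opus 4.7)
The plan is to verify the three defining properties of a permissible cycle directly for $C$ with designated vertex $b$: that every $x \in C - b$ satisfies $e(x) \in E(C)$, that $t \in C$ forces $b = t$, and that $b \ne t$ forces $e(b) \in \delta(C)$.

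First I would handle the condition $e(x) \in E(C)$ by splitting $C - b$ into the interior vertices of the two tree paths $x_1 \to b$ and $x_2 \to b$, plus the chord endpoints $x_1, x_2$ (when distinct from $b$). For an interior vertex $v$, both its parent edge $f$ and its edge $g$ to the unique tree-child on the path toward $x_i$ lie in $E(C)$. If $v$ is a u-vertex then $e(v) = f \in E(C)$. If $v$ is a d-vertex, then $v$ has at most one child, so that child must be the path-child, and by definition $e(v) = g \in E(C)$. For a chord endpoint $x_i \ne b$, either $x_i$ is a u-vertex and $e(x_i)$ is the tree edge from its parent (which lies on $C$), or $x_i$ is a d-vertex and by hypothesis $e(x_i) = x_1 x_2 \in E(C)$.

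Second I would observe that every vertex of $C$ is a descendant of $b$ in the search tree rooted at $t$, so if $t \in C$ then $t$ being the root forces $t = b$.

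Finally I would establish $e(b) \in \delta(C)$ when $b \ne t$. The crux is showing $b$ is a u-vertex. If $b \notin \{x_1, x_2\}$ then $b$ has two distinct tree-children, one on each path toward $x_1$ and toward $x_2$, which rules out d-vertex status. If $b = x_i$, say $b = x_1$, then $x_2$ is a proper tree-descendant of $x_1$; were $x_1$ a d-vertex, the hypothesis $e(x_1) = x_1 x_2$ would force $x_2$ to be $x_1$'s unique child, making $x_1 x_2$ a tree edge and contradicting the nontree-edge assumption. Hence $b$ is a u-vertex, so $e(b)$ is the tree edge from $b$'s parent to $b$, and since that parent lies outside $C$, we get $e(b) \in \delta(C)$. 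The main (mild) obstacle is exactly this last subcase $b \in \{x_1, x_2\}$, where the nontree-edge hypothesis must be invoked to rule out the unfavorable d-vertex configuration.
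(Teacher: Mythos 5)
Your proof is correct and follows essentially the same route as the paper's: establish that $b$ must be a u-vertex (your two-case argument, two children vs.\ $b\in\{x_1,x_2\}$ forcing $x_1x_2$ to be a tree edge, is the same contradiction the paper uses), and then verify each permissibility condition by classifying the vertices of $C$ as u- or d-vertices and reading off $e(\cdot)$ from the parent edge, the unique-child edge, or the chord. The only difference is that you spell out the case analysis that the paper leaves as "easy to check."
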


Fig.\ref{SAStrucFig} illustrates  the lemma.
In the given graph (i.e., before any cycles  have been contracted)
edges $bc$ and $et$ give permissible cycles (for an appropriate search tree).
$bc$ ($et$)  illustrates the lemma when $\{x_1,x_2\}$ contains
2 (1) d-vertices, respectively.
After these 2 permissible cycles have been contracted,
$gd$ illustrates the lemma when $\{x_1,x_2\}$ contains 2 u-vertices.
The nontree edge $ik$ ($jl$) illustrates a nonpermissible cycle,
when $\{x_1,x_2\}$ contains 2 (1) d-vertices, respectively.

\bigskip

\begin{proof}
First note $b$ is a u-vertex. (If $b$ is a d-vertex,
it has just one child $c$, with $e(b)=bc$.
$x_1x_2$ must be a back edge with some $x_i=
b$. The hypothesis  makes $e(x_i)=e(b)=x_1x_2$, contradiction.)
Now it is easy to check the properties for permissibility
(e.g., a d-vertex $v \in C$ has $e(v)\in C$
since $v$ is either an $x_i$ or the edge from $v$ to its unique child
is in $C$).
\end{proof}

The algorithm will contract such cycles $C$, and only these. In particular
this implies that  in $\o G$,
any d-vertex is actually a vertex of $G$.

Also note the algorithm will choose values $e(v)$ for vertices
with $e_0(v)=\emptyset$ by making $v$ a
u-vertex.
In particular this implies that $e(v)=e_0(v)$ for every d-vertex $v$.

\paragraph*{Algorithm}
We will build the desired gsp-tree
by growing a search tree $T$. Initialize $T$ to consist of just a  root $t$
(a {\em u}-vertex) and let $e$ be the given function $e_0$.
Then repeat Algorithm~\ref{algorithm:scan} until
every u-vertex $x$ in the current graph has every incident edge
either in $T$ or scanned from $x$,
and every d-vertex has been scanned.

\begin{algorithm}[H]
\caption{A procedure that scans vertices and builds the gsp-tree.}
\begin{algorithmic}[1]
\If{some u-vertex $x$ has an unscanned edge $xy\notin T$} \comment{scan $xy$ from $x$}
\If{$y\notin T$}
\State add $xy$ to $T$
\ElsIf{$y$ is a u-vertex {\bf or} $xy=e_0(y)$} \comment{$y$ is a d-vertex leaf}
\State contract the fundamental cycle of $xy$
\EndIf
\ElsIf{there is an unscanned d-vertex $x$}\comment{scan $x$}
\State let $e_0(x)=xy$
\If{$y\notin T$} 
\State add $xy$ to $T$
\ElsIf{$xy=e_0(y)$} \comment{$y$ is a d-vertex leaf}
\State contract the fundamental cycle of $xy$
\EndIf
\EndIf
\end{algorithmic}
\label{algorithm:scan}
\end{algorithm}

%
%
%
%

%
%
%
%
%
%
%
%
%
%
%
%
%
%
%
%


\paragraph*{Addendum}
Several aspects of this algorithm are stated at a high level and
deserve further elucidation:

{

An edge $xy$ that is scanned from a u-vertex $x$
is considered scanned from any u-vertex $\o x$ that contains $x$
by contractions.

As mentioned, when the algorithm  adds an edge $xy$ to $T$,
if $e_0(y)=\emptyset$ it sets $e(y)=xy$.
Furthermore in all cases it makes $y$ type {\em u} or {\em d} as appropriate.

When the algorithm contracts a fundamental cycle $C$,
it creates a node in \T. whose children
correspond to the vertices of $C$.
If the algorithm halts with $T$ containing
nodes other than $t$, it creates a root node of \T. whose children
correspond to the vertices of $T$.

}


\paragraph*{Examples}
Note that over the entire algorithm an edge may get scanned twice,
once from each end, e.g.,
in Fig.\ref{SAStrucFig} $gd$ may get scanned from $g$ when
$d$ is still a d-vertex.
In a similar vein, if we change $e_0(i)$ to a new edge $id$,
the algorithm might scan $i$ (and edge $id$) before $d$ becomes a u-vertex
and $id$ gets scanned from it.


\paragraph*{Analysis of the algorithm}
The algorithm takes no action for the following
types of edges $xy$:

\i $x$ a u-vertex, $y$  a d-vertex, $xy\ne e_0(y)$,

\ii $x$ and $y$  d-vertices, $xy$ is not both $e_0(x)$ and $e_0(y)$.

\iii $x$ a d-vertex, $y\notin T$, $xy \ne e_0(x)$.

\noindent
\ii and \iii  include edges that are never even scanned (e.g.,
$ik, im$ in  Fig.\ref{SAStrucFig}).
When the algorithm halts every nontree edge of $\o G$ with an end in
$T$ is of type \xi, \ii or \xiii.

Now we show the algorithm is  correct.
First note Lemma \ref{PCycleLemma} shows every contracted cycle is
permissible.

Assume every vertex $v$ has a permissible path to $t$ in the given graph $G$
with function $e_0$.
Let $\o G$ be the final graph of the algorithm.
Let $T$ be the final tree.

\begin{lemma}
$\o G$ consists entirely of u-vertices.
\end{lemma}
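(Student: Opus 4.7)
The plan is to argue by contradiction using a minimality assumption: suppose $\bar G$ contains a d-vertex, and among all such vertices pick $v$ whose permissible $vt$-path $P = v_0 = v, v_1, \ldots, v_k = t$ has minimum length $k$. The strategy is to iteratively force $\bar v_1, \ldots, \bar v_k$ to form a descending chain of u-vertices in $T$ strictly below $v_0$, so that $t = \bar v_k$ ends up as a proper descendant of $v_0$, contradicting its role as the root of $T$.

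The base case $k = 1$ is immediate. Since $e_0(v) = vt$, either $vt \in T$ (forcing $t$ to be $v$'s child in $T$, impossible) or $vt \notin T$ (so $t$'s u-scan processes $vt$, and the condition $vt = e_0(v)$ triggers contraction of $v$ via Lemma~\ref{PCycleLemma}, contradicting $v \in \bar G$).

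For the inductive step I will rely on the following consequence of halting and Lemma~\ref{PCycleLemma} (call it (P1)): every non-tree edge of $\bar G$ has at least one endpoint that is a d-vertex with $e_0$ pointing elsewhere. I iteratively show for $i = 0, 1, \ldots, k-1$ that $\bar v_{i+1}$ is a u-vertex and the child of $\bar v_i$ in $T$. To see $\bar v_{i+1}$ is u, suppose otherwise; then $v_{i+1}$ is uncontracted, and by permissibility $e_0(v_{i+1}) \in \{v_i v_{i+1}, v_{i+1} v_{i+2}\}$. The forward choice $v_{i+1} v_{i+2}$ gives $v_{i+1}$ a permissible path of length $k - i - 1 < k$, contradicting minimality. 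The backward choice $v_i v_{i+1}$ makes $e_0(\bar v_{i+1})$ coincide with the edge $\bar v_i \bar v_{i+1}$; then if the edge is non-tree, (P1) fails (neither endpoint fits ``d-vertex with $e_0$ different''), and if the edge is a tree edge with $\bar v_{i+1}$ a child of $\bar v_i$, the parent-edge identification forces $\bar v_{i+1}$ to be u, contradicting the hypothesis. Once $\bar v_{i+1}$ is u, the edge $\bar v_i\bar v_{i+1}$ with both endpoints u cannot be non-tree (again by (P1)), so it lies in $T$; since $\bar v_i$'s parent in $T$ is $\bar v_{i-1}$ (or $v_0$ when $i = 0$), we conclude $\bar v_{i+1}$ is $\bar v_i$'s child. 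The base $i = 0$ requires the analogous argument for $v_1$: the option $e_0(v_1) = v_0 v_1$ (both $v_0, v_1$ d with the same $e_0$) is ruled out directly, since then whichever d-scan happens later triggers contraction.

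The main obstacle I expect is handling the possibility that intermediate contractions have collapsed distinct path vertices into a single node, i.e.\ $\bar v_{i+1} = \bar v_j$ for some $j < i+1$, which would break ``strict descent.''  I address this by backward induction along any such coincidence: matching parent edges step by step along the chain forces $v_0$ itself to be identified with one of the u-images, contradicting the uncontracted d-status of $v_0$. With this handled, iterating through $i+1 = k$ gives the descending chain $v_0 \to \bar v_1 \to \cdots \to \bar v_k = t$ in $T$, placing the root $t$ strictly below $v_0$ and completing the contradiction.
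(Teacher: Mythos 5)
Your approach is genuinely different from the paper's. The paper picks an arbitrary non-u-vertex $v$, reduces to a d-vertex with $V(P)\subseteq V(T)$, and establishes a two-part invariant along the traversal of $P$ in $\o G$ ((a) d-vertex with $e_0$ untraversed, (b) u-vertex reached via its own $e(r)$); the contradiction is then that $P$ can never reach $t$ since $t$ has no parent. You instead use a minimality trick plus a descending-chain argument. The minimality buys you a clean way to kill the ``forward'' $e_0$-choice immediately (shorter permissible path), at the cost of having to separately argue that the $\o v_i$ literally form a parent-to-child chain in $T$ and explicitly patch the possibility of path vertices coinciding under contraction. The paper's invariant sidesteps the simplicity issue automatically (an edge $e(r)$ can't be used twice on a path in $G$), so it never has to argue about ``strict descent'' failing.

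That said, the write-up has genuine gaps. First, your iteration silently assumes $e_0(v_{i+1})\neq\emptyset$: permissibility gives $e_0(v_{i+1})\in E(P)\cup\{\emptyset\}$, and the $\emptyset$ case (which would force $v_{i+1}$ to be a u-vertex if it is in $T$, and otherwise to be a non-tree vertex) is never treated. The related case ``$v_{i+1}\notin T$'' (so not a d-vertex at all) also isn't ruled out; you need to invoke the halting condition, namely that the u-vertex $\o v_i$ has scanned all its incident non-tree edges, so $v_{i+1}\notin T$ is impossible. Second, in the backward case you consider the possibilities ``non-tree'' and ``tree with $\o v_{i+1}$ a child of $\o v_i$'' but omit ``tree with $\o v_i$ a child of $\o v_{i+1}$''; that sub-case is actually impossible because $\o v_i$'s parent is already $\o v_{i-1}$ (a known u-vertex), and for $i=0$ because the parent edge of the d-vertex $v_0$ cannot equal $e_0(v_0)$ -- but you should say so. Third, your contradiction only rules out d-vertices in $\o G$; the lemma also requires ruling out vertices of $\o G$ that never entered $T$. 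This does follow (a non-tree vertex's permissible path would enter $T$ at a d-vertex, which you've shown can't exist, or be absorbed by scanning from a u-vertex), but the argument isn't made. None of these kill the approach, but they need filling in before the proof is complete.
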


\begin{proof}
Let $v$ be a non-u-vertex, with $P$ a permissible $vt$-path in $G$.
The reader should bear in mind the possibility that
the image of $P$ in $\o G$ needn't be
permissible (for $e$). For instance suppose $\o G$ contains a contracted node
like $X=\{a,\ldots,g,t\}$ in Fig.\ref{SAStrucFig}.
If the graph has two  edges $b'b,c'c\in \delta(X)$,
$P$ might contain a subpath $b',b,c,c'$, making  it  nonpermissible in $\o G$.
In fact similar edges incident to
 $fg$ might make the image of $P$ nonsimple in $\o G$.

First observe that we can assume $v$ is a d-vertex and $V(P)\con V(T)$.%
\footnote{Here we commit a slight abuse of notation: The inclusion
is meant to allow the possibility that
a vertex of $P$ is contained in a contracted vertex of $T$.}
In proof, suppose $P$ contains a nontree vertex (if not we're clearly done).
It
is eventually followed by an edge $rs$, where $r\notin V(T)$ and
$V(Q)\con V(T)$ for $Q$
the $st$-subpath of $P$.
\xi--\iii show $s$ is not a u-vertex. So $s$ is a d-vertex. \iii shows
$rs \ne e_0(s)$, so $Q$ 
is permissible.

Imagine traversing the
edges of $P$, starting from $v$.
Some edges of $P$ will be in contracted vertices of $\o G$, others will be
edges of $T$, and all
others will be of type \i or \xii, not \iii (since $V(P)\con V(T)$).
We assert that whenever we reach a vertex $r$ of $\o G$, either

\bigskip

(a) $r$ is a d-vertex, and $e_0(r)$ has not been traversed, or

(b) $r$ is a u-vertex, reached by traversing edge $e(r)$.

\bigskip

We prove the assertion by induction.
Note that the assertion completes the proof of the lemma:
(b) shows  $P$  always enters a  u-vertex from its parent, so it never reaches
the u-vertex $t$, contradiction.

For the base case of the induction, $r=v$ obviously satisfies (a).
For the inductive step assume the assertion holds for
$r$ and let $s$ be the next vertex of $\o G$ that is reached.
So $rs$ is the next edge of $\o G$ that is traversed.
 ($r$ or $s$ may be  contracted vertices of $\o G$.)

\case {$r$ satisfies (a)}
Permissibility in $G$ implies
the next edge of $P$ in $\o G$ is $rs=e_0(r)$.

Suppose $e_0(r)\in T$. This implies $s$ is the child of $r$.
The definition of search tree implies $s$ satisfies
(a) or (b), depending on whether or not
$rs=e_0(s)$.

Suppose $e_0(r)\notin T$.
$e_0(r)$ is not  type \i above (even if we take $r=y$).
So it is type \xii. This implies (a) holds for $s$.

\case {$r$ satisfies (b)}
The argument is similar.

Suppose $rs\in T$. (b) shows $r$ is reached from its parent in $T$.
So $s$ is a child of $r$.
As in the previous case,
$s$ satisfies
(a) or (b).

Suppose $rs\notin T$.
So $rs$ is type \i above and
(a) holds for $s$.
\end{proof}

The lemma implies
$T$ with its contractions gives the desired gsp-tree \T..
(E.g., if $T$ consists of just one vertex $t$, the
root of \T.
is a cycle node, specifically the last cycle to be contracted;
otherwise the root is a tree node and its tree is $T$.)

It is easy to implement the above procedure in time $O(m\log n)$ or better
using an algorithm for set merging \cite{gabow-76}.


\section{Combinatoric algorithms for shortest paths}
\label{SPcombinatoric}
Let $(G,t,w)$ denote a connected undirected graph with
a distinguished vertex $t$ and a conservative
edge-weight function $w:E\to \mathbb {R}$. Let $E^{-}$ be the set of edges with negative weights.
In this section we show how to use combinatoric algorithms
for finding maximum perfect matchings to compute the gsp-tree.
We will define a graph $\ddot{G}_t$ that models paths in $G$ by {\em almost perfect matchings}, i.e., matchings that miss exactly one vertex.
Moreover, we define {\em a $v$-matching} to be an almost perfect matching in $G$ that avoids $v$.
We believe that the construction of the split graph $\ddot{G}$ is essentially due to Edmonds~\cite{Edmonds67}.
We define the {\it split graph} $\ddot{G}=(\ddot{V},\ddot{E})$ with weight function $\ddot{w}$ in the following way
\[
\ddot{V} = \{v_1, v_2: v \in V\} \cup \{e_1,e_2: e \in E^{-}\},
\]
\begin{eqnarray}
\ddot{E} = \{v_1 v_2: v\in V\} &\cup& \{u_1v_2, u_2v_1, u_1v_1, u_2v_2 : uv\in E \setminus E^{-}\} \nonumber \\
&\cup& \{u_1 e_1, u_2e_1, e_1 e_2, v_1 e_2, v_2 e_2 : e=uv \in E^{-}, u<v\},\nonumber
\end{eqnarray}
\[
\ddot{w}(u_i v_j) =
 \left\{
\begin{array}{rl}
 -w(uv) & \textrm{if  } uv \in E\setminus E^{-}, \\
 -w(e) & \textrm{if  } u_i=e_1 \textrm{ and } v_j\neq e_2 \textrm{ and } e \in E^{-},\\
 0 & \textrm{otherwise.}
\end{array}
\right.
\]


An important property is that we can assume
$\ddot{n}=|\ddot{V}| \le 4n$.  This follows since we can assume
 $|E^{-}| < n$, as otherwise the set of negative edges contains a cycle.
The following observation is essentially given in~\cite{Ahuja93} in Chapter~12.7, where
the reduction is explained on a clear example.

\begin{lemma}
\label{lemma-lawler}
Let $u,v\in V$, let $M$ be the maximum perfect matching, and let $M(u_2v_1)$ be the maximum weight perfect matching in $\ddot{G}- u_2-v_1$.
If $G$ does not contain negative weight cycles then $\ddot{w}(M)=0$ and
the shortest path weight from $u$ to $v$ in $G$ is equal to $-\ddot{w}(M(u_2v_1))$.
\end{lemma}

Note also that it is easy to detect a negative cycle in $G$ -- it corresponds to a perfect matching in $\ddot{G}$ with positive weight.
On the other hand, as described in~\cite{Ahuja93}, in order to find a shortest path from $u$ to $v$ we need to find the maximum weight
perfect matching $M(u_2v_1)$. However, here we want to compute the whole gsp-tree and hence require the distances from all vertices
in $G$ to $t$. We will show that in order to find all these distances it is essentially enough to find one maximum perfect matching.
Let us define $\ddot{G}_t$ to be graph $\ddot{G}$ with both vertices $t_1$ and $t_2$ unified to one vertex $t$. We observe
that the resulting graph is {\em critical}, i.e., for each vertex $v$ there exists a $v$-matching.

\begin{lemma}
\label{lemma-critical}
Graph $\ddot{G}_t$ is critical. Moreover, let $M(v_1)$ be the maximum $v_1$-matching in $\ddot{G}_t$ then
$\ddot{w}(M(t_2v_1)) = \ddot{w}(M(v_1))$.
\end{lemma}
\begin{proof}
We need to show the existence of  $v$-matchings for all vertices $v$ in $\ddot{G}_t$. Consider the following cases

$\bullet$
$v=t$  then $M(t_2t_1)$ in $\ddot{G}$ corresponds matching that avoids $t$,

$\bullet$
$v=v_2$  then $M(t_1v_2)$ in $\ddot{G}$ corresponds to matching that avoids $v_2$,

$\bullet$
$v=v_1$ then in this case we take $M(t_2v_1)$ in $\ddot{G}$.

\noindent Hence, $\ddot{G}_t$ is critical. The second part of the lemma follows by the
above correspondence of matchings.
\end{proof}

We need to relate the above observation to the definitions from previous sections. You might observe
that $M(v_1)$ corresponds to $P_v$ from Section~\ref{SPSec} and in language of $f$-factors
to $F_v$ as discussed in Section~\ref{fFactorSec}. In order to construct the gsp-tree we use the shrinking
procedure for general $f$-factors from Section~\ref{SPAlgSec}. This requires us to know $w(F^v)$ as well.
However, we can observe that $F^v$ can be obtained from $F_v$ and vice versa by adding or removing the zero weight loop $vv$.

\subsection{Matching duals for factor critical graphs}
Consider an arbitrary graph $G = (V, E)$ that contains a perfect matching. Let $w : E \to \mathbb {R}$ be the
edge weight function. The dual variables in Edmonds' formulation~\cite{e} are assigned
to vertices $y:V \to \mathbb {R}$ and to odd-size subsets of vertices $z:2^V\to \mathbb {R}$.
The function $z$ can be negative possibly only on $V$. We define the value of $uw$ (with respect
to the dual $y,z$) as

\[
\H{yz}(uv) = y(u) +y(v)  + z\set {B}{e\subseteq B}.
\]
We require the duals to {\em dominate} all edges $uv\in E$, i.e., we require
\[
\H{yz}(uv) \ge w(uv).
\]
We say that an edge is {\em tight} when the above inequality is satisfied with equality.
The {\em dual objective} is defined as
\[
(y,z)V = y(V) +\sum\left\{\lfloor|B|/2\rfloor z(B) : B\subseteq V\right\}.
\]
By the duality we know that for any perfect matching $M$ and duals $y,z$ we have $w(M)\le (y,z)V$.
We say that a matching {\em respects} a set $B$ if it contains $\lfloor |B|/2\rfloor$ edges in $\gamma(B)$.
As shown by Edmonds~\cite{e} a perfect matching is maximum if and only if all its edges are tight and
it respects all sets with positive $z$ for a pair of dominating duals.

A {\em blossom} is a subgraph $B$ of $G$ defined as follows. Every vertex
is a blossom and has no edges. Otherwise, the vertices $V(B)$ are partitioned
into an odd number $k$ of sets $V(B_i)$, for $1 \le i \le k$, where each $B_i$
is a blossom. Each blossom $B$ contains edges in $E(B_i)$ and $k$ edges that form a cycle on $B_i$, i.e.,
consecutive edges end in $V (B_i)$ and $V(B_{i+1})$, where $B_1=B_{i+1}$.
Blossoms $B_i$ are called subblossoms of $B$. The set of blossoms can be represented as forrest called {\em blossom forrest}.
The parent-child relation in this forrest is defined by the blossom-subblossom
relation.

Edmonds' algorithm for finding maximum perfect matchings constructs a {\em structured matching}, i.e.,
a matching $M$ and a dual solution. The dual solution is composed out of a blossom forrest $F$ and functions $y$ and $z$.
The structured matching satisfies the following conditions

\i $M$ respects blossoms in $F$,

\ii $z$ is nonzero on blossoms in $F$,

\iii all edges in $M$ are tight,

\iv all edges in blossoms in $F$ are tight.

If one finds a structured matching that is perfect then it is a maximum perfect matchings.
In the classical view Edmonds' algorithms algorithm operates on a graph with even number of
vertices. However, as shown in~\cite{gabow-tarjan-89} it can be seen to work on a critical
graph. The {\em optimal matching structure} of a critical graph $G$ consists a blossom tree $B$
and dual functions $y,z$ such that every vertex is a leaf in $B$ and properties \ii and \iv
above are satisfied. Observe that there is no matching in this definition. As argued
in~\cite{gabow-tarjan-89} Edmonds' algorithm computes an optimum matching structure when it is
executed on a critical graph.  Moreover, the optimal matching structure allow us to relate weights of maximum $v$-matchings to the dual $y,z$
in the following way.

\begin{lemma}[\cite{gabow-tarjan-89}]
\label{lemma-v-matching}
Let $y,z$ and $B$ the optimal matching structure for the critical graph $G$. Then
the weight of maximum $v$-matching is to equal to $(y, z)V - y(v)$.
\end{lemma}

\subsection{The algorithm}
Let us now join all the ingredients to compute the gsp-tree in the undirected graph $(G,w,t)$ with
the conservative weight function $w$.

\begin{algorithm}[H]
\caption{A combinatoric algorithm for computing gsp-tree for the undirected graph $(G,w,t)$.}
\begin{algorithmic}[1]
\State{Construct $\ddot{G}$ from $G$}
\State{Construct $\ddot{G}_t$ from $\ddot{G}$ by identifying $t_1$ and $t_2$}
\State{Compute optimal matching structure $y,z$ and $B$ for $\ddot{G}_t$} \comment{$\ddot{G}_t$ is critical by Lemma~\ref{lemma-critical}}
\For{$v\in V$}
\State{Let $w(M_v) = (y, z)V - y(v)$} \comment{by Lemma~\ref{lemma-v-matching}}
\State{Let $w(F_v) = -w(M_v)$} \comment{by Lemma~\ref{lemma-critical} and Lemma~\ref{lemma-lawler}}
\State{Let $w(F^v) = w(F_v)$} \comment{by adding zero length loop $vv$}
\EndFor
\State{Using $w(F_v)$ and $w(F^v)$ find the blossom forrest}\comment{using shrinking procedure from Section~\ref{fFactorSec}}
\State{Construct gsp-tree from the blossom forrest}\comment{using procedure from Section~\ref{SPAlgSec}}
\end{algorithmic}
\label{algorithm:combinatoric-shortest-paths}
\end{algorithm}

We note that optimal matching structure can be found using fast implementations of Edmonds' algorithm in $O(n(m + n \log n))$ time~\cite{gabow-90} or in $O(\sqrt{n \alpha(m,n)\log n}\ m \log (nW))$
time~\cite{gabow-tarjan-89}. The other steps of the above algorithm take only less time, e.g., the shrinking procedure can be implemented to work in $O(m \log n)$ time. This way we obtain
the $O(n(m + n \log n))$ time and the $O(\sqrt{n \alpha(m,n)\log n}\ m \log (nW))$ time combinatoric algorithms for computing gsp-tree.

\section{Determinant formulations for general graphs}
\label{sec:determinant}
Let $G$ be a simple graph with vertices numbered from 1 to $n$.
Let $\phi=\sum_i f(i)$. We
define a skew-symmetric $\phi \times \phi$ matrix $B(G)$ representing $G$ in the following
way. A vertex $i\in V$ is associated with
$f(i)$ consecutive rows and columns of $B$, both indexed by the pairs $i,r$ for $0\le r<f(i)$.
Call a tuple $(i,r,j,c)$ representing the entry $B(G)_{i,r,j,c}$
{\it permissible} if either
\[i<j, %
\text{ or } i=j \text{ and } r<\f{f(i)/2}\le \c{f(i)/2}\le c<f(i).
\]
The set of permissible entries is denoted by $\mathbb P$.
Note that the permissible entries of $B(G)$ are all above the diagonal.
The permissible entries corresponding to a fixed edge $e$
form  a
rectangular submatrix for a nonloop $e$ and 
an $\f{f(i)/2} \times \f{f(i)/2}$ submatrix for a loop $ii$.

Using indeterminates $x^{ij}_{r}$, $y^{ij}_{c}$ we define an entry of $B(G)$ as
\begin{equation}
\label{SimpleGenMatrixEqn}
B(G)_{i,r,j,c} = 
\begin{cases}
x^{ij}_{r} y^{ij}_{c}& %
ij\!\in\! E \text{ and } (i,r,j,c)\in \mathbb P,\\ 
-x^{ji}_{c} y^{ji}_{r}&  %
ij\! \in\! E \text{ and } (j,c, i,r)\in \mathbb P,\\ 
0&\text{otherwise.}
\vspace{-0.1cm}
\end{cases}
\end{equation}
Clearly $B(G)$ is skew-symmetric. Note that a loop $ii\in E$ is represented by
an $\f{f(i)/2}\times \f{f(i)/2}$ submatrix that is empty
if $f(i)=1$. This is fine since $ii$ is not in any $f$-factor.

The analysis for multigraphs is almost identical to simple graphs.
So we will concentrate on simple graphs, but also point out how it extends to  multigraphs.
Towards this end we extend the definition of $B(G)$ as in Section \ref{section:biparite:formulations}:
Let $\mu(e)$ denote the multiplicity of any edge $e$. The copies of $e=ij$ get
indeterminates $x^{ij,k}_{r}$, $y^{ij,k}_{c}$ ($1\le k\le \mu(e)$) and we define $B(G)$ by
\begin{equation}
\label{MultiGenMatrixEqn}
B(G)_{i,r,j,c} =
\begin{cases}
{\sum _{k=1}^{\mu(e)} x^{ij,k}_{r} y^{ij,k}_{c}}& %
\minibox{$ij\in E$  and $(i,r,j,c) \in \mathbb P$,}\vspace{3pt}
 \\\vspace{3pt} 
-\sum _{k=1}^{\mu(e)} x^{ji,k}_{c} y^{ji,k}_{r}&  %
\minibox{$ij \in E$  and  $(j,c, i,r) \in \mathbb P$,}\\
0&\text{otherwise.}
\end{cases}
\end{equation}

\subsection{Review of the Pfaffian}

\def\pf{{\rm pf}}
\def\sgn{{\rm sgn}}

Our goal is to prove an analog of Theorem \ref{BipSimpleDetThm}. 
It is easiest to accomplish this using
the Pfaffian, so we begin by reviewing this concept.
Let $A$ be a skew-symmetric matrix 
of order $2h\times 2h$. Its Pfaffian is defined by
$$\pf(A)=\sum \sgn
\left(\begin{array}{ccccc}
1&2&\ldots&2h-1&2h\\
i_1& j_1&\ldots& i_h& j_h
\end{array}
 \right)
a_{i_1 j_1}\ldots a_{i_h j_h}.
$$
The above sum is over all partitions
of the integers $ [1..2h]$ into pairs
denoted as $\{i_1, j_1\},\ldots. \{i_h, j_h\}$.
Also {\em sgn}  denotes the sign of the permutation.
(It is easy to check that each term $\sigma$ in the summation is well-defined.
For example interchanging $i_1$ and $j_1$ does not change the 
partition. It flips both the sign of the permutation and 
the sign of $a_{i_1 j_1}$, so $\sigma$ is unchanged.)
The central property of the Pfaffian is \cite{lp} 
\begin{equation}
\label{PfEqn}
\det(A)=(\pf(A))^2.
\end{equation} 

Recall that the Tutte matrix $T$ for a graph $G$ is the skew-symmetric matrix
obtained from the adjacency matrix of $G$ by replacing the entry
1 for edge $ij$ by an indeterminate $t_{ij}$ if $i<j$ and by $-t_{ji}$ if
$i>j$. Clearly there is a 1-1 correspondence between perfect matchings
of $G$ and terms of $\pf(T)$. 

Recalling \eqref{PfEqn}, 
note that every pair of perfect matchings
 of $G$ gets combined in the product $(\pf(T))^2$.
Let us review a proof that
no cancellations occur when all these terms are added together.

Consider a term 
of
$(\pf(T))^2$ corresponding to perfect matchings $M_1,M_2$.
(These matchings may be distinct or identical.)
The terms of 
$(\pf(T))^2$ that might cancel 
this term
must use exactly the same variables $t_{ij}$. In other words
they correspond to
perfect matchings $N_1,N_2 $ 
where 
\begin{equation}
\label{MNEqn}
M_1\uplus M_2=N_1\uplus N_2.
\end{equation}
Here $\uplus$ denotes multiset sum.
$M_1\uplus M_2$ consists of the edges of 
$M_1\cap M_2$ taken twice, plus the set
$M_1\oplus M_2$. The latter  is a collection \C. of even alternating cycles.
The definition of matching shows every cycle of \C. is alternating wrt
$N_1,N_2 $. Assume that in
$(\pf(T))^2$, the first multiplicand $\pf(T)$ gives the matching $N_1$
and the second multiplicand gives
$N_2$. The edges of
each cycle of $\C.$ can be partitioned
in two ways between $N_1$ and $N_2$.
So 
there are precisely $2^{|\C.|}$ matching pairs $N_1,N_2$
of $(\pf(T))^2$ that  satisfy \eqref{MNEqn}.
 
We will show there are no cancellations 
because each matching pair $N_1,N_2$ gives the same term of $(\pf(T))^2$.
More precisely we show the following:

\claim 1 {%
The pairs $N_1,N_2$ satisfying \eqref{MNEqn} collectively contribute
the quantity $(-2)^{|\C.|} \Pi t_{ij}$ to $(\pf(T))^2$, where 
the product is over a fixed set of
entries in $T$ corresponding to
$M_1 \uplus M_2$.}

\noindent
 Note the claim also implies that if we do arithmetic
over a finite field of characteristic $>2$, again there are no cancellations.

\def\bcproof #1 {\noindent{\em Proof of Claim #1.} }

\bigskip

\bcproof 1
Take a pair of matchings $N_1,N_2$ satisfying
\eqref{MNEqn}. Let $\sigma_i$ be the term in
$\pf(T)$ for $N_i$, so $(\pf(T))^2$ contains $\sigma_1\sigma_2$.
We will show the portion of $\sigma_1\sigma_2$
corresponding to $C$ is the same for every $N_1,N_2$.
Then we will conclude this property makes the entire term
$\sigma_1\sigma_2$ independent of choice of $N_1,N_2$.

First suppose $C$ is a single edge $ij$ (belonging to
$M_1\cap M_2$). Wlog both $\sigma_1$ and $\sigma_2$ have the $T$ entry $t_{ij}$
and their permutations both map
some ordered pair of two consecutive integers $(2a-1,2a)$
to the ordered pair $(i,j)$.
Clearly this does not depend on choice of $N_1,N_2$.

Now suppose $C$ contains
vertices $i_1,\ldots, i_\ell$ for some even $\ell\ge 4$, and
$N_1$ contains edges $i_1i_2,\,\ldots,\, i_{\ell-1}i_\ell$
while $N_2$ contains  $i_2 i_3,\,\ldots,\, i_\ell i_1$. 
$\sigma_1$ contains the product
$t_{i_1i_2}\ldots t_{i_{\ell-1},i_\ell}$
and $\sigma_2$ contains $t_{i_2 i_3}\ldots t_{i_{\ell},i_1}$.
So $\sigma_1\sigma_2$ 
contains the product
$(t_{i_1i_2}\ldots t_{i_{\ell-1},i_\ell})(t_{i_2 i_3}\ldots t_{i_{\ell},i_1})$.
Certainly this is independent of choice of $N_1,N_2$.

The permutation for $\sigma_1$ maps some pair of two consecutive integers
$(2a-1,2a)$ to $(i_1,i_2)$, and similarly for
the rest, e.g., $(2b-1,2b)$ goes to $(i_{\ell-1},i_\ell)$.
Wlog $N_2$ maps $(2a-1,2a)$ to $(i_2,i_3)$, and similarly
for the rest, e.g., $(2b-1,2b)$ goes to $(i_{\ell},i_1)$.
The pairs of $\sigma_1$ are
 transformed to the pairs of $\sigma_2$ by applying the product of
transpositions
$(i_1 i_2)(i_1 i_3)\ldots (i_1 i_\ell)$.
This is an odd number of transpositions, i.e., its sign is $-1$. 
So the part of the permutations
for $\sigma_1$ and $\sigma_2$ in $C$ combine to give
the sign $-1$ in $\sigma_1\sigma_2$. 

Applying this analysis of sign to each of the $|\C.|$
components of $\ge 4$ vertices shows the entire permutation of
$\sigma_1\sigma_2$ contributes sign $(-1)^{|\C.|}$.
Since there are $2^{|\C.|}$ pairs
$N_1,N_2$, we get the contribution of the claim.
\ecproof

We shall also use a special case of the above analysis, specifically
when $N_1$ and $N_2$  are identical except for
choosing alternate edges of one cycle of length 4.
$\sigma_1$ and $\sigma_2$
have opposite sign, and so they differ only in the subexpressions
\begin{equation}
\label{SimilarPTermsEqn}
\pm  t_{i_1 i_2} t_{i_3 i_4} \text{ and } \mp t_{i_2 i_3} t_{i_4 i_1}
\end{equation}
for some consistent choice of sign.

\subsection{Analysis of {\boldmath $B(G)$}}
We return to the matrix $B(G)$ for $f$-factors.
An entry of $B(G)$ corresponds to an edge of $G$, so a term of
$\pf(B(G))$ corresponds to a multiset of edges of $G$.

\claim 2 {Each uncancelled term $\sigma$ in $\pf(B(G))$ corresponds to an $f$-factor of $G$.}

\bcproof 2
The above discussion shows
that ignoring sign, 
$\sigma$ is a product of $\phi/2$ quantities
$x^{ij}_ry^{ij}_c$
corresponding to edges that form a matching on 
vertices designated by two indices
$i,r$ ($1\le i\le n$, $0\le r<f(i)$).
So
each vertex $i$ of $G$ is on exactly $f(i)$ of 
the $\phi/2$ edges selected by $\sigma$ (counting loops $ii$ twice).
To show these edges form an $f$-factor of $G$ we must show
that
$\sigma$ uses each edge of $G$ at most
once. We accomplish this 
by showing that in the summation of the Pfaffian,
terms using an edge more than once
cancel in pairs.

Suppose $\sigma$ uses edges
$x^{ij}_ry^{ij}_c$ and $x^{ij}_{r'}y^{ij}_{c'}$ ($r\ne r'$ and $c\ne c'$).
Wlog assume $(i,r;j,c),(i,r';j,c')
\in \mathbb P$. $\sigma$ may have many such duplicated pairs.
Choose the duplicated pair that lexically minimizes
$(i,r,r')$. 
Pair $\sigma$ with the term $\sigma'$
having the same partition except that it replaces
$\{i,r;j,c\}$ and $\{i,r';j,c'\}$
by
$\{i,r;j,c'\}$ and $\{i,r';j,c\}$.
Note that 
permissibility of
$(i,r;j,c)$ and $(i,r';j,c')$
implies permissibility of
$(i,r;j,c')$ and $(i,r';j,c)$, even if $i=j$.
Thus $\sigma'$ is also associated with $(i,r,r')$, so the pairing is well-defined.

Viewed as matchings, $\sigma$ and $\sigma'$ differ only by
choosing alternate edges of the length 4 cycle
$(i,r;j,c;i,r';j,c')$.
So $\sigma$ and $\sigma'$ are identical except for the expressions of
\eqref{SimilarPTermsEqn}, which in the new setting become
\[
\pm  b_{i,r;j,c}\, b_{i,r'; j,c'} \text{ and } \mp b_{j,c;i,r'} \,b_{j,c'; i,r},
\] where $b$ designates matrix $B(G)$. Substituting the
definition of $B(G)$ shows these expressions are
\[
\pm (+x^{ij}_ry^{ij}_{c}) (+x^{ij}_{r'}y^{ij}_{c'})
\text{ and }
\mp (-x^{ij}_{r'}y^{ij}_{c}) (-x^{ij}_{r}y^{ij}_{c'}).
\]
The two minus signs in the second expression follow from
permissibility of
$(i,r;j,c')$ and $(i,r';j,c)$. 
The 4 indeterminates in the 2 above expressions are collectively identical.
So we get $\sigma=-\sigma'$, i.e.,
these two terms cancel each other as desired.

This argument also applies to multigraphs $G$. Here the issue is
that only one copy of each distinct edge $ij,k$ can  used.
A term $\sigma$ with a duplicated edge, like
$x^{ij,k}_ry^{ij,k}_c$ and $x^{ij,k}_{r'}y^{ij,k}_{c'}$,
chooses the duplicated pair to lexically minimize $(i,k,r,r')$.
The rest of the argument is unchanged.
\ecproof

Every $f$-factor $F$ of $G$
has an uncancelled term $\sigma$ in $\pf(B(G))$: 
For each $i\in V$, order the set $\delta(i,F)$ arbitrarily.
This makes $F$ correspond to a term $\sigma$ in $\pf(B(G))$. 
Each variable $x^{ij}_r$, $y^{ij}_c$ in $\sigma$ specifies
its edge, and so 
determines a unique partition pair $\{i,r;j,c\}$.
So no other term of $\pf(B(G))$ has the same variables of $\sigma$,
and $\sigma$ is uncancelled,
(Of course a given $F$ gives rise to many different uncancelled terms.)

Now it is easy to see that \eqref{PfEqn}
gives an analog of the second assertion of Theorem \ref{BipSimpleDetThm}:
$G$ has an $f$-factor if and only if $\det(B(G))\ne 0$.
But we need to prove the stronger first assertion.

We extend the above construction for $F$
to an arbitrary pair of $f$-factors 
$F_1,F_2$. 
As before for each $i\in V$
and for each $F_j$ ($j=1,2$), 
number the edges of $\delta(i,F_j)$ (consecutively, starting at 1).
But now for both sets start with the edges of
$\delta(i,F_1\cap F_2)$, using the same numbering for both.
Using both these numberings gives two terms in $\pf(B(G))$, 
which combine to give a term $\sigma$ of $(\pf(B(G)))^2$. 
Each variable $x^{ij}_r$ 
in $\sigma$ specifies
its edge and corresponds to a unique $y^{ij}_c$. 
Here we are using the fact that an edge in both $f$-factors
gives rise to a subexpression of the form $(x^{ij}_r y^{ij}_c)^2$ in $\sigma$.

Call  a term of   $(\pf(B(G)))^2$ 
{\em consistent} if whenever an  edge $ij$
occurs twice, it uses the same subscripts, thus giving a product
of the form $(x^{ij}_r y^{ij}_c)^2$.
(We are disqualifying terms with an edge appearing as
both $x^{ij}_r y^{ij}_c$ and also $x^{ij}_{r'} y^{ij}_{c'}$
where $r\ne r'$ or $c\ne c'$.) 
Clearly the above construction gives consistent terms.
We will show there is no cancellation 
in $(\pf(B(G)))^2$ 
involving consistent terms.
(After that we comment on inconsistent terms.)

The proof follows the Tutte matrix argument:
Consider a pair of $f$-factors $F_1,F_2$, and some
consistent term $\sigma$ in $(\pf(B(G)))^2$ that corresponds to them. 
View these $f$-factors as matchings $FM_1, FM_2$ on vertices $i,r$. 
Another pair of $f$-factors $H_1,H_2$ involves the same
set of indeterminates exactly when $FM_1\uplus FM_2=HM_1\uplus HM_2$.
(Here we use the fact that by consistency,
each indeterminate $x^{ij}_r$ specifies its edge $ij$
as well as its corresponding indeterminate 
$y^{ij}_c$. Hence the indeterminates in a term determine the matched edges
$(i,r;j,c)$.)
We have now established the analog of \eqref{MNEqn}.
The rest of the argument for Tutte matrices applies unchanged.
Furthermore exactly the same analysis applies when $G$ is a multigraph.

We comment that the 
situation is more involved for an inconsistent term $\sigma$.
Suppose $F_1$ has Pfaffian term $\sigma_1$ containing 
$x^{ij}_r y^{ij}_c$ and $F_2$ has $\sigma_2$ containing
$x^{ij}_{r'}y^{ij}_{c'}$, where $r\ne r'$ and $c\ne c'$.
Furthermore suppose 
the two copies of $ij$ belong to different
cycles $C_1,C_2$ of $FM_1 \uplus FM_2$.
There can be many different possibilities for  two other $f$-factors $H_1,H_2$
 having $HM_1 \uplus HM_2$. For instance suppose edge $ij$
gives the only inconsistency in $\sigma_1\sigma_2$.
Then $C_1$ and $C_2$ only give rise to 2 possible matchings $HM_1,HM_2$
instead of 4,
since an $f$-factor cannot contain both copies of $ij$.
But if there are  inconsistent edges besides $ij$,
other partitions for the edges of $C_1$ and $C_2$ may be possible
(e.g., consider the case of 1 other inconsistent edge, with
copies in  $C_1$ and $C_2$).

\def\FF.{\Phi_2(G)}

As in Section \ref{section:bipartite}
define $\F.$ to be the function that maps each term $\sigma$
of $\det(B(G))$
to its corresponding subgraph denoted
$F_{\sigma}$.
$F_{\sigma}$ is a $2f$-factor (unlike Section \ref{section:bipartite}).
Let $\FF.$ be the set
formed by taking sums of two 
(possibly equal) $f$-factors of $G$, i.e.,
\[
\FF.=
\{F_1 \uplus F_2: F_1, F_2\textrm{ $f$-factors of } G\} .
\]
We have proved the following.

\begin{theorem}
\label{GenSimpleDetThm}
\label{GenMultiDetThm}
Let $G$ be a simple graph or a multigraph. The function $\F.$ from terms in 
$\det(B(G))$ is
a surjection onto $\FF.$. 
Consequently, $G$ has an $f$-factor if and only if $\det(B(G))\ne 0$.
\end{theorem}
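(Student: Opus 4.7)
The plan is to leverage the Pfaffian identity $\det(B(G)) = (\pf(B(G)))^2$ together with Claim~2 and the cancellation analysis already carried out. A term $\tau$ in the expansion of $\det(B(G))$ is a product $\sigma_1 \sigma_2$ of two Pfaffian terms. By Claim~2 each uncancelled Pfaffian term $\sigma_j$ selects an $f$-factor $F_j$ of $G$, and the multiset of edges picked out by $\tau$ is exactly $F_1 \uplus F_2$. So once we know $\tau$ is uncancelled, $\F.(\tau)=F_1\uplus F_2\in\FF.$, which puts the image of $\F.$ inside $\FF.$.

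For surjectivity, I would make explicit the construction already sketched just before the theorem statement. Given $F_1\uplus F_2\in\FF.$, at each vertex $i$ number the edges of $\delta(i,F_1)$ and $\delta(i,F_2)$ so that edges of $F_1\cap F_2$ appear first and receive the same index in both numberings. Reading off these numberings yields Pfaffian terms $\sigma_1,\sigma_2$ whose product $\tau=\sigma_1\sigma_2$ is a \emph{consistent} term of $(\pf(B(G)))^2$ (in the sense defined in the excerpt) with $\F.(\tau)=F_1\uplus F_2$. The key point is that this construction always produces permissible index tuples because edges common to both $f$-factors are matched with themselves, while distinct copies of the same edge in $F_1$ and $F_2$ get distinct $r$ and $c$ subscripts.

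The main obstacle, and essentially the only remaining one, is to argue that $\tau$ survives cancellation. This is handled by the Tutte-matrix style computation already presented: any other term of $(\pf(B(G)))^2$ with the same monomial content must be consistent (inconsistent terms use different indeterminates from $\tau$) and must come from a pair $H_1,H_2$ of $f$-factors with $HM_1\uplus HM_2=FM_1\uplus FM_2$. The alternating-cycle/$2$-edge analysis in Claim~1 shows all $2^{|\C.|}$ such pairs contribute the same signed monomial $(-2)^{|\C.|}\prod x^{ij}_\cdot y^{ij}_\cdot$, which is nonzero modulo a prime of characteristic $>2$. Thus the consistent terms sharing $\tau$'s monomial combine, rather than cancel, and no inconsistent term can cancel them either, since it involves a different monomial. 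Hence $\tau$ contributes to $\det(B(G))$ and $\F.(\tau)=F_1\uplus F_2$, giving surjectivity.

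The "consequently" clause is then immediate in both directions. If $G$ has an $f$-factor $F$, then $F\uplus F\in\FF.$ and by surjectivity some uncancelled term witnesses $\det(B(G))\ne 0$. Conversely, if $\det(B(G))\ne 0$ then some uncancelled term $\tau$ exists, and by the first part $\F.(\tau)=F_1\uplus F_2$ with $F_1,F_2$ genuine $f$-factors of $G$, so in particular $G$ has an $f$-factor. The argument is insensitive to whether $G$ is simple or a multigraph: the multigraph case only changes Claim~2 by requiring the lex-minimal duplicated pair to include the copy index $k$, and the surjectivity construction treats distinct copies $(ij,k)$ of an edge as genuinely distinct edges.
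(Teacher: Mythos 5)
Your proposal is correct and follows essentially the same route as the paper: both arguments square the Pfaffian, use Claim~2 to identify uncancelled Pfaffian terms with $f$-factors, construct a consistent term for each sum $F_1\uplus F_2$ by numbering the edges at each vertex with the common edges first, and invoke the Claim~1 sign analysis to show that consistent terms sharing a monomial reinforce rather than cancel (inconsistent terms being excluded because they involve different indeterminates). No changes needed.
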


\noindent
The theorem continues to hold when we do arithmetic
in any finite field of characteristic $>2$.

\section{Finding \boldmath{$f$}-factors in general graphs}
\label{sec:algorithms}
This section gives algorithms to find $f$-factors in general multigraphs. It starts with
simple graphs and then moves to multigraphs. One should keep in mind that it is unknown how to use Gaussian elimination in the non-bipartite case, e.g.,
\cite{ms04} uses a different non-algebraic algorithm for this case. However Harvey~\cite{Harvey06}
later developed a fully algebraic scheme, that we adopt here. A good explanation of this approach
is given in~\cite{Harvey06}. We first use the Sherman-Morrison formula to get an $O(m \phi^2)$ time algorithm.
Then we show it can be sped up to $O(\phi^{\omega})$ time using ideas from Harvey's recursive elimination scheme.

\subsection{Simple Graphs}
We define {\em removable} edge $ij \in E(G)$ to be an edge such that $G-ij$ has an $f$-factor. We can observe that following property.

\begin{corollary}
\label{corollary:removable-edge}
Let $G$ be a simple graph having an $f$-factor. The edge $ij \in E(G)$ is removable if and only if
$\det(B(G-ij))\neq 0$.
\end{corollary}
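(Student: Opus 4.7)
The plan is to observe that this corollary is essentially a direct application of Theorem \ref{GenSimpleDetThm} (the non-bipartite analog of Theorem \ref{BipSimpleDetThm}) applied to the subgraph $G - ij$. There is no real combinatorial obstacle here; the hypothesis that $G$ itself has an $f$-factor is only present to make the notion of ``removable'' meaningful in context, not because it plays a role in the equivalence.

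Concretely, I would unfold the definitions. By definition, $ij \in E(G)$ is removable if and only if $G - ij$ has an $f$-factor. Since $G - ij$ is itself a simple graph (removing one edge does not change this), Theorem \ref{GenSimpleDetThm} applies to $G - ij$ and tells us that $G - ij$ has an $f$-factor if and only if $\det(B(G - ij)) \neq 0$. Chaining these two equivalences gives the corollary.

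The only minor subtlety worth spelling out is that $B(G - ij)$ is well-defined and has the same dimensions as $B(G)$: it is obtained from $B(G)$ by zeroing out precisely the rectangular block of entries $B(G)_{i,r,j,c}$ and the symmetric block $B(G)_{j,c,i,r}$ corresponding to the removed edge (while keeping all other edge-blocks intact). This makes it literally the symbolic matrix associated with the simple graph $G - ij$ and degree bound function $f$, so Theorem \ref{GenSimpleDetThm} is directly applicable.

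Since the whole argument is a two-step unfolding, I do not anticipate any hard part. If anything, the delicate work has already been done in establishing Theorem \ref{GenSimpleDetThm} (the surjection $\F.$ onto $\Phi_2(G)$ via Pfaffian cancellation); this corollary just inherits its power.
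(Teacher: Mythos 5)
Your proof is correct and is exactly the (implicit) argument the paper intends: unfold the definition of removable, note that $B(G-ij)$ is the symbolic adjacency matrix of the simple graph $G-ij$, and apply the last assertion of Theorem~\ref{GenSimpleDetThm}. You also correctly spot that the hypothesis that $G$ has an $f$-factor is not actually used in the equivalence itself.
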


Let $ij \in E(G)$ then we can observe that $B(G)$ and $B(G-ij)$ differ from one another by two rank-one updates.
\begin{equation}
\label{equation:rank-2-update}
\begin{split}
B(G-ij) = B(G) - x^{ij} (y^{ij})^T + y^{ij} (x^{ij})^T = B(G)+ [-x^{ij},y^{ij}][y^{ij},x^{ij}]^T.
\end{split}
\end{equation}
where $x^{ij}$ and $y^{ij}$ are length $\phi$ vectors. When $i\neq j$ these vectors are defined as
\begin{eqnarray*}
x^{ij}_{k,r} = \begin{cases}
x^{ij}_{r} & \textrm{if } k=j,\\
0 & \textrm{otherwise,}
\end{cases}
&\ \ \ &y^{ij}_{k,r} = \begin{cases}
y^{ij}_{r} & \textrm{if } k=j,\\
0 & \textrm{otherwise.}
\end{cases}
\end{eqnarray*}
On the other hand for $ii \in E(G)$ we have
\begin{eqnarray*}
x^{ii}_{k,r} = \begin{cases}
x^{ii}_r & \textrm{if } k=i \textrm{ and }  r <\f{f(i)/2} ,\\
0 & \textrm{otherwise,}
\end{cases}
&\ \ \ &
y^{ii}_{k,r} = \begin{cases}
y^{ii}_r & \textrm{if } k=i \textrm{ and }  f(j)/2 \le r,\\
0 & \textrm{otherwise.}
\end{cases}
\end{eqnarray*}

Hence, one can use Sherman-Morrison-Woodbury formula to compute $B(G-ij)^{-1}$ from $B(G)^{-1}$ in $O(\phi^2)$ time.
Similarly, we can use this formula to test whether $B(G-ij)$ is nonsingular by checking
\begin{equation}
\label{equation:test-removable}
\det(I_2+\left[y^{ij},x^{ij} \right]^T B(G)^{-1}\left[-x^{ij},y^{ij} \right]) \neq 0.
\end{equation}
Here, the affected matrix size is bounded by $2f(i)+2f(j)$ so we need $O((f(i)+ f(j))^2)$ time for this test.
Using these observations we get Algorithm~\ref{algorithm:f-factor-general-simple} for finding $f$-factors in simple graphs.

\begin{algorithm}[H]
\caption{An $O(m \phi^2)$ time algorithm for finding $f$-factor in the simple graph $G$.}
\begin{algorithmic}[1]
\State{Let $B(G)$ be $\phi \times \phi$ skew-symmetric adjacency matrix of $G$}
\State{Replace the variables in $B(G)$ for random elements from $\mathcal{Z}_p$ for prime $p=\Theta(\phi^2)$ to obtain $B$}
\State{If $B$ is singular return "no $f$-factor".}
\State{(with probability $\ge 1-\frac{1}{\phi}$ matrix $B$ is non-singular when $B(G)$ is non-singular)}\comment{by
Lemma~\ref{corollary-zippel-schwartz}}
\State{Compute $B^{-1}$}
\State{(we remove only removable edges so $B$ remains non-singular during execution of the algorithm)}
\ForAll{$ij \in E$}
\label{algorithm:f-factor-general-simple:line4}
\If{$\det(I_2+\left[y^{ij},x^{ij} \right]^T B^{-1}\left[-x^{ij},y^{ij} \right]) \neq 0$} \comment{Edge $ij$ is removable by~\eqref{equation:test-removable}}
\State{Set $E:=E-e$}
\State{Set $B:=B+ [-x^{ij},y^{ij}]\cdot[y^{ij},x^{ij}]^T$} \comment{This corresponds to $B(G)-ij$ by~\eqref{equation:rank-2-update}}
\State{Recompute $B^{-1}$}\comment{Using Sherman-Morrison-Woodbury formula}
\EndIf
\EndFor
\State{Return $E$}\comment{All removable edges have been removed, so what remains is an $f$-factor}
\end{algorithmic}
\label{algorithm:f-factor-general-simple}
\end{algorithm}

Observe that the above algorithm fits into the framework introduced by Harvey~\cite{Harvey06} for finding $1$-factors in general graphs. The above algorithm corresponds to the algorithm described in Section~3.3 from~\cite{Harvey06}, with the difference that the rank-two updates we use affect submatrices and not single elements. This means that we can use the recursive algorithm that was introduced in Section~3.4 of his paper.
The algorithm uses following elimination procedures with the starting call to {\sf DeleteEdgesWithin($V$)}.
\setdescription{leftmargin=0.2cm,labelindent=0cm}
\begin{description}
\item[{\sf DeleteEdgesWithin($S$)}] -- if $|S|\ge 1$ split $S$ into $S_1$ and $S_2$; call {\sf DeleteEdgesWithin($S_i$)}, for $i=1,2$;
call {\sf DeleteEdgesCrossing($S_1,S_2$)}; update submatrix $B^{-1}[S,S]$;
\item[{\sf DeleteEdgesCrossing($R,S$)}] -- if $|R|=\{r\}$, $|S|=\{s\}$ and $rs$ is removable eliminate edge $rs$;
 otherwise split $R$ into $R_1$,$R_2$ and $S$ into $S_1$, $S_2$; call {\sf DeleteEdgesCrossing($R_i,S_j$)}, for $i,j=1,2$;
 update submatrix $B^{-1}[R\cup S, R\cup S]$.
\end{description}
Observe that we need only to replace tests for removable edges with~\eqref{equation:test-removable}, whereas the updates
to submatrices remain essentially the same and use~\cite[Corollary 2.1]{Harvey06}. Hence, the submatrix of size $|S|\times |S|$ is updated in $O(|f(S)|^{\omega})$ time. On the other hand, the cost we pay to test whether the edge $ij$ is removable is $O(|f(i)+f(j)|^2)$.

Harvey splits the set $S$ (similarly $R$) always in equal halves. We, however, split $S$ in such a way that $f(S_1)$ and $f(S_2)$ are as close as possible. Let us assume $f(S_1)\ge f(S_2)$ then
\begin{itemize2}
\item either $f(S_1)\ge \frac{2}{3} f(S)$ and $|S_1|=1$,
\item or $f(S_1)\le \frac{2}{3} f(S)$ and $f(S_2)\ge \frac{1}{3}f(S)$.
\end{itemize2}
Let $h(S)$ denote the running time of the procedure 
{\sf DeleteEdgesWithin($S$)}. Similarly define $g(R,S)$ for {\sf DeleteEdgesCrossing($R,S$)}. We have
\begin{eqnarray*}
h(S) &\!\!\!\!=\!\!\!\!&\sum_i h(S_i)+ g(S_1,S_2) +  O(|f(S)|^{\omega}) \\
g(R,S) &\!\!\!\!=\!\!\!\!&
\begin{cases}
O(|f(R)\!+\!f(S)|^{2}) \textrm{\ \ \ \ \ \ \ \ \ \ \ \ \ \ \ \ \   if } |R|=|S|= 1,\\
\sum_{i,j} g(R_i,S_j)\!+\!O(|f(R)\!+\!f(S)|^{\omega})  \textrm{\ \ \  otherwise.}
\end{cases}
\end{eqnarray*}
The solution for these equations gives an $O(\phi^{\omega})$ time bound.

\subsection{Multigraphs}
In the case of multigraphs we need to handle multiple copies of the same edge in a different way. Removing separate copies of an edge one by one would lead to a cubic time complexity. The cost charged by edge $ij$ to its submatrix would be $\Omega((f(i)+ f(j))^2\cdot \mu(ij))$. Instead, we use binary search on the number of removable copies of edges. Assume that we want to remove $\mu$ copies of edge $ij \in E(G)$ from the graph. We denote the resulting graph by $B(G-ij^\mu)$. In such case $B(G)$ and $B(G-ij^\mu)$ differ from one another by $2\mu$ updates of  rank one, i.e.,
\begin{equation}
\label{equation:rank-2-update-multi}
\begin{array}{l}
\displaystyle
B(G-ij^\mu) = B(G) +\sum_{k=1}^{\mu} -x^{ij,k} (y^{ij,k})^T + y^{ij,k} (x^{ij,k})^T \\
\displaystyle
= B(G)+ [-x^{ij,1},y^{ij,1},\ldots,-x^{ij,\mu},y^{ij,\mu}][y^{ij,1},x^{ij,1},\ldots,y^{ij,\mu},x^{ij,\mu}]^T.
\end{array}
\end{equation}
where $x^{ij,k}$ and $y^{ij,k}$ are length $\phi$ vectors defined in similar way as in the previous section. Observe that the submatrix affected by these
changes has size $2f(i)+2f(j)$ so, computing $B(G-ij^\mu)^{-1}$ from $B(G)^{-1}$ can be realized using~\cite[Corollary~2.1]{Harvey06}. Moreover, we can test non-singularity of $B(G-ij^\mu)$ by
\begin{equation}
\label{equation:test-removable-multi}
\det\left(I_{\mu}+[-x^{ij,1},\ldots,y^{ij,\mu}]^T B(G)^{-1}[y^{ij,1},\ldots,x^{ij,\mu}]\right) \neq 0.
\end{equation}
This test requires $O((f(i)+f(j))^{\omega})$ time, because the affected matrix size is $2f(i) + 2f(j)$.
In the following algorithm we use
this test together with a version of binary search.
\begin{algorithm}[H]
\caption{An $O(m \phi^2)$ time algorithm for finding $f$-factor in the  multigraph $G$.}
\begin{algorithmic}[1]
\State{Let $B(G)$ be $\phi \times \phi$ skew-symmetric adjacency matrix of $G$}
\State{Replace the variables in $B(G)$ for random elements from $\mathcal{Z}_p$ for prime $p=\Theta(\phi^2)$ to obtain $B$}
\State{(with probability $\ge 1-\frac{1}{\phi}$ matrix $B$ is non-singular when $B(G)$ is non-singular)}\comment{by
Lemma~\ref{corollary-zippel-schwartz}}
\State{Compute $B^{-1}$}
\State{(we remove only removable edges so $B$ remains non-singular during execution of the algorithm)}
\ForAll{$ij \in E$}
\State{Let $\mu$ be the highest power of $2$ not higher then $\min(\mu(ij),f(i),f(j))$.}
\State{Let $k:=0$.}
\While{$\mu\ge 1$}
\If{$\det\left(I_{\mu}+[-x^{ij,k+1},\ldots,y^{ij,k+\mu}]^T B^{-1}[y^{ij,k+1},\ldots,x^{ij,k+\mu}]\right) \neq 0$}
\State \comment{$\mu$ copies of $ij$ are removable by~\eqref{equation:test-removable-multi}}
\State{Set $E:=E-ij^{\mu}$}
\State{Set $B:=B+ [-x^{ij,k+1}\ldots,y^{ij,k+\mu}][y^{ij,k+1},\ldots,x^{ij,k+\mu}]^T$}
\State \comment{This corresponds to $B(G-ij^{\mu})$ by~\eqref{equation:rank-2-update}}
\State{Recompute $B^{-1}$}\comment{Using Sherman-Morrision-Woodbury formula}
\State{Set $k:=k+\mu$} \comment{The number of copies of $ij$ removed so far}
\EndIf
\State{Set $\mu:=\mu/2$}\comment{The number of edges we try to remove is halved}
\EndWhile
\EndFor
\State{Return $E$}\comment{All removable copies of edges have been removed, so what remains is an $f$-factor}
\end{algorithmic}
\label{algorithm:f-factor-general-multi}
\end{algorithm}
This time we modify~\cite[Algorithm 1]{Harvey06} in the same way as given by Algorithm~\ref{algorithm:f-factor-general-multi}. The time for updates remains the same, because a submatrix of size $|S|\times|S|$ is still updated in $O(|S|^{\omega})$ time. On the other hand,
the cost we pay to find maximum number $\mu$ of removable copies of edge $e$ is $O((f(i)+f(j))^{\omega})$. The number of removed
edges in the binary search forms a geometric series, so the cost is dominated by the first element, which in turn is smaller
then the size of the submatrix to power of $\omega$. This time we obtain following bounds
\begin{eqnarray*}
h(S) &\!\!\!\!=\!\!\!\!&\sum_i h(S_i)+ g(S_1,S_2) +  O(|f(S)|^{\omega}) \\
g(R,S) &\!\!\!\!=\!\!\!\!&
\begin{cases}
O(|f(R)+f(S)|^{\omega}) \textrm{\ \ \ \ \ \ \ \ \ \ \ \ \ \ \ \ \  if } |R|=|S|= 1,\\
\sum_{i,j} g(R_i,S_j)+ O(|f(R)+f(S)|^{\omega})  \textrm{\ \ \  otherwise.}
\end{cases}
\end{eqnarray*}
The solution for these equations gives an $O(\phi^{\omega})$ time bound.

\section{Finding perturbed factor weights}
\label{sec:weights}
This section shows how to compute the quantities $w(F_v)$ and $w(F^v)$, for all $v\in V$. 
(Recall from Section \ref{BackgroundSec} that we are dealing with an $f$-critical graph;
each $v\in V$ has a maximum $f_v$-factor $F_v$ and  a maximum $f^v$-factor $F^v$.)
We start by considering
simple graphs, and then comment on multigraphs. For simplicity assume that the weight function is non-negative, i.e.,
$w:E\to [0..W]$. (If this is not the case, redefine $w(ij):=w(ij)+W$. This increases
the weight of each $f$-factor by exactly $W f(V)/2$.

Following~\eqref{SimpleGenMatrixEqn}, define
\begin{equation}
B(G)_{i,r,j,c} =
\begin{cases}
\vspace{3pt}
z^{w(ij)} x^{ij}_r y^{ij}_{c}& %
\minibox{$ij\in E$  and  $(i,r,j,c)\in \mathbb P$,}\cr
\vspace{3pt}
-z^{w(ij)} x^{ij}_c y^{ij}_r&  %
\minibox{$ij \in E$ and  $(j,c, i,r)\in \mathbb P$,}\cr
0&\text{otherwise,}
\end{cases}
\label{SimpleGenMatrixWeightsEqn}
\end{equation}
where $z$ is a new indeterminate.  
For the next result we assume $G$ has an $f$-factor.
Theorem~\ref{GenSimpleDetThm} shows that there is a mapping $\F.$ from
terms of $\det(B(G))$ onto $\FF.$. The degree of $z$ in a term
$\sigma$ equals the total weight of the edges used. This gives
the following.
\begin{corollary}
\label{corollary-GenSimpleWeights}
For a simple graph $G$ that has an $f$-factor,  $\deg_z(\det(B(G)))$ is twice the weight of a maximum $f$-factor.
\end{corollary}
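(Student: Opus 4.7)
The plan is to combine the surjectivity statement in Theorem~\ref{GenSimpleDetThm} with the simple observation that every matrix entry of $B(G)$ carries a factor $z^{w(ij)}$, so the $z$-degree of any term in $\det(B(G))$ records the total weight of the associated multiset of edges.

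First I would note that for any term $\sigma$ in the expansion of $\det(B(G))$, the $z$-degree of $\sigma$ is exactly $w(F_\sigma) = \sum_{e\in F_\sigma} w(e)$ (summed with multiplicity), since the factor $z^{w(ij)}$ from each chosen entry of $B(G)$ multiplies to $z^{w(F_\sigma)}$ and all signs and remaining $x,y$ indeterminates are independent of $z$. By Theorem~\ref{GenSimpleDetThm}, $F_\sigma = F_1\uplus F_2$ for two $f$-factors, so $\deg_z(\sigma)=w(F_1)+w(F_2)\le 2W^*$, where $W^*$ denotes the weight of a maximum $f$-factor. This yields the upper bound $\deg_z(\det(B(G)))\le 2W^*$ immediately.

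For the matching lower bound I would pick a maximum $f$-factor $F$ and consider the element $F\uplus F\in \FF.$. By the surjectivity half of Theorem~\ref{GenSimpleDetThm} there is an uncancelled term $\sigma$ with $F_\sigma = F\uplus F$, and this term has $\deg_z(\sigma)=2w(F)=2W^*$. To finish I must rule out the possibility that $\sigma$ is cancelled by a term coming from a different multiset. This is automatic because distinct multisets in $\FF.$ give disjoint sets of indeterminates: every indeterminate of the form $x^{ij}_r$ or $y^{ij}_c$ carries the edge label $ij$, and the total number of such $ij$-labeled indeterminates appearing in $\sigma$ is exactly twice the multiplicity of $ij$ in $F_\sigma$. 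Hence monomials from different multisets cannot combine, cancellations only occur within a single multiset, and the theorem already guarantees that at least one representative of $F\uplus F$ survives. So $\deg_z(\det(B(G)))\ge 2W^*$.

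Combining the two bounds gives $\deg_z(\det(B(G)))=2W^*$, which is twice the weight of a maximum $f$-factor. No serious obstacle is expected; the one point requiring care is the non-cancellation of the $F\uplus F$ contribution, which is precisely the content of the surjectivity we have already proved. The same reasoning transfers verbatim to multigraphs using the definition \eqref{MultiGenMatrixEqn}, since the copy index $k$ merely refines the edge label on the indeterminates without affecting any step above.
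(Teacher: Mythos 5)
Your proof is correct and takes essentially the same route as the paper: the paper's (very terse) justification is exactly that $\F.$ surjects the surviving terms of $\det(B(G))$ onto $\FF.$ and that the $z$-degree of a term equals the weight of its edge multiset, which gives your upper bound, while your lower bound via the element $F\uplus F$ is the intended reading of surjectivity. Your added observation that monomials arising from distinct edge multisets are distinct (so cancellation cannot occur across multisets) is a correct and worthwhile elaboration of a detail the paper leaves implicit.
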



Now suppose $G$ is $f$-critical.
For any $v\in V$ let $G_v$ be $G$ with an additional vertex $t$ joined to $v \in V$ by a zero weight edge.
Set $f(t)=1$.
A maximum $f$-factor in $G_v$ weighs the same as a maximum $f_v$-factor in $G$.
For the computation of $f_v$-factors we need the following definition. Let $G_*$ be $G$ with an additional vertex $t$ that is connected to all vertices $v \in V$ with zero weight edges. As previously, we set $f(t)=1$. Let us denote by $F_*$ the maximum $f$-factor in $G_*$.

\begin{lemma}
\label{lemma:LowerPerturbation}
$\deg_z(\adj(B(G_*))_{v,0,t,0}) = w(F_*)+w(F_v).$
\end{lemma}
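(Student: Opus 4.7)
The plan is to reduce $\adj(B(G_*))_{v,0,\,t,0}$ to a product of two Pfaffians via a classical identity for skew-symmetric matrices, and then to read off the $z$-degree of each Pfaffian from the combinatorial interpretation developed in Section~\ref{sec:determinant}. Set $B:=B(G_*)$. Since $G$ is $f$-critical, $\phi=f(V)$ must be odd (so that every $f_v$-factor of $G$ has even total degree), whence $B$ has even order $\phi+1$ and is skew-symmetric by~\eqref{SimpleGenMatrixWeightsEqn}. Let $B'$ denote the skew-symmetric submatrix obtained from $B$ by deleting \emph{both} rows and \emph{both} columns indexed by $(v,0)$ and $(t,0)$. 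Combining $\adj(B)=\det(B)\,B^{-1}$, the classical Pfaffian formula $(B^{-1})_{i,j}=\pm\,\pf(B')/\pf(B)$ for the entries of a skew-symmetric matrix, and $\det(B)=\pf(B)^2$, I obtain
\[
\adj(B(G_*))_{v,0,\,t,0}\;=\;\pm\,\pf(B)\cdot\pf(B').
\]
Since the global sign is irrelevant for the $z$-degree, it suffices to compute $\deg_z\pf(B)$ and $\deg_z\pf(B')$ separately and add them.

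The first factor is handled directly by the analysis behind Theorem~\ref{GenSimpleDetThm}. The terms of $\pf(B(G_*))$ surject onto the $f$-factors of $G_*$, with the $z$-degree of each term equal to the weight of its corresponding factor; and every maximum-weight $f$-factor produces an uncancelled term of that degree, because once an arbitrary ordering of the incident edges is fixed at each vertex, the resulting $x,y$-monomial determines the factor uniquely and hence cannot be matched by any competing term. Therefore $\deg_z\pf(B(G_*))=w(F_*)$.

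For the second factor, observe that because $f(t)=1$, the row and column $(t,0)$ are the only ones associated with the auxiliary vertex $t$; deleting both of them removes from $B$ every entry that involves an edge incident to $t$, leaving only entries corresponding to edges of $G$. Deleting the row and column $(v,0)$ then cuts the number of rows/columns at $v$ from $f(v)$ to $f(v)-1$. Relabeling the surviving $v$-indices $\{1,\ldots,f(v)-1\}$ as $\{0,\ldots,f_v(v)-1\}$ (simultaneously in rows and columns) merely renames the variables $x^{v\cdot}_r$, $y^{v\cdot}_c$ and may flip a uniform family of permissibility-induced signs; the net result is a skew-symmetric matrix of the form~\eqref{SimpleGenMatrixWeightsEqn} associated with the pair $(G,f_v)$. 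Applying the same uncancelled-term argument to this relabeled matrix yields $\deg_z\pf(B')=w(F_v)$, and summing the two degrees gives the claimed equality.

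The main obstacle will be the no-cancellation step for the leading $z$-degree of each Pfaffian, which is precisely what the Pfaffian analysis of Section~\ref{sec:determinant} provides: a maximum-weight factor determines an $x,y$-monomial that is unique among all terms, so that term cannot cancel. A subsidiary point to spell out is that $B'$, after the deletions and relabeling, is genuinely an instance of~\eqref{SimpleGenMatrixWeightsEqn} for $(G,f_v)$ rather than some other object; this reduces to checking that all $t$-entries are eliminated and that the permissibility-induced sign pattern on the remaining indices agrees (up to an overall flip that is invisible to $\deg_z$) with the one prescribed by $\mathbb P$ for $(G,f_v)$, which is routine from the definitions.
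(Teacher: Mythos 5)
Your argument is correct and lands on the right identity, but it takes a genuinely different route from the paper's. The paper never factors the adjugate entry: it writes $\adj(B(G_*))_{v,0,t,0}=\pm\det(B(G_*)^{t,0,v,0})$ and interprets the terms of that (non-skew) cofactor directly, via Theorem~\ref{GenSimpleDetThm}, as the terms of $\det(B(G_*))$ that use the entry for edge $tv$ in position $(t,0),(v,0)$ --- i.e.\ as pairs consisting of an arbitrary $f$-factor of $G_*$ together with an $f$-factor of $G_*$ forced to contain $tv$ (equivalently, an $f_v$-factor of $G$ plus the weight-zero edge $tv$) --- and reads off the top $z$-degree $w(F_*)+w(F_v)$ in one step. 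You instead invoke the sub-Pfaffian formula $(B^{-1})_{i,j}=\pm\,\mathrm{pf}(B_{\hat\imath\hat\jmath})/\mathrm{pf}(B)$ to split the entry as $\pm\,\mathrm{pf}(B)\cdot\mathrm{pf}(B')$ and bound the two degrees separately. This is more modular: the summands $w(F_*)$ and $w(F_v)$ are cleanly decoupled, and the no-cancellation argument only ever has to be run on a single Pfaffian rather than inside $\det=\mathrm{pf}^2$; the price is that you import a classical identity the paper never states (it only proves $\det=\mathrm{pf}^2$), so you should cite or derive it. One point you dismiss as ``routine'' deserves care: identifying $B'$ with the matrix of $(G,f_v)$ fails verbatim when $G$ has a loop at $v$, because the permissible loop entries at $v$ occupy rows $r<\lfloor f(v)/2\rfloor$, and deleting row and column $(v,0)$ leaves only $\lfloor f(v)/2\rfloor-1$ usable loop rows --- for odd $f(v)$ this is fewer than the $\lfloor f_v(v)/2\rfloor$ loops a maximum $f_v$-factor may place at $v$, so the ``attained'' half of $\deg_z\mathrm{pf}(B')=w(F_v)$ needs either a no-loop-at-$v$ hypothesis or a separate argument. (The paper's own proof silently skirts the same issue, and it is harmless in the intended applications, but your reduction makes the identification load-bearing, so flag it.) Finally, you implicitly use that $\deg_z$ is additive over the product of the two nonzero Pfaffians; that is fine since the coefficients live in an integral domain, but it merits a sentence. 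With those two points spelled out the proof stands.
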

\begin{proof}
Observe that
\[
\adj(B(G_*))_{v,0,t,0} = (-1)^{n(t,0)+n(v,0)} \det(B(G_*)^{t,0,v,0}),
\]
By Theorem~\ref{GenSimpleDetThm} we know that $\det(B(G_*))$ contains terms corresponding to elements of $F_2(G_*)$.
Hence, by the above equality, terms of $\adj(B(G_*))_{v,0,t,0}$ correspond to elements $F_2(G_*)$ that use edge $tv$, but with this
edge removed. In other words terms of $\adj(B(G_*))_{v,0,t,0,}$ are obtained by pairing an $f$-factor in $G_*$ and an
$f$-factor in $G_v$, and removing the edge $tv$. Similarly, as we observed in Corollary~\ref{corollary-GenSimpleWeights}
the degree of $z$ encodes the total weight of elements of $F_2(G_*)$. Moreover, the maximum elements are constructed
by taking maximum $f$-factor of $G_*$ and maximum $f$-factor in $G_v$.
As we already observed maximum $f$-factor in $G_v$ is an maximum $f_v$-factor in $G^+$ so the theorem follows.
\end{proof}

The above theorem leads to Algorithm~\ref{algorithm:F_v-factor-weights-simple} that computes weights
of $F_v$, for all $v\in V$, in $\tilde{O}(Wn^{\omega})$ time.

\begin{algorithm}[H]
\caption{An $\tilde{O}(W \phi^{\omega})$ time algorithm for finding weights of $F_v$, for all $v\in V$, in a simple graph $G^+$.}
\begin{algorithmic}[1]
\State{Let $B(G_*)$ be $\phi \times \phi$ matrix representing $G_*$}
\State{Replace the variables in $B(G_*)$ for random elements from $\mathcal{Z}_p$ for prime $p=\Theta(\phi^3)$ to obtain $B$}
\State{Compute $d:=\det(B)$} \comment{requires $\tilde{O}(W\phi^{\omega})$ time using Theorem~\ref{theorem-storjohann}}
\State{($\deg_z(d) = \deg_z(\det(B(G_*)))$ with probability $\ge 1-\frac{1}{\phi^2}$)}\comment{by Lemma~\ref{corollary-zippel-schwartz}}
\State{Set $w(F_*) := \deg_z(d)/2$}\comment{by Corollary~\ref{corollary-GenSimpleWeights}}
\State{Compute $a := \adj(B) e_{t,0} = \det(B) B^{-1} e_{t,0}$}\comment{requires $\tilde{O}(W\phi^{\omega})$ time using Theorem~\ref{theorem-storjohann}}
\For{$v \in V$}
\State{($\deg_z(a_v) = \deg_z(\adj(B(G_*))_{v,0,t,0})$ with probability $\ge 1-\frac{1}{\phi^2}$)}\comment{by Lemma~\ref{corollary-zippel-schwartz}}
\State{Set $w(F_v) := \deg_z(a_v)- w(F_*)$}\comment{by Lemma~\ref{lemma:LowerPerturbation}}
\EndFor\comment{by union bound all $w(F_v)$ are correct with probability $\ge 1-\frac{1}{\phi}$}
\end{algorithmic}
\label{algorithm:F_v-factor-weights-simple}
\end{algorithm}

For the computation of $f^v$-factors we need to proceed in slightly modified way. We construct $G^v$ from $G^+$
by:
\begin{itemize2}
\item adding new vertices $t_u$ and a zero weight edges $ut_u$, for every vertex $u\in V$,
\item adding a new vertex $t$ and zero weight edge $tt_v$.
\end{itemize2}
 Moreover, we define $f'(v)=f(v)+1$, $f'(t_v)=1$ and $f'(t)=1$. Finally, $G^*$ is obtained from $G^+$
in similar way, but with the difference that $t$ is connected to all vertices $t_v$.
Again we observe, that the weight of the maximum $f'$-factor in $G^v$ is equal to the weight of the maximum $f^v$-factor in $G^+$.
This allows us to prove the following.

\begin{lemma}
\label{lemma:UpperPerturbation}
$\deg_z(\adj(B(G^*))_{t_v,0,t,0}) = w(F^*)+w(F^v).$
\end{lemma}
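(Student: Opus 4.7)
The plan is to follow the proof of Lemma~\ref{lemma:LowerPerturbation} step for step, with $G^*$ playing the role of $G_*$, $G^v$ playing the role of $G_v$, and the edge $tt_v$ playing the role of $tv$. The key input, already supplied just before the lemma, is that the maximum weight of an $f'$-factor in $G^v$ equals $w(F^v)$; so it will suffice to relate the cofactor on the left to the pair of maximum factors $(F^*, \text{max } f'\text{-factor of }G^v)$.

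First I would write the cofactor identity
\[
\adj(B(G^*))_{t_v,0,t,0} \;=\; (-1)^{n(t,0)+n(t_v,0)}\,\det\bigl(B(G^*)^{t,0,t_v,0}\bigr),
\]
and invoke Theorem~\ref{GenSimpleDetThm} applied to $G^*$: terms of $\det(B(G^*))$ surject onto $\Phi_2(G^*)$, and their $z$-degree records the total weight of the paired $f'$-factors. Deleting row $(t,0)$ and column $(t_v,0)$ isolates exactly those terms of $\det(B(G^*))$ containing the matrix entry $B(G^*)_{(t,0),(t_v,0)}$, which represents the zero-weight edge $tt_v$. Hence terms of the cofactor correspond to elements of $\Phi_2(G^*)$ in which some constituent $f'$-factor uses edge $tt_v$, with that edge then removed.

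Next I would re-interpret the pairing. An $f'$-factor of $G^*$ containing edge $tt_v$ is, after deletion of $tt_v$, exactly an $f'$-factor of $G^v$: the two graphs agree except that in $G^v$ the vertex $t$ is connected only to $t_v$, and in both $f'(t)=1$, so forcing the unique $t$-edge to be $tt_v$ is equivalent to passing to $G^v$. Therefore each term of $\adj(B(G^*))_{t_v,0,t,0}$ is indexed by an ordered pair consisting of an $f'$-factor of $G^*$ and an $f'$-factor of $G^v$; since all edges added to build $G^*$ or $G^v$ have weight zero, the $z$-degree of such a term equals the sum of the weights of the two factors. Maximizing over both sides yields
\[
\deg_z\bigl(\adj(B(G^*))_{t_v,0,t,0}\bigr) \;=\; w(F^*) + w(F^v_{\max\,G^v}) \;=\; w(F^*)+w(F^v),
\]
where the last equality is the identification of $f'$-factors of $G^v$ with $f^v$-factors of $G^+$ noted in the paper.

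The main obstacle is the same sign/cancellation issue that had to be handled in Section~\ref{sec:determinant} and in Lemma~\ref{lemma:LowerPerturbation}: one must confirm that the cofactor genuinely encodes pairs of $f'$-factors without spurious cancellation, and that the $G^*$-side of the pairing really ranges over \emph{all} $f'$-factors of $G^*$ while only the $G^v$-side is pinned to use $tt_v$. This is exactly the Pfaffian analysis already carried out in Section~\ref{sec:determinant}: consistent terms do not cancel, and weights are tracked additively through the indeterminate $z$. Once that framework is invoked, the argument reduces to the bookkeeping just sketched.
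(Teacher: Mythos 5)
Your proposal is correct and follows essentially the same route as the paper: both transfer the argument of Lemma~\ref{lemma:LowerPerturbation} verbatim to $G^*$, $G^v$ and the edge $tt_v$, interpreting terms of the cofactor via Theorem~\ref{GenSimpleDetThm} as pairs of an $f'$-factor of $G^*$ and an $f'$-factor of $G^v$, and reading off the maximum total weight from $\deg_z$. Your explicit remarks on non-cancellation and on why pinning the $t$-edge to $tt_v$ is equivalent to passing to $G^v$ are just slightly more detailed versions of what the paper leaves implicit.
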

\begin{proof}
By Theorem~\ref{GenSimpleDetThm} we know that $\det(B(G^*))$ contains terms corresponding to elements of $F_2(G^*)$.
Hence, terms of $\adj(B(G^*))_{t_v,0,t,0}$ correspond to elements $F_2(G^*)$ that use edge $tt_v$, but with this
edge removed. In other words terms of $\adj(B(G^*))_{t_v,0,v,0}$ are obtained by pairing an $f'$-factor in $G^*$ and an
$f'$-factor in $G^v$. As previously, the maximum weights elements are constructed by taking the
maximum $f'$-factor of $G^*$ and the maximum $f'$-factor in $G^v$. As observed above $f'$-factors
correspond to $f^v$-factors in $G^+$.
\end{proof}

This leads to $\tilde{O}(Wn^{\omega})$ time algorithm for computing weights of $F^v$, for all $v\in V$.

\begin{algorithm}[H]
\caption{An $\tilde{O}(W \phi^{\omega})$ time algorithm for finding weights of $F^v$, for all $v\in V$, for a simple graph $G^+$.}
\begin{algorithmic}[1]
\State{Let $B(G^*)$ be $\phi \times \phi$ matrix representing $G^*$}
\State{Replace the variables in $B(G^*)$ for random elements from $\mathcal{Z}_p$ for prime $p=\Theta(\phi^3)$ to obtain $B$}
\State{Compute $d:=\det(B)$} \comment{requires $\tilde{O}(W\phi^{\omega})$ time using Theorem~\ref{theorem-storjohann}}
\State{($\deg_z(d) = \deg_z(\det(B(G^*)))$ with probability $\ge 1-\frac{1}{\phi^2}$)}\comment{by Lemma~\ref{corollary-zippel-schwartz}}
\State{Set $w(F^*) := \deg_z(d)/2$}\comment{by Corollary~\ref{corollary-GenSimpleWeights}}
\State{Compute $a := \adj(B) e_{t,0} = \det(B) B^{-1} e_{t,0}$}\comment{requires $\tilde{O}(W\phi^{\omega})$ time using Theorem~\ref{theorem-storjohann}}
\For{$v \in V$}
\State{($\deg_z(a_{t_v}) = \deg_z(\adj(B(G^*))_{t_v,0,t,0})$ with probability $\ge 1-\frac{1}{\phi^2}$)}\comment{by Lemma~\ref{corollary-zippel-schwartz}}
\State{Set $w(F^v) := \deg_z(a_v)- w(F^*)$}\comment{by Lemma~\ref{lemma:UpperPerturbation}}
\EndFor\comment{by union bound all $w(F^v)$ are correct with probability $\ge 1-\frac{1}{\phi}$}
\end{algorithmic}
\label{algorithm:F^v-factor-weights-simple}
\end{algorithm}
%

\subsection{Multigraphs}
Let $G=(V,E)$ be a multigraph, and let $w:E\times k \to \mathbb {Z}$ be the edge weight function.
This function assigns weight $w(e,k)$ to the $k$'th copy of $e\in E$. Joining ideas from~\eqref{SimpleGenMatrixWeightsEqn} and~\eqref{MultiGenMatrixEqn} we define
\begin{equation*}
B(G)_{i,r,j,c} =
\begin{cases}
\vspace{3pt}
\sum_{k=1}^{\mu(ij)} z^{w(ij,k)} x^{ij,k}_{r} y^{ij,k}_{c}& %
\minibox{$ij\in E$ and  $(i,r,j,c) \in \mathbb P$,}\cr
\vspace{3pt}
-\sum_{k=1}^{\mu(ij)} z^{w(ij,k)} x^{ij,k}_{c} y^{ij,k}_{r}&  %
\minibox{$ij \in E$ and  $(j,c, i,r) \in \mathbb P$,}\cr
0&\text{otherwise,}
\end{cases}
\end{equation*}
where $z$ is a new indeterminate.  Theorem~\ref{GenMultiDetThm} shows that there is a mapping $\F.$ from
terms of $\det(B(G))$ onto $\F._2(G)$. Observe that the construction from previous section
requires only the existence of such mapping, so it can be used for multigraphs as well.

%
%
%
%

\section{Conclusions and open problems}
\label{ConclusionSec}
This paper presents new algebraic algorithms for the fundamental problems of $b$-matching, undirected single-source shortest paths, and $f$-factors. Some intriguing open problem and challenges emerge from this study:

$\bullet$ The matrices we construct for unweighted $f$-factors have a very special block structure. Can this
block structure be exploited to obtain faster algorithms, e.g., time $O(\phi^{\omega-1}n)$?

$\bullet$ Can the running time of our algebraic max-flow algorithm be improved, perhaps
by combining it with scaling techniques? Can scaling be used in the non-bipartite algorithms?

$\bullet$ We gave first algebraic algorithms for simple $2$-factors.
Are there algebraic formulations for triangle-free or square-free $2$-factors?
If so one expects the resulting algorithms to be simpler
than existing combinatoric ones.

$\bullet$ What is the complexity of all-pairs undirected shortest distances
on conservative graphs? Can $\tilde{O}(Wn^\omega)$ time 
be achieved, as in the
case of non-negative weights?

\bibliographystyle{abbrv}
\bibliography{weighted}

\begin{thebibliography}{10}

\bibitem{Ahuja93}
R.~K. Ahuja, T.~L. Magnanti, and J.~B. Orlin.
\newblock {\em Network Flows: Theory, Algorithms, and Applications}.
\newblock Prentice Hall, Englewood Cliffs, NJ, 1993.

\bibitem{anstee}
R.~Anstee.
\newblock A polynomial algorithm for b-matching: An alternative approach.
\newblock {\em IPL}, 24:153--157, 1987.

\bibitem{hb}
J.~Bunch and J.~Hopcroft.
\newblock Triangular factorization and inversion by fast matrix multiplication.
\newblock {\em Mathematics of Computation}, 28(125):231--236, 1974.

\bibitem{Cheriyan90}
J.~Cheriyan, T.~Hagerup, and K.~Mehlhorn.
\newblock Can a maximum flow be computed in o(nm) time?
\newblock In {\em IN PROC. ICALP}, pages 235--248. Springer-Verlag, 1990.

\bibitem{cheung}
H.~Y. Cheung, L.~C. Lau, and K.~M. Leung.
\newblock Graph connectivities, network coding, and expander graphs.
\newblock In {\em Proc. of FOCS'11}, pages 190--199, 2011.

\bibitem{CLRS}
T.~H. Cormen, C.~E. Leiserson, R.~L. Rivest, and C.~Stein.
\newblock {\em Introduction to Algorithms}.
\newblock McGraw-Hill, New York, 2nd edition, 2001.

\bibitem{CGSa}
M.~Cygan, H.~N. Gabow, and P.~Sankowski.
\newblock Algorithmic applications of {B}aur-{S}trassen's theorem: shortest
  cycles, diameter and matchings.
\newblock In {\em Proc. of FOCS'12}, pages 531--540, 2012.

\bibitem{e}
J.~Edmonds.
\newblock Maximum matching and a polyhedron with 0,1-vertices.
\newblock {\em Journal of Research National Bureau of Standards-B},
  69B:125--130, 1965.

\bibitem{Edmonds67}
J.~Edmonds.
\newblock {An introduction to matching. Mimeographed notes, Engineering Summer
  Conference, U. Michigan, Ann Arbor, MI}, 1967.

\bibitem{edmonds-karp-72}
J.~Edmonds and R.~M. Karp.
\newblock Theoretical improvements in algorithmic efficiency for network flow
  problems.
\newblock {\em Journal of the ACM}, 19(2):248--264, 1972.

\bibitem{gabow-76}
H.~N. Gabow.
\newblock An efficient implementation of {E}dmonds' algorithm for maximum
  matching on graphs.
\newblock {\em J. ACM}, 23(2):221--234, 1976.

\bibitem{hal83}
H.~N. Gabow.
\newblock An efficient reduction technique for degree-constrained subgraph and
  bidirected network flow problems.
\newblock In {\em Proc. of STOC'83}, pages 448--456, 1983.

\bibitem{gabow-90}
H.~N. Gabow.
\newblock Data structures for weighted matching and nearest common ancestors
  with linking.
\newblock In {\em Proc. of SODA'90}, pages 434--443, 1990.

\bibitem{G}
H.~N. Gabow.
\newblock A combinatoric interpretation of dual variables for weighted matching
  and $f$-factors.
\newblock {\em Theoretical Computer Science}, 454:136--163, 2012.

\bibitem{gabow-tarjan-89}
H.~N. Gabow and R.~E. Tarjan.
\newblock Faster scaling algorithms for network problems.
\newblock {\em SIAM Journal on Computing}, 18(5):1013--1036, 1989.

\bibitem{Goldberg98}
A.~V. Goldberg and S.~Rao.
\newblock Beyond the flow decomposition barrier.
\newblock {\em J. ACM}, 45(5):783--797, Sept. 1998.

\bibitem{Harvey06}
N.~J.~A. Harvey.
\newblock Algebraic algorithms for matching and matroid problems.
\newblock {\em SIAM SIAM J. Comput.}, 2(39):679--702, 2009.

\bibitem{l}
L.~Lov\'asz.
\newblock On determinants, matchings and random algorithms.
\newblock In L.~Budach, editor, {\em Fundamentals of Computation Theory}, pages
  565--574. Akademie-Verlag, 1979.

\bibitem{lp}
L.~Lov\'asz and M.~D. Plummer.
\newblock {\em Matching Theory}.
\newblock Akad\'emiai Kiad\'o, 1986.

\bibitem{Marsh}
A.~B. Marsh.
\newblock {\em Matching algorithms}.
\newblock PhD thesis, The John Hopkins Univeristy, Baltimore, 1979.

\bibitem{ms04}
M.~Mucha and P.~Sankowski.
\newblock Maximum matchings via {G}aussian elimination.
\newblock In {\em Proc. of FOCS'04}, pages 248--255, 2004.

\bibitem{Orlin88}
J.~B. Orlin.
\newblock A faster strongly polynominal minimum cost flow algorithm.
\newblock In {\em Prof. of STOC'88}, pages 377--387, 1988.

\bibitem{Pulleyblank}
W.~Pulleyblank.
\newblock {\em Faces of matching polyhedra}.
\newblock PhD thesis, University of Waterloo, Ontario, Canada, 1973.

\bibitem{rv89}
M.~O. Rabin and V.~V. Vazirani.
\newblock Maximum matchings in general graphs through randomization.
\newblock {\em Journal of Algorithms}, 10:557--567, 1989.

\bibitem{Sankowski05}
P.~Sankowski.
\newblock Shortest paths in matrix multiplication time.
\newblock In {\em Proc. of ESA'05}, pages 770--778, 2005.

\bibitem{Sankowski09}
P.~Sankowski.
\newblock Maximum weight bipartite matching in matrix multiplication time.
\newblock {\em Theoretical Computer Science}, 410(44):4480--4488, 2009.

\bibitem{Schrijver}
A.~Schrijver.
\newblock {\em Combinatorial Optimization - Polyhedra and Efficiency}.
\newblock Springer-Verlag, 2003.

\bibitem{s80}
J.~T. Schwartz.
\newblock Fast probabilistic algorithms for verification of polynomial
  identities.
\newblock {\em J. ACM}, 27:701--717, 1980.

\bibitem{Sebo}
A.~Seb\"o.
\newblock Undirected distances and the postman-structure of graphs.
\newblock {\em J. Combin. Theory Ser. B}, 49(1):10 -- 39, 1990.

\bibitem{sebo2}
A.~Seb\"o.
\newblock Potentials in undirected graphs and planar multiflows.
\newblock {\em SIAM J. Comput.}, 26(2):582--603, 1997.

\bibitem{storjohann03}
A.~Storjohann.
\newblock High-order lifting and integrality certi- fication.
\newblock {\em J. Symbolic Comput.}, 36(3-4):613--648, 2003.

\bibitem{urquhart}
R.~Urquhart.
\newblock {\em Degree-constrained subgraphs of linear graphs}.
\newblock PhD thesis, University of Michigan, 1967.

\bibitem{Williams}
V.~V. Williams.
\newblock {Multiplying matrices faster than Coppersmith-Winograd}.
\newblock In {\em Proc. STOC'12}, pages 887--898, 2012.

\bibitem{YusterZ05}
R.~Yuster and U.~Zwick.
\newblock Answering distance queries in directed graphs using fast matrix
  multiplication.
\newblock In {\em Proc. of FOCS'05}, pages 389--396, 2005.

\bibitem{z79}
R.~Zippel.
\newblock Probabilistic algorithms for sparse polyno- mials.
\newblock In {\em Proc. of EUROSAM'79}, pages 216--226, 1979.

\end{thebibliography}

\appendix
\section{Allowed edges}
\label{appendix-allowed-edges}

\begin{figure}[H]
\centering
  \includegraphics[width=0.5\textwidth]{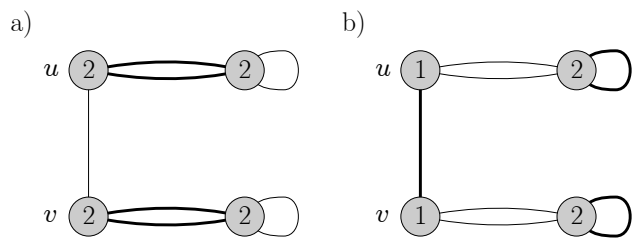}
\caption{The only $2$-factor is shown on panel a), whereas the only $f_{u,v}$-factor is shown on panel b). Edge $uv$ is not allowed
in any $2$-factor, although $f_{u,v}$-factor does exist.}
\label{figure:f_i_j-factor}
\end{figure}

Observe that for any edge $ij$ as long as $f(i)=f(j)=1$ the existence of $f_{i,j}$-factor is equivalent to the fact that $ij$ is allowed. However,
as shown on Figure~\ref{figure:f_i_j-factor}, the edge $uv$ is not allowed, although the $f_{u,v}$-factor does exit. The reason for this is that
we are trying to use edge $uv$ twice. Hence, similar criteria as in Corollary~\ref{corollary:allowed-edge} does not hold in non-bipartite case.


%

\end{document}